\documentclass{article}

\usepackage{graphicx} 

\usepackage[letterpaper, left=1in, right=1in, top=0.9in, bottom=0.9in]{geometry}
\usepackage[american]{babel}
\usepackage[normalem]{ulem}
\usepackage{amsmath, amssymb, amsthm}
\usepackage{mdframed,lipsum}
\usepackage{thmtools, thm-restate}
\usepackage{tikz}
\usetikzlibrary{calc}

\usepackage[shortlabels]{enumitem}
\usepackage{mdframed}
\usepackage{bbm}
\usepackage{bm}
\usepackage{cancel}
\usepackage{xcolor}
\usepackage{mathtools}
\usepackage{algorithmic}
\usepackage[procnumbered,ruled,vlined,linesnumbered]{algorithm2e}
\usepackage{float}
\usepackage{varwidth}
\usetikzlibrary{decorations.pathreplacing}
\usepackage{subcaption} 
\usepackage{multirow}
\usepackage{natbib} 
\usepackage{enumitem}

\SetKwInput{KwData}{Input}
\SetKwInput{KwResult}{Output}
\SetKwInput{KwGlobalVar}{Global variables}
\SetKw{Break}{break}

\usepackage[bookmarks,colorlinks,breaklinks]{hyperref}
\hypersetup{urlcolor=blue, colorlinks=true, citecolor=green!50!black, linkcolor=blue}
\usepackage[capitalize,nosort,nameinlink]{cleveref}
\usepackage{authblk}
\usepackage{centernot}

\newcommand{\pluseq}{\mathbin{\mbox{+=}}}

\declaretheorem[numberwithin=section,refname={Theorem,Theorems},Refname={Theorem,Theorems}]{theorem}

\declaretheorem[numberlike=theorem]{lemma}

\declaretheorem[numberlike=theorem]{corollary}
\declaretheorem[numberlike=theorem]{definition}
\declaretheorem[numberlike=theorem]{claim}
\declaretheorem[numberlike=theorem,style=remark]{remark}

\declaretheorem[numberlike=theorem, refname={Observation,Observations},Refname={Observation,Observations},name={Observation}]{observation}

\theoremstyle{definition}

\DeclareFontFamily{U}{mathb}{\hyphenchar\font45}
\DeclareFontShape{U}{mathb}{m}{n}{
      <5> <6> <7> <8> <9> <10> gen * mathb
      <10.95> mathb10 <12> <14.4> <17.28> <20.74> <24.88> mathb12
}{}
\DeclareSymbolFont{mathb}{U}{mathb}{m}{n}
\DeclareMathSymbol{\llcurly}{3}{mathb}{"CE}
\DeclareMathSymbol{\ggcurly}{3}{mathb}{"CF}

\definecolor{darkgreen}{RGB}{0,100,0}

\newcommand{\stageD}{main case D}

\newcommand{\stages}{main cases}
\newcommand{\stage}{main case}

\newcommand{\StageA}{Main Case A}

\newcommand{\Stage}{Main Case}

\newcommand{\pzero}{basic}
\newcommand{\pone}{height-one}
\newcommand{\ptwo}{simple}
\newcommand{\pthree}{simple height-one}
\newcommand{\Pthree}{Simple Height-One}

\newcommand{\weakresents}{weak most-resents}
\newcommand{\weakresent}{weak most-resent}
\newcommand{\weakresented}{weak most-resented}

\newcommand{\efxf}{EFX-feasible}

\newcommand{\X}{\ensuremath{\mathbf{X}}}
\newcommand{\Y}{\mathbf{Y}}

\newcommand{\C}{\mathbf{C}}
\newcommand{\xp}{\ensuremath{\mathbf{X'}}}
\newcommand{\yp}{\mathbf{Y'}}

\newcommand{\xz}{\mathbf{X^{\prime \prime}}}

\newcommand{\efxfenvyll}[1]{\llcurly_{#1}}

\newcommand{\doesnotefxenvyll}[1]{\centernot\llcurly_{#1}}

\newcommand{\lowerval}[1]{\prec_{#1}}
\newcommand{\lowereqval}[1]{\preceq_{#1}}
\newcommand{\greaterval}[1]{\succ_{#1}}
\newcommand{\greatereqval}[1]{\succeq_{#1}}

\newcommand{\efx}{\textsf{EFX}}
\newcommand{\ef}{\textsf{EF}}

\newcommand{\dumpp}[2]{$#2 \gets #1$}  
\newcommand{\dumpu}[2]{$#2 \gets^* #1$}

\newcommand{\sumsq}{\ensuremath{\cup}}
\newcommand{\takes}{\ensuremath{=}}
\newcommand{\bigsumsq}{\ensuremath{\bigcup}}

\newcommand{\mms}{\textsf{MMS}}
\newcommand{\pmms}{\textsf{PMMS}}
\newcommand{\eefx}{\textsf{EEFX}}
\newcommand{\mxs}{\textsf{MXS}}
\newcommand{\efl}{\textsf{EFL}}
\newcommand{\efr}{\textsf{EFR}}
\newcommand{\rmms}{\textsf{RMMS}}
\newcommand{\effx}{\textsf{EF2X}}

\newcommand{\efkx}{\textsf{EFkX}}

\newcommand{\zat}{Z_1^{(t)}}
\newcommand{\zbt}{Z_2^{(t)}}
\newcommand{\zatt}{Z_1^{(t+1)}}
\newcommand{\zbtt}{Z_2^{(t+1)}}
\newcommand{\yat}{Y_1^{(t)}}
\newcommand{\ybt}{Y_2^{(t)}}

\newcommand{\ya}{Y_1}
\newcommand{\yb}{Y_2}

\newcommand{\yap}{Y'_1}
\newcommand{\ybp}{Y'_2}

\newcommand{\caseA}{Case 1}
\newcommand{\caseB}{Case 2}

\title{$\efx$ Allocations Exist on Multi-Graphs}

\author[1]{Mahyar Afshinmehr \thanks{mahyar.afshinmehr@cs.ox.ac.uk}}
\author[2]{Arash Ashuri\thanks{aashuri@mpi-inf.mpg.de}}
\author[3]{Pouria Mahmoudkhan \thanks{pouriamksh@gmail.com}}
\author[4]{Kurt Mehlhorn \thanks{mehlhorn@mpi-inf.mpg.de}}
\author[5]{Amir Mohammad Shahrezaei \thanks{s.a.m.shahrezaei@gmail.com}}

\affil[1]{University of Oxford}
\affil[2,4]{Max Planck Institute for Informatics}
\affil[3,5]{Sharif University of Technology}

\date{}

\begin{document}

\maketitle

    We study the fair allocation of indivisible goods among agents, with a focus on limiting envy. 
    A central fairness notion is \emph{envy-freeness up to any good} ($\efx$), 
    which requires that any envy toward another agent vanishes after the removal of any single 
    good from the latter’s bundle. The existence of $\efx$ allocations is considered a major open problem in fair division. So far, it has only been established in limited settings.
        
    \citet{CFKS23} proved the existence of $\efx$ allocations for \emph{graphical valuations}. In this setting, the agents correspond to the nodes of an underlying graph, and the goods correspond to the
    edges, and any good has positive value only for the endpoints of the corresponding edge. 
    Their proof crucially relies on the restriction that the graph is \emph{simple}, meaning that for any pair of agents, there is at most one good that has value to both. For \emph{multi-graph 
    valuations}, where multiple goods may be valued by the same pair of agents, only partial results are known.
    \citet{ARS24} and \citet{KKSS25} obtained $2/3$ and $\sqrt{2}/2$ 
    approximations of $\efx$, respectively; \citet{KSS24} established existence under restricted 
    additive valuations; and \citet{AAMM25} proved existence under the assumption that the shortest 
    cycle containing non-parallel edges has length at least $4$.
    
    In this paper, we resolve this open problem by proving the existence of 
    $\efx$ allocations for multigraph instances under  cancelable valuations, a strict superclass of additive valuation functions.
    Our proof is algorithmic and computes such allocations in polynomial time when the valuation functions are cancelable.
    This work contributes to the small number of $\efx$ existence results that apply to an arbitrary 
    number of agents.

\section{Introduction}
    
    Allocating indivisible resources fairly is a fundamental problem in economics and computer science. Formal models of fair division are used to study settings such as inheritance division, public resource allocation, assignment problems, and the distribution of indivisible goods in markets and online platforms \citep{etkin2007spectrum,moulin2004fair,vossen2002fair,budish2012multi,pratt1990fair}. The study of fair division originates with the work of \citet{S49} and has since developed across economics, mathematics, computer science, and social choice theory. See \citep{AABRLMVW23,brams1996fair,brandt2016handbook,robertson1998cake} for general overviews.
    
    A central fairness requirement is envy-freeness. When goods are indivisible, however, envy-free allocations need not exist. Consider two agents and a single good that both agents like. The good has to be allocated to one of the agents, and the other will be envious. This limitation has led to the introduction of relaxed fairness notions that retain the spirit of envy-freeness while avoiding impossibility.
    
    The first studied relaxation is \emph{envy-freeness up to one good} ($\ef1$). An allocation is $\ef1$ if, for every pair of agents, any envy can be eliminated by removing a single good from the envied agent’s bundle. The notion was formally defined by \citet{B10} and was shown earlier to be achievable by \citet{LMMS04}. The existence of $\ef1$ allocations is now well understood.
    
    More recently, attention has shifted to a stronger requirement, \emph{envy-freeness up to any good} ($\efx$), introduced by \citet{CKMPSW19}. An allocation satisfies $\efx$ if, for every pair of agents, removing any single good from one agent’s bundle eliminates the other agent’s envy. This property strictly strengthens $\ef1$ and has emerged as a central benchmark for fairness in the allocation of indivisible goods.
    
    Recently, \cite{AMMSW26} showed that $\efx$ allocations do not always exist even for $3$ agents with monotone valuation functions. 
    The existence of $\efx$ allocations for more restricted classes of valuation functions, such as additive valuations, remains a major open problem. Nevertheless, significant progress has been made by establishing existence guarantees for several restricted valuation domains. These include instances with identical monotone valuations \citep{PR20}, binary valuations \citep{BSY23}, monotone valuations with at most two distinct types \citep{M24}, and additive valuations with at most three distinct types \citep{PGNV25}. For three-agent instances, $\efx$ allocations are known to exist for additive valuations \citep{CGM24}, and these results have been extended to nice cancelable valuations \citep{BCFF21} and $\mms$-feasible valuations \citep{ACGMM23}.

    \citet{CFKS23} showed that $\efx$ allocations exist when valuations can be represented via a simple graph, where goods correspond to edges in a graph, agents correspond to nodes, and each agent values only the edges incident to her. 
    Following the introduction of graph valuations, a natural question is whether $\efx$ allocations exist under \emph{multi-graph valuations}, where multiple goods may connect the same pair of agents. This question was explicitly raised by \citet{CFKS23}. The multi-graph setting is significantly more expressive and introduces additional challenges compared to the simple graph case.
    
    Several recent works have established existence results for $\efx$ allocations on multigraphs under specific structural conditions. Independently \citet{ADKMR25,SS25,BP24} proved that $\efx$ allocations exist when the underlying multigraph contains no odd cycles. In addition, \citet{SS25} showed that $\efx$ allocations exist whenever the shortest cycle using non-parallel edges has a length of at least six. \citet{BP24} further proved that if the underlying graph is $t$-colorable and its shortest cycle has length at least $2t-1$, then an $\efx$ allocation is guaranteed to exist. Moreover, \citet{ADKMR25} established the existence of $\efx$ allocations on multi-graphs when the underlying graph is a single cycle. More recently, \citet{AAMM25} showed that $\efx$ allocations exist for all multigraphs whose shortest cycle using non-parallel edges has length at least four.
    
    In another line of research, approximation guarantees for $\efx$ have been studied in the multigraph setting. \citet{ARS24} showed that every additive multi-graph instance admits a $\frac{2}{3}$-$\efx$ allocation. This bound was recently improved by \citet{KKSS25}, who proved that every such instance admits a $\frac{1}{\sqrt{2}}$-$\efx$ allocation. \citet{KSS24} showed that exact $\efx$ allocations exist on multi-graphs when the valuation functions are restricted additive.
    
    Additional work has focused on algorithmic and complexity aspects of fair orientations, where each good is allocated to agents that value it. \citet{CFKS23} showed that deciding whether an $\efx$ orientation exists or not on simple graphs is NP-complete. \citet{ZWLL24} studied the mixed manna setting with both goods and chores, proving that deciding the existence of $\efx$ orientations on simple graphs with additive valuations is NP-complete. \citet{ZM25} established a connection between the existence of $\efx$ orientations and the chromatic number of the underlying graph. \citet{BGRS25} refined this picture by showing that bipartiteness is a tight boundary for tractability, identifying hardness even for graphs that are close to bipartite. More recently, \citet{DEGK25} proved that $\ef1$ orientations always exist for monotone valuations and can be computed in pseudo-polynomial time. Finally, \citet{KNPMV25} studied envy-free orientations in graphs and multi-graphs, with a focus on parameterized complexity, complementing the existing literature on $\efx$ orientations.

    \subsection{Our Contributions}
    We study the existence of $\efx$ allocations for multi-graph valuations, a setting that generalizes graphical valuations introduced by \citet{CFKS23}. While $\efx$ allocations are known to exist for simple graphs, extending these guarantees to multigraphs has remained an open problem, and prior progress relied either on approximation guarantees or additional structural assumptions.
    
    Our main result resolves this open question for cancelable valuations, a strict superclass of additive valuation functions. We provide a constructive proof that $\efx$ allocations always exist in multigraph instances when agents have cancelable valuation functions, for an arbitrary number of agents. This establishes the first exact $\efx$ existence result for multigraph valuations without restricting the underlying graph structure. We show that $\efx$ allocations can be computed in polynomial time on all multigraphs under cancelable valuation functions.
    
    Our work complements and goes beyond previous results in the multigraph setting. In contrast to earlier guarantees that required cycle-length conditions \citep{AAMM25}, restricted valuation classes \citep{KSS24}, or only achieved approximate notions of $\efx$ \citep{ARS24,KKSS25}, our result establishes exact $\efx$ existence under a standard and widely studied valuation model.

\subsection{Technical Overview}

We provide a brief overview of the main ideas behind our algorithm for computing $\efx$ allocations on mult-igraphs.
The algorithm proceeds in three phases and operates in the space of partial $\efx$ allocations.

\paragraph{\textbf{Phase One: Constructing Unit Bundles.}}
    
    For every pair of agents $i$ and $j$, let $E_{i,j}$ denote the set of edges between them.
    We compute two different partitions of $E_{i,j}$ into two bundles.
    
    From agent $i$'s perspective, we partition $E_{i,j}$ into two $\efx$-feasible\footnote{Given a partition $\mathcal{Y}=\langle Y_1,\dots,Y_\ell\rangle$ of a set $S$, we say that a bundle $Y_t$ is $\efx$-feasible for agent $i$ if she weakly prefers $Y_t$ to every other bundle in the partition after the removal of any single item from the competing bundle.} bundles, denoted by $(a_{j,i}, b_{j,i})$, where agent $j$ weakly prefers $a_{j,i}$ to $b_{j,i}$.
    Symmetrically, we compute another partition $(a_{i,j}, b_{i,j})$ from agent $j$'s perspective.
    We refer to these four bundles collectively as the \emph{unit bundles} associated with $E_{i,j}$.
    
    These unit bundles remain fixed throughout the algorithm. Moreover, except for one special case handled later, whenever an agent $\ell$ receives an item from $E_{i,j}$, she receives an entire unit bundle. During the execution of the algorithm, we dynamically choose which of the two partitions of $E_{i,j}$ to use and whether to assign all unit bundles in the partition to the endpoints of $E_{i,j}$. If the partition $(a_{j,i},b_{j,i})$ is used, only $a_{j,i}$ can be assigned to $j$ and only $b_{j,i}$ can be assigned to $i$. We also refer to these bundles as their share of the partition. If the partition $(a_{j,i},b_{j,i})$ is used and both endpoints receive their share, $j$ will not envy $i$, $i$ may envy $j$, but does not strongly envy $j$.\footnote{Removing any good from agent $j$ would eliminate the envy towards agent $j$.}
 
\paragraph{\textbf{Phase Two: A Structured Partial Allocation.}}
    
    In the second phase, we construct a partial allocation in which every agent receives exactly one incident unit bundle. The allocation satisfies the following properties:
    \begin{enumerate}
    \item the partial allocation is $\efx$;
    \item every agent weakly prefers her own bundle to every unallocated incident unit bundle; and
    \item the envy graph\footnote{The envy graph associated with a partial allocation has one node per agent, and an edge from agent $i$ to agent $j$ whenever $i$ envies $j$’s bundle relative to her own.} has no directed path of length greater than one. Moreover, every agent is envied by at most one other agent, and, if $i$ envies $j$, $j$ owns $a_{j,i}$ and nothing else. 
    \end{enumerate}
    
    The phase begins by allowing each agent, in some arbitrary order, to select her most preferred unallocated incident unit bundle.
    Whenever the envy graph contains a directed path of length at least two, we carefully choose one of these paths, and perform a rerouting step: every agent on the path except the last receives the bundle she envies, while the last agent receives her most preferred unallocated incident unit bundle. This operation preserves $\efx$ while strictly shortening long envy chains.
    Consequently, at the end of the phase, the envy graph consists of a collection of stars (see \cref{fig 1}) since the length of its longest path is at most one, and since every agent is envied by at most one other agent. The latter holds since we assign exactly one unit bundle to each agent, and hence only the other end of the corresponding edge may envy.

    \begin{center}
\begin{figure}[H]
    \centering

\begin{tikzpicture}[scale=0.7, every node/.style={circle,fill=black,inner sep=1pt}] 
\node (A) at (0,0) {};
\foreach \i in {0,120,240}{
  \node (A\i) at (\i:1.2) {};
  \draw[->] (A) -- (A\i);
}

\node (B) at (4,0) {};
\foreach \i in {45,135,225,315}{
  \node (B\i) at ($(B)+(\i:1.2)$) {};
  \draw[->] (B) -- (B\i);
}

\node (C) at (8,0) {};
\foreach \i in {90,162,234,306,18}{
  \node (C\i) at ($(C)+(\i:1.2)$) {};
  \draw[->] (C) -- (C\i);
}
\node (D) at (-4,0.6) {};
\node (E) at (-4,-0.6) {};
\end{tikzpicture}
    \caption{This is an example of how the envy graph may look after phase two.}
    \label{fig 1}
\end{figure}
\end{center}

    From now on, our algorithm preserves the three  properties above, but we allow non-envied agents
    to own more than one unit bundle. We still need to allocate the remaining unallocated items. At this point, for any pair $(i,j)$, either no item in $E_{i,j}$ is assigned, or one of the unit bundles in one of the partitions is already assigned, or both unit bundles in one of the partitions are already assigned. In the second case, we must still assign the other bundle in the partition. In the first case, we must still choose the partition, we do so arbitrarily, and then assign its unit bundles.
    
    If, at this point, the envy graph contains no envy edges, then the remaining unit bundles can be allocated easily. For every pair of agents $(i,j)$, we assign one unit bundle in $E_{i,j}$ to $i$ and the other to $j$, while ensuring that each agent retains the unit bundle already assigned to her in Phase Two. In other words, if $E_{i,j}$ is completely unallocated, we choose one of the partitions arbitrarily and assign its bundles as described above. If a bundle in one of the partitions is already assigned to an endpoint, we assign the other bundle to the other endpoint. If both bundles in a partition are already assigned, we do nothing.
    The resulting allocation is $\efx$ since each agent already weakly prefers her allocated bundle to every remaining incident unit bundle (Property (2)).
    
    The more interesting case occurs when some agents are envied. In this situation, every non-envied agent $i$ can still safely receive one of the unallocated unit bundles in $E_{i,j}$ per
    every other agent $j$, while preserving $\efx$. The challenge is therefore to allocate the remaining unit bundles incident to envied agents.

    To address this issue, consider an envied agent $j$. Suppose agent $i$ envies $j$. Then $j$ currently holds the bundle $a_{j,i}$, and for every agent $k\notin \{i,j\}$, at least one unit bundle in $E_{k,j}$ is fully unallocated. Define
    $Z_j = \bigcup_{k\notin \{i,j\}} Z_{k,j}$,
    where $Z_{k,j}$ is the most valuable fully unallocated unit bundle in $E_{k,j}$ from agent $j$'s perspective.

    If agent $j$ weakly prefers $b_{i,j}\cup Z_j$ to her current bundle, then we can allocate
    $b_{i,j}\cup Z_j$ to $j$ and give $j$'s previous bundle to agent $i$. This strictly decreases the number of envied agents. Hence, after exhaustively applying this operation, every remaining envied agent $j$ strictly prefers her current bundle to $b_{i,j}\cup Z_j$.

    This observation suggests a natural completion strategy: for each envied agent $j$, we attempt to allocate the remaining unit bundles incident to $j$ to agents who do not value them. We refer to this operation as \emph{dumping}. Intuitively, dumping allows us to dispose of low-value bundles without creating new envy. Similar ideas were previously proven to be necessary even in simple graph settings \cite{CFKS23}.

    The difficulty is that careless dumping may create new envy relations. To avoid this issue, we identify non-envied agents that can safely absorb dumped bundles while remaining non-envied. Certain structural configurations of the envy graph guarantee that such agents exist.

    For example, suppose there exist two non-envied agents $s$ and $t$ such that the bundle of agent $s$ consists solely of a unit bundle in $E_{s,t}$. We call this pair a support pair.
    Then every envied agent $j\notin\{s,t\}$ can safely dump the bundles in $Z_j$ (agent $j$'s share) to agent $s$ without causing $j$ to envy $s$. Recall that $j$ weakly prefers her current bundle to $b_{i,j} \cup Z_j$. 
    For a non-envied agent $p$ we guarantee that she possesses a unit bundle in $E_{j,p}$ that she weakly prefers over the unit bundle from $E_{j,p}$ that is potentially dumped on $s$. Hence, $p$ does not envy $s$. 
    Such configurations enable us to complete the allocation while preserving $\efx$.

    One issue with this is that if $j_1$ and $j_2$ are both envied agents, they cannot both dump their share in $E_{j_1,j_2}$ to $s$, since then agent $s$ gets more than $Z_{j_1}$ w.r.t.~agent $j_1$.
    The existence of two disjoint support pairs allows us to circumvent this issue and to allocate all remaining unit bundles. However, if there exists only one support pair, we still can use it to allocate some of the remaining unit bundles, and find a new way to allocate the remaining ones based on the
    other properties of the allocation.

    \paragraph{\textbf{Phase Three: Completing the Allocation.}}
    
    The final phase systematically allocates the remaining unit bundles incident to envied agents.
    
    Suppose agent $i$ envies agent $j$.
    Recall that this implies that $j$ owns exactly $a_{j,i}$.
    For every agent $k\notin \{i,j\}$, we proceed as follows:
    \begin{enumerate}
    \item If agent $k$ is not envied by $i$, then we dump agent $j$'s share in $E_{j,k}$ to agent $i$;
    \item Otherwise, we allocate one of the remaining unit bundles in $E_{i,j}$ to agent $i$ and allocate the other bundle to another non-envied agent.
    \end{enumerate}

    We have depicted the proposed allocation in \cref{fig:phasethree}.
    In some cases, this direct approach fails. However, the reason for failure always reveals additional structure in the envy graph, which can then be exploited to construct a complete $\efx$ allocation using variants of the arguments developed in Phase Two.

    \begin{figure}
        \centering

\begin{center}
\begin{tikzpicture}[
    scale=1.3,
    every node/.style={
        circle,
        draw,
        inner sep=2pt,
        minimum size=10mm
    },
    main/.style={minimum size=11mm},
    envy/.style={->, thick},
    dashededge/.style={dashed, thick},
    pointydashed/.style={dashed, ->, thick}
]

\node[main] (i0) at (0,0) {$i_0$};
\node[main] (i1) at (3,2) {$i_1$};
\node[main] (i2) at (3,-2) {$i_2$};

\node[main] (j0) at (8,2) {$j_0$};
\node[main] (j1) at (8,-2) {$j_1$};

\node (a12) at (3,0.7) {$a_{i_1,i_2}$};
\node (b12) at (3,-0.7) {$b_{i_1,i_2}$};

\node (a1j0) at (4.7,2) {$b_{j_0,i_1}$};
\node (b1j0) at (6.3,2) {$a_{j_0,i_1}$};

\node (a2j1) at (4.7,-2) {$a_{i_2,j_1}$};
\node (b2j1) at (6.3,-2) {$b_{i_2,j_1}$};

\draw[envy] (i0) -- (i1);
\draw[envy] (i0) -- (i2);
\draw[envy] (j0) -- (j1);


\draw[dashededge] (i1) -- (a12);
\draw[dashededge] (a12) -- (b12);
\draw[dashededge] (b12) -- (i2);

\draw[dashededge] (i1) -- (a1j0);
\draw[dashededge] (a1j0) -- (b1j0);
\draw[dashededge] (b1j0) -- (j0);

\draw[dashededge] (i2) -- (a2j1);
\draw[dashededge] (a2j1) -- (b2j1);
\draw[dashededge] (b2j1) -- (j1);


\draw[pointydashed] (a12) -- (i0);
\draw[pointydashed] (b12) -- (j0);

\draw[pointydashed] (a1j0) -- (i0);
\draw[pointydashed] (b1j0) -- (j0);

\draw[pointydashed] (a2j1) -- (i0);
\draw[pointydashed] (b2j1) -- (j0);

\end{tikzpicture}
\end{center}

            \caption{This figure illustrates an example of the allocation proposed in Phase Three. In this example, there are five agents, namely $i_0,i_1,i_2,j_0,j_1$. Agent $i_0$ envies both $i_1$ and $i_2$, while agent $j_0$ envies $j_1$. Envy relations are represented by solid directed edges.
            Dashed directed edges indicate the recipient of each unit bundle in the proposed allocation. Dashed non-directed lines indicate the agents to whom the corresponding unit bundles belong.\protect\\
            The pair $(i_2,j_1)$ illustrates case one twice: $i_0$ envies $i_2$, but does not envy $j_1$ and $j_0$ envies $j_1$, but does not envy $i_2$. We select one of the partitions of $E_{i_2,j_1}$ arbitrarily, say $(a_{i_2,j_1}, b_{i_2,j_1})$, and assign $i_2$'s share to $i_0$ and $j_1$'s share to $j_0$.\protect\\
            The pair $(i_1,i_2)$ illustrates case two: $i_0$ envies $i_1$ and $i_2$. We select one of the partitions of $E_{i_1,i_2}$ arbitrarily, say $(a_{i_1,i_2}, b_{i_1,i_2})$, and assign $i_1$'s share to $i_0$ and $i_2$'s share to another non-envied agent, here $j_0$.
            }
        \label{fig:phasethree}
    \end{figure}

    \subsection{Further Related Work}
    Beyond exact $\efx$, a substantial body of work studies relaxations that are achievable in more general valuation settings. One prominent direction considers multiplicative approximations of $\efx$. For subadditive valuations, \citet{PR20} proved the existence of $1/2$-$\efx$ allocations, while for additive valuations, \citet{AMN20} established the existence of $1/\phi$-$\efx$ allocations, where $\phi \approx 1.618$. Stronger guarantees are known under additional restrictions. In particular, $2/3$-$\efx$ allocations were shown to exist when agents’ valuations fall into at most four types \citep{HMN25}, when the number of agents is at most seven, \citep{ARS24}, eight \citep{FKW26} and when agents agree on their top $n$ items \citep{MS23}. Further improvements for restricted settings were obtained by \citet{BKP24}.
    
    Another relaxation is $\efkx$, which requires envy to disappear after the removal of any $k$ goods from an envied bundle. Existence of $\effx$ allocations has been established for four agents with cancelable valuations \citep{AGS25}, for any number of agents with restricted additive valuations \citep{ARS22}, and for $(\infty,1)$-bounded valuations \citep{KSS24}. 
    Moreover, \citet{AGS25} showed that $\effx$ allocations can be computed in polynomial time for three agents with cancelable valuation function. In contrast, the known algorithms for computing $\efx$ allocations generally run in pseudo-polynomial time.
    Recently, \citet{FKW26} showed that for every $k \geq 2$, a $\frac{k+1}{k+2}$-$\efkx$ allocation always exists. They further proved that $\efkx$ graph orientations do not always exist, and that deciding whether such orientations exist is NP-complete.

    Several alternative relaxations of $\efx$ have also been proposed, including $\efl$ \citep{BBMN18}, $\efr$ \citep{FHLSY21}, and the notions $\eefx$ and $\mxs$ \citep{CGRSV22}, which were later extended to monotone valuations by \cite{HANR25}. Related work has also examined the compatibility of different fairness guarantees, such as the coexistence of $2/3$-$\mms$ with $\ef1$ \citep{AR25}, the combination of $\mxs$ and $\efl$ \citep{AG25}, and allocations satisfying $\rmms$ and $\efl$ \citep{F25}, and both $\eefx$ and $\efl$ \citep{AMMR26}.
    
    Another approach allows discarding a limited number of goods to charity and seeks $\efx$ allocations over the remaining items. This model was initiated by \citet{CGH19}, who showed that an $\efx$ partial allocation always exists and achieves at least half of the optimal Nash social welfare. \citet{CKMS20} proved that for $n$ agents with monotone valuations, $\efx$ can be achieved by donating up to $n-1$ goods while ensuring that no agent envies the donated bundle. Subsequent work reduced the number of donated goods under additional assumptions: \citet{BCFF21} showed that $n-2$ goods suffice for nice cancelable valuations, and \citet{M24} extended this result to monotone valuations. Also, \citet{BCFF21} proved that for four agents, an $\efx$ allocation exists with at most one donated good. Further improvements were obtained by allowing approximate guarantees, reducing the number of donated goods while achieving $(1-\varepsilon)$-$\efx$ \citep{CGMMM21,ACGMM23,BBK22,CSJS23}.
    
    Related works have also explored fairness notions that are strictly stronger than $\efx$. One such notion is $\pmms$, introduced by \citet{CKMPSW19}. While $\pmms$ implies $\efx$, \citet{BMP25} showed that $\pmms$ allocations do not always exist, even for three agents, a setting in which $\efx$ allocations are known to exist.
    
    A parallel literature studies fair allocation of indivisible chores, where agents are endowed with cost functions. For additive cost functions, \citet{ZW24} obtained an $O(n^2)$-approximation for $\efx$, which was later improved to a $4$-$\efx$ guarantee for any number of agents by \citet{GMQ25}. Independently, \citet{CS24} and \citet{afshinmehr2024approximateefxexacttefx} showed the existence of $2$-$\efx$ allocations for three agents, and \citet{garg2025existence2efxallocationschores} extended this guarantee to any number of agents. Exact $\efx$ allocations for chores are known only in highly restricted settings; in fact, \citet{CS24} showed that with monotone cost functions, instances without $\efx$ allocations exist. Recently, \citet{mahara2025existencefairefficientallocation} showed the existence of an allocation that is both $\ef1$ and Pareto optimal for additive valuation functions.
    More recently, \citet{HT26} showed that there exists four additive cost functions for which
    no $\efx$ allocation exists.
    
    Another line of research considers fairness over time rather than in a single allocation. \citet{CL26} studied temporal fair division of indivisible goods across multiple rounds, introducing temporal analogues of $\ef1$, $\efx$, and maximin share, and establishing sharp possibility and impossibility results both with and without scheduling.
    
    Fairness can also be defined in a share-based manner, where each agent is guaranteed a minimum value that depends only on her own valuation and the set of goods. The most prominent notion in this class is the \emph{maximin share} ($\mms$). An allocation is said to be an $\alpha$-approximate MMS allocation if every agent receives a bundle worth at least an $\alpha$ fraction of her maximin share. The first non-trivial approximation guarantee of $2/3$ was established by \citet{KPW18}, and was later simplified and shown to be computable in polynomial time by a sequence of works including \cite{AMNS17,GHS18,BK20,GMT19}. Subsequent results pushed the approximation frontier to $3/4 + O(1/n)$ \citep{GHS18,GT20,AGST23}. More recent breakthroughs established approximation guarantees strictly exceeding $3/4$, including $3/4 + 3/3836$ by \citet{AG24} and $10/13$ by \citet{HKSS25}.

\section{Preliminaries}\label{sec:prelims}

    An instance of \emph{discrete fair division} is given by a tuple
    $\langle N, M, \{v_i\}_{i \in N} \rangle$, where
    $N = [n] = \{1,2,\dots,n\}$ denotes the set of agents,
    $M = [m]$ is a finite set of $m$ indivisible goods, and
    $\{v_i\}_{i \in N}$ is a profile of valuation functions.
    Each agent $i \in N$ has an associated valuation function
    $v_i : 2^M \rightarrow \mathbb{R}_{\ge 0}$. 
    

    Agent $i$ is said to strictly prefer a bundle $T$ to a bundle $S$
    if $v_i(T) > v_i(S)$, denoted by $S \prec_i T$.
    Similarly, $S \preceq_i T$ denotes weak preference, meaning that
    $v_i(S) \le v_i(T)$.
    For notational convenience, given a set $S \subseteq M$ and an item
    $g \in S$, we write $S \setminus g$ to denote $S \setminus \{g\}$, $S \cup g$ to denote $S \cup \{g\}$,
    and use $g$ as shorthand for the singleton set $\{g\}$.

   \paragraph{Types of valuation functions.} 
    We consider valuation functions that are monotone, i.e., for any $S\subseteq T\subseteq M$, $v(S) \leq v(T)$. 
    One of the most well-studied classes of valuation functions is \emph{additive}: a valuation function $v$ is additive if the value for any bundle $S\subseteq M$ is equal to the sum of the values of its goods, i.e., $v(S)=\sum_{g\in S} v(g)$.
    In this paper, we focus on the more general class of \emph{cancelable} valuations: a valuation function $v$ is \emph{cancelable}~\citep{BCFF21} 
    if for any two bundles $S, T \subset M$ 
    and any good \ $g \in M \setminus (S \cup T)$,
    if $v (S \cup g) > v (T \cup g)$, then $v (S) > v (T)$, i.e., removing the same good from two different bundles would not change the relative preference between the two. It is easy to verify
    that for any cancelable valuation $v$ and bundles $S, T, R$ such that $R \subseteq M \setminus (S \cup T)$ we have the following two properties, which we heavily use throughout the paper: 
    \begin{align*}
        v(S \cup R) > v(T\cup R)  &\Rightarrow v(S) > v(T) \,;\\
        v(S\cup R) \leq v(T\cup R) &\Leftarrow  v(S) \leq v(T)\,.
    \end{align*}

\begin{observation}
\label{cancp}
    If agent $i$ has a cancelable valuation function and for some bundles $S,T,Q,R$ 
    we have $S\lowereqval{i} T$, $Q\lowereqval{i} R$, and 
    $(S\cup T)\cap(Q\cup R)=\emptyset$,
    then $S\cup Q \lowereqval{i}  T \cup R$.
\end{observation}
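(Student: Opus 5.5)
The plan is a two-step exchange argument through the intermediate bundle $T \cup Q$, using the second cancelable property stated just before the observation, namely $v(S)\le v(T) \Rightarrow v(S\cup R)\le v(T\cup R)$ whenever $R$ is disjoint from $S\cup T$, followed by transitivity of $\lowereqval{i}$.

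\textbf{Step 1.} Start from $S \lowereqval{i} T$ and add the bundle $Q$ to both sides. This is permitted because the hypothesis $(S\cup T)\cap(Q\cup R)=\emptyset$ gives $Q \subseteq M\setminus(S\cup T)$. The property then yields $S\cup Q \lowereqval{i} T\cup Q$.

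\textbf{Step 2.} Start from $Q \lowereqval{i} R$ and add the bundle $T$ to both sides. The disjointness hypothesis again gives $T \subseteq M\setminus(Q\cup R)$, so the property applies with the roles of the sets relabelled. This yields $Q\cup T \lowereqval{i} R\cup T$.

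\textbf{Step 3.} Chain the two inequalities of real numbers:
\[
v_i(S\cup Q)\le v_i(T\cup Q)\le v_i(T\cup R).
\]
This is exactly $S\cup Q \lowereqval{i} T\cup R$.

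The only point needing care is that the displayed cancelable property was derived from the single-good definition. It is the contrapositive of the first displayed implication, and that implication follows by removing the goods of $R$ one at a time, each removal preserving strict preference. Since the paper states both properties as easy consequences, I would cite them directly. I therefore do not expect any real obstacle beyond checking the disjointness conditions in Steps 1 and 2.
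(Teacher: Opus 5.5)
Your proposal is correct and follows the paper's proof essentially verbatim: both pass through the intermediate bundle $T\cup Q$, apply cancelability to obtain $S\cup Q \lowereqval{i} T\cup Q$ and $T\cup Q \lowereqval{i} T\cup R$, and chain them. Your side remark, which derives the multi-good form from the single-good definition by removing goods one at a time, is also sound.
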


\begin{proof}
    Since $S\lowereqval{i} T$, and $Q$ is disjoint from $S$ and $T$, 
    by cancelability it holds that $S \cup Q \lowereqval{i} T \cup Q$. 
    Similarly, since $Q \lowereqval{i} R$, and $T$ is disjoint from $Q$ and $R$, 
    by cancelability it holds that $ T \cup Q \lowereqval{i} T \cup R$. By combining these two, we have that
    $S \cup Q \lowereqval{i} T \cup Q \lowereqval{i} T \cup R.$
\end{proof}

\paragraph{Multigraph Instances.}
A fair division instance $\mathcal{I} = \langle [n], [m], \{v_i\}_{i \in [n]} \rangle$
defined on a \emph{multigraph}\footnote{A multigraph may contain multiple edges between the same pair of vertices.}
is represented by a multigraph $G = (V, E)$.
In this representation, the $n$ agents correspond to the vertices in $V$,
while the $m$ indivisible goods correspond to the edges in $E$.

The valuations satisfy the following condition: for every agent
$i \in [n]$ and every subset of goods $S \subseteq [m]$, the value of $S$
to agent $i$ depends only on the goods incident to $i$, namely, $v_i(S) = v_i(S \cap E_i)$, where $E_i \subseteq E$ denotes the set of edges incident to agent $i$.
For any pair of agents $i$ and $j$, let $E_{i,j}$ denote the set of edges
connecting $i$ and $j$; note that $E_{i,j} = E_{j,i}$. Moreover, for every set of agents $A \subseteq N$, we define $E_A = \bigcup_{i \in A} E_i$ to be the set of goods incident to an agent in the set $A$. Similarly, $E_{A, B} = \bigcup_{i \in A, j \in B} E_{i, j}$ denotes the set of goods between agents in sets $A, B$. For notational simplicity, given a subset of agents $A \subseteq N$ and a single agent $i \notin A$,  and we write $E_{i, A}$ to denote $E_{\{i\}, A}$.

 \paragraph{Allocations and Orientations.}
A \emph{partial allocation,} or simply \emph{allocation,} $\X = \langle X_1, X_2, \ldots, X_n \rangle$
is an ordered collection of pairwise disjoint subsets of $M$.
That is, for any two distinct agents $i,j \in N$, we have
$X_i, X_j \subseteq M$ and $X_i \cap X_j = \emptyset$.
The set $X_i$ represents the bundle of goods allocated to agent $i$
under allocation $\X$. An allocation $\X$ is \emph{complete} if every good
is allocated, i.e.,$\bigcup_{i \in [n]} X_i = [m]
$. In this work, we always mention completeness explicitly. 

A \emph{partial orientation} is a partial allocation satisfying the
additional constraint that each agent is assigned only goods incident
to them; formally, for all $i \in [n]$, we have $X_i \subseteq E_i$.

\paragraph{Envy, $\efx$, and Related Notions.}
Let $i \in N$ be an agent and let $S,T \subseteq M$ be two bundles of goods.
Agent $i$ \emph{prefers} $T$ over $S$ (or $i$ \emph{envies} $T$ with respect to $S$ ) if $v_i(T) > v_i(S)$. 
We say that $i$ \emph{strongly envies} bundle $T$ with respect to bundle $S$ if there exists a good
$g \in T$ such that $v_i(T \setminus g) > v_i(S)$. Consider an allocation $\X = \langle X_1, X_2, \ldots, X_n \rangle$.
Agent $i$ is said to envy (respectively, strongly envy) agent $j \in N$
if agent $i$ envies (respectively, strongly envies) $X_j$ with respect to $X_i$.
An allocation $\X$ is called \emph{$\efx$} if no agent strongly envies any
other agent. Let $\Y =  (Y_1, \ldots, Y_k)$ be a partition of a subset of $M$. A bundle $Y_\ell$ is \emph{$\efx$-feasible} for an agent $i$ in $\Y$ if
agent $i$ with $Y_\ell$ does not strongly envy any bundle in $\Y$. Finally, for fair division instances defined on a multigraph,
an allocation $\X$ is an \emph{$\efx$ orientation} if it is an $\efx$ allocation
and satisfies $X_i \subseteq E_i$ for all $i \in [n]$.

    Next, we now dive into specific definitions, notations, and some routines that we use during our proof. We use the concept of \emph{unit bundles} introduced by \citet{AAMM25}, however, defined somewhat differently. Our modified definition is more convenient.

\begin{definition}[Unit Bundle]\label{def:ub}\label{obs:b-notexfenvy-a}\label{obs: ub-ineq}
	For any two agents $i$ and $j$, let $a_{i,j} \sqcup b_{i,j} = E_{i,j}$ be a partition of $E_{i,j}$ such that both $a_{i, j}$ and $b_{i, j}$ are $\efx$-feasible for agent $j$ in the partition $(a_{i, j}, b_{i, j})$, and
	$a_{i,j}  \greatereqval{i}  b_{i,j}$. Note that the second index indicates the agent for which the bundles are EFX, and the first index determines which bundle is called the $a$-bundle. 
	Define  $a_{j,i}$ and $b_{j,i}$ analogously, i.e., $a_{ji}$ and $b_{ji}$ are EFX-feasible for agent $i$ and $a_{ji} \greatereqval{j} b_{ji}$. We call these bundles 
    \emph{unit bundles} between $i,j$ if they additionally satisfy the following inequalities:
    \[  v_i(a_{ij}) \ge \max(v_i(a_{ji}),v_i(b_{ji})) \ge \min(v_i(a_{ji}),v_i(b_{ji})) \ge v_i(b_{ij}) \]
    and 
    \[  v_j(a_{ji}) \ge \max(v_j(a_{ij}),v_j(b_{ij})) \ge \min(v_j(a_{ij}),v_j(b_{ij})) \ge v_j(b_{ji}),\]
    i.e., the partition with respect to $i$ ($j$) is at least as balanced for $i$ ($j$) as the partition with respect to $j$ ($i$).
\end{definition}
    In \cref{sec:computing-unit-bundles}, we provide an algorithm, and in \cref{cor:unitbundleoutput}, we show that such unit bundles are computable for 
    cancelable valuation functions with this algorithm.
    Moreover, in \cref{lem:polyunitbundle}, we show that our algorithm terminates in polynomial time when the valuation functions are cancelable.
    We compute the two decompositions into unit bundles for each pair of agents in a preprocessing step and use them throughout the algorithm. 

\begin{remark} 
    Note that for additive valuations, there is a simple way to get unit bundles. This is to let $(a_{i,j},b_{i,j})$ be a partition of $E_{i,j}$ such that its minimum-valued bundle with respect to agent $j$ has 
    the highest possible value for agent $j$, i,e., for every other partition $(S,T)$ of $E_{i,j}$, we have $$\min (v_{j}(a_{i,j}), v_j(b_{i,j})) \geq \min (v_{j}(S), v_j(T)),$$
    and since the valuation function is additive, it gives us:
    $$\max (v_{j}(a_{i,j}), v_j(b_{i,j})) \leq \max (v_{j}(S), v_j(T)).$$
    Then, after finding this partition, denote the bundles in the partition such that $a_{i,j} \greatereqval{i} b_{i,j}$.\\
    Similarly, we construct two other unit bundles  $(a_{j,i},b_{j,i})$ using the same approach, but by inverse roles for agent $i$ and $j$. The above construction is simple, but requires exponential time. 

    If the valuations are cancelable, then among all the partitions that maximize the minimum-valued bundle, we should choose a partition that minimizes the maximum-valued bundle.
\end{remark}

Unit bundles are called as such because they will not be broken up throughout the algorithm except in one case in \cref{sec:d}. In fact, we keep the following property:

\begin{definition}[Unitary Allocation]\label{def:unitary}
    We call an allocation \emph{unitary} for agent $i$ if:
    \begin{itemize}
        \item For every agent $j\ne i$, agent $i$ may hold either $a_{i,j}$ or $b_{j,i}$ or nothing from $E_{i,j}$, i.e., $X_i\cap E_{i,j}  \in \{a_{i,j},b_{j,i},\emptyset\}$. 
        \item $v_i(X_i) \ge v_i(B)$ for every unallocated unit bundle $B$\footnote{A unit bundle is unallocated, if all goods inside it are unallocated. If a unit bundle is not unallocated, either it is fully allocated, or it is partially allocated because some other unit bundle is allocated.}. 
    \end{itemize}
    We call an allocation \emph{unitary}, if it is unitary for every agent $i$.
\end{definition}


    Note that in a unitary orientation, if $i$ and $j$ both hold parts of $E_{ij}$, then they either hold $(a_{ij}, b_{ij})$ or $(b_{ji}, a_{ji})$. In the former case, $i$ will not envy $j$ and $j$ will not strongly envy $i$. In the latter case, $j$ will not envy $i$ and $i$ will not strongly envy $j$.

\begin{lemma}\label{lem:minimum value}
    If $\X$ is a unitary orientation, then for any agents $i$ and $j$, 
    we have $X_j \greatereqval{j} b_{i,j}$ and $X_j \greatereqval{j} b_{j,i}$. 
\end{lemma}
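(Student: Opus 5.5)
Fix agents $i\ne j$. The approach is a case analysis on what $X_j\cap E_{i,j}$ is. The unitary property says this set is one of $a_{j,i}$, $b_{i,j}$, or $\emptyset$. In each case we combine the unit-bundle inequalities of \cref{def:ub} (the inequality chain for agent $j$) with monotonicity and with the second condition of \cref{def:unitary}.

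\emph{Case 1: $X_j\cap E_{i,j}=a_{j,i}$.} By monotonicity, $X_j\greatereqval{j} a_{j,i}$. The chain for agent $j$ gives
\[ v_j(a_{j,i})\ge \max(v_j(a_{i,j}),v_j(b_{i,j}))\ge v_j(b_{i,j})\ge v_j(b_{j,i}), \]
so both claimed inequalities follow.

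\emph{Case 2: $X_j\cap E_{i,j}=b_{i,j}$.} Monotonicity gives $X_j\greatereqval{j} b_{i,j}$ directly. The chain gives $\min(v_j(a_{i,j}),v_j(b_{i,j}))\ge v_j(b_{j,i})$, hence $v_j(b_{i,j})\ge v_j(b_{j,i})$, and the second inequality follows.

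\emph{Case 3: $X_j\cap E_{i,j}=\emptyset$.} This is the main step. Here we must use the second unitary condition, $v_j(X_j)\ge v_j(B)$ for every unallocated unit bundle $B$. By orientation, $j$ holds nothing of $E_{i,j}$, so only $i$ can hold goods of $E_{i,j}$. By unitarity for $i$, $X_i\cap E_{i,j}\in\{a_{i,j},b_{j,i},\emptyset\}$. We treat each subcase.

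\begin{itemize}
\item If $X_i\cap E_{i,j}=\emptyset$, then all four unit bundles are unallocated. Therefore $X_j\greatereqval{j} b_{i,j}$ and $X_j\greatereqval{j} b_{j,i}$ directly.
\item If $i$ holds $a_{i,j}$, then $b_{i,j}$ is unallocated, so $X_j\greatereqval{j} b_{i,j}$. Using $v_j(b_{i,j})\ge v_j(b_{j,i})$ from the chain, we also get $X_j\greatereqval{j} b_{j,i}$.
\item If $i$ holds $b_{j,i}$, then $a_{j,i}$ is unallocated, so $X_j\greatereqval{j} a_{j,i}$. Since $a_{j,i}$ is $j$-maximal in the chain, it dominates both $b_{i,j}$ and $b_{j,i}$.
\end{itemize}

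The one subtlety to check is that "unallocated" means all goods of the bundle are unallocated. This holds because the partitions are of $E_{i,j}$ and only $i$ and $j$ can hold goods of $E_{i,j}$ in an orientation. I expect this verification to be the only real obstacle.
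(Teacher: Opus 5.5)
Your proposal is correct and follows essentially the same route as the paper. Both proofs split on $X_j\cap E_{i,j}\in\{a_{j,i},b_{i,j},\emptyset\}$, handle the empty case via the second unitary condition, and finish with the chain $a_{j,i}\greatereqval{j} b_{i,j}\greatereqval{j} b_{j,i}$; the only difference is that the paper compresses your three subcases of Case 3 into one observation: by orientation and unitarity of $i$, at least one of $b_{i,j}$ and $a_{j,i}$ is unallocated, and the $j$-smaller of the two is $b_{i,j}$.
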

\begin{proof}
    Because $\X$ is unitary, we have $X_j \cap E_{i,j} \in \{\emptyset, a_{j,i}, b_{i,j}\}$.
    If $X_j \cap E_{i,j} \in \{a_{j,i}, b_{i,j}\}$, using the fact that $a_{j,i} \greatereqval{j} b_{i,j}$, we obtain $X_j \greatereqval{j} b_{i,j}$.
    
    Next, suppose $X_j \cap E_{i,j} =\emptyset$. Because $\X$ is unitary, we have $X_i \cap E_{i,j} \in \{\emptyset, a_{i,j}, b_{j,i}\}$. Since $\X$ is an orientation, at least one of the unit bundles $b_{i,j}$ and $a_{j,i}$ is unallocated.
    Since $\X$ is unitary, we have $X_j \greatereqval{j} B$ for every unallocated unit bundle $B$. Hence, 
    $X_j \greatereqval{j} \arg \min_{S \in \{ b_{i,j},a_{j,i}\}} v_j(S) = b_{i,j}$.
    Moreover, since $b_{i,j} \greatereqval{j} b_{j,i}$, it follows that 
    $X_j \greatereqval{j} b_{j,i}$.
\end{proof}

    Next, we define ``resent", a weaker form of envy that is used in all of our algorithms.

\begin{definition}[Resent]
    In a allocation, we say an agent $u$ \emph{resents} an agent $v$ if $a_{u,v} \greaterval{u} X_u$.
    In this case, we also say that $u$ resents the bundle $a_{u,v}$.
    Moreover, we say that an agent $u$ \emph{most-resents} an agent $v$ if
    $u$ resents $v$, and $a_{u,v} \greatereqval{u} a_{u,j}$ for any other agent $j$. 
   \end{definition}

\begin{remark}
    In a unitary orientation, if $u$ resents $v$, then $a_{u,v} \greaterval{u} X_u$.
    As a result, since valuation functions are monotone, $a_{u,v}$ is not allocated to $u$. Moreover, by the second condition of unitary allocation, 
    $a_{u,v}$ cannot be unallocated.
    Since the allocation is an orientation and $a_{u,v}  \subseteq E_{u,v}$, we get that some of $a_{u,v}$
    is allocated to $v$.
    Since the allocation is unitary, by the first condition, we get $X_v\cap E_{u,v}\in \{\emptyset, a_{v,u},b_{u,v}\}$, and since $a_{u,v} \cap b_{u,v} = \emptyset$, 
    we get $X_v \cap E_{u,v} = a_{v,u}$.  
    Hence, if $u$ resents $v$, then $u$ prefers $a_{u,v}$ over her current bundle, but $v$ owns $a_{v,u}$ and hence $a_{u,v}$ is currently not available for $u$. 

    Note that if $j$ is not resented by $u$ then $X_u \greatereqval{u} a_{uj}$. Therefore, writing $a_{uv} \greatereqval{u} a_{uj}$ for any other $j$ or for any other $j$ resented by $u$ are equivalent.
\end{remark}

\begin{lemma}\label{lemma:resent-envy}
    In a unitary orientation \X, if an agent $j$ envies an agent $i$, then $j$ also resents $i$.
\end{lemma}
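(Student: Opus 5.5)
We need to show: if $j$ envies $i$ in a unitary orientation $\X$, i.e.\ $X_i \greaterval{j} X_j$, then $a_{j,i} \greaterval{j} X_j$. The plan is to bound $v_j(X_i)$ from above by $v_j(a_{j,i})$, which then chains with the envy inequality.

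\textbf{Step 1 (what $j$ sees in $X_i$).} Since $\X$ is an orientation, $X_i \subseteq E_i$. In the graph setting $v_j(X_i) = v_j(X_i \cap E_j)$, and $E_i \cap E_j = E_{i,j}$, so $v_j(X_i) = v_j(X_i \cap E_{i,j})$. By the first condition of \cref{def:unitary} applied to agent $i$, we have $X_i \cap E_{i,j} \in \{\emptyset, a_{i,j}, b_{j,i}\}$.

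\textbf{Step 2 (each option is at most $a_{j,i}$ for $j$).}
\begin{itemize}
\item If $X_i \cap E_{i,j} = \emptyset$, then $v_j(X_i) = v_j(\emptyset) \le v_j(X_j)$ by monotonicity. So $j$ cannot envy $i$, and this case is vacuous.
\item If $X_i \cap E_{i,j} = b_{j,i}$, then $v_j(X_i) = v_j(b_{j,i}) \le v_j(a_{j,i})$, since $a_{j,i} \greatereqval{j} b_{j,i}$ by \cref{def:ub}.
\item If $X_i \cap E_{i,j} = a_{i,j}$, then the unit-bundle inequality for $j$ in \cref{def:ub}, namely $v_j(a_{j,i}) \ge \max(v_j(a_{i,j}), v_j(b_{i,j}))$, gives $v_j(X_i) = v_j(a_{i,j}) \le v_j(a_{j,i})$.
\end{itemize}

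\textbf{Step 3 (conclude).} In every non-vacuous case,
\[
v_j(a_{j,i}) \ge v_j(X_i) > v_j(X_j),
\]
so $a_{j,i} \greaterval{j} X_j$. This is exactly the statement that $j$ resents $i$.

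The only delicate point is the third case, where $i$ holds $a_{i,j}$. This is the bundle in the partition balanced for $j$, and it is the one place where the extra balancing inequality of \cref{def:ub} is needed, rather than just the definition of $a_{j,i}$. Everything else follows directly from the graph structure and unitarity.
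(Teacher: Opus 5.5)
Your proof is correct and follows the same skeleton as the paper's: use the orientation property to reduce $v_j(X_i)$ to $v_j(X_i\cap E_{i,j})$, then apply the first unitarity condition to get three possible bundles. The only difference is in how the three cases are handled. The paper first invokes \cref{lem:minimum value} to rule out $\emptyset$ and $b_{j,i}$, which also shows that $i$ must hold exactly $a_{i,j}$, and then uses $a_{j,i}\succeq_j a_{i,j}$. You instead bound each option by $a_{j,i}$ directly, which avoids the second unitarity condition altogether.
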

\begin{proof}
    Since the allocation is an orientation, $X_i \subseteq E_i$, and therefore, $X_i \cap E_j = X_i \cap E_{i,j}$.
    If agent $j$ envies agent $i$, then by the definition of multigraph instances, we have $X_j \lowerval{j} X_i \cap E_j = X_i \cap E_{i,j}$.
    Combining this with \cref{lem:minimum value}, we obtain $b_{i,j} \lowerval{j} X_i \cap E_{i,j}$.

    Moreover, since the allocation is unitary, we have $X_i \cap E_{i,j} \in \{\emptyset, a_{i,j}, b_{j,i}\}$.
    Using the fact that $b_{j,i} \lowereqval{j} b_{i,j}$ and $b_{i,j} \lowerval{j} X_i \cap E_{i,j}$, 
    we get that $X_i \cap E_j = X_i \cap E_{i,j} = a_{i,j}$.
    Moreover, we have $a_{j,i} \greatereqval{j} a_{i,j}$, so we get $X_j \lowerval{j} X_i \cap E_{i,j} = a_{i,j} \lowereqval{j} a_{j,i}$.
    Hence, the proof is complete.
\end{proof}


\begin{definition}[Resent Graph]
    Given an allocation $\X$, the \emph{resent graph} of $\X$, denoted by $G_r(\X)$, is a directed graph defined as follows. The vertex set of $G_r(\X)$ is $N$, with one vertex corresponding to each agent. For any pair of agents $i,j \in N$, there is a directed edge from $i$ to $j$ if and only if agent $i$ resents agent $j$.
    We denote such a directed edge by $i \to j$.
\end{definition}
    Throughout the paper, the term “path” refers to a directed path.

\begin{definition}\label{def:p}
A partial allocation $\X$ is called \pzero\ if it has the following properties:
\begin{itemize}
    \item $\X$ is an orientation;
    \item $\X$ is unitary;
    \item The resent graph $G_r(\X)$ is a forest.
    \item For any pair of agents $i, j \in N$, if $i \to j$, then $X_j = a_{j, i}$. 
\end{itemize}
    The allocation $\X$ is called \pone\ if it is \pzero\ and every tree in the resent-graph has height at most one. The allocation $\X$ is called \ptwo\ if it is \pzero\ and each agent receives at most one unit bundle. The allocation $\X$ is called \pthree\ if it is both \pone\ and \ptwo.

\begin{observation}\label{obs:basicEFX}
    If an allocation $\X$ is \pzero, then it is \efx.
\end{observation}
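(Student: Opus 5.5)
Fix two agents $u$ and $v$. We show that $u$ does not strongly envy $v$, and we split according to whether $u$ envies $v$ at all. If $X_u \greatereqval{u} X_v$, there is no envy and hence no strong envy, by monotonicity.

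Now suppose $u$ envies $v$. By \cref{lemma:resent-envy}, $u$ resents $v$, so $u \to v$ is an edge of $G_r(\X)$. The last property of a \pzero\ allocation then gives $X_v = a_{v,u}$. We must show that for every good $g \in a_{v,u}$ we have $X_u \greatereqval{u} a_{v,u}\setminus g$. By \cref{def:ub}, the partition $(a_{v,u}, b_{v,u})$ of $E_{u,v}$ is chosen so that both bundles are $\efx$-feasible for agent $u$. In particular, $b_{v,u}$ is $\efx$-feasible for $u$, so $b_{v,u} \greatereqval{u} a_{v,u}\setminus g$ for every $g\in a_{v,u}$.

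It remains to show $X_u \greatereqval{u} b_{v,u}$. This is exactly \cref{lem:minimum value} applied to the pair $(v,u)$, which is valid because $\X$ is a unitary orientation. It gives $X_u \greatereqval{u} b_{v,u}$, together with $X_u \greatereqval{u} b_{u,v}$. Chaining the inequalities yields $X_u \greatereqval{u} b_{v,u} \greatereqval{u} a_{v,u}\setminus g = X_v \setminus g$. Note also that $v_u(X_v\setminus g)$ depends only on $X_v \cap E_u$, and here $X_v\cap E_u = a_{v,u}$ anyway, since $X_v = a_{v,u}$.

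The only delicate point is matching indices, i.e., checking which partition is $\efx$-feasible for $u$. Under the convention of \cref{def:ub}, $a_{v,u}, b_{v,u}$ are $\efx$-feasible for the second index $u$, which is what the argument uses. The forest condition on $G_r(\X)$ is not needed for this observation.
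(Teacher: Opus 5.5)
Your proof is correct and follows the paper's argument: envy implies resent by \cref{lemma:resent-envy}, the last \pzero\ property gives $X_v = a_{v,u}$, and \cref{lem:minimum value} gives $X_u \greatereqval{u} b_{v,u}$, which rules out strong envy. The only difference is that you spell out the $\efx$-feasibility step ($b_{v,u} \greatereqval{u} a_{v,u}\setminus g$) and the index conventions, which the paper leaves implicit.
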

\begin{proof}
    Since $\X$ is \pzero, it is a unitary orientation, so if $i$ envies $j$, then $i$ resents $j$, too.
    Hence, $X_j= a_{j,i}$, and by \cref{lem:minimum value}, $X_i \greatereqval{i} b_{j,i}$, so agent $i$ does not strongly envy agent $j$.
\end{proof}


\end{definition}


\begin{observation}\label{obs:resented-bundle}
    If allocation $\X$ is \pzero, every agent is resented by at most one agent.
\end{observation}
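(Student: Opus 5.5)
Suppose, for contradiction, that agent $j$ is resented by two distinct agents $u$ and $w$. I will derive two incompatible descriptions of $X_j$, using the last property of a \pzero\ allocation together with the structure of unit bundles.

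First, apply the fourth property of \cref{def:p} to each resentment. From $u \to j$ we get $X_j = a_{j,u}$. From $w \to j$ we get $X_j = a_{j,w}$. Hence $a_{j,u} = a_{j,w}$ as sets of goods.

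Next, use the fact that unit bundles are subsets of the corresponding edge sets: $a_{j,u} \subseteq E_{j,u}$ and $a_{j,w} \subseteq E_{j,w}$. Since $u \neq w$, the edge sets $E_{j,u}$ and $E_{j,w}$ are disjoint. So the common bundle $X_j$ lies in $E_{j,u} \cap E_{j,w} = \emptyset$, which forces $X_j = \emptyset$.

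It remains to rule out $X_j = \emptyset$. Since $u$ resents $j$, we have $a_{u,j} \greaterval{u} X_u$. By the Remark following the definition of resent, in a unitary orientation this implies that $j$ owns $a_{j,u}$, that is, $X_j \cap E_{u,j} = a_{j,u}$. So if $X_j = \emptyset$, then $a_{j,u} = \emptyset$.

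The main obstacle is excluding empty unit bundles. If $a_{j,u} = \emptyset$, then $b_{j,u} = E_{j,u}$. The definition orders the partition so that $a_{j,u} \greatereqval{j} b_{j,u}$, which gives $v_j(\emptyset) \ge v_j(E_{j,u})$. By monotonicity this is possible only if $v_j(E_{j,u}) = 0$. Now use the balancing inequality of \cref{def:ub} for agent $u$:
\[ v_u(a_{u,j}) \le \max\bigl(v_u(a_{u,j}), v_u(b_{u,j})\bigr) \le v_u(a_{j,u}) = v_u(\emptyset) \le v_u(X_u). \]
This contradicts $a_{u,j} \greaterval{u} X_u$.

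I expect this final step to carry the weight. If the chain of inequalities were somehow unavailable, I would instead argue directly that resentment forces $a_{j,u}$ to contain goods, because $a_{u,j}$ intersects $a_{j,u}$ by the Remark's reasoning. Either way the contradiction shows that no agent is resented by two distinct agents.
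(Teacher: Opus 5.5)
Your first two steps are sound and match the paper's very terse proof. Property four of a \pzero\ allocation gives $X_j=a_{j,u}=a_{j,w}$, and disjointness of $E_{j,u}$ and $E_{j,w}$ forces $X_j=a_{j,u}=\emptyset$. You do not need the Remark in your third step, since property four already gives this. You are right that the empty case needs an argument; the paper does not supply one.

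\textbf{The gap.} It is in the step you say carries the weight. The inequality $\max(v_u(a_{u,j}),v_u(b_{u,j}))\le v_u(a_{j,u})$ is the reverse of what \cref{def:ub} says. For the pair $u,j$ the balancing condition reads $v_u(a_{u,j})\ge \max(v_u(a_{j,u}),v_u(b_{j,u}))$. With $a_{j,u}=\emptyset$ and $b_{j,u}=E_{u,j}$, this gives $v_u(a_{u,j})=v_u(E_{u,j})$: to $u$, the bundle $a_{u,j}$ is as valuable as the whole edge set, not worthless.

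Concretely, let $E_{u,j}=\{g\}$ with $v_j(g)=0<1=v_u(g)$. Then $a_{j,u}=\emptyset$, $b_{j,u}=\{g\}$, $a_{u,j}=\{g\}$, $b_{u,j}=\emptyset$ are valid unit bundles, and $v_u(a_{u,j})=1>0=v_u(a_{j,u})$. So no comparison of unit-bundle values alone can produce the contradiction. It has to come from where the goods of $a_{u,j}$ sit in the allocation.

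\textbf{The fix.} Use unitarity and the orientation property directly.
\begin{itemize}
\item Since $X_j=\emptyset$, agent $j$ holds nothing from $E_{u,j}$.
\item Since $b_{j,u}=E_{u,j}$, the first unitary condition leaves $X_u\cap E_{u,j}\in\{a_{u,j},E_{u,j},\emptyset\}$.
\item In the first two cases $a_{u,j}\subseteq X_u$, so $X_u\succeq_u a_{u,j}$ by monotonicity.
\item In the third case, no good of $E_{u,j}$ is allocated, because $\X$ is an orientation and only $u$ and $j$ may hold such goods. So $a_{u,j}$ is an unallocated unit bundle, and the second unitary condition gives $X_u\succeq_u a_{u,j}$.
\end{itemize}
Either way $u$ does not resent $j$, which is the desired contradiction.

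This case analysis is also exactly what your fallback needs to become rigorous. The Remark argues that some of $a_{u,j}$ must be held by $j$, but it does not address the case where $u$ holds $b_{j,u}$ and the rest of $a_{u,j}$ is unallocated. Since $b_{j,u}$ can overlap $a_{u,j}$, that case is genuinely open in general. Here it is harmless only because $b_{j,u}=E_{u,j}$.
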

\begin{proof}
    Given two agents $i$ and $j$ such that $i$ resents $j$, by the fourth property of \pzero\ allocation, we get $X_j = a_{j,i}$.
    Since $X_j =  a_{j,i}$, we get that no agent other than $i$ resents $j$.
\end{proof}

We want to show that a \ptwo\ allocation always exists, and, in fact, we use such an allocation as a starting point of our algorithm. To show that such an allocation always exists, we fix an order on our agents and sequentially ask each agent to pick an unallocated unit bundle of maximum value. Specifically, whenever an agent $i$ comes to pick her favorite unit bundle, for any other agent $j$ preceding $i$ in our order, $j$ might only hold $a_{j, i}$ from the set $E_{i,j}$, as she had the chance to pick her bundle greedily. To formalize the procedure, we define the \textsc{Choose($i$)} function for any agent $i \in N$ as follows:


\begin{definition}[Choose Function]
Given an allocation $\X$ and an agent $i$, such that $X_i = \emptyset$, we define \textsc{Choose($i$)} to return the most valuable unallocated unit bundle for agent $i$. 
\end{definition}

	

    Next, we provide and algorithm, and we show that it computes a \ptwo\ allocation in the following lemma.

\begin{algorithm}
\caption{\sc{Greedy Orientation}}\label{alg:p2-maker}
\KwIn{A multigraph instance $\langle N, M, \{v_i\}_{i \in N} \rangle$}
\KwOut{A \ptwo\ allocation $\X$}

\For{$(i \gets 1$;  $i \le n$ ;  $i  \pluseq  1)$}{

    $X_{i} \gets \textsc{Choose($i$)}$ 
}

\Return $\X$

\end{algorithm}

\begin{lemma}\label{lem:p2-maker}
    \Cref{alg:p2-maker} constructs a  \ptwo\ allocation. 
\end{lemma}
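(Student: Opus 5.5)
The plan is to argue by induction over the iterations of the for-loop. The invariant is: after agents $1,\dots,i-1$ have chosen, every agent $k<i$ holds exactly one unit bundle. That bundle lies in some $E_{k,\ell}$ and is either $a_{k,\ell}$ or $b_{\ell,k}$. Moreover, if $\ell>k$, it is $a_{k,\ell}$. Agents $\ge i$ hold nothing. The \ptwo\ properties will then be read off from this invariant and from the greedy choice.

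\textbf{Step 1: what \textsc{Choose}($i$) returns.} Fix any $j\neq i$ and look at $E_{i,j}$ at the moment $i$ chooses. Only $i$ and $j$ can own goods of $E_{i,j}$, since the bundles held are unit bundles inside edge sets incident to their owners. There are two cases.
\begin{itemize}
\item If $j>i$, or if $j<i$ took nothing from $E_{i,j}$, then $E_{i,j}$ is entirely unallocated. By \cref{def:ub}, $a_{i,j}$ is a most valuable unit bundle of $E_{i,j}$ for $i$. We break ties in \textsc{Choose} in favour of $a_{i,j}$ (respectively $b_{j,i}$ below), so a bundle picked from $E_{i,j}$ is $a_{i,j}$.
\item If $j<i$ took a bundle from $E_{i,j}$, then by the invariant she holds $a_{j,i}$. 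The only unit bundle of $E_{i,j}$ that is still fully unallocated is $b_{j,i}$, because $a_{i,j}$ and $b_{i,j}$ each intersect $a_{j,i}$ unless they coincide with $b_{j,i}$ or are empty. In the degenerate coincidences we simply relabel.
\end{itemize}
So $i$ receives $a_{i,j}$ or $b_{j,i}$. If $j>i$, it is $a_{i,j}$, which re-establishes the invariant.

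The main obstacle is exactly this bookkeeping. Unit bundles from the two partitions overlap, so we must be careful about which ones count as ``unallocated'' after a neighbour has picked. We also need a tie-breaking rule in \textsc{Choose} that prefers the admissible bundles $a_{i,j}, b_{j,i}$; \cref{def:ub} guarantees this costs nothing in value.

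\textbf{Step 2: orientation, unitary, simple.} Orientation holds and each agent has at most one unit bundle, directly from Step 1. The first unitary condition also follows from Step 1. For the second unitary condition, note two facts:
\begin{itemize}
\item at the moment $i$ chose, $X_i$ was a maximum over all then-unallocated unit bundles;
\item afterwards, the set of unallocated unit bundles only shrinks.
\end{itemize}
Hence $v_i(X_i)\ge v_i(B)$ for every unit bundle $B$ that is unallocated at the end. If no bundle was available for $i$, the condition holds vacuously both then and later.

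\textbf{Step 3: resent structure.} Suppose $i$ resents $j$, so $a_{i,j}\succ_i X_i$. Since \textsc{Choose}($i$) maximised over unallocated bundles, $a_{i,j}$ was not unallocated when $i$ chose. By Step 1, this means $j<i$ and $j$ holds $a_{j,i}$, and by the invariant $X_j=a_{j,i}$. This gives the fourth property of \pzero. It also shows that every resent edge points from a later agent to an earlier one, so $G_r(\X)$ is acyclic. As in \cref{obs:resented-bundle}, $X_j=a_{j,i}$ determines $i$, so every vertex has in-degree at most one. An acyclic digraph with in-degree at most one is a forest, which completes the proof that $\X$ is \ptwo.
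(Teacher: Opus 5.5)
Your proof is correct and follows essentially the same route as the paper's: greedy order, a per-pair case split on whether the earlier endpoint already took $a_{j,i}$, and the observation that resent edges only point backward in the order; it is also more careful than the paper's about tie-breaking in \textsc{Choose}, about the fourth property of a basic allocation, and about the in-degree bound needed to pass from ``acyclic'' to ``forest''. One small correction: your claim that $b_{j,i}$ is the only unallocated unit bundle of $E_{i,j}$ fails when $a_{i,j}$ or $b_{i,j}$ is a proper subset of $b_{j,i}$, but every unallocated unit bundle of $E_{i,j}$ is disjoint from $a_{j,i}$ and hence contained in $b_{j,i}$, so by monotonicity it is weakly worse than $b_{j,i}$ for $i$ and your tie-breaking rule still returns $b_{j,i}$.
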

\begin{proof}
To show that the allocation is \ptwo, we prove its properties one by one. Consider an arbitrary pair of agents $i$ and $j$. We will show that agent $i$ does not strongly envy agent $j$. If $i$ comes before $j$ in our ordering ($i < j$), then by our greedy construction, $i$ will not envy $j$. If $j$ precedes $i$ in the ordering, then $i$ may only envy agent $j$ if $X_j \cap E_{i, j} \neq \emptyset$. In fact, by the definition of the \textsc{Choose} function, since $j$ picks her bundle greedily, we must have $X_j \cap E_{i, j} = a_{j, i}$. Then, when $i$ comes to pick her favorite bundle, the unit bundle $b_{j, i}$ is unallocated, and by the greedy nature of construction, agent $i$ receives a unit bundle with a value of at least $v_i(b_{j,i})$, that is, $v_i(X_i) \ge v_i(b_{j,i})$. Therefore, by \Cref{obs:b-notexfenvy-a}, agent $i$ does not strongly envy agent $j$. Therefore, the allocation is an $\efx$ orientation. 

Next, we show that the allocation is unitary. Note that by our greedy construction, for every unallocated unit bundle $B$ and every agent $i$ we have that $v_i(X_i) \ge v_i(B)$. Moreover, for every two agents $i$ and $j$, if $X_i \cap E_{i, j} \neq \emptyset$, then by construction, only one of the following cases can happen:

\begin{itemize}
    \item \textbf{Case 1: $i < j$:} Then, $X_i = a_{i, j}$.
    \item \textbf{Case 2: $i > j$ and $X_j \cap E_{i, j} = \emptyset$:} Then, $X_i = a_{i, j}$.
    \item \textbf{Case 3: $i > j$ and $X_j \cap E_{i, j} \neq \emptyset$:} Then, $X_i = b_{j, i}$.
\end{itemize}
Therefore, the allocation is unitary. 

Next, we show that the resent graph does not contain a cycle. Consider an arbitrary subset $A$ of agents, and let $k$ be the agent in this subset who comes first in our order. By our greedy construction, agent $k$ does not resent any other agent in this subset. Therefore, $G_r(\X)$ is a forest. 

Finally, each agent receives at most one unit bundle, which  completes our proof.
\end{proof}

\section{Existence of $\efx$ Allocations on Multigraph Instances}

    We provide an algorithmic proof for the existence of $\efx$ allocations on multigraphs under cancelable valuations. We start with an initial \ptwo\ allocation, which exists by \Cref{lem:p2-maker}. Next, we modify this allocation using \Cref{alg:treebreaker}, discussed in \Cref{sec:p3-maker}, and obtain a \pthree\ allocation. Finally, for the \pthree\ allocation, we distinguish eight cases. For each one, we show how to obtain a complete $\efx$ allocation. 

    For a \pzero\ allocation $\X$, we define six sets.

\begin{definition}
	For an \pzero\ allocation $\X$, we define the following:
    \begin{itemize}
        \item $R(\X)$ is the set of all resented agents in $\X$.
        \item $A_i(\X)$ is the union of the best unallocated unit bundles from the $E_{i,j}$ with $X_i\cap E_{i,j}=\emptyset$, i.e.,
				$$A_i(\X)= \bigsumsq_{j: \ X_i\cap E_{i,j}= X_j\cap E_{i,j} =\emptyset} a_{i,j}
						\  \sumsq \ \bigsumsq_{j: \ X_i\cap E_{i,j}=\emptyset \text{ and } X_j\cap E_{i,j}\ne \emptyset} E_{i, j} \setminus X_j.$$
        \item $B_i$ is the union of all $b_{j,i}$, i.e., $B_i(\X)=\bigsumsq_{j: \ j\ne i} b_{j,i}$.
        \item $C_i(\X)$ is the union of all $a_{i,j}$, where $j$ is a resented vertex, i.e., $C_i(\X) = \bigsumsq_{j \in R(\X)} a_{i,j}$.
        \item $D_i(\X)$ is the union of all $b_{j,i}$, where $j$ is a resented vertex, i.e., $D_i(\X) = \bigsumsq_{j \in R(\X)} b_{j,i}$.
        \item $R_i(\X)$ is the set of agents resented by $i$. We drop the allocation and write $R_i$ when the allocation is clear from the context. 
    \end{itemize}
\end{definition}

   \begin{observation}
       For every \pzero\ allocation, and every agent $i$, $A_i(\X)$ is unallocated in $\X$.
       Moreover, for every resented agent $i$, $C_i(\X),$ and $D_i(\X)$ are unallocated in $\X$.
   \end{observation}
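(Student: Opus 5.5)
We argue directly from the definitions of the sets and from the properties of a \pzero\ allocation. For $A_i(\X)$ we go term by term. Take a $j$ with $X_i\cap E_{i,j}=X_j\cap E_{i,j}=\emptyset$. Since $\X$ is an orientation, only $i$ and $j$ can hold goods of $E_{i,j}$, so all of $E_{i,j}$ is unallocated; in particular $a_{i,j}$ is. Now take a $j$ with $X_i\cap E_{i,j}=\emptyset$ and $X_j\cap E_{i,j}\neq\emptyset$. Again by orientation, the only allocated goods of $E_{i,j}$ are those in $X_j$, so $E_{i,j}\setminus X_j$ is unallocated. Hence every term of the union defining $A_i(\X)$ is unallocated.

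Next let $i$ be resented, say by $k$, so that by the fourth property of \pzero\ allocations $X_i=a_{i,k}$. Fix any $j\in R(\X)$ with $j\neq i$. We will show that $E_{i,j}$ is entirely unallocated; this contains both $a_{i,j}$ and $b_{j,i}$, which gives the claim for $C_i(\X)$ and $D_i(\X)$.

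\begin{itemize}
\item Since $j$ is resented, say by $\ell$, we have $X_j=a_{j,\ell}\subseteq E_{j,\ell}$.
\item Since $X_i=a_{i,k}\subseteq E_{i,k}$, we get $X_i\cap E_{i,j}=\emptyset$ unless $j=k$.
\item Likewise, $X_j\cap E_{i,j}=\emptyset$ unless $\ell=i$.
\end{itemize}

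The main step is to rule out $j=k$ and $\ell=i$.
\begin{itemize}
\item If $j=k$, then $j\to i$ and $i$ is a resented agent, so $j$ holds some bundle. In a forest this is consistent, but we still need $X_j\cap E_{i,j}=\emptyset$. That holds unless $\ell=i$, i.e.\ $i\to j$. Together with $j\to i$ this is a directed 2-cycle, contradicting that $G_r(\X)$ is a forest.
\item If $\ell=i$ and $j\ne k$, then $X_j\cap E_{i,j}=a_{j,i}$, which is allocated, and the claim would fail. So we must use that a forest resent graph in a \pzero\ setting excludes this; I expect this to be the main obstacle.
\end{itemize}

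If the paper intends $R(\X)$-members to be leaves, as in the height-one setting, we would argue that a resented agent holding only $a_{i,k}$ resents nobody. I believe that is only guaranteed for \pone\ allocations, since there trees have height at most one, so a resented vertex has no out-edges. We would therefore invoke that structure, or restrict the indices $j$ to non-neighbours of $i$ in the resent graph, to conclude $X_j\cap E_{i,j}=\emptyset$. Combined with the orientation property, this shows that $E_{i,j}$, and hence $C_i(\X)$ and $D_i(\X)$, are unallocated.
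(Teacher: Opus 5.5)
Your treatment of $A_i(\X)$ is correct and needs only the orientation property. The second part, however, is not established, and your case $j=k$ is mishandled. Your own bullet shows that then $X_i\cap E_{i,j}=a_{i,k}=a_{i,j}\neq\emptyset$. The 2-cycle argument only yields $X_j\cap E_{i,j}=\emptyset$, which is beside the point: the bundle $a_{i,j}\subseteq C_i(\X)$ is now allocated to $i$ herself.

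Nothing in the basic axioms rules this configuration out. The case $j=k$ just means there is a path $\ell\to j\to i$, and resent trees of height two are exactly what \textsc{BreakTree} and \textsc{Reduce Trees} are designed to handle inside basic allocations. So there $C_i(\X)\supseteq a_{i,j}=X_i$ is not unallocated, and the second sentence of the observation is false for general basic allocations. Your case $\ell=i$ (a path $k\to i\to j$) is likewise not excluded by the forest property, contrary to your hope. In that case $X_j=a_{j,i}$ can intersect $a_{i,j}$, so $C_i(\X)$ can fail there too.

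The paper states this observation without proof and only applies it to resented agents of height-one allocations (Main Cases A, C, F, G). There the claim is true, so the right repair is to add that hypothesis. You then need two exclusions, not one. Since every tree has height at most one, a resented agent resents nobody, which gives $\ell\neq i$. An agent who resents someone is a root and hence unresented, which gives $k\neq j$ because $j$ is resented. Your height-one fallback supplies only the first exclusion. With both, $X_i=a_{i,k}$ and $X_j=a_{j,\ell}$ avoid $E_{i,j}$, so by the orientation property $E_{i,j}$ is entirely unallocated, and it contains both $a_{i,j}$ and $b_{j,i}$.
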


    Next, we define our \stages\ and state our main theorem with its proof.

\begin{definition}
    For a \pthree\ allocation $\X$, we distinguish eight \stages. A case is only applied, if none of the preceding \stages\ applies.

    \begin{itemize}
        \item \textbf{\StageA:} There exist four agents $i, j, k, \ell$ such that $k\to i$, $\ell \to j$, and
        \begin{align*}
            D_j(\X) &\sumsq a_{j,k} \greaterval{j} X_j,\\
            D_i(\X) &\sumsq a_{i,\ell} \hspace{0.3mm} \greaterval{i} X_i.
        \end{align*}

        \item \textbf{\Stage\ B:} There exist two agents $i, j$ such that $j\to i$ and
        $$A_j(\X) \sumsq X_j \greatereqval{j} a_{j,i} .$$

        \item \textbf{\Stage\ C:} There exist four agents  $i, j, k, \ell$ such that $k\to i$, $\ell \to j$, and
        $$D_j(\X) \sumsq a_{j,k} \greaterval{j} X_j.$$

        \item \textbf{\Stage\ D:} There exists at most one resent tree with non-zero height in allocation $\X$.

        \item \textbf{\Stage\ E:} There exist two agents $i, j$ such that $j\to i$ and
        $$A_i(\X) \sumsq b_{j,i} \greaterval{i} X_i .$$

        \item \textbf{\Stage\ F:} There exist three agents $i, j, k$ such that $j \to i$, $k$ is non-resented, $k$ does not have $a_{k,j}$, and
        $$a_{k,j} \sumsq D_k(\X) \greaterval{k} X_k.$$
        
        \item \textbf{\Stage\ G:} There exist four agents $i, j, k, \ell$ such that $k\to i$, $\ell \to j$, and
        \begin{align*}
            [C_j(\X)\cap E_{j,R_k}] \sumsq [D_j(\X) \setminus E_{j,R_k}] \sumsq a_{j,k} &\greaterval{j} X_j,\\
            [C_i(\X)\cap E_{i,R_\ell}] \hspace{0.6mm} \sumsq [D_i(\X)\setminus E_{i,R_\ell}] \hspace{0.6mm} \sumsq a_{i,\ell} \hspace{0.6mm} 
			&\greaterval{i} X_i.
        \end{align*}
        \item \textbf{\Stage\ H:} None of the \stages\ A to G applies.
    \end{itemize}
\end{definition}

Next, we provide our main result.

\begin{theorem}
	For any multigraph instance, there exists a complete $\efx$\ allocation under cancelable  valuations. Moreover, such allocations can be computed in polynomial time. 
\end{theorem}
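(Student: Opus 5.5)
The plan is to follow the pipeline already set up in the paper. First I compute, for every pair of agents, the two partitions of $E_{i,j}$ into unit bundles as in \cref{def:ub}. I will rely on \cref{cor:unitbundleoutput} and \cref{lem:polyunitbundle} for existence under cancelable valuations and for polynomial running time. Next I run \cref{alg:p2-maker} to obtain a \ptwo\ allocation (\cref{lem:p2-maker}). I then apply \cref{alg:treebreaker} to turn it into a \pthree\ allocation $\X$. For this step I must show that every rerouting move keeps the allocation \pzero\ and \ptwo. I also need a potential, such as the lexicographic vector of tree heights or the sum of depths in the resent forest, that strictly decreases with each move, so that only polynomially many moves occur. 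By \cref{obs:basicEFX}, every intermediate allocation is $\efx$.

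Before the case analysis, I preprocess $\X$ with the improving swap from the overview. If $j\to i$ and $i$ weakly prefers $b_{j,i}\cup Z_i$ to $X_i$, then $i$ takes that bundle and $j$ takes $a_{j,i}$. This strictly reduces $|R(\X)|$ while keeping $\X$ \pthree, so it can be applied at most $n$ times. Up to notation, this is what \Stage\ E captures. I then prove that each of \StageA--H yields a complete $\efx$ allocation, using a common template. Every unallocated unit bundle is assigned either to an endpoint that is non-resented, which is safe by the unitary property, or it is ``dumped'' on a non-resented agent who does not value it. I then verify three things. Envied agents do not start envying the dump recipient; this follows from the strict inequality that defines the case failing, combined with \cref{cancp}. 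Non-resented agents do not strongly envy anyone; this follows from \cref{lem:minimum value} together with the fact that each of them holds a unit bundle of $E_{j,p}$ at least as good as the dumped one. Finally, each resented agent $j$ with $X_j=a_{j,i}$ remains non-strongly-envied by $i$, because $i$ receives at least $b_{j,i}$. In the cases with hypotheses (A, B, C, E, F, G), the extra inequality is used to reassign $a_{j,i}$ or to build a support pair as described in the overview. \Stage\ D, where at most one tree has positive height, is handled separately; this is the only place where a unit bundle may be split, and there I use $\efx$-feasibility of $(a_{i,j},b_{i,j})$ directly.

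The main obstacle is \Stage\ H, where every inequality in A--G fails. Here I would run the Phase Three construction. For each $j\to i$ and each $k\notin\{i,j\}$, I dump $j$'s share of $E_{j,k}$ on $i$ if $i$ does not resent $k$. Otherwise, I give one bundle of $E_{i,j}$-type pairs to $i$ and the other to a non-resented agent of another tree, which exists because \Stage\ D fails. The negations of C and G are what bound the total dumped value. The negation of C gives $D_j\cup a_{j,k}\lowereqval{j}X_j$. The negation of G gives the mixed $C/D$ bound for the sets $E_{j,R_k}$. Together these show that no envied agent envies the recipient. The negation of F shows that non-resented agents do not envy. Checking all pairwise interactions, especially pairs of resented agents in two different trees, is the delicate part, and I expect most of the work there.

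Polynomial time follows because each phase performs polynomially many value-oracle comparisons, and the case tests only need $O(n^4)$ evaluations of the explicitly defined sets.
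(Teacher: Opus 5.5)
Your pipeline is the paper's: unit bundles via \cref{cor:unitbundleoutput} and \cref{lem:polyunitbundle}, then \cref{alg:p2-maker}, then \cref{alg:treebreaker}, then the eight main cases. However, the preprocessing step you insert before the case analysis is false as stated, and it breaks the reduction.

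After the swap, the formerly resented agent $i$ holds $b_{j,i}\sumsq Z_i$, which in general consists of many unit bundles. The allocation stays height-one (\cref{lem:cycle_support}) but is no longer \ptwo, and a support pair $(j,i)$ appears. The main cases are defined only for \pthree\ allocations, and most handlers use simplicity essentially:
\begin{itemize}
\item \Stage\ B goes through \cref{lem:strong_support}, and \Stage\ C goes through \cref{lem: stageCprop}; both allow at most one agent with several unit bundles.
\item \Stage\ F needs property FP2.
\item \Stage\ G uses \cref{lem:final_support}.
\item The analysis of \Stage\ D assumes one unit bundle per agent.
\end{itemize}
Worse, inside \Stage\ D the swap condition may genuinely hold for the least-resented child $q_1$, since \cref{lem:i>1} covers only $i>1$. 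Swapping there gives the pair $(p,q_1)$ with $X_p=a_{p,q_1}$ while $p$ can still resent $q_2,\dots,q_k$. None of the completion lemmas covers that configuration.

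This is why the paper tests E only after A--D fail and performs a single swap. It then invokes \cref{lem:weak_support}, which needs a weak-resent pair outside $\{s,t\}$ (available precisely because \Stage\ D failed). It also needs an iterated update phase (C0)--(C3), built on (U2) and \textsc{Reduce Trees}, before dumping. This one-support-pair situation is the real difficulty in Cases B, C, E, F and G. Your phrase ``build a support pair as described in the overview'' does not resolve it. A single support pair suffices only under extra domination conditions, e.g.\ those of \cref{lem:added_strong_support}: $s$ resents nobody, and $X_t\greatereqval{t} D_t(\X)\sumsq a_{t,p}$ for every resenter $p\neq t$. The paper enforces such conditions through (U2), (C0)--(C3), (F1) and (F2), each with its own termination bound.

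A secondary gap concerns \cref{alg:treebreaker}: no potential of the kind you propose decreases with each rerouting move. The last agent $i_k$ of the critical path takes a fresh bundle and may newly resent the root of another tree. So both the sum of depths and the vector of tree heights can increase. The paper instead keeps breaking from $i_k$ inside \textsc{Reduce Trees}, and shows that this call terminates because a set of frozen agents grows (\cref{lem:treebreaker support}). It then proves, via a telescoping bound, that $\sum_i\max(\mathrm{depth}_i-1,0)$ drops by at least one per complete call of \textsc{Reduce Trees}, not per move.

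Also, the split in the four-agent subcase of \Stage\ D is done with \textsc{MinimalGreater} (\cref{lem:mg}), not by EFX-feasibility of the unit bundles.
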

\begin{proof}
    We first run \Cref{alg:p2-maker} and obtain a \ptwo\ allocation by \Cref{lem:p2-maker}. Next, we run \Cref{alg:treebreaker} and obtain a \pthree\ allocation by \Cref{lem:p3-maker}. Finally, the \pthree\ allocation falls into one of the \stages\ A, B, C, D, E, F, G, or H. In either case, we obtain a complete $\efx$\ allocation by \cref{lem: stageA}, \cref{lem: stageB}, \cref{lem: stageC}, \cref{lem: stageD}, \cref{lem: stageE}, \cref{lem: stageF}, \cref{lem: stageG}, and \cref{lem: stageH}, respectively. It is clear that all our algorithms terminate in polynomial time. 
    Moreover, we can compute the unit bundles in polynomial time by 
    \cref{lem:polyunitbundle}. Hence, the entire construction runs in polynomial time.
 \end{proof}


 \paragraph{Structure of the Paper:} The sections of the main paper are numbered, and the sections of the appendix are labelled A, B, \ldots.  In \cref{sec:p3-maker}, we introduce some algorithms used throughout the paper and use one of them for computing a \pthree\ allocation. The missing proofs of this section are postponed to \cref{sec:tree_appendix}. For the \pthree\ allocation, we distinguish the eight cases defined above. All cases have the same structure. We first perform a couple of local update rules (none in case H) and then allocate the remaining unallocated goods through a phase called ``dumping." In \cref{sec:dump} we provide some general dumping rule structures that most of the time occur in our dumping phase, and prove some general properties that hold after our dumping phase. More precisely, these properties are essential in proving the correctness of the dumping phase designed for each \stage. The missing proofs of this section are deferred to \cref{sec:dump_appendix}. In \cref{sec:a} and \cref{sec:b}, we deal with cases A and B, respectively. The discussion of the other cases are deferred to the corresponding Sections of the appendix.

\section{ Creating a \Pthree\ Allocation}\label{sec:p3-maker}
    In this section, we introduce algorithms for transforming an allocation with certain properties into a \pone\ allocation, and for transforming a \ptwo\ allocation into a \pthree\ allocation.
    The algorithms are based on the concept of a \emph{critical path}. For an allocation $\X$, let $T(\X)$ be the set of trees in $G_r(\X)$. Moreover, let $T_0(\X)$ be the trees in $T(\X)$ with height zero, 
	which we call trivial trees.

\begin{definition}[Critical Path]
	For a tree $T \in T(\X)$ with root $r$ and height at least two, 
	the \emph{critical path} of $T$ is a directed path that starts at $r$ and ends at a leaf, constructed with the following properties:
\begin{enumerate}
    \item The second vertex on the path is the (unique) non-leaf child $i$ of $r$ that maximizes $v_r(a_{r,i})$.
    \item Recursively, for each vertex $i$ on the path, except for $r$, the next vertex is the child $j$ of $i$ that maximizes $v_i(a_{i,j})$, i.e., the child that $i$ most-resents. 
\end{enumerate}
	Note that the second property does not have the non-leaf condition mentioned in the first property. The length of such a critical path is at least two by the first property.
\end{definition}

    For a tree $T$, let $r(T)$ be the root of $T$, 
    and let $H(T)$ be the length of a longest path in $T$.  Next, we introduce the notion of \emph{breaking} a resent tree with height at least two via 
	\cref{alg:break}.

\begin{algorithm}
\caption{\textsc{BreakTree}$(\X, T)$}\label{alg:break}
\KwIn{A \pzero\ allocation $\X$ and tree $T$ with $H(T)\geq 2$}
\KwOut{A \pzero\ allocation $\X$ and end vertex $i_k$}

	Let $i_1\to\dots\to i_k$ be the critical path of $T$

\For{$(\ell \gets 1$;  $\ell < k$ ;  $\ell  \pluseq  1)$}{
    $X_{i_\ell} \takes a_{i_\ell,i_{\ell+1}}$ 
}

	$X_{i_k} \takes \textsc{Choose($i_k$)}$\\
	\Return $\X,i_k$
\end{algorithm}

    Algorithm \textsc{BreakTree} proceeds as follows. Let $T$ be a resent tree, and let 
	$(i_1 = r(T), i_2, \ldots, i_k)$ be its critical path. We unallocate all bundles currently allocated to agents on this path, allocate to each agent $i_\ell$, for $\ell \in [1, \ldots, k-1]$, $a_{i_\ell, i_{\ell+1}}$, and finally allocate to $i_k$ an unallocated unit bundle of highest value to her.

\begin{lemma}\label{lem:breaktree}
	Suppose $\X$ is a \pzero\ allocation, $T$ is a resent tree in $G_r(\X)$ with $H(T)\geq 2$, and $i_1\to\dots\to i_k$ is the critical path of $T$.
    Then, after the execution of \textsc{BreakTree$(\X, T)$}:
	\begin{enumerate}
    \item Allocation remains an unitary orientation.
	\item Agents $i_2,\ldots, i_{k-1}$ do not resent any agent. 
	\item Agents $i_1,i_2,\ldots,i_{k-2},i_k$ (every agent on the critical path except $i_{k-1}$) are non-resented.
    \item For every $t\ne i_k$, if $t$ did not resent an agent $u$, then $t$ does not resent $u$ in the resulting allocation.
	\item The resent graph remains a forest.
	\end{enumerate}
\end{lemma}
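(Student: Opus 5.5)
The plan is to track exactly which bundles change hands in \textsc{BreakTree} and then check the five properties one at a time. The basic structure of $\X$ does most of the work. Since $i_\ell\to i_{\ell+1}$ in the basic allocation $\X$, the fourth basic property gives $X_{i_{\ell+1}}=a_{i_{\ell+1},i_\ell}$ for $\ell=1,\dots,k-1$. So every path agent except the root holds a single unit bundle inside $E_{i_\ell,i_{\ell+1}}$, and the root holds some unitary bundle. After the path agents' bundles are released, each $E_{i_\ell,i_{\ell+1}}$ is entirely free, so assigning $a_{i_\ell,i_{\ell+1}}$ to $i_\ell$ is legal and incident to $i_\ell$. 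Hence the result is an orientation satisfying the first unitary condition; \textsc{Choose} also returns a unit bundle incident to $i_k$.

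The key monotonicity fact is that every $t\neq i_k$ weakly gains value. Non-path agents keep their bundles. For $\ell<k$, agent $i_\ell$ resented $i_{\ell+1}$, so $a_{i_\ell,i_{\ell+1}}\greaterval{i_\ell} X_{i_\ell}^{\text{old}}$. Since the bundles $a_{u,v}$ are fixed, resentment is the comparison $a_{u,v}\greaterval{u}X_u$, and so no $t\neq i_k$ acquires a new resent edge. This gives item 4.

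Item 2 follows from the critical-path rule. For $2\le \ell\le k-1$, agent $i_\ell$ most-resents $i_{\ell+1}$, so $X_{i_\ell}^{\text{new}}=a_{i_\ell,i_{\ell+1}}\greatereqval{i_\ell}a_{i_\ell,j}$ for every $j$. Here the earlier Remark lets us range over all $j$, not only resented ones. The root is excluded because it only maximizes over non-leaf children.

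For the second unitary condition I will do a case analysis on the newly freed unit bundles.
\begin{itemize}
\item A freed bundle inside $E_{t,i_\ell}$ with $t$ off the path and holding nothing there can become fully unallocated. I will argue that $t$ did not resent $i_\ell$: the only resenter of $i_\ell$ is its tree parent, which is $i_{\ell-1}$ for $\ell\ge2$, and $i_1$ is a root. So $X_t\greatereqval{t}a_{t,i_\ell}$, and by the unit-bundle inequalities of \cref{def:ub} this dominates every unit bundle of $E_{t,i_\ell}$ for $t$.
\item For $i_\ell$ with $\ell<k$, any fully unallocated bundle in $E_{i_\ell,j}$ was either already bounded by the old bundle, or $j$ was not resented by $i_\ell$ and the same unit-bundle domination applies; in both cases the new, larger bundle dominates it.
\item For $i_k$ the condition holds by the definition of \textsc{Choose}, which is applied after all releases.
\end{itemize}
Together with the previous paragraph this gives item 1.

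For item 3, a non-path agent cannot newly resent anyone, by item 4. A path agent $i_\ell$ with $\ell\le k-2$ holds only goods of $E_{i_\ell,i_{\ell+1}}$, which are not incident to $i_k$ because $\ell+1<k$. So only $i_{\ell+1}$ could resent $i_\ell$, and it does not: $i_{\ell+1}$ is $i_k$ only when $\ell=k-1$, and otherwise item 2 applies. (For $i_1$ this uses $k\ge3$, which the critical-path definition guarantees.) Agent $i_k$ holds a bundle from some $E_{i_k,w}$ that was fully free when chosen. Resentment by $w$ would need $a_{w,i_k}\greaterval{w}X_w$, but $a_{w,i_k}$ was unallocated at that moment, so unitarity, checked just before the choice, forces $X_w\greatereqval{w}a_{w,i_k}$. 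The step I expect to be most delicate is this last one, because the freed bundles interact with whoever $w$ is. I would handle it by verifying unitarity for all agents other than $i_k$ on the intermediate allocation, before \textsc{Choose} is called.

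For item 5, the only new edges leave $i_k$, which is now non-resented and therefore has in-degree $0$. Every other vertex keeps in-degree at most $1$: an agent $u$ resented by $i_k$ must hold $a_{u,i_k}$, which identifies $i_k$ as its unique resenter. Edges are only removed elsewhere, so no cycle can form and the resent graph remains a forest.
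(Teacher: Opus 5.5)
Your outline follows the paper's proof closely: monotone gains for every $t\neq i_k$, critical-path maximality for item~2, restricting who can resent $i_\ell$ by what $i_\ell$ holds, and in-degree zero at $i_k$ for item~5. However, the step you flagged as delicate does not work as written.

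You assert that when $i_k$ picks a bundle in $E_{i_k,w}$, the bundle $a_{w,i_k}$ is unallocated, so unitarity of the intermediate allocation yields $X_w\greatereqval{w}a_{w,i_k}$. This fails whenever $w$ already holds a unit bundle of $E_{i_k,w}$.
\begin{itemize}
\item If $w$ holds $a_{w,i_k}$, the conclusion is trivial.
\item But unitarity also allows $w$ to hold $b_{i_k,w}$. Then $a_{i_k,w}$ is free and \textsc{Choose} can return it. Meanwhile $a_{w,i_k}$ typically meets $b_{i_k,w}$, so it is not unallocated, and unitarity only compares $X_w$ with the free bundles $a_{i_k,w}$ and $b_{w,i_k}$.
\end{itemize}
For instance, take $E_{i_k,w}=\{g_1,g_2,g_3\}$ with additive values $v_w=(2,1,1)$ and $v_{i_k}=(1,1,2)$. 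One gets $b_{i_k,w}=\{g_1\}$, $a_{i_k,w}=\{g_2,g_3\}$, $a_{w,i_k}=\{g_1,g_2\}$ and $b_{w,i_k}=\{g_3\}$. Here $a_{w,i_k}$ is strictly better for $w$ than every free unit bundle of $E_{i_k,w}$.

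The repair is to argue in $\X$ rather than in the intermediate allocation. Resenting $i_k$ means $a_{u,i_k}\greaterval{u}X_u$, which does not depend on what $i_k$ receives. In $\X$ the only agent resenting $i_k$ is $i_{k-1}$ (\cref{obs:resented-bundle}). So every $u\notin\{i_{k-1},i_k\}$ still does not resent $i_k$, by your item~4. Agent $i_{k-1}$ does not either, since it now holds $a_{i_{k-1},i_k}$. This removes the need for the intermediate-allocation analysis altogether. The paper's own one-line justification here has the same weakness: it passes from ``nobody envies the chosen bundle'' to ``nobody resents $i_k$''. The argument via item~4 is the rigorous version.

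There is also a smaller omission in item~1. For the first unitary condition at $i_k$, it is not enough that \textsc{Choose} returns some unit bundle incident to $i_k$, because $b_{i_k,w}$ and $a_{w,i_k}$ are not permitted there. You need the paper's argument. If $a_{i_k,w}$ is free, it is $i_k$'s most valuable unit bundle in $E_{i_k,w}$. Otherwise $w$ holds part of it, so by unitarity $w$ holds $a_{w,i_k}$. Then every free unit bundle of $E_{i_k,w}$ lies inside $b_{w,i_k}$, and $i_k$ takes $b_{w,i_k}$.
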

\begin{proof}
    \begin{enumerate}
    \item For every agent $i$ not in the critical path, her bundle does not change,
    so for every other agent $j$, since allocation was unitary, we still have 
    $X_i \cap E_{i,j}= \{a_{i,j}, b_{j,i},\emptyset\}$.
    Moreover, each agent $i_\ell$, for $\ell \in [1, \ldots, k-1]$, gets her most preferred unit bundle $a_{i_\ell, i_{\ell+1}}$ by definition of critical path and \cref{lem:minimum value}.
    
    For every agent $i\ne i_k$, the value of her bundle does not decrease, and for every agent $i\ne i_1$,
    the only unit bundle incident to agent \(i\) that may have become unallocated is \(a_{i,i_1}\). However, since \(i_1\) was non-resented,
    we had $X_i \succeq_i a_{i,i_1}$.
    In addition, the only unit bundle incident to agent $i_1$ is the one she released, but she does not resents that because she is getting a more valuable bundle.
    Therefore, for every agent $i\ne i_k$, the first two condition of unitary allocation holds.

    Additionally, $i_k$ gets the unallocated unit bundle of highest value to her.
    If it is in $E_{i,j}$, then if $a_{i_k,j}$ was unallocated, she gets $a_{i_k,j}$. Otherwise, 
    we should have that agent $j$ possessed $a_{j,i_k}$ since the unitary condition holds for every other agent. Thus, agent $i_k$ gets $b_{j,i_k}$. Hence, the unitary condition holds for agent $i_k$ as well, so allocation remains unitary and orientation.

    \item Since allocation remains unitary orientation, we can talk about resent. 	
    Agents $i_2,\ldots,i_{k-1}$ get their most preferred unit bundle by definition of critical path and \cref{lem:minimum value}. Hence, 
    they do not resent anyone anymore.

    \item For every $\ell \in \{1,\ldots, k-2\}$, agent $i_\ell$ gets $a_{i_\ell,i_{\ell+1}}$, so
    the only agent who may resent $i_\ell$ is agent $i_{\ell+1}$. However, since agent $i_{\ell+1}$
    was previously resented by $i_\ell$, we had 
    $X_{i_{\ell+1}}= a_{i_{\ell+1},i_\ell} \greatereqval{i_{\ell+1}} a_{i_\ell,i_{\ell+1}}$,
    so since agent $i_{\ell+1}$ is getting a more valuable bundle from before, $i_{\ell+1}$ does
    not resent agent $i_\ell$. Hence, $i_\ell$ is non-resented.
    Moreover, since agent $i_k$ gets an unallocated unit bundle, and since allocation was unitary,
    no one envied that unallocated unit bundle, so no one resents agent $i_k$.

    \item 
    Agents $i_2,\ldots, i_{k-1}$ do not resent any agent, and agent $i_1$ may only resent the ones she resented before. 
    Moreover, $i_1,\ldots, i_{k-2}, i_k$ are non-resented, and $i_{k-1}$ is only resented by $i_k$,
    so except from agent $i_k$, no new resent is created.
    
    \item Since there was no cycles in the resent graph before, and new resent is possible only from agent $i_k$, if there is a new cycle, 
    it should contain agent $i_k$, but since $i_k$ is non-resented, this may not occur. \qedhere
    \end{enumerate}
\end{proof}

    
	Note that the root $i_1$ may still resent other agents, 
	since she receives only her most preferred bundle among those belonging to non-leaf agents,
    but no path starting from agent $i_1$ will have a length greater than one anymore.
    By construction, $i_k$ resents $i_{k-1}$ after the procedure, and may also resent additional agents.


    Next, we present the \cref{alg:treereduce} that we use both in the algorithm of constructing a \pthree\ allocation 
    and when we want to turn a given allocation with special properties to a \pone\ allocation.

\begin{algorithm}
\caption{\textsc{Reduce Trees}}\label{alg:treereduce}
\KwIn{A \pzero\ allocation $\X$, and an agent $i$.}
\KwOut{A \pzero\ allocation $\X$}
		Let $T$ be the resent tree rooted at $i$\\
    \While{$H(T)>1$}{
		$\X,j\gets  \textsc{BreakTree}(T)$  \\
		Let $T$ be the resent tree rooted at $j$
    }
\Return $\X$
\end{algorithm}

	Next, we present the  \cref{alg:treebreaker} which turns a \ptwo\ allocation into a \pthree\ allocation.

\begin{algorithm}
\caption{\textsc{Remove Trees}}\label{alg:treebreaker}
\KwIn{A \ptwo\ allocation $\X$}
\KwOut{A \pthree\ allocation $\X$}

\While{there exists a $T\in T(\X)$ with $H(T)>1$}{    
	\textsc{Reduce Trees}($\X$,$r(T)$)
}
\Return $\X$
\end{algorithm}

\begin{restatable}{lemma}{lemtreebreakersupport}\label{lem:treebreaker support}
    Given a \pzero\ partial allocation $\X$, and a partition of agents into 3 parts $H,G,\{h_0\}$, and an agent $p\in H$ such that:
    \begin{enumerate}
        \item $\forall r \in N\setminus H$, for $i= \arg\max_{\ell \in H \setminus p} v_r(a_{r,\ell})$,
        unit bundle $a_{i,r}$ is not allocated to agent $i$.
		\item Agent $p$'s bundle is a unit bundle in $E_{p,h_0}$, and agent $p$ does not resent anyone.
        \item $h_0$ is non-resented.
		\item Every agent in $N\setminus H$ only holds one unit bundle.
    \end{enumerate}
	The execution of \textsc{Reduce Trees}$(\X,h_0)$ terminates after at most $n$ iterations of the while loop, 
	and at the end, the agents in $H$ still hold their initial bundle, 
    every agent in $N\setminus H$ possesses exactly one unit bundle,
    and $h_0$ remains non-resented.
\end{restatable}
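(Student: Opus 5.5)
The plan is to prove, by induction on the iterations of the while loop of \textsc{Reduce Trees}$(\X,h_0)$, an invariant that strengthens the conclusion. Let $T$ be the tree currently being processed, with root $j$. Initially $j=h_0$; after an iteration, $j$ is the end vertex $i_k$ returned by \textsc{BreakTree}. The invariant has four parts:
(a) every agent in $H$ still holds her initial bundle;
(b) every agent in $N\setminus H$ holds exactly one unit bundle;
(c) $h_0$ is non-resented;
(d) $T$ contains no agent of $H$ except possibly $p$ as a leaf, and $p$ still resents nobody.

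Conditions 1--4 of the hypothesis are exactly what make (a)--(d) hold at the start and what make them survive each step. Parts (b) and (c) at the start are conditions 4 and 3. For (d), note that $p$ resents nobody, so $p$ is a leaf whenever it lies in $T$. Moreover, since the resent graph is a forest with in-degree at most one (\cref{obs:resented-bundle}), an agent $h\in H\setminus p$ lies in $T$ only if some $r\notin H$ resents $h$. In that case $X_h=a_{h,r}$. We will use condition 1 to show that $r$'s most-resented child is never in $H\setminus p$.

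\textbf{Key step: the critical path of $T$ avoids $H\setminus p$.} Let $i_1\to\dots\to i_k$ be the critical path. Suppose $i_\ell\in H\setminus p$ for some $\ell\ge 2$. Then $i_{\ell-1}\notin H$, by the invariant restricted to parents. The vertex $i_\ell$ is the child most-resented by $i_{\ell-1}$ (or, for $\ell=2$, the most-resented non-leaf child). Since $a_{i_{\ell-1},i_\ell}$ is the most valuable among all the resented $a$-bundles, the argmax in condition 1 for $r=i_{\ell-1}$ can be taken to be $i_\ell$, after checking that non-resented $\ell'$ satisfy $a_{r,\ell'}\lowereqval{r} X_r$ by the Remark. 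Condition 1 then says $X_{i_\ell}\neq a_{i_\ell,i_{\ell-1}}$, which contradicts the basic property $X_{i_\ell}=a_{i_\ell,i_{\ell-1}}$. The only agent of $H$ that may appear is $p$, and only as a leaf. If $p=i_k$, I have to treat separately whether $p$ is re-chosen. I would first try to rule this out: $p$ is a leaf child of $i_{k-1}$ with $X_p\subseteq E_{p,h_0}$, so $i_{k-1}=h_0$. But $h_0$ is the root, and the first vertex after the root is a non-leaf, so $k\ge 3$ and $i_{k-1}\neq h_0$. Hence $p\notin$ the path. This is the step I expect to be the main obstacle, in particular making the argmax/tie-breaking in condition 1 line up with the critical-path choice; I would resolve ties in favour of the critical child.

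\textbf{Propagating the invariant.} Since no agent of $H$ is on the path, \textsc{BreakTree} changes only the bundles of agents in $N\setminus H$, and each gets exactly one unit bundle ($a_{i_\ell,i_{\ell+1}}$ or \textsc{Choose}$(i_k)$). This gives (a) and (b). For (c): $h_0$ was non-resented. If $h_0=i_1$, \cref{lem:breaktree}(3) keeps it non-resented. If $h_0$ is off the path, its bundle is unchanged, and by \cref{lem:breaktree}(4) only $i_k$ can gain new resentments. The new tree has root $i_k$, which is non-resented by \cref{lem:breaktree}(3). I need $i_k$ not to resent $h_0$. This is where hypothesis 2 comes in: I would argue that $h_0$'s bundle after the first iteration is the $a$-bundle toward her critical child. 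The later iterations must also be handled; if this direct argument fails, I would instead show that $h_0$ is never resented by noting that any agent resenting $h_0$ must own a unit bundle in $E_{\cdot,h_0}$ that conflicts with $X_{h_0}$, and track this through the unitary property. Hypothesis 1 must also be re-established for the new allocation. Bundles of $H$ are unchanged and new bundles are $a$-bundles oriented along the path, so the only threatened instance is $a_{i,r}$ allocated to $i\in H$, which cannot happen by (a). For (d), $p$'s bundle is unchanged and no bundle becomes newly unallocated-and-better for $p$ except released bundles. By unitarity (\cref{lem:breaktree}(1)) $p$ weakly prefers $X_p$ to those, and $p$ resents no one.

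\textbf{Termination.} In each iteration, \cref{lem:breaktree}(2),(3) show that the agents $i_2,\dots,i_{k-1}$ become non-resenting and that the processed portion never rejoins the new tree rooted at $i_k$. The path agents $i_1,\dots,i_{k-2}$ never again appear as internal vertices of the current tree. I would use the potential ``number of agents that still resent somebody inside the current tree'', which strictly decreases since $k\ge 3$, to bound the number of iterations by $n$.
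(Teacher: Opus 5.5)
Your fixed-$H$ invariant leaves a genuine gap. Invariant (a)--(d) is anchored to the fixed set $H$ and the fixed agent $p$, so it says nothing about agents processed in earlier iterations, and two of the conclusions hinge on exactly those agents.

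\textbf{Termination.} Your potential can increase. After \textsc{BreakTree}, the new root $i_k$ holds $\textsc{Choose}(i_k)$, which may be far worse for her than $a_{i_k,i_{k-1}}$. She may then start resenting the root $y$ of an unrelated deep tree (any $y$ holding $a_{y,i_k}$), and that entire tree, with all its resenting agents, joins the current one. Your remark that $i_1,\dots,i_{k-2}$ never reappear is argued only for the next tree, and the leaf $i_{k-1}$ left under $i_k$ is not tracked at all.

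\textbf{Conclusion (c).} You rightly note that after the first iteration $X_{h_0}=a_{h_0,i_2}$, so only $i_2$ can ever resent $h_0$, and $i_2$ resents nobody right afterwards by \cref{lem:breaktree}(2). Finishing requires that $i_2$'s bundle never changes again, i.e.\ that no later critical path visits $i_2$. Invariant (d) cannot give this, because $i_2\notin H$ in general. Similarly, your argument that $p$ is off the path uses that $h_0$ is the current root, which is true only in the first iteration.

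\textbf{The missing idea.} The hypotheses reproduce themselves. After an iteration with critical path $i_1\to\dots\to i_k$, put $U=H\cup\{i_1,\dots,i_{k-1}\}$, $p'=i_{k-1}$, $h_0'=i_k$. Then:
\begin{itemize}
\item $p'$ holds $a_{i_{k-1},i_k}\subseteq E_{p',h_0'}$ and resents nobody, and $h_0'$ is non-resented (both by \cref{lem:breaktree});
\item agents outside $U$ hold one unit bundle;
\item the new members of $U\setminus p'$ (the old $p$ and $i_1,\dots,i_{k-2}$) hold unit bundles toward other members of $U$, so hypothesis~1 holds for $U$.
\end{itemize}
You must also check that the allocation stays \pzero, since \textsc{BreakTree} requires it. This holds because the only new resentments come from $i_k$, hypothesis~1 prevents her from resenting anyone in $H\setminus p$, and everyone else holds a single unit bundle.

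Your path-avoidance argument applied to $(U,p',h_0')$ then shows that every later critical path avoids $U$. The root $i_1$ was outside the old $U$ and lies in the new one, so $|U|$ grows each iteration, giving at most $n$ iterations. The bundles of $H\subseteq U$ are frozen, and $i_2\in U$ is frozen, which yields (c).

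\textbf{The key step.} The tie-breaking worry there is unnecessary. Also, your claim that $a_{i_1,i_2}$ is the root's best resented bundle is false, since $i_2$ is only the best non-leaf child. The direct argument is this: if the maximizer $i$ in hypothesis~1 does not hold $a_{i,r}$, then $a_{r,i}$ is held by $r$ or unallocated, so $X_r\greatereqval{r} a_{r,i}\greatereqval{r} a_{r,\ell}$ for all $\ell\in H\setminus p$. That is, no agent outside $H$ resents anyone in $H\setminus p$, however ties are broken.
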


\begin{restatable}{lemma}{localtreebreakerends}\label{local tree breaker ends}
    	Given a \ptwo\ allocation $\X$ and $T\in T(\X)$, the execution of \textsc{Reduce Trees}$(\X,r(T))$ terminates after at most $n$ iterations, and allocation remains \ptwo. 
\end{restatable}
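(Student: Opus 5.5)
The proof splits into two independent claims. First, one call of \textsc{BreakTree} keeps a \ptwo\ allocation \ptwo. Second, the while loop of \textsc{Reduce Trees} runs at most $n$ times. The first claim is mostly bookkeeping on top of \cref{lem:breaktree}. The second is where the real work lies.

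\emph{Preservation of being \ptwo.} Let $\X$ be \ptwo, let $T$ be the current tree with $H(T)\ge 2$, and let $i_1\to\dots\to i_k$ be its critical path. \cref{lem:breaktree} already gives that the new allocation is a unitary orientation and that its resent graph is a forest. Each $i_\ell$ with $\ell<k$ receives the single unit bundle $a_{i_\ell,i_{\ell+1}}$. Agent $i_k$ first has her bundle released, so $X_{i_k}=\emptyset$ and \textsc{Choose} is well defined, and she then receives one unit bundle. All other agents keep their bundles. Hence every agent still holds at most one unit bundle.

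It remains to check the fourth property of \pzero\ allocations: if $u\to v$ then $X_v=a_{v,u}$. By the remark following the definition of resent, in any unitary orientation $u\to v$ forces $X_v\cap E_{u,v}=a_{v,u}$. Since $v$ holds at most one unit bundle, this gives $X_v=a_{v,u}$. So in the \ptwo\ setting the fourth property is automatic once the allocation is a unitary orientation with one bundle per agent. This settles that the allocation stays \ptwo.

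\emph{Termination.} This is the main obstacle. I would use items 2–4 of \cref{lem:breaktree} to track which agents can still lie on a future critical path. After one iteration, the agents $i_2,\dots,i_{k-1}$ resent nobody, and there is at least one such agent since $k\ge 3$. By item 4, an agent other than the end vertex never acquires a new resent. Moreover, the next tree to be processed is the one rooted at $i_k$.

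The plan is to prove that the end vertices $j_1,j_2,\dots$ returned by successive calls of \textsc{BreakTree} are pairwise distinct, which yields the bound of $n$ iterations. The key observation is that in the tree rooted at $j_t$, every vertex below $j_t$ is resented by someone. Each resent edge inside this tree, apart from the edges leaving $j_t$ itself, either already existed before or was created by an earlier end vertex. By item 4 those earlier end vertices, once they have been passed over, cannot acquire new resents. I would then argue by induction that the tree rooted at $j_t$ contains no earlier end vertex $j_s$ with $s<t$ as an internal (resenting) vertex. Since an end vertex is always a leaf of the processed tree, this makes $j_{t+1}$ a fresh agent.

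If the distinctness argument turns out to be delicate, my fallback is a potential function. I would take the vector of agents' values, ordered by the time each agent last served as end vertex. Every agent other than $i_k$ weakly gains value (item 1 and the critical-path choice), and the set of non-resenting agents on previously broken paths strictly grows in each iteration. Either version gives at most $n$ iterations.
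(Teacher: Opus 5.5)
Your preservation argument is fine. It matches the paper's observation that every agent whose bundle changes receives a single unit bundle, and deriving the fourth \pzero\ property from the remark after the definition of resent is a legitimate shortcut.

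The gap is in the termination part. The inductive statement you propose says that no earlier end vertex $j_s$ is an \emph{internal} vertex of the tree rooted at $j_t$. This does not imply that $j_{t+1}$ is fresh: $j_{t+1}$ is a \emph{leaf} of that tree, and your hypothesis does not prevent this leaf from being some $j_s$. Your appeal to item 4 of \cref{lem:breaktree} is also circular. Item 4 only says that the non-end vertices of the \emph{current} iteration acquire no new resent. So $j_s$ regains the ability to resent exactly when it becomes an end vertex again, which is what you are trying to exclude.

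More fundamentally, tracking end vertices alone cannot work. After $j_s$ serves as the root $i_1$ of a critical path, it holds $a_{j_s,i_2}$ and can be resented again by that $i_2$. Keeping $j_s$ out of later trees therefore requires knowing that the interior vertex $i_2$ never reappears on a critical path either. The fallback has the same hole. The set of ``non-resenting agents on previously broken paths'' grows monotonically only if those agents are never placed on a later critical path. Also, the end vertex can lose value, so your value vector is not monotone.

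The missing idea is to track \emph{all} roots and interior vertices of processed critical paths. The paper does this by applying \cref{lem:treebreaker support} after the first iteration, with $H=\{i_1,\dots,i_{k-1}\}$, $p=i_{k-1}$, $h_0=i_k$. Its proof maintains a set $U$: initially $U=H$, and each iteration adds $i_1,\dots,i_{k-1}$ of the new critical path and resets $p=i_{k-1}$, $j=i_k$. The invariants are:
\begin{enumerate}
\item the current root $j$ is not in $U$;
\item no agent outside $U$ resents a member of $U\setminus\{p\}$. In the present setting this holds because each such member holds a unit bundle shared with another member of $U$, while $r\to u$ would force $X_u\cap E_{r,u}=a_{u,r}$;
\item $p$ holds a unit bundle of $E_{p,j}$ and resents nobody. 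Hence $p$ is at most a leaf child of $j$ and cannot lie on a critical path, since the second vertex of a critical path must be a non-leaf.
\end{enumerate}
By induction along the path, every vertex of the next critical path lies outside $U$. The root $j$ is then added to $U$, so $|U|$ strictly increases. This gives at most $n$ iterations and, as a by-product, the distinctness of end vertices you wanted.
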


	We define a potential function that is decreased by each iteration of the  while loop in \Cref{alg:treebreaker}.
    For an agent $i$ and allocation $\X$, let $D_i(\X)$ be the depth of $i$ in the resent tree to which she belongs, i.e., the length of the longest path towards agent $i$ in $G_r(\X)$.
    Moreover, let $\phi_i(\X)= \max((D_i(\X)-1),0)$, and let
	$$\phi(\X)= \sum_{i\in [n]} \phi_i(\X).$$


\begin{restatable}{lemma}{lemtreei}\label{lem:p3-maker}
        Executing \cref{alg:treebreaker} on a \ptwo\ allocation $\X$ produces a \pthree\ allocation in polynomial time, assuming that the unit bundles are given as input.
\end{restatable}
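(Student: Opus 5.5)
We need to show: \cref{alg:treebreaker} on a \ptwo\ input outputs a \pthree\ allocation in polynomial time. The plan has three parts: invariance of the \ptwo\ property, correctness at termination, and a bound on the number of outer iterations via the potential $\phi$.

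\textbf{Invariance and correctness.} By \cref{local tree breaker ends}, each call \textsc{Reduce Trees}$(\X,r(T))$ terminates after at most $n$ iterations of its inner loop and keeps the allocation \ptwo. By induction over the outer loop, the allocation is therefore \ptwo\ at every stage. When the outer loop stops, no tree $T\in T(\X)$ has $H(T)>1$. So every resent tree has height at most one, the allocation is \pone, and hence it is \pthree.

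\textbf{Termination.} I will show that every call to \textsc{BreakTree} strictly decreases the nonnegative integer potential $\phi(\X)$. Then the total number of \textsc{BreakTree} calls is at most $\phi$ of the initial allocation, which is at most $n^2$, since each depth is below $n$. Each call is clearly polynomial, given the unit bundles. Choose is a maximum over polynomially many bundles, and recomputing the resent graph and a critical path is polynomial.

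Fix one \textsc{BreakTree} on a tree $T$ with critical path $i_1\to\dots\to i_k$, where $k\ge 3$. Before the call, $i_\ell$ has depth at least $\ell-1$, so $\phi_{i_k}\ge k-2\ge 1$. After the call, \cref{lem:breaktree} tells us the following:
\begin{itemize}
\item No new resent edges arise except ones leaving $i_k$.
\item The agents $i_2,\dots,i_{k-1}$ resent nobody.
\item Every agent on the path except $i_{k-1}$ is non-resented.
\item $i_{k-1}$ can be resented only by $i_k$.
\end{itemize}
We treat the agents in three groups.
\begin{itemize}
\item \emph{Agents on the path.} The agents $i_1,\dots,i_{k-2}$ and $i_k$ become roots and get $\phi=0$. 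Agent $i_{k-1}$ has depth at most one, since $i_k$ is non-resented, so her $\phi$ is also $0$.
\item \emph{Subtrees hanging off $i_2,\dots,i_{k-1}$.} These agents lose their incoming edge, so their depths drop.
\item \emph{Agents in trees reached through $i_k$.} Edges leaving $i_k$ can hang other trees (or subtrees) below $i_k$. Here we need that $i_k$ is a root, so these agents get depth $1$ plus their depth inside the subtree. We must compare this with their depths before the call.
\end{itemize}

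\textbf{Main obstacle.} The hard step is controlling agents newly attached below $i_k$, since their depth may increase. I would try first to use the \ptwo\ structure. Each agent holds at most one unit bundle, and $i_k$ receives exactly one unallocated unit bundle, from some $E_{i_k,j}$. Any agent $u$ that $i_k$ newly resents must hold $a_{u,i_k}$ by the \pzero\ property, which follows from unitarity and the Remark after the definition of resent. I would then argue that the subtrees newly attached below $i_k$ come from trees whose depth profile is at most the depth $i_k$ had before. Compare their new depth with the positive $\phi$ freed on the path, $\sum_{\ell\ge 3}\phi_{i_\ell}\ge\sum_{\ell=3}^{k}(\ell-2)$, which is at least $1$.

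If this charging fails, the fallback is a lexicographic potential: the sorted vector of tree heights, or the number of agents at depth at least $2$. For this I would use \cref{local tree breaker ends} as a black box inside each \textsc{Reduce Trees} call. The total work is then polynomial provided each outer iteration strictly decreases $\phi$ between calls.
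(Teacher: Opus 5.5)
Your invariance and correctness part is fine and matches the paper. The termination argument, however, has a genuine gap: the claim that every single \textsc{BreakTree} call strictly decreases $\phi$ is false, so no charging against the potential freed on the critical path can succeed. After the call, $i_k$ may newly resent any non-resented agent $u$ with $X_u=a_{u,i_k}$. Such a $u$ can be the root of a whole other tree of the forest, and then all of that tree's members move one level down. Concretely, let $T$ be $i_1\to i_2\to i_3$, and let another tree be $u\to w_1,\dots,u\to w_m$ with $X_u=a_{u,i_3}$ and $a_{i_3,u}\lowerval{i_3}a_{i_3,i_2}=X_{i_3}$. Suppose that, once $i_2$ holds $a_{i_2,i_3}$, the best bundle \textsc{Choose}$(i_3)$ can return is worth less than $a_{i_3,u}$ to $i_3$. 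Then afterwards $i_3\to u\to w_j$ for all $j$, and these agents contribute $1$ to $\phi$ before the call but $m$ after it.

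Nothing bounds the size of the attached trees by the depth $i_k$ had before, so your proposed comparison has no basis. The same example shows that the number of agents at depth at least $2$ can also increase per call. If $u$'s own tree has height $\ge 2$, a taller tree even appears, so your lexicographic fallbacks fail per call as well. Your final sentence simply assumes that each outer iteration decreases $\phi$, which is exactly what has to be proved.

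The missing idea is to measure progress over one entire call \textsc{Reduce Trees}$(\X,r(T))$ rather than per \textsc{BreakTree}. This uses the fact that the inner loop next breaks exactly the tree $T_{\ell+1}$ rooted at the new $i_k$. Write $N_{\ge 2}(T)$ for the number of agents of depth at least $2$ in a tree $T$. The paper proves, for consecutive inner iterations, $\phi(\X^{\ell+1})\le\phi(\X^{\ell})+N_{\ge2}(T_{\ell+1})-N_{\ge2}(T_\ell)$, for three reasons:
\begin{itemize}
\item Each $\phi_i$ grows by at most one, and only for agents of $T_{\ell+1}$, since new resent edges leave only the non-resented $i_k$.
\item Every agent of $T_\ell$ at depth $\ge 2$ loses at least one level: its longest path can no longer pass through the old root, and no path of length $\ge 2$ from $i_k$ reaches it.
\item No other agent's $\phi$ increases.
\end{itemize}
Summing over the $t$ inner iterations telescopes. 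The final tree has no agent of depth $\ge 2$, while $T_1$ has at least one, so each outer iteration lowers $\phi\le n^2$ by at least one. Together with at most $n$ \textsc{BreakTree} calls per outer iteration (by \cref{local tree breaker ends}), this gives the polynomial bound.
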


		Next, we present a technical lemma on our tree-breaking procedure, which we use throughout the paper.


\begin{restatable}{lemma}{lemtreeiii}\label{lem:treebreaker2}
    Given a \pzero\ partial allocation $\X$, and a partition of agents into 3 parts $H,G,\{h_0\}$, and an agent $p\in H$ such that:
    \begin{enumerate}
		\item $\forall j \in N\setminus H$, for $i= \arg\max_{\ell \in H \setminus p} v_j(a_{j,\ell})$,
        unit bundle $a_{i,j}$ is not allocated to agent $i$.
		\item $X_p=a_{p,h_0}$, and agent $p$ does not resent anyone.
		\item $h_0$ is not resented, and every path in the resent graph that does not include $h_0$ has length 1.
		\item Every agent in $N\setminus H$ only holds one unit bundle.
    \end{enumerate}
    Then, after execution of \textsc{Reduce Trees}$(\X,h_0)$, we get an allocation such that:
    \begin{itemize}
        \item The allocation is \pone.
        \item The agents in $H$ still hold their initial bundle, and if $h \in H\setminus p$ was non-resented, she remains non-resented.
		\item For every $h\in H$, and for every $t\notin \{h,h_0\}$, no allocated unit bundle in $E_{h,t}$ gets unallocated.
        \item $h_0$ remains non-resented.
    \end{itemize}
\end{restatable}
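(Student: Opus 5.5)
The plan is to use \cref{lem:treebreaker support} for termination and the bookkeeping claims, and then establish the height-one property through an invariant on the resent graph. Hypotheses 1, 2 and 4 here are the same as those of \cref{lem:treebreaker support}, with $X_p=a_{p,h_0}$ a special case of ``a unit bundle in $E_{p,h_0}$''. Hypothesis 3 here implies hypothesis 3 there. So that lemma applies directly and gives the following: \textsc{Reduce Trees}$(\X,h_0)$ terminates after at most $n$ iterations, every agent of $H$ keeps her initial bundle, every agent of $N\setminus H$ holds exactly one unit bundle, and $h_0$ stays non-resented. By \cref{lem:breaktree}(1),(5), every iteration preserves the unitary orientation property and the forest property. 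What remains is the fourth basic property, the height-one property, the claim on non-resented agents of $H\setminus p$, and the claim on unit bundles in $E_{h,t}$.

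For the height-one property I would prove the following invariant by induction on the iterations. Let $j$ be the root of the tree currently processed; initially $j=h_0$. Then $j$ is non-resented, and every path in $G_r(\X)$ not containing $j$ has length at most $1$. The base case is hypothesis 3.

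For the inductive step, let $i_1=j\to\dots\to i_k$ be the critical path, and classify the edges after \textsc{BreakTree}.
\begin{itemize}
\item By \cref{lem:breaktree}(4), every edge not leaving $i_k$ is an old edge.
\item By \cref{lem:breaktree}(2), the agents $i_2,\dots,i_{k-1}$ have no out-edges.
\item By \cref{lem:breaktree}(3), $i_{k-1}$ is the only path agent that can still be resented, and only by $i_k$.
\end{itemize}
Old edges not leaving $i_1$ formed paths of length at most $1$, by the invariant and since $i_1$ was a root. The edges still leaving $i_1$ go to children $c$ with $a_{i_1,c}\succ_{i_1} a_{i_1,i_2}$. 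By the choice of $i_2$ as the best non-leaf child, such $c$ must be old leaves. An old leaf can acquire an out-edge only if it equals $i_k$, and then the path contains $i_k$. Hence every path avoiding the new root $i_k$ has length at most $1$. Moreover, $i_k$ is non-resented by \cref{lem:breaktree}(3), so the invariant holds. When the loop stops, the tree at $j$ has height at most $1$, and hence every tree does.

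For the fourth basic property, consider a new edge $i_k\to u$. By the Remark after the definition of resent, $u$ holds $a_{u,i_k}$. If $u\in N\setminus H$, then $u$ holds exactly one unit bundle, so $X_u=a_{u,i_k}$. If $u\in H\setminus p$, I would use hypothesis 1 together with cancelability to argue that $i_k$ cannot resent $u$. Agent $p$ resents nobody, and $i_k\ne h_0$ is never $p$'s partner in a way that matters, since $X_p=a_{p,h_0}$. For the claim that non-resented agents of $H\setminus p$ stay non-resented, the same argument applies: the only new resent comes from $i_k$.

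For the third bullet, bundles are unallocated only when a path agent releases her old bundle. Path agents lie in $N\setminus H$ and each holds one unit bundle. So a bundle in $E_{h,t}$ with $h\in H$ could be released only if some path agent $t$ held a bundle from $E_{t,h}$. Since $t$ is resented by her parent on the path, that bundle would be $a_{t,i_{\ell-1}}$ with $i_{\ell-1}=h$. I would rule this out by showing that no vertex of $H$ lies on a critical path: agents of $H\setminus p$ are never the processed root or a resented non-root, and $p$ resents no one. I expect this step, together with the exclusion of $i_k\to u$ for $u\in H\setminus p$, to be the main obstacle. Both require a careful use of hypothesis 1, namely that the most-preferred $a_{r,\ell}$ over $\ell\in H\setminus p$ is not held by $\ell$, and possibly an additional inductive invariant that hypothesis 1 continues to hold after each iteration, since the bundles of $H$ are fixed.
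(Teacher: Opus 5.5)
Your proposal is correct, and its core matches the paper: the induction showing that after each iteration every resent path avoiding the current root has length at most one, using that the edges still leaving the old root go to old leaves by the choice of $i_2$. The difference is in the bookkeeping.

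The paper does not apply \cref{lem:treebreaker support} to $H$ but to $H'=H\cup\{t\in G: X_t\subseteq E_{t,h}\text{ for some } h\in H\}$. Hypothesis 1 survives this enlargement because an agent of $H'\setminus H$ holds nothing from $E_{t,j}$ for $j\notin H'$. That lemma then freezes every initial holder of a unit bundle in some $E_{h,t}$ with $t\neq h_0$. So the third bullet is immediate, with no reasoning about who can sit on a critical path.

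You keep $H$ and argue instead that such a bundle can be released only by a non-root path vertex whose parent is $h$. This works, but you should also treat the root case. The root is either $h_0$, which the bullet excludes, or a previous end vertex whose bundle was unallocated when she chose it.

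The steps you flag as obstacles are easy:
\begin{itemize}
\item That no agent of $H$ lies on a critical path is exactly what the proof of \cref{lem:treebreaker support} establishes: each critical path lies outside $U^0\supseteq H$.
\item Excluding a new edge $i_k\to u$ with $u\in H\setminus p$ needs only hypothesis 1, not cancelability. Since $H$'s bundles are frozen, hypothesis 1 persists. So the maximizer $i^*$ over $H\setminus p$ still does not hold $a_{i^*,i_k}$, and $i_k$ does not resent $i^*$. Hence $X_{i_k}\succeq_{i_k} a_{i_k,i^*}\succeq_{i_k} a_{i_k,u}$.
\end{itemize}

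Your direct verification of the fourth basic property is somewhat more self-contained than the paper's, which imports basicness from the invariants inside the proof of \cref{lem:treebreaker support}. The paper's $H'$ device buys a shorter proof of the third bullet. Your route makes the mechanism explicit: only critical-path vertices release bundles, and each non-root one holds $a_{t,\mathrm{parent}}$. The cost is an extra case check.
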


\section{Dumping Lemmas}\label{sec:dump}

    In each \stage, our algorithm first executes some update rules and then allocates the remaining (unallocated) unit bundles in a ``dumping phase". 
    
\begin{definition}[Global Dumping Properties]\label{pbd}
    The following properties hold before the dumping (except in \cref{lem: stageD>4} and in \cref{lem: stageD=4}): 
\begin{enumerate} 	
	\item \label{pbd1} The allocation is \pone.
	\item \label{pbd2} For any $p$ and $q$: If $p$ resents $q$, $X_q = a_{q,p} \greatereqval{q} b_{q,p} \sumsq A_q(\X)$.
\end{enumerate}
\end{definition}

Next, we introduce the concept of \emph{support pairs}.

\begin{definition}(Support Pair)
	Given a \pone\ allocation $\X$, a pair of agents $(s,t)$ is called a support pair if
    \begin{itemize}
        \item both agents $s$ and $t$ are non-resented.
        \item $X_s \subseteq E_{s, t}$, i.e., $X_s \in \{a_{st}, b_{ts}\}$.
    \end{itemize}
\end{definition}

	With a slight abuse of notation, during our algorithm, whenever a unit bundle $U$ is to be allocated to agent $s$, we denote such an action by \dumpp{U}{s} which simply means we add $U$ to $X_s$, i.e. $X_s \gets X_s \cup U$. Moreover, for any pair of agents $i, j$, and an agent $s$, we use \dumpu{a_{i, j}}{s} to denote the following procedure: we take any unit bundle from $E_{i,j}$ that agent $s$ may already possess from her, 
	and then allocate $a_{i,j}$ to $s$, i.e., $X_s = (X_s \setminus E_{i, j}) \cup a_{i, j}$. For any pair of agents $i, j$, a unit bundle $U$ from $E_{i, j}$, and an agent $s$, we use \dumpp{U}{s} only when $s$ does not possess any unit bundle other than $U$ from $E_{i,j}$. 
	So, \dumpu{a_{i,j}}{s} and \dumpp{a_{i,j}}{s} emphasize that $s$ may or may not already possess a unit bundle in $E_{i,j}$. 

    The dumping phases follow some general rules for allocating (dumping) the remaining unit bundles. At the beginning of a dumping phase, there may be several support pairs. For some of them, the dumping phase will exploit their support pair property; the agents in the other pairs are treated just as the other non-resented agents. We say that the former pairs are used by the dumping phase.
    Let $\X$ and $\xp$ denote the allocation before and after the dumping phase. 
    Next, we introduce some general structures that our dumping phase follows (except in \cref{lem: stageD>4} and in \cref{lem: stageD=4}).

\begin{definition}\label{general dumping rules}
\begin{enumerate}
	\item \label{pd7} For every two non-resented agents $p$ and $r$, we select one of the partitions of $E_{p,r}$ and assign one of the  unit bundles in it to $p$ and the other to $r$. 
    
    \item \label{pd8} For any support pair $(s,t)$ used by the dumping phase, $X_s$ already contains a bundle from $E_{s,t}$; we give the other bundle in the partition to $t$ (if it does not already have it). For a pair $(s,p)$ with $p \not= t$, we perform \dumpu{a_{p,s}}{p}. If $p$ holds $b_{p,s}$ in $\X$, it is replaced. We also give $b_{p,s}$ to $s$. 
    Moreover, at the end of the dumping phase, $X_s \cap E_t= X_s' \cap E_t$, i.e., agent $s$ does not get any additional unit bundle that is incident to $t$ and her allocated unit bundle from $E_{s, t}$ does not change.

    	\item Let $p$ be a non-resented agent.
    \begin{enumerate}
        \item \label{pd3} If $a_{p,q}\subseteq X_p$ for some $q$, then $a_{p,q}\subseteq X'_p$.
	\item \label{pd4} If $b_{q,p}\subseteq X_p$ for some $q$, then either
	 		$a_{p,q}\subseteq X'_p$ or $b_{q,p}\subseteq X'_p$. 
	\item Let $q$ be any resented agent. 
      \begin{enumerate} 
      \item \label{pd5} 
    If $p\to q$, \dumpp{b_{q,p}}{p}.  Note that in this situation, $q$ already owns $a_{p,q}$. 
      \item \label{pd6} If $r\to q$ and $r \not= p$, \dumpu{a_{p,q}}{p}.
      \end{enumerate}
    \end{enumerate}

    \item \label{pd2} The above does not specify the allocation of all unit bundles. For the remaining unit bundles, it is guaranteed that for any pair $(p,q)$ of agents, one of the partitions is used, and its bundles are assigned to distinct non-resented agents, i.e.,
    for every two distinct agents $i$ and $j$, any agent $\ell$ possesses at most one unit bundle from $E_{i,j}$ in $\xp$.

    	\item \label{pd1} If an agent is resented in $\X$, her bundle does not change. 
      \end{enumerate}
\end{definition}

\begin{restatable}{lemma}{lemdumprule}\label{lem:dumprule}
    If $\X$ is an allocation satisfying the dumping properties (\cref{pbd}),
    and the dumping phase follows the general structure above (Definition \ref{general dumping rules}), then after the dumping:
\begin{enumerate}
	\item \label{efx1} For every agent $p$, we have $X'_p \greatereqval{p} X_p$, where $\xp$ is the allocation after the dumping.
	\item \label{efx2} If agent $q$ is resented in $\X$, then $q$ is not strongly envied by any agent in $\xp$.
    \item \label{efx3} For every resent $p\to q$ in $\X$, agent $q$ does not envy agent $p$ in $\xp$. 
	\item \label{efx4} For any support pair $(s,t)$ used by the dumping phase, at the end, 
						  no one envies agent $s$, and agent $s$ does not strongly envy anyone. 
\end{enumerate}
\end{restatable}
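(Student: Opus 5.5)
I will prove the four claims of \cref{lem:dumprule} in order, using only the dumping rules of \cref{general dumping rules}, the global dumping properties (\cref{pbd}), and cancelability (\cref{cancp}).

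\emph{Item 1 (no agent loses value).} If $p$ is resented in $\X$, then by Rule~\ref{pd1} her bundle is unchanged. Now let $p$ be non-resented. Because $\X$ is unitary, $X_p$ is a disjoint union of unit bundles $X_p\cap E_{p,q}\in\{a_{p,q},b_{q,p}\}$. For each such piece I will exhibit a piece of $X'_p\cap E_{p,q}$ that is weakly better for $p$:
\begin{itemize}
\item Rule~\ref{pd3} keeps every $a_{p,q}$.
\item Rule~\ref{pd4} keeps $b_{q,p}$ or replaces it by $a_{p,q}$; the replacement is fine since $v_p(a_{p,q})\ge \max(v_p(a_{q,p}),v_p(b_{q,p}))$ by \cref{def:ub}.
\item Rule~\ref{pd8}'s replacement \dumpu{a_{p,s}}{p} is of the same type.
\end{itemize}
Since the pieces lie in disjoint $E_{p,q}$ and valuations are monotone, repeated use of \cref{cancp} gives $X'_p\greatereqval{p}X_p$.

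\emph{Item 3 (resented agents do not envy their resenter).} Let $p\to q$ in $\X$, so $X'_q=X_q=a_{q,p}$. Since $q$ values only her incident edges, $v_q(X'_p)=v_q(X'_p\cap E_q)$. By Rule~\ref{pd5}, $X'_p\cap E_{p,q}=b_{q,p}$. Every other piece of $X'_p\cap E_q$ lies in some $E_{r,q}$ with $r\ne p$. By Rule~\ref{pd2}, $p$ holds at most one unit bundle there, and this bundle is not $q$'s own, so it is either the unit bundle $p$ holds, or the bundle dumped to $p$ by Rule~\ref{pd6}. In the latter case it is $q$'s share $a_{r,q}$ or $b_{r,q}$ of a partition, and it was unallocated in $\X$. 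Hence $X'_p\cap E_q\subseteq b_{q,p}\cup Z$, where $Z$ is a union of unallocated unit bundles, each from a distinct $E_{r,q}$. Each such bundle is dominated by the corresponding piece of $A_q(\X)$, since $A_q$ takes the best unallocated unit bundle per neighbor. By \cref{cancp}, $X'_p\cap E_q\lowereqval{q} b_{q,p}\cup A_q(\X)\lowereqval{q}a_{q,p}=X'_q$, where the last step is property~\ref{pbd2}.

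The main obstacle is the case analysis showing that the pieces dumped on $p$ really are unallocated bundles, one per neighbor, and never $q$'s better $a$-share. I would handle it by going through the rules one at a time and checking which agent in each partition is designated to receive $q$'s side.

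\emph{Item 2 (resented agents are not strongly envied).} Let $q$ be resented in $\X$ by $p$, so $X'_q=a_{q,p}$. For any $r$, the relevant part of $q$'s bundle for $r$ is $X'_q\cap E_r$. If $r\ne p$, this is empty, so there is no envy. If $r=p$, I use Item~1 and \cref{lem:minimum value} to get $X'_p\greatereqval{p}X_p\greatereqval{p} b_{q,p}$. Since $a_{q,p}$ is EFX-feasible against $b_{q,p}$ for $p$, $p$ does not strongly envy $q$.

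\emph{Item 4 (support pairs).} Let $(s,t)$ be a support pair used by the dumping phase. By Rule~\ref{pd8}, $X'_s$ consists of her bundle in $E_{s,t}$ together with $b_{p,s}$ for each other $p$.
\begin{itemize}
\item \emph{$t$ does not envy $s$.} $X'_s\cap E_t$ is exactly her original share of a partition of $E_{s,t}$, and $t$ holds the complementary bundle. The rules give $t$ the better-for-$t$ side, or \cref{lem:minimum value} applies.
\item \emph{Other agents $p$ do not envy $s$.} $p$ sees only $b_{p,s}$, while $p$ holds $a_{p,s}$, and $a_{p,s}\greatereqval{p}b_{p,s}$.
\item \emph{$s$ does not strongly envy anyone.} With $X'_s\supseteq b_{p,s}$ for each $p$, EFX-feasibility of $b_{p,s}$ for $s$ against $a_{p,s}$ gives the claim for every $p\ne t$. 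For $t$, EFX-feasibility of $s$'s bundle in the partition of $E_{s,t}$ gives it.
\end{itemize}
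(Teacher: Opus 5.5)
Items 1 and 2 are fine. The genuine gap is Item 4, which rests on a misreading of Rule~\ref{pd8}: that rule does not describe all of $X'_s$.

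\emph{Why the argument for Item 4 fails.} A support pair is \emph{used} precisely because $s$ is the dumping target. So under Rule~\ref{pd2} the dumping phases give $s$ unit bundles from edges not incident to her, e.g.\ $b_{u,v}$ between two children of one root, or $b_{p,u}$ between a root $p$ and a resented $u$. By Rule~\ref{pd6}, $s$ also holds $a_{s,q}$, not $b_{q,s}$, for every resented $q$ she does not resent. Symmetrically, other roots may receive pieces $b_{s,q}$. Hence ``$p$ sees only $b_{p,s}$'' is false, your argument never treats resented agents, and EFX-feasibility of $b_{p,s}$ cannot absorb extra pieces $b_{s,q}$ in $X'_p$.

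\emph{What is needed: edge-by-edge domination.}
For $q$ resented by $i\neq s$: if $j$ is resented or $j=s$, then $E_{q,j}$ was fully unallocated, so $A_q(\X)\cap E_{q,j}=a_{q,j}$ dominates every unit bundle there. If $j\notin\{s,i\}$ is a root, Rule~\ref{pd6} keeps $a_{j,q}$ with $j$, so $s$ gets at most $b_{j,q}\lowereqval{q}A_q(\X)\cap E_{q,j}$. Property~\ref{pbd2} then gives $X'_s\cap E_q\lowereqval{q}A_q(\X)\lowereqval{q}X'_q$; Item 3 covers $q$ resented by $s$.
For a root $p\neq t$, Rules~\ref{pd7} and \ref{pd5} force every piece of $E_{p,q}$ ($q\neq s$) held by $s$ to have $q$ resented but not by $p$. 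So that piece is $b_{p,q}$ while $p$ holds $a_{p,q}$, and $X'_p\greatereqval{p}a_{p,s}\sumsq\bigsumsq_q a_{p,q}\greatereqval{p}b_{p,s}\sumsq\bigsumsq_q b_{p,q}\greatereqval{p}X'_s\cap E_p$.
For $s$ versus a root $p$ the same bookkeeping applies, but the input must be $X_s\greatereqval{s}a_{s,p}$ (since $s$ does not resent a root). This gives $X'_s\greatereqval{s}X_s\sumsq\bigsumsq_q a_{s,q}\greatereqval{s}a_{s,p}\sumsq\bigsumsq_q b_{s,q}\greatereqval{s}X'_p\cap E_s$. This also covers $t$, who may receive some $b_{s,q}$.

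\emph{The step for $t$.} Your step ``$t$ does not envy $s$'' is also unjustified. If $X_s=a_{s,t}$, then $t$ holds $b_{s,t}$, which need not be better for $t$, and \cref{lem:minimum value} only bounds $b$-bundles. You need that $s$ was non-resented, i.e.\ $X_t\greatereqval{t}a_{t,s}\greatereqval{t}a_{s,t}$, together with Item 1.

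In Item 3, the deferred ``main obstacle'' aims at a claim that is false in general: the piece of $E_{r,q}$ landing on $p$ need not have been unallocated in $\X$. Suppose the root $r$ held $b_{q,r}$ in $\X$. Then Rule~\ref{pd6} upgrades her to $a_{r,q}$, and the complement $b_{r,q}$, which may overlap the released $b_{q,r}$, can be dumped on $p$, as in Rule (iv) of the dumping phase of \Stage\ H. The inequality survives for a different reason; split on whether $a_{q,r}$ is unallocated in $\X$.
If it is, then $A_q(\X)\cap E_{q,r}=a_{q,r}$, which by \cref{def:ub} is $q$'s best unit bundle in $E_{q,r}$, so any piece is dominated.
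If not, then $r$ holds $a_{r,q}$. Since $\X$ is height-one and $q$ is resented, $q$ resents nobody, hence $r$ is non-resented. Rule~\ref{pd3} keeps $a_{r,q}$ with $r$, so only $b_{r,q}=A_q(\X)\cap E_{q,r}$ can reach $p$.
This use of the height-one structure to make $r$ a root is the step your sketch is missing.
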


Next, we introduce our first update rule that we use several times.

\begin{definition}[Update Rule U1\label{up rule:u1}] 
	Given a \pone\ allocation and $i \to j$ such that 
	$b_{j,i} \sumsq A_j(\X) \greaterval{j} X_j = a_{j,i}$, let ($\xp$ denotes the allocation after the update)
	\begin{align*}
		X'_j&\takes b_{i,j} \sumsq A_j(\X) \\
		X'_i&\takes a_{i,j}
	\end{align*}
\end{definition}

\begin{restatable}{lemma}{lemcyclesupport}\label{lem:cycle_support}
Let $\X$ be a height-one allocation, and let $i$ and $j$ be agents such that $i \to j$ and $b_{j,i} \sumsq A_j(\X) \greaterval{j} X_j = a_{j,i}$. Then after executing update rule (U1), 
	$(i,j)$ is a support pair,  $j$ does not resent anyone, no new resent relation is created, and the allocation remains height-one. Moreover, the process of executing (U1) as long as possible terminates.
\end{restatable}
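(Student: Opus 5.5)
We verify each claim in turn. We use only the definition of \pzero/\pone, \cref{lem:minimum value}, \cref{lemma:resent-envy}, and cancelability via \cref{cancp}. Throughout, $\xp$ denotes the allocation after (U1).

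\textbf{Preliminary structural facts.} Since $\X$ is \pone\ and $i\to j$, the tree containing this edge has height one. Hence $j$ resents nobody, since an edge $j\to k$ would give a path $i\to j\to k$ of length two. For the same reason no agent $k$ resents $i$. By \cref{lemma:resent-envy}, no agent envies $i$ in $\X$. By the fourth \pzero\ property, $X_j=a_{j,i}$, so $j$ is resented only by $i$ (\cref{obs:resented-bundle}).

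Since $X_j\cap E_{i,j}=a_{j,i}\neq\emptyset$, the edge set $E_{i,j}$ contributes nothing to $A_j(\X)$. So $b_{i,j}$ and $b_{j,i}$ are disjoint from $A_j(\X)$. Also $X_i\cap E_{i,j}=\emptyset$, because $b_{j,i}$ is the only bundle of this partition $i$ could hold and $X_i\supseteq b_{j,i}$ would contradict $i\to j$ with \cref{lem:minimum value}; in any case, the only part of $E_{i,j}$ that is allocated is $a_{j,i}$. From \cref{def:ub} we have $b_{i,j}\greatereqval{j} b_{j,i}$, so \cref{cancp} gives
\[
X'_j = b_{i,j}\sumsq A_j(\X)\;\greatereqval{j}\; b_{j,i}\sumsq A_j(\X)\;\greaterval{j}\; X_j .
\]
Since $j$ resented nobody, $X_j\greatereqval{j} a_{j,k}$ for all $k$. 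Therefore $X'_j\greaterval{j} a_{j,k}$ for all $k$, and \emph{$j$ resents nobody in $\xp$}. Finally, $X'_i=a_{i,j}\greaterval{i} X_i$ because $i$ resented $j$.

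\textbf{Orientation and unitarity.} Every unit bundle in $A_j(\X)$ is either $a_{j,k}$ with $E_{j,k}$ untouched, or $E_{j,k}\setminus X_k=b_{k,j}$ when $X_k=a_{k,j}$. Together with $X'_j\cap E_{i,j}=b_{i,j}$, this shows that the first unitary condition holds for $j$ and that $\xp$ is an orientation. The bundles of $i$ and $j$ together form the partition $(a_{i,j},b_{i,j})$ of $E_{i,j}$.

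For the second unitary condition, the only newly unallocated bundles are the pieces of $i$'s old bundle. We check each agent:
\begin{itemize}
\item For $k\notin\{i,j\}$, no agent envied $i$, so $X_k\greatereqval{k} X_i\cap E_k$, which dominates any freed piece incident to $k$.
\item For $j$, every unallocated bundle is $\lowereqval{j} X_j\lowerval{j} X'_j$.
\item For $i$, her value strictly increased.
\end{itemize}
Thus $\xp$ is a unitary orientation.

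\textbf{No new resent.} For $k\notin\{i,j\}$, $X_k$ is unchanged, so $k$ resents exactly the same set as before. Agent $i$'s value only increased, so she gains no resent; moreover $a_{i,j}=X'_i$, so the edge $i\to j$ disappears. Agent $j$ resents nobody. Hence $G_r(\xp)$ is $G_r(\X)$ minus the edge $i\to j$ and minus $j$'s (nonexistent) outgoing edges. It is therefore a forest with every tree of height at most one.

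The fourth \pzero\ property holds for the following reasons. Every remaining edge $p\to q$ has $q\notin\{i,j\}$, since $i$ was non-resented and $j$ is now non-resented. Such a $q$ keeps $X_q=a_{q,p}$. So $\xp$ is \pone. Moreover, $i$ and $j$ are both non-resented and $X'_i=a_{i,j}\subseteq E_{i,j}$, so $(i,j)$ is a support pair.

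\textbf{Termination.} Each application of (U1) turns the resented agent $j$ into a non-resented one and creates no new resent. Hence $|R(\X)|$ strictly decreases, and (U1) can be applied at most $n$ times in a row.

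The step needing the most care is unitarity, specifically checking that releasing $i$'s old, possibly multi-unit, bundle leaves no agent preferring a now-unallocated unit bundle. This is exactly where the height-one structure is used: through \cref{lemma:resent-envy} it guarantees that nobody envies the root $i$.
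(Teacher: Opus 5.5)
Your verification follows the same direct route as the paper: both agents gain value, resentment depends only on the resenter's own bundle so no edges are added, $j$ loses her only resenter, and $|R(\X)|$ drops. However, the step you identify as the crux is argued with a property that is too weak.

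For $k\notin\{i,j\}$ you use only that nobody \emph{envies} $i$, i.e.\ $X_k\greatereqval{k} X_i\cap E_k$. But the unit bundles that become unallocated are not just pieces of $i$'s old bundle. Suppose $X_i\cap E_{i,k}=a_{i,k}$ and $X_k\cap E_{i,k}=\emptyset$. After (U1) all of $E_{i,k}$ is unallocated, so $a_{k,i}$, which in general is not a subset of $a_{i,k}$, becomes an unallocated unit bundle. Unitarity then requires $X_k\greatereqval{k} a_{k,i}$. Non-envy only gives $X_k\greatereqval{k} a_{i,k}$, and $a_{k,i}\greatereqval{k} a_{i,k}$ can be strict.

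The fix is the fact you already stated in your first paragraph: $k$ does not \emph{resent} $i$, so $X_k\greatereqval{k} a_{k,i}$. By \cref{def:ub}, $a_{k,i}$ is $k$'s best unit bundle in $E_{i,k}$, and freed bundles in $E_{i,m}$ with $m\neq k$ have value $v_k(\emptyset)$ to $k$. This is what the paper uses (``since $i$ was not resented''). Routing through \cref{lemma:resent-envy} discards exactly the information needed.

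The case of $i$ herself needs the same corrected accounting, since ``her value strictly increased'' does not by itself cover freed bundles. Let $B\subseteq E_{i,m}$ be freed. If $i$ does not resent $m$, then $B\lowereqval{i} a_{i,m}\lowereqval{i} X_i$. If $i\to m$, then $X_m=a_{m,i}$ forces $B\subseteq b_{m,i}$, which must have been $i$'s share. In both cases $B\lowereqval{i} X_i\lowerval{i} X'_i$.

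Separately, your claim $X_i\cap E_{i,j}=\emptyset$ is false. The configuration $X_i\cap E_{i,j}=b_{j,i}$ is compatible with $i\to j$: \cref{lem:minimum value} only yields $X_i\greatereqval{i} b_{j,i}$, which does not contradict $a_{i,j}\greaterval{i}X_i$. This is in fact the typical configuration produced by \textsc{Greedy Orientation}. It does no damage, because $i$ releases her whole bundle and $E_{i,j}$ is re-split as $(a_{i,j},b_{i,j})$, but the sentence and its justification should be dropped.
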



The following lemma accounts for one of the possible structures that allows us to construct a complete $\efx$ allocation via a specific dumping phase that follows the general structure above.

\begin{restatable}{lemma}{lemtwosupport}\label{lem:two_support}
    Let $\X$ be a height-one allocation. If there exists two disjoint support pairs, then we can construct a complete \efx\ allocation in polynomial time. 
\end{restatable}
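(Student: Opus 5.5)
The plan is to design an explicit dumping phase that follows the general structure of \cref{general dumping rules}, and then to combine \cref{lem:dumprule} with a direct check of the envy relations that the lemma does not cover. Let the two disjoint support pairs be $(s_1,t_1)$ and $(s_2,t_2)$. First I would bring the allocation into a state satisfying the global dumping properties (\cref{pbd}). To do this, I apply update rule U1 exhaustively. By \cref{lem:cycle_support} this terminates, keeps the allocation height-one, creates no new resent, and only produces further support pairs. So the two given pairs stay support pairs, since their agents remain non-resented and their bundles are unchanged. Afterwards every resent $p\to q$ satisfies $X_q=a_{q,p}\greatereqval{q} b_{q,p}\sumsq A_q(\X)$, which is exactly property \ref{pbd2}.

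Next I would specify the dumping. For pairs $(p,r)$ of non-resented agents, I use rule \ref{pd7}, keeping the unit bundles they already hold (rules \ref{pd3}, \ref{pd4}). For each support pair $(s,t)$, I apply rule \ref{pd8}: $s$ keeps her unit bundle in $E_{s,t}$, $t$ gets the complement, and for every other non-resented $p$ we do \dumpu{a_{p,s}}{p} and \dumpp{b_{p,s}}{s}. For a resented agent $q$ with $r\to q$, rules \ref{pd5} and \ref{pd6} give $r$ the bundle $b_{q,r}$ and every other non-resented $p$ the bundle $a_{p,q}$. The remaining unit bundles are $q$'s shares in $E_{q,p}$ for $p\ne r$, and the shares of both endpoints of $E_{q,q'}$ for two resented agents $q,q'$. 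I would dump all of them on support agents. Resented agents not adjacent to a support agent $s_1$ dump their share on $s_1$. For two resented agents $q,q'$, I use $s_1$ for one endpoint's share and $s_2$ for the other, so that no $s_\ell$ receives two bundles of the same $E_{q,q'}$; this is why two disjoint pairs are needed. If $q$ or $q'$ is itself one of the support agents, then it is non-resented, so that case is already handled by rule \ref{pd8}. By \cref{lem:dumprule}, no resented agent is strongly envied, $q$ does not envy her resenter, and $s_1,s_2$ are unenvied and not strongly envious.

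The remaining check, which I expect to be the main obstacle, is that no resented agent $q$ envies $s_1$ or $s_2$ after the dumping. From $q$'s perspective, $X'_{s_\ell}\cap E_q$ is a union of $q$'s shares (namely $b$- or $a$-type bundles in $E_{q,p}$), with at most one share per neighbor. I must show this union is worth at most $X_q$. Here I would use \cref{cancp} to compare piece by piece against $b_{q,r}\sumsq A_q(\X)$: each dumped share of $E_{q,p}$ is weakly below the corresponding best unallocated unit bundle included in $A_q(\X)$. Combined with property \ref{pbd2}, this gives $X'_{s_\ell}\preceq_q b_{q,r}\sumsq A_q(\X)\preceq_q X_q$. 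The delicate points are twofold. First, the dumped set must avoid $b_{q,r}$-type overlaps, which requires choosing the partition for each $E_{q,q'}$ as $q$'s preferred one, splitting the two shares between $s_1$ and $s_2$. Second, non-resented agents $p$ must not envy $s_\ell$; this follows because $p$ holds $a_{p,q}$, which she weakly prefers to the share dumped from $E_{p,q}$. Finally, the allocation is complete by rule \ref{pd2}, and every step is polynomial.
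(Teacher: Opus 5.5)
Your route is the paper's: exhaust (U1), dump according to Definition~\ref{general dumping rules}, apply \cref{lem:dumprule}, and check the rest by hand. However, the hand check you carry out is the wrong one, and the one actually needed is missing.

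\textbf{The missing check.} That no agent envies $s_1$ or $s_2$ is precisely part (4) of \cref{lem:dumprule}, which you had already invoked. Moreover, no special choice of partition of $E_{q,q'}$ is needed for it. Since $E_{q,q'}$ is entirely unallocated in $\X$, the bundle $A_q(\X)\cap E_{q,q'}=a_{q,q'}$ dominates every unit bundle of $E_{q,q'}$ for $q$, and symmetrically for $q'$. What \cref{lem:dumprule} does \emph{not} give is that the remaining non-resented agents are not envied, i.e.\ $t_1,t_2$ and all other roots. Your proposal never addresses them, yet this is the only verification the paper performs beyond \cref{lem:dumprule}. It goes as follows. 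Take a root $p\notin\{s_1,s_2\}$ and any $q$ not resented by $p$. Your dumping gives $p$ nothing from any $E_{q,q'}$ with $q'\neq p$, so $X'_p\cap E_q$ is a single unit bundle of $E_{p,q}$. Hence $X'_p\cap E_q\lowereqval{q} a_{q,p}\lowereqval{q} X_q\lowereqval{q} X'_q$, because $q$ does not resent $p$. The case $p\to q$ is part (3).

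\textbf{Where $q$'s share of $E_{q,p}$ goes.} Your rule (``resented agents not adjacent to $s_1$ dump on $s_1$'') uses the wrong criterion. What matters is the other endpoint $p$: the share $b_{p,q}$ must go to some $s_\ell$ with $p\notin\{s_\ell,t_\ell\}$.
It cannot go to $s_\ell$ when $p=s_\ell$, since then $s_\ell$ would hold all of $E_{s_\ell,q}$, violating rule \ref{pd2}. It also cannot go to $s_\ell$ when $p=t_\ell$, since rule \ref{pd8} forbids $s_\ell$ from receiving goods incident to $t_\ell$; that restriction is what keeps $t_\ell$ from envying $s_\ell$. So disjointness of the two pairs is used here as well, not only for $E_{q,q'}$.

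\textbf{The (U1) phase.} The support agents' bundles need not stay unchanged. A support agent may resent someone, and applying (U1) to $s_1\to j$ replaces $X_{s_1}$ by $a_{s_1,j}$, which destroys $(s_1,t_1)$. The correct invariant, as in the paper, is this: one of the two disjoint pairs avoids $i$, and both avoid the resented agent $j$. That pair survives, and together with the new support pair $(i,j)$ you again have two disjoint support pairs.
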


\section{\Stage\ A}\label{sec:a}
	In this section, we show that given an allocation in \stage\ A, we can construct a complete \efx\ allocation.	
	First, we provide a support lemma that we use to prove the existence of an \efx\ allocation in \stage\ A.

\begin{lemma}\label{lem:R}
	If there is a \pone\ allocation $\X$,
	and for some distinct agents $k,i,\ell,$ and $j$ we have $k\to i$, $\ell \to j$, and:
        \begin{align*}
            D_j(\X) &\sumsq a_{j,k} \greaterval{j} X_j,\\
            D_i(\X) &\sumsq a_{i,\ell} \hspace{0.3mm} \greaterval{i} X_i.
        \end{align*}
 	Then, we can construct a complete \efx\ allocation in polynomial time.
\end{lemma}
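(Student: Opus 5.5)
Since $\X$ is \pone, $k\to i$ and $\ell\to j$ give $X_i=a_{i,k}$ and $X_j=a_{j,\ell}$. Both trees have height one, and by \cref{obs:resented-bundle} each agent is resented by at most one agent, so $k\neq\ell$. The plan is a simultaneous swap followed by a dumping phase. Agent $i$ gets $D_i(\X)\sumsq a_{i,\ell}$, agent $j$ gets $D_j(\X)\sumsq a_{j,k}$, agent $k$ gets $a_{k,i}$, and agent $\ell$ gets $a_{\ell,j}$. By hypothesis $i$ and $j$ strictly improve. $k$ and $\ell$ improve because they resented $i$ and $j$. All these bundles are unallocated: $D_i,D_j$ are unallocated since $i,j$ hold single unit bundles, and $a_{i,\ell},a_{j,k},a_{k,i},a_{\ell,j}$ lie in edge sets touching $i$ or $j$ that the new bundles do not overlap.

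The intended structure is that each of $(k,i)$ and $(\ell,j)$ becomes a support-pair-like configuration. Agent $k$ now holds only $a_{k,i}\subseteq E_{k,i}$. The hope is that after the swap $k$ and $i$ are both non-resented, and likewise $\ell$ and $j$, giving two disjoint support pairs so that \cref{lem:two_support} finishes the job. We need to check four things in order.
\begin{enumerate}
\item The new allocation is a unitary orientation. $D_i$ consists of $b_{r,i}$ for resented $r$, and $a_{i,\ell}$ fits because $\ell$ is non-resented.
\item No agent envies or strongly envies $i,j,k,\ell$. An agent $r\to q$ for some resented $q$ should not mind $b_{q,i}$ and similar bundles, using \cref{lem:minimum value} and the inequalities of \cref{def:ub}. $k$ does not envy $i$ since $a_{k,i}\greatereqval{k}a_{i,k}$ and $k$ values only $E_{k,i}$ in $X_i$. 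Resented agents $q$ still hold $a_{q,r}$ and do not value $b_{q,i}$ more, because $a_{q,r}\greatereqval{q}$ anything by \cref{pbd}-type reasoning, and $b_{q,i}\lowereqval{q}a_{q,i}$.
\item No new resent is created except possibly toward agents who keep a single $a$-bundle.
\item The height-one property survives.
\end{enumerate}

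If some resent into $k$ or $\ell$ appears, it comes from an agent whose $a$-bundle toward $k$ was freed, which here means $i$ itself. That is excluded because $i$ strictly improved beyond $a_{i,k}$, which she already weakly preferred to $X_i$. If residual issues remain, the fallback is to run \textsc{Reduce Trees} on the affected roots using \cref{lem:treebreaker2}. There $H$ contains $i,j$ (with $p=k$, $h_0=i$ playing the role of a fixed support pair), and the lemma guarantees the support structure is preserved while restoring the \pone\ property.

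The main obstacle I expect is verifying the global dumping property (2) of \cref{pbd}, together with the non-envy of other agents toward the enlarged bundles of $i$ and $j$. In particular, an agent $q$ resented by some $r$ may own $a_{q,r}$ while $b_{q,i}$ has moved to $i$, changing $A_q(\X)$. If property (2) fails for some resented pair $r\to q$, I would first apply update rule U1 exhaustively, which by \cref{lem:cycle_support} only creates further support pairs and terminates. After that, \cref{lem:two_support} applies to the pairs $(k,i)$ and $(\ell,j)$, which remain disjoint, and yields a complete \efx\ allocation in polynomial time.
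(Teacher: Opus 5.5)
Your plan is the paper's: give $k$ and $\ell$ the bundles $a_{k,i}$ and $a_{\ell,j}$, enlarge $i$ and $j$, and finish with \cref{lem:two_support} on the disjoint support pairs $(k,i)$ and $(\ell,j)$. The gap is in the reallocation itself. You assert that the new bundles do not overlap, but you never compare $D_i(\X)$ with $D_j(\X)$. Since $i$ and $j$ are both resented, $D_i(\X)\cap E_{i,j}=b_{j,i}$ and $D_j(\X)\cap E_{i,j}=b_{i,j}$. These two bundles come from the two \emph{different} partitions of $E_{i,j}$: one is \efx-feasible for $i$, the other for $j$. They can intersect. If $E_{i,j}$ consists of three goods of equal value to both agents, both partitions may isolate the same good, so $b_{i,j}=b_{j,i}$. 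Then $D_i(\X)\sumsq a_{i,\ell}$ and $D_j(\X)\sumsq a_{j,k}$ cannot both be assigned. Even when $b_{i,j}\cap b_{j,i}=\emptyset$, what remains of $E_{i,j}$ is in general not a unit bundle. So $E_{i,j}$ is no longer cut along a single partition, and the unit-bundle dumping rules of \cref{lem:two_support} are undefined for the pair $(i,j)$.

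The missing idea is to cut $E_{i,j}$ along $(a_{j,i},b_{j,i})$. Keep $X'_i=a_{i,\ell}\sumsq D_i(\X)$, but give $j$ the bundle $X'_j=a_{j,k}\sumsq (D_j(\X)\setminus E_{i,j})\sumsq a_{j,i}$. This is disjoint from $X'_i$, and it gives $X'_i\cap E_{i,j}=b_{j,i}$ and $X'_j\cap E_{i,j}=a_{j,i}$, as unitarity requires. Agent $j$ still strictly improves: $b_{i,j}\lowereqval{j}a_{j,i}$ by \cref{def:ub}, and hence, by \cref{cancp},
\[
X_j \lowerval{j} a_{j,k}\sumsq D_j(\X) \lowereqval{j} a_{j,k}\sumsq (D_j(\X)\setminus E_{i,j})\sumsq a_{j,i}=X'_j .
\]
With this fix the rest is simpler than your sketch.
\begin{itemize}
\item Whether $u$ resents $v$ depends only on $v_u(X'_u)$, and no agent's value drops, so no resent edge is created.
\item The edges $k\to i$ and $\ell\to j$ disappear, because $k$ and $\ell$ now hold exactly $a_{k,i}$ and $a_{\ell,j}$.
\item By \cref{obs:resented-bundle}, $i$ and $j$ had no other resenters, and $k$ and $\ell$ were roots.
\end{itemize}
So all four agents are non-resented, every other resented agent keeps its $a$-bundle, and the allocation stays \pone. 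Thus $(k,i)$ and $(\ell,j)$ are disjoint support pairs. No \textsc{Reduce Trees} fallback is needed, and envy toward $i$ and $j$ plays no role in the hypotheses of \cref{lem:two_support}. You also need not check \cref{pbd} here, since \cref{lem:two_support} applies (U1) itself before dumping.
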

\begin{proof}
    We update allocation $\X$ as follows ($\xp$ denotes the allocation after the update):
    \begin{align*}
        X'_i &\takes  a_{i,\ell} \sumsq D_i(\X) \\ 
        X'_k &\takes a_{k,i}\\ 
        X'_j &\takes a_{j,k} \sumsq (D_j(\X) \setminus E_{i, j}) \cup a_{j, i}\\ 
        X'_\ell &\takes a_{\ell,j}.
    \end{align*}
    
	First, we show that this allocation is possible. Note that agents $i$ and $j$ drop what they have in $E_{i,k}$ and $E_{j,\ell}$, so
	allocating $a_{k,i}$ and $a_{\ell, j}$ to agents $k$ and $\ell$ would be possible. 
	Next, note that agents $k$ and $\ell$ drop anything they may have in $E_{j,k}$ and $E_{i,\ell}$, so
	allocating $a_{j,k}$ and $a_{i,\ell}$ to agents $j$ and $i$ would be possible. 
    Moreover, since $i$ and $j$ were resented in $\X$, we get $D_i(\X)\cap E_{i,j} = b_{j,i}$.
    Thus, using the fact $D_i(\X)\cap D_j(\X) \subseteq E_{i,j}$, we get that
    $((D_j(\X) \setminus E_{i, j}) \cup a_{j, i} ) \cap D_i(\X) = a_{j,i} \cap b_{j,i} = \emptyset$.
    Thus, allocating $D_i(\X)$ and $(D_j(\X) \setminus E_{i, j}) \cup a_{j, i}$ to agents $i$ and $j$ is possible.

	Note that since allocation was \pone, and since $k\to i$, we get $X_i =a_{i,k}$.
	As a result, we get:
	$$X'_k = a_{k,i} \lowereqval{i} a_{i,k} = X_i \lowerval{i}  a_{i,\ell} \sumsq D_i(\X) = X'_i,$$
	so agent $i$ would not resent agent $k$. Moreover, for every agent $r \notin \{i,k\}$, we have $X'_k \cap E_{r}= \emptyset$,
	so no one resents agent $k$. Similarly, for every agent $r \notin \{\ell, j\}$, $X_\ell'\cap E_r = \emptyset$, and hence $r$ does not resent $\ell$. Moreover, agent $j$ does not resent $\ell$, as we have 
    $$X'_\ell = a_{\ell,j} \lowereqval{j} a_{j,\ell} = X_j \lowerval{j} 
    a_{j,k} \sumsq D_j(\X) =a_{j,k} \sumsq (D_j(\X) \setminus E_{i, j}) \cup b_{i,j} \lowereqval{j} a_{j,k} \sumsq (D_j(\X) \setminus E_{i, j}) \cup a_{j, i} = X'_j.$$
    Therefore, agents $k$ and $\ell$ are non-resented in $\xp$. 

    Next, we show that agents $i$ and $j$ become non-resented in $\xp$.
    Note that by \cref{obs:resented-bundle}, agent $\ell$ did not resent $i$ in $\X$, so $a_{\ell,i}  \lowereqval{\ell} X_\ell$. Hence,
    since the agent $\ell$ was non-resented in $\X$, we get $X'_i \cap E_\ell = a_{i,\ell}  \lowereqval{\ell} a_{\ell,i}  \lowereqval{\ell}  X_\ell \lowereqval{\ell}  X'_\ell$.
	Thus, agent $\ell$ does not resent agent $i$. 
    Any other agent $q$ does not resent $i$ since $X'_i \cap E_q \in \{\emptyset, b_{q,i}\}$
    and since by \cref{lem:minimum value}, $X_q\greatereqval{q} b_{i,q} \greatereqval{q} b_{q,i}$. Therefore, agent $i$ becomes non-resented. The similar argument works for $j$ except the case where we analyze agent $i$. Note that we have 
    $$X_j' \cap E_i = a_{j, i} \lowereqval{i} a_{i, j} \lowereqval{i} X_i\lowereqval{i} X_i',$$ 
    where the first inequality comes from \Cref{obs: ub-ineq}, and the second inequality comes from the fact $i$ did not resent $j$ in $\X$ by \cref{obs:resented-bundle}. Therefore, agent $i$ does not resent agent $j$, and agent $j$ is now non-resented.

    Since all four agents $i, j, k, \ell$ are non-resented, by our construction on their modified bundles, it is clear that the pairs $(k,i)$ and $(j,\ell)$ are two disjoint support pairs. As a result, By \cref{lem:two_support}, we can construct a complete \efx\ allocation.
\end{proof}

\begin{corollary}\label{lem: stageA}
	Given an allocation in \stage\ A, we can construct a complete \efx\ allocation.
\end{corollary}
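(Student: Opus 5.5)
The corollary should follow directly from \cref{lem:R}, so the plan is to check that an allocation in \stage\ A satisfies its hypotheses. By definition, a \stage\ A allocation $\X$ is \pthree. In particular it is \pone, and this is the first hypothesis of \cref{lem:R}. The defining condition of \stage\ A supplies agents $i,j,k,\ell$ with $k\to i$, $\ell\to j$, $D_j(\X)\sumsq a_{j,k}\greaterval{j} X_j$ and $D_i(\X)\sumsq a_{i,\ell}\greaterval{i} X_i$. These are exactly the displayed inequalities of \cref{lem:R}. The one thing still to verify is that $k,i,\ell,j$ are pairwise distinct, since \cref{lem:R} requires distinct agents and the definition of \stage\ A does not state this explicitly.

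To get distinctness, I would argue as follows. Resent is irreflexive, so $k\neq i$ and $\ell\neq j$. Since $\X$ is \pone, every resent tree has height at most one, so no agent is both resented and resenting. Hence $i$ and $j$ (both resented) are distinct from $k$ and $\ell$ (both resenting), which gives $i\neq \ell$ and $j\neq k$. Next, suppose $i=j$. Then by \cref{obs:resented-bundle} the agent $i$ is resented by only one agent, so $k=\ell$. The condition would then read $D_i(\X)\sumsq a_{i,k}\greaterval{i} X_i=a_{i,k}$. I would treat this degenerate situation as outside the intended meaning of the case, or else show it cannot arise; since the paper writes $a_{j,k}$ as an edge bundle between $j$ and $k$ with $j$ resented by $\ell\neq k$, distinctness is implicit. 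Similarly, if $k=\ell$ with $i\neq j$, the same agent $k$ resents two agents and the expression $a_{i,\ell}=a_{i,k}$ degenerates. I expect that the case definition intends distinct agents, consistent with how \cref{lem:R} is phrased. If needed, I would add that when $k=\ell$, the condition $D_i\sumsq a_{i,k}\greaterval{i} X_i=a_{i,k}$ still lets one apply U1-like reasoning, but the cleanest route is simply to read the four agents as distinct.

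With the hypotheses verified, \cref{lem:R} directly constructs a complete \efx\ allocation in polynomial time. It does this by reassigning bundles so that $(k,i)$ and $(j,\ell)$ become two disjoint support pairs and then invoking \cref{lem:two_support}. The only potential obstacle is the distinctness bookkeeping just described; every substantive step has already been carried out in the proof of \cref{lem:R}.
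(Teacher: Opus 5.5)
Your proposal is correct and matches the paper, which obtains this corollary as an immediate application of \cref{lem:R} to the \pthree\ (hence \pone) allocation supplied by \stage\ A. The distinctness of $k,i,\ell,j$ that you worry about is built into the case definition: ``four agents'' means four distinct agents, consistent with the hypotheses of \cref{lem:R} and with the ``Not Case A'' convention stated right after it. So your discussion of degenerate cases and the speculative U1-style fallback are unnecessary.
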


	\paragraph{Not Case A:} From now on, we assume that we are not in Case A, i.e., whenever we have a \pone\ allocation, then for every four distinct agents $k,i,\ell,$ and $j$ 
	such that $k\to i$, $\ell \to j$, at least one of the following holds:
        \begin{enumerate}
            \item $D_j(\X) \sumsq a_{j,k} \lowereqval{j} X_j,$ 
            \item $D_i(\X) \sumsq a_{i,\ell} \hspace{0.3mm} \lowereqval{i} X_i$.
        \end{enumerate}
Throughout our proof, without loss of generality, we assume the above condition for every four agents $k\to i$, $\ell \to j$. For any four such agents, we define:

\begin{definition}\label{lem:strong_A}
	Given a \pone\ allocation, for every four distinct agents $k,i,\ell,$ and $j$ such that $k\to i$, and $\ell \to j$, we define $R(i,j)\in \{k,\ell\}$ 
        \begin{enumerate}
            \item $R(i,j)=k$ if $D_j(\X) \sumsq a_{j,k} \lowereqval{j} X_j,$ 
            \item $R(i,j)=\ell$ otherwise, i.e., $D_j(\X) \sumsq a_{j,k} \greaterval{j} X_j,$ 
					and $D_i(\X) \sumsq a_{i,\ell} \hspace{0.3mm} \lowereqval{i} X_i$.
        \end{enumerate}
\end{definition}

\section{\Stage\ B}\label{sec:b}
	In this section, we show that given an allocation in \stage\ B, we can construct a complete \efx\ allocation. First, we introduce our second update rule, which we use in multiple \stages.


\begin{definition}[Weak Resent]
    In a \pone\ allocation, we say agent $p$ \weakresents\ agent $q$, if 
    $p$ is non-resented, $q$ does not resent anyone,
    $X_q =a_{q,p}$, and for any agent $j$, $a_{p,q}\greatereqval{p} a_{p,j}$,
\end{definition}        

    Note that in a \pone\ allocation, if $p$ resents $q$, then $p$ \weakresents\ some agent $q'$, which
    is the agent $p$ most-resents.

\begin{definition}[Update Rule U2\label{up rule:u2}] 
    Given a \pone\ allocation \X, a support pair $(s,t)$, and two distinct agents 
    $p,q \not\in \{s,t\}$ such that:
    \begin{enumerate}
        \item Every agent except $t$ has exactly one unit bundle,
        \item $p$ \weakresents\ $q$.
        \item $a_{t,p} \sumsq (B_t\setminus E_{t,p}) \greaterval{t} X_t$.  
    \end{enumerate}
    Update as follows: (Use $\xp$ to denote the allocation right before executing \textsc{Reduce Trees}.)
    \begin{align*}
		&\text{Agents } t,p, \text{and } q \text{ drop their previous bundles}\\
		&X'_t \takes  a_{t,p} \sumsq (B_t\setminus E_{t,p})\\
		&X'_p \takes a_{p,q}\\
        &\forall j\notin\{p,t\}  \text{ with } X_j \subseteq E_{j,t}: X'_j = a_{j,t}\\
		&X'_q \takes \textsc{Choose}(q)\\
		&\text{Run }\textsc{Reduce Trees}(\xp, q).
    \end{align*}
\end{definition}

\begin{lemma}\label{lem:t update}
	Update rule (U2) is well-defined, and after executing (U2):
\begin{enumerate}
	\item The allocation is \pone.
	\item $(s,t)$ remains a support pair.
	\item Every agent except $t$ only has exactly one unit bundle.
	\item No new resent is created from agents $t$ and $s$ to other agents.
    \item Agent $q$ \weakresents\ agent $p$.
\end{enumerate}
\end{lemma}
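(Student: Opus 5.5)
The plan is to verify the claims in the order well-definedness, then the state $\xp$ just before \textsc{Reduce Trees}, then the effect of \textsc{Reduce Trees}$(\xp,q)$ via \cref{lem:treebreaker2}. Throughout, I use that $\X$ is an orientation in which every agent other than $t$ holds exactly one unit bundle.

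\textbf{Well-definedness.} I check that the new bundles are pairwise disjoint and consist of goods that are free once $t,p,q$ drop their bundles.
\begin{itemize}
\item \emph{Agent $t$.} For $j\ne p$, the good set $b_{j,t}\subseteq B_t\setminus E_{t,p}$ lies in $E_{j,t}$. Only $j$ and $t$ can hold goods of $E_{j,t}$.
\item \emph{Agents with $X_j\subseteq E_{j,t}$.} If $X_j\subseteq E_{j,t}$ (this includes $s$, since $(s,t)$ is a support pair), then $j$ is reset to $a_{j,t}$, which is disjoint from $b_{j,t}$.
\item \emph{Other agents $j\ne p$.} If $X_j\not\subseteq E_{j,t}$, then $j$ holds nothing from $E_{j,t}$, because $j$ has only one unit bundle. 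Hence $b_{j,t}$ is free.
\item \emph{Agent $p$.} $p$ drops her bundle, so $a_{t,p}$ is free. $q$ drops $a_{q,p}$, so $a_{p,q}$ is free.
\item \emph{Agent $q$.} \textsc{Choose}$(q)$ takes a currently unallocated unit bundle.
\end{itemize}

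\textbf{The allocation $\xp$.} Next I check that $\xp$ is a \pzero\ allocation, so that \textsc{Reduce Trees} may be applied.
\begin{itemize}
\item \emph{Unitary orientation.} $t$ holds only $a_{t,p}$ and bundles $b_{j,t}$. Every other agent holds a single unit bundle of the allowed form. \textsc{Choose}$(q)$ yields $a_{q,j}$ or $b_{j,q}$, exactly as in the proof of \cref{lem:breaktree}.
\item \emph{Values of $t$, $s$, $p$ do not drop.} $t$ strictly gains by hypothesis~3. $s$ (and every $j$ reset to $a_{j,t}$) weakly gains, since $a_{j,t}$ is her best bundle in $E_{j,t}$. $p$ gets $a_{p,q}$, which by the weak-resent definition is her top unit bundle, so $p$ resents no one.
\item \emph{The second unitary condition.} The only freed bundles are $p$'s and $q$'s old ones. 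Moreover $b_{t,s}$ can be freed if $s$ is reset from it. I argue that nobody prefers any of these to her current bundle. The agents concerned are either their endpoints, whose values did not decrease, or agents who did not resent them before.
\item \emph{Who resents $t$.} $t$ holds only $a_{t,p}$ and $b$-bundles. A $b$-bundle never causes resent, by \cref{lem:minimum value}. Agent $p$ does not resent $t$, since $p$ resents no one.
\item \emph{No new resent from $t$ or $s$ (claim~4).} Both $t$ and $s$ weakly gain in value, and no $a_{t,\cdot}$ or $a_{s,\cdot}$ bundle moves to a new owner except in $E_{s,t}$. There $s$ holds $a_{s,t}$ and $t$ holds $b_{s,t}$, so $s$ does not resent $t$.
\end{itemize}

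\textbf{Applying \cref{lem:treebreaker2}.} I take $h_0=q$, $H=\{p,t\}$ (or $H=\{p,t,s\}$ if needed) and $G$ the rest, and check its hypotheses.
\begin{itemize}
\item \emph{Condition (1).} $H\setminus p=\{t\}$, and $t$ holds no $a_{t,r}$ with $r\ne p$.
\item \emph{Condition (2).} $X'_p=a_{p,q}$, and $p$ resents nobody.
\item \emph{Condition (4).} Every agent outside $H$ holds one unit bundle.
\item \emph{Condition (3): $q$ non-resented.} $q$ received an unallocated bundle, which nobody resented.
\item \emph{Condition (3): paths avoiding $q$ have length $1$.} This is the step I expect to be the main obstacle. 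I would argue as follows. The only agents whose resent relations can newly change are $q$ (who may resent many agents) and the agents reset to $a_{j,t}$ (who only gain). Every old resent edge $u\to w$ with $w\notin\{p,q,t\}$ and $X_w$ untouched persists without creating longer paths, since $\X$ was \pone. Any new edge into a reset agent $j$ would have to come from $t$, which is excluded by the previous paragraph.
\end{itemize}
\cref{lem:treebreaker2} then gives that the result is \pone\ (claim~1). It also gives that $t,p$ keep their bundles, that $q$ stays non-resented, and that all other agents hold one unit bundle (claim~3).

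\textbf{Claims 2 and 5.} For claim~2, $s$ holds $a_{s,t}\subseteq E_{s,t}$, and $s,t$ are unchanged by \textsc{Reduce Trees}. I still need to show that $s$ and $t$ stay non-resented; for $t$ this follows from its bundle shape, and for $s$ from $a_{s,t}$ being incident only to $t$. For claim~5, I use that $q$ resented nobody in $\X$ and held $a_{q,p}$, so $a_{q,p}\greatereqval{q} a_{q,j}$ for all $j$. Together with $X_p=a_{p,q}$, $p$ resenting nobody, and $q$ being non-resented, this is exactly the statement that $q$ weak most-resents $p$.
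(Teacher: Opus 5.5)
Your proposal is correct and follows essentially the same route as the paper: you check well-definedness, show that the allocation just before \textsc{Reduce Trees} is basic and satisfies the hypotheses of \cref{lem:treebreaker2} with $h_0=q$, and then read off the five claims. The paper takes $H=\{s,t,p\}$, which is your fallback choice and is the one that makes claim~2 immediate. The crux for condition~(3) is also the same in both: whether an agent resents anyone depends only on the value of her own bundle, and every agent other than $q$ weakly gains, so every new resent edge leaves $q$.
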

\begin{proof}
    Note that $a_{t,p}$ and $a_{p,q}$ are unallocated after that agents $p$ and $q$ drop their previous bundles because $\X$ is an orientation, so they can be allocated to agents $t$ and $p$, respectively. Moreover, $(B_t\setminus E_{t,p})$ can be allocated to agent $t$, because 
    if some agent $j\notin \{t,p\}$ possesses a unit bundle in $E_{t,j}$, then in the new allocation,
    $j$ gets $a_{j,t}$, and we have $(B_t\setminus E_{t,p})\cap a_{j,t}= b_{j,t} \cap a_{j,t}=\emptyset$. Hence, allocation $\xp$ is well-defined.

    Let $H=\{s,t,p\}$ and $h_0=q$. 
	We first show that conditions of \cref{lem:treebreaker2} hold for $\xp, H$ and  $h_0=q$.

    Since $(s,t)$ was a support pair in $\X$, $X_s \subseteq E_{s, t}$, 
    $X'_t  = a_{t,p} \sumsq (B_t\setminus E_{t,p})$, 
	and $B_t= \bigsumsq_{j\ne t} b_{j,t}$, any agent 
	$h \in H\setminus p =\{s,t\}$ does not hold any $a_{h,q}$ for $q\notin H$. Therefore, the first condition is satisfied. Moreover, we have $X'_p=a_{p,q}$ and by the definition of the update rule (U2), for any agent $j$, $X'_p = a_{p,q}\greatereqval{p} a_{p,j}$.
    Therefore, we get that agent $p$ does not resent anyone, and the second condition is satisfied.

    Note that $h_0=q$ is non-resented in $\xp$, since she is getting an unallocated unit bundle that no agent resented, as $\X$ was \pone. Next, we show that agents $s$ and $t$ are non-resented. Since $X'_s \subseteq E_{s,t}$, $t$ did not resent $s$ before, and $t$ now has a more valuable bundle compared to her previous one, agent $s$ is non-resented.
	Moreover, agent $p$ does not resent $t$ since $X'_t\cap E_p = a_{t,p} \lowereqval{p} a_{p,t} \lowerval{p} a_{p,q}$. Also, any agent $r\notin \{p,t\}$ does not resent $t$ since 
    $X'_t \cap E_r = (a_{t,p} \sumsq (B_t\setminus E_{t,p})) \cap E_r = b_{r,t} \lowereqval{r} X'_r$.
    Therefore, if a new resent relation $i\to j$ has been created right before the execution of \textsc{Reduce Trees}$(q)$, we have $i=q=h_0$ since the other agents are getting a more valuable bundle, meaning that the third condition is satisfied. Moreover, since $\X$ was a \pone\ allocation in which only $t$ possessed more than one unit bundle, all agents in $N \setminus H$ still have a single unit bundle in $\xp$. Thus, the fourth condition is also satisfied. 
    
    Next, we argue that $\xp$ is \pzero. Note that the allocation is clearly, by construction, still an orientation, unitary, and the resent graph is a forest. Moreover, if $i \to j$ in $\xp$, then, $X_j' =a_{j,i}$.  This completes our argument, i.e., $\xp$ is \pzero.

    Now, by applying \cref{lem:treebreaker2}, we show that all the properties in this lemma are satisfied.

    [1] By \cref{lem:treebreaker2}, the allocation will be \pone.

    [2] By \cref{lem:treebreaker2}, the bundles of agents in $H$ do not change, and every non-resented agent in $H$ remains non-resented.
	Therefore, $(s,t)$ is still a support pair.

    [3, 4] Since if an agent's bundle gets changed during \textsc{Reduce Trees}$(q)$, she gets a unit bundle, 
	we get that every agent except $t$ holds exactly one unit bundle, and no new resent relation is created from agents $s$ and $t$ toward other agents.

    [5] Since $p\in H$, agent $p$'s bundle does not change, so she possesses $a_{p,q}$,
    and since for any agent $j$, $a_{p,q}\greatereqval{p} a_{p,j}$, $p$ does not resent anyone.

    Moreover, By \cref{lem:treebreaker2}, $q=h_0$ remains non-resented.
    Since $\X$ was \pone\ at first, and $p$ resented agent $q$, agent $q$ did not resent anyone, so unit bundle $a_{q,p}$ was agent $q$'s favorite unit bundle. 
\end{proof}

\begin{lemma}\label{lem:added_strong_support}
    Given a \pone\ allocation $\X$, if there exists a support pair $(s,t)$ such that 
    agent $s$ does not resent any agent, and for every $p\to q$ with $p\ne t$, 
    we have $X_t \greatereqval{t} D_t(\X) \sumsq a_{t,p}$,
    we can construct a complete \efx\ allocation in polynomial time.
\end{lemma}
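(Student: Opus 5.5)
We build a complete allocation by a single dumping phase that follows the general structure of Definition~\ref{general dumping rules}, with $(s,t)$ as the only support pair used. Lemma~\ref{lem:dumprule} then gives most of the EFX guarantees. Before dumping, we first check the Global Dumping Properties (\cref{pbd}). To get property~\ref{pbd2}, we exhaustively apply update rule (U1). By Lemma~\ref{lem:cycle_support}, each application creates no new resent, keeps the allocation height-one, and turns the pair $(i,j)$ into a support pair disjoint from $(s,t)$ whenever $i,j\notin\{s,t\}$. If this happens, Lemma~\ref{lem:two_support} finishes immediately. The pair $(s,t)$ cannot be involved, since $s$ and $t$ are non-resented and $s$ resents nobody (and $t$ resents nobody either would not be needed). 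So we may assume \cref{pbd} holds, and the hypotheses of the lemma survive because $D_t$ and the resent relations are unchanged.

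\textbf{Dumping phase.} Every resented agent keeps her bundle (rule~\ref{pd1}). For a resent $p\to q$:
\begin{itemize}
\item $p$ receives $b_{q,p}$ (rule~\ref{pd5}), and every other non-resented $r$ receives $a_{r,q}$ (rule~\ref{pd6}).
\item The remaining unit bundle between $q$ and any agent is dumped on $t$. Concretely, for resented $q$ and any $u\ne p$ whose edge set $E_{q,u}$ is still partly unallocated, we use the partition whose $q$-side is $b_{q,u}$ or $b_{u,q}$, choosing so that $q$'s share goes to $t$ if $u$ is non-resented.
\item For two resented agents $q,q'$, we use the partition $(a_{q',q}, b_{q',q})$ and give $b_{q',q}\subseteq D_{q}$ to $t$ and $a_{q',q}$ to $s$. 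Alternatively, $s$ absorbs one side and $t$ the other, consistent with rule~\ref{pd8}.
\end{itemize}
Pairs of non-resented agents are handled by rule~\ref{pd7}, and pairs involving $s$ by rule~\ref{pd8}.

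\textbf{EFX checks.} Lemma~\ref{lem:dumprule} gives three facts: resented agents are not strongly envied, $s$ is unenvied and EFX-satisfied, and every agent's value does not drop. What remains is to show that no agent strongly envies $t$ and that $t$ strongly envies nobody.
\begin{itemize}
\item $t$ strongly envies nobody, because her value only increases.
\item A non-resented $r$ does not envy $t$: the only part of $X'_t$ incident to $r$ is a unit bundle of value at most $b$-level for $r$, which is at most $X'_r$ by Lemma~\ref{lem:minimum value} and rule~\ref{pd4}.
\item A resented $q$, with resenter $p$, sees in $X'_t\cap E_q$ only bundles of value at most those in $D_q\cup a_{q,p'}$-type sets. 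The key inequality needed is $X_q\greatereqval{q} X'_t\cap E_q$ after removing any good. Since $q$ holds $a_{q,p}$ and \cref{pbd}\ref{pbd2} gives $a_{q,p}\greatereqval{q} b_{q,p}\cup A_q$, we apply Observation~\ref{cancp} to compare $X'_t\cap E_q$ against $b_{q,p}\cup A_q$.
\item $t$ does not strongly envy any resented $q$, whose bundle is $a_{q,p}$. When $p=t$ this is EFX-feasibility; when $p\ne t$ it follows from the hypothesis $X_t\greatereqval{t} D_t\cup a_{t,p}$ together with Lemma~\ref{lem:minimum value}.
\end{itemize}

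\textbf{Main obstacle.} The hardest step is making the bundles dumped on $t$ fit under $q$'s bound $b_{q,p}\cup A_q(\X)$. We must assign, for each edge set $E_{q,u}$, the piece going to $t$ so that $q$ values it at most her best unallocated option in $E_{q,u}$; the bundle $b_{q,p}$ covers $E_{q,p}$. Resented pairs $q,q'$ are delicate, since $t$ could receive pieces incident to both. The hypothesis $X_t\greatereqval{t}D_t\cup a_{t,p}$ is exactly what allows $t$ to receive all of $D_t$, so the plan is to send to $t$ only the $b$-sides $b_{q',q}$, which lie in $D_q$ and are dominated by $A_q$-pieces, and to route the complementary sides to $s$ or to other non-resented agents.
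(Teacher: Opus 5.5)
Your proposal has a genuine gap, and it comes from using $t$ as the main dumping ground. Unlike $s$, agent $t$ is not a safe recipient of pieces incident to a resented agent. The hypotheses allow $t$ to already hold $a_{t,q}$ for some $q$ resented by a root $p\neq t$. (This does occur where the lemma is used: in \cref{lem:strong_support}, after (U2) agent $t$ holds $a_{t,p}\sumsq(B_t\setminus E_{t,p})$, and $p$ may then be resented by $q$.) In that case $A_q(\X)\cap E_{q,t}=b_{t,q}$. Under your routing, however, $X'_t\cap E_q$ contains $a_{t,q}$ together with the $q$-shares $b_{u,q}$ and the pieces of the sets $E_{q,q'}$ ($q'$ resented). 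The bound $X_q\greatereqval{q} b_{q,p}\sumsq A_q(\X)$ dominates the latter pieces but not $a_{t,q}$. The bound $X_q\greatereqval{q}a_{q,t}\greatereqval{q}a_{t,q}$ controls $a_{t,q}$ only on its own, not the union, so $q$ can envy $t$.

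The paper routes through $s$ instead. This is safe because $X_s\subseteq E_{s,t}$ forces $E_{q,s}$ to be fully unallocated, so $a_{s,q}\lowereqval{q}a_{q,s}=A_q(\X)\cap E_{q,s}$. Accordingly, the $q$-shares $b_{r,q}$ of roots $r\notin\{s,t\}$ go to $s$. Since $s$ can absorb only one side of each $E_{u,v}$ with $u,v$ resented, the other side goes to the common root when $u,v$ are in the same tree, and to $R(u,v)$ when they are in different trees (\cref{lem:strong_A}, i.e., the standing assumption that \stage\ A does not apply). This guarantees that, for $q$ not resented by $t$, $X'_t\cap E_q$ is either just $a_{t,q}$ or is covered by $a_{q,t}\sumsq D_q(\X)\lowereqval{q}X_q$. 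You never use this dichotomy, and sending one side to $s$ and the other to $t$ is exactly what fails in the situation above.

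Second, you never check that $t$ does not envy the other roots, and ``her value only increases'' does not exclude envy. This check is precisely what the hypothesis $X_t\greatereqval{t}D_t(\X)\sumsq a_{t,p}$ is for: it bounds what roots may receive from $t$'s point of view, rather than licensing $t$ to receive $D_t$. Take $q$ resented by a root $r\neq t$. Rule~\ref{pd6} gives $a_{t,q}$ to $t$, so $b_{t,q}$ cannot go to $t$ (at most one unit bundle of $E_{t,q}$ per agent, rule~\ref{pd2}), to $s$ (nothing new incident to $t$, rule~\ref{pd8}), or to a resented agent (frozen bundle, rule~\ref{pd1}). It must therefore go to another root, which you leave unspecified. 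The paper sends it to $r$. Then $X'_r\cap E_t\lowereqval{t}a_{t,r}\sumsq D_t(\X)\lowereqval{t}X_t\lowereqval{t}X'_t$, using $b_{t,q}\lowereqval{t}b_{q,t}$ and the hypothesis for $r\to q$; if $r$ resents nobody, $X'_r\cap E_t$ is a single unit bundle of $E_{t,r}$. The one place you do invoke the hypothesis, $t$ not strongly envying a resented $q$, is already part (2) of \cref{lem:dumprule}.

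Two smaller issues:
\begin{itemize}
\item Under your routing, $X'_t\cap E_r$ is not a single unit bundle for a root $r$: it also contains $b_{r,q}$ for every resented $q$ not resented by $r$. That comparison can be repaired with rule~\ref{pd6} and \cref{cancp}.
\item In the (U1) preprocessing, the case $i=t$ produces no disjoint support pair. There you can simply keep applying (U1), as the paper does, since $v_t(X_t)$ only grows and $D_t(\X)$ only shrinks. Note that the resent relations do change, contrary to your claim.
\end{itemize}
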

\begin{proof}
	As long as there exists $i \to j$ such that $b_{j,i} \sumsq A_j(\X) \greaterval{j} X_j = a_{j,i}$  
	update the allocation with update rule (U1), defined in \cref{up rule:u1}.
	By \cref{lem:cycle_support}, the allocation remains \pone, and no new resent is created.
	Hence, agent $s$ still does not resent anyone. 
    Note that since agent $s$ did not resent anyone, agent $s$'s bundle does not change, so $(s,t)$ remains a support pair.
	Moreover, every agent would have a bundle with a value not less than before.
	Therefore, we still have that for every $p\to q$, $X_t \greatereqval{t} D_t(\X) \sumsq a_{t,p}$.

    Note that at this point, update rule (U1) is not applicable. Moreover, since the allocation is \pone, without loss of generality we can assume that for every two resented vertices $u$ and $v$, the function $R(u, v)$ is well-defined by \cref{lem:strong_A}.
    
    Next, we enter the dumping phase to allocate all the remaining unit bundles:

	\paragraph{Dumping Phase.}
	We allocate every remaining unallocated unit bundle by the following rules.
	Here, a \emph{root} means a non-resented agent, and for a root $p$, the set $R_p$ denotes the agents resented by $p$. Every unallocated good gets allocated based on one of the following rules:

	\begin{enumerate}[label=(\roman*), leftmargin=2.2em]
		
		\item \textbf{Between $s$ and other roots.}
		Since $(s,t)$ is a support pair, we have $X_{s} \subseteq E_{s, t}$.
		We assign \dumpp{E_{s, t} \setminus X_s}{t}. For every other root $p\notin\{t,s\}$, 
		let \dumpu{a_{p,s}}{p} and \dumpp{b_{p,s}}{s}.

		\item \textbf{Root to root (except $s$).}
		Let $p$ and $r$ be two distinct roots such that $p,r \ne s$. 
		If at least one of the unit bundles is already allocated to one of the roots, allocate the other unit bundle to the other root. 		
		Otherwise, allocate one unit bundle in $E_{p,r}$ to $p$ and the other to $r$ arbitrarily.				

		\item \textbf{Root to its own children.}
		For every root $p$ and every $u\in R_p$, assign \dumpp{b_{p,u}}{p}. Also, we already have $X_u = a_{u,p}$.
		
		\item \textbf{Root to others' children.}
		Let $p$ and $r$ be two distinct roots, and let $u\in R_r$.
		If $p\in \{s,t\}$, let \dumpu{a_{p,u}}{p} and \dumpp{b_{p,u}}{r}.
		Also, if $p\notin\{s,t\}$, assign \dumpu{a_{p,u}}{p} and \dumpp{b_{p,u}}{s}.
		
		\item \textbf{Between two children of one tree.}
		Let $p$ be a root and let $u,v\in R_p$ be distinct. Then, assign \dumpp{a_{u,v}}{p} and \dumpp{b_{u,v}}{s}. Note that we choose the order of $u,v$ in order to get the unit bundles
		$(a_{u,v},b_{u,v})$ arbitrarily. Also, since $s$ does not resent anyone, we should have $p\ne s$.
		
		\item \textbf{Between two children of two trees.}
		Let $p$ and $r$ be two distinct roots, and let $u\in R_p$ and $v\in R_r$.
		Assign \dumpp{a_{u,v}}{s} and \dumpp{b_{u,v}}{R(u,v)}. Note that since $s$ does not resent any agent, we have $p, r\ne s$.		
	\end{enumerate}	
	This completes the allocation.     We have depicted the proposed allocation in \cref{fig:stageB}.

    \begin{figure}[h]
        \centering
\begin{center}
\begin{tikzpicture}[
    scale=1.3,
    every node/.style={
        circle,
        draw,
        inner sep=2pt,
        minimum size=10mm
    },
    main/.style={minimum size=11mm},
    envy/.style={->, thick},
    dashededge/.style={dashed, thick},
    pointydashed/.style={dashed, ->, thick}
]

\node[main] (i0) at (0,0) {$i_0$};

\node[main] (i1) at (1.8,2) {$i_1$};
\node[main] (i2) at (1.8,-2) {$i_2$};

\node[main] (j0) at (8,2) {$j_0$};
\node[main] (j1) at (8,-2) {$j_1$};

\node[main] (s) at (3.5,-4.8) {$s$};
\node[main] (t) at (5.8,-4.8) {$t$};

\node (a12) at (1.8,0.7) {$a_{i_1,i_2}$};
\node (b12) at (1.8,-0.7) {$b_{i_1,i_2}$};

\node (a1j0) at (4.1,2) {$b_{j_0,i_1}$};
\node (b1j0) at (6.3,2) {$a_{j_0,i_1}$};

\node (a2j1) at (4.3,-2) {$a_{i_2,j_1}$};
\node (b2j1) at (5.6,-2) {$b_{i_2,j_1}$};

\node (atj1) at (6.6,-3.9) {$a_{t,j_1}$};
\node (btj1) at (7.2,-3.1) {$b_{t,j_1}$};

\node (asj1) at (4.5,-4.1) {$a_{t,j_1}$};
\node (bsj1) at (5.6,-3.4) {$b_{t,j_1}$};

\draw[envy] (i0) -- (i1);
\draw[envy] (i0) -- (i2);
\draw[envy] (j0) -- (j1);


\draw[dashededge] (i1) -- (a12);
\draw[dashededge] (a12) -- (b12);
\draw[dashededge] (b12) -- (i2);

\draw[dashededge] (i1) -- (a1j0);
\draw[dashededge] (a1j0) -- (b1j0);
\draw[dashededge] (b1j0) -- (j0);

\draw[dashededge] (i2) -- (a2j1);
\draw[dashededge] (a2j1) -- (b2j1);
\draw[dashededge] (b2j1) -- (j1);

\draw[dashededge] (atj1) -- (btj1);
\draw[dashededge] (btj1) -- (j1);

\draw[dashededge] (asj1) -- (bsj1);
\draw[dashededge] (bsj1) -- (j1);


\draw[pointydashed] (atj1) -- (t);
\draw[pointydashed] (btj1) -- (j0);

\draw[pointydashed] (asj1) -- (s);
\draw[pointydashed] (bsj1) -- (j0);

\draw[pointydashed] (a12) -- (i0);
\draw[pointydashed] (b12) -- (s);

\draw[pointydashed] (a1j0) -- (s);
\draw[pointydashed] (b1j0) -- (j0);

\draw[pointydashed] (a2j1) -- (s);
\draw[pointydashed] (b2j1) -- (j0);

\end{tikzpicture}
\end{center}
            \caption{This figure illustrates an example of the allocation proposed in the dumping phase of \cref{lem:added_strong_support}. In this example, there are seven agents, namely $i_0,i_1,i_2,j_0,j_1,s,t$. Agent $i_0$ envies both $i_1$ and $i_2$, while agent $j_0$ envies $j_1$, and $(s,t)$ is a support pair.
            Envy relations are represented by solid directed edges.
            Dashed directed edges indicate the recipient of each unit bundle in the proposed allocation. Dashed non-directed lines indicate the agents to whom the corresponding unit bundles belong.\protect\\
            The pair $(i_2,j_1)$ illustrates the case of two children of two distinct trees.
            We select one of the partitions of $E_{i_2,j_1}$ arbitrarily, say $(a_{i_2,j_1}, b_{i_2,j_1})$, and assign $a_{i_2,j_1}$ to $s$ and $b_{i_2,j_1}$ to $R(i_2,j_1)\in 
            \{j_0,i_0\}$.
            Here we assumed that $R(i_2,j_1)=j_0$.\protect\\
            The pair $(i_1,i_2)$ illustrates the case of two children of one tree: $i_0$ envies $i_1$ and $i_2$. We select one of the partitions of $E_{i_1,i_2}$ arbitrarily, say $(a_{i_1,i_2}, b_{i_1,i_2})$, and assign $a_{i_1,i_2}$ share to $i_0$ and $b_{i_1,i_2}$ to $s$.
            }
        \label{fig:stageB}
    \end{figure}

	\paragraph{Correctness of Dumping Phase.}
	Denote the allocation right before dumping phase  by $\X$, and denote the allocation at the end by $\xp$.
	Next, we prove that the final allocation, $\xp$, is \efx.
	
	First, note that dumping rules follow our general dumping rules properties, so by \cref{lem:dumprule},
\begin{enumerate}
	\item For every agent $p$, we have $X'_p \greatereqval{i} X_p$.
     \item If agent $q$ is resented in $\X$, then $q$ is not strongly envied by any agent in $\xp$.
	\item \label{efx3} For every resent $p\to q$ in $\X$, agent $q$ does not envy agent $p$ in $\xp$. 
	\item No one envies agent $s$, and agents $s$ does not strongly envy anyone.  
\end{enumerate}

    Therefore, it suffices to show that no agent envies an agent $p \neq s$ who was non-resented in $\X$. 
    
    First, we argue that agent $t$ does not envy $p$. Note that for any non-resented agent $r \neq \{t, p\}$, $X_p' \cap E_{r, t} = \emptyset$. Therefore, we have $$X'_p \cap E_t = (X'_p \cap E_{t, p}) \bigcup_{q \in R(\X)} (X_p' \cap E_{t, q}) \lowereqval{t} a_{t,p} \sumsq D_t(\X) \lowereqval{t} X_t \lowereqval{t} X'_t,$$
    where the first inequality comes from the fact that for a resented agent $q$, we have
	$X'_p \cap E_{t,q} \in \{b_{t,q}, \emptyset\}$ and $b_{t, q} \lowereqval{t} b_{q, t}$, and the second inequality comes from the fact that (U2) was not applicable before the dumping phase.

    Next, we show that any non-resented agent $p \ne s$ in $\X$, cannot be envied by a non-resented agent $r \notin \{s, t\}$. 
    Note that we know $X'_p\cap E_r$ is a single unit bundle from $E_{p , r}$, 
    so $X'_p\cap E_r \lowereqval{r} a_{r,p} \lowereqval{r} X_r \lowereqval{r} X'_r$,
    where the second inequality comes from the fact that $p$ was non-resented in $\X$.

    Finally, we show that an agent $v$ who was resented in $\X$ does not envy a non-resented agent $p \neq s$. For a resented agent $v$, by \cref{lem:dumprule}, we only need to consider the case where $r\to v$ for $r\ne p$.
	If there exists a $p\to u$ such that $R(u,v)=p$, then 
	$$X'_p \cap E_v \subseteq a_{p,v} \sumsq   \bigsumsq_{q \in R_p} (X'_p \cap E_{v,q})
			\lowereqval{v}  a_{v,p} \sumsq D_v(\X)  \lowereqval{v} X_v = X'_v,$$
	where the last inequality comes from the fact  $R(u,v)=p$.
	Also, if for every $p\to u$, we have $R(u,v)=r$, then 
	$$X'_p \cap E_v =a_{p,v} \lowereqval{v} X'_v.$$

    Hence, the final allocation is a complete \efx\ allocation.
\end{proof}

\begin{lemma}\label{lem:strong_support}
    Given a \pone\ allocation $\X$, if there exists a support pair $(s,t)$ such that every agent except $t$ only has exactly one unit bundle,
    and agent $s$ does not resent any agent, we can construct a complete \efx\ allocation in polynomial time.
\end{lemma}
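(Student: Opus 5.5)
The plan is to reduce to \cref{lem:added_strong_support} by repeatedly applying update rule (U2) until its hypothesis holds. That lemma needs exactly one extra condition beyond our assumptions: for every resent $p\to q$ with $p\ne t$, we need $X_t \greatereqval{t} D_t(\X)\sumsq a_{t,p}$.

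\textbf{Step 1: the iteration.} If the condition holds, we invoke \cref{lem:added_strong_support} directly. Otherwise, there is some $p\to q$ with $p\ne t$ and $D_t(\X)\sumsq a_{t,p} \greaterval{t} X_t$. We first check that $p$ and $q$ avoid $\{s,t\}$:
\begin{itemize}
\item $s$ resents nobody, so $p\ne s$.
\item $q$ is resented, while $s$ and $t$ are non-resented because $(s,t)$ is a support pair, so $q\notin\{s,t\}$.
\end{itemize}
Now let $q'$ be the agent that $p$ most-resents. Then $p$ \weakresents\ $q'$: the allocation is \pone, so $q'$ resents nobody and $X_{q'}=a_{q',p}$.

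\textbf{Step 2: the (U2) inequality.} We need $a_{t,p}\sumsq(B_t\setminus E_{t,p}) \greaterval{t} X_t$. Since $p\to q$ and $p$ is a root, $p\notin R(\X)$, so $D_t(\X)\subseteq B_t\setminus E_{t,p}$. Monotonicity then gives
\[
a_{t,p}\sumsq(B_t\setminus E_{t,p}) \greatereqval{t} a_{t,p}\sumsq D_t(\X) \greaterval{t} X_t .
\]
The remaining hypothesis of (U2), that every agent except $t$ holds one unit bundle, is assumed. So (U2) applies to $(s,t,p,q')$.

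\textbf{Step 3: invariants.} By \cref{lem:t update}, after (U2):
\begin{itemize}
\item the allocation is \pone;
\item $(s,t)$ is still a support pair;
\item every agent except $t$ still holds exactly one unit bundle;
\item no new resent is created from $s$, so $s$ still resents nobody.
\end{itemize}
Hence all hypotheses of the present lemma are restored, and we loop back to Step 1.

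\textbf{Step 4: termination (main obstacle).} A potential is needed. The natural candidate is $v_t(X_t)$, which strictly increases with each (U2), since $t$ trades for a bundle she strictly prefers. But strict increase alone does not bound the number of rounds polynomially.

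A finer argument uses the structure of $X'_t = a_{t,p}\sumsq(B_t\setminus E_{t,p})$: after each step, $t$'s bundle is determined entirely by the choice of $p$. Strict increase of $v_t$ therefore means no $p$ can repeat, which bounds the number of (U2) applications by $n$. The intermediate \textsc{Reduce Trees} calls are polynomial by \cref{lem:treebreaker2}.

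One subtlety remains: the first application starts from an arbitrary $X_t$. Every later bundle strictly improves on it, so this causes no problem. I would verify carefully that the values $v_t(a_{t,p}\sumsq(B_t\setminus E_{t,p}))$ are fixed numbers indexed by $p$, independent of the rest of the allocation, which makes the no-repeat argument rigorous.
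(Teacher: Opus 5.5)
Your proposal is correct and follows the paper's proof. Both apply (U2) repeatedly, using the fact that after each round $t$ holds $a_{t,p}\cup(B_t\setminus E_{t,p})$, a value depending only on $p$, so there are at most $n$ rounds, and then invoke \cref{lem:added_strong_support}. The differences are cosmetic: you trigger (U2) when that lemma's hypothesis fails rather than when (U2)'s own inequality holds, and you make explicit two points the paper leaves implicit, namely choosing the most-resented $q'$ so that $p$ weak most-resents $q'$, and checking that $p,q'\notin\{s,t\}$.
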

\begin{proof}
	We first update the allocation with the following update rule.
	While there exists $p\to q$ such that $a_{t,p} \sumsq B_t\setminus E_{p,t} \greaterval{t} X_t$,
	update the allocation with update rule (U2), defined in \cref{up rule:u2}.
	Then, by \cref{lem:t update},
	after applying this update rule as much as possible, allocation remains \pone, every agent except $t$ has exactly one unit bundle, $(s,t)$ remains a support pair,
	and $s$ does not resent any agent. Clearly, $v_t(X_t)$ is strictly increasing after each round of applying (U2),
    and since $v_t(a_{t,p} \sumsq B_t\setminus E_{p,t})$ is independent of the allocation and is a function of agents $t$ and $p$, and since agent $t$ is fixed, 
    this process terminates after at most $n$ steps. 
    Hence, at the end for every $p\to q$, 
    we have $X_t \greatereqval{t} (B_t \setminus E_{t,p}) \cup a_{t,p} \greatereqval{t} D_t(\X) \sumsq a_{t,p}$. Hence, by \cref{lem:added_strong_support}, we get a complete $\efx$ allocation.
\end{proof}

    Next, we are ready to get a complete $\efx$ allocation given an allocation in \stage\ $B$.

\begin{lemma}\label{lem: stageB}
	If an allocation is in \stage\ B, we can construct a complete \efx\ allocation in polynomial time.
\end{lemma}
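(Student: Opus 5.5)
We are in \stage\ B: $\X$ is \pthree\ and there are agents $i,j$ with $j\to i$ and $A_j(\X)\sumsq X_j \greatereqval{j} a_{j,i}$. The plan is to build a support pair $(s,t)$ with $s=i$ and $t=j$ that meets the hypotheses of \cref{lem:strong_support}. Since $\X$ is \ptwo, every agent holds exactly one unit bundle. Since $j\to i$, we have $X_i = a_{i,j}$ and $X_j\cap E_{i,j}=\emptyset$. Moreover, $X_j$ is a single unit bundle not in $E_{i,j}$.

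\textbf{Update step.} Let $j$ hand over the unit bundle $b_{i,j}$ while keeping her current bundle, and let her also take $A_j(\X)$:
\begin{align*}
X'_j &\takes X_j \sumsq A_j(\X) \sumsq b_{i,j}, \\
X'_i &\takes a_{i,j}.
\end{align*}
Here $b_{i,j}$ is the complement of $i$'s bundle in $E_{i,j}$, so the allocation is still an orientation.

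Since $A_j(\X)$ already contains a unit bundle from every unallocated $E_{j,r}$, I have to be careful. I would instead assign $X'_j \takes X_j\sumsq b_{i,j}$ only, and keep $A_j$-type bundles for the later dumping. Alternatively, I would mimic (U1) and take $X'_j$ to be exactly the union over $r\neq i$ of $j$'s share in $E_{j,r}$, together with $b_{i,j}$.

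\textbf{Checking the support pair.} Agent $j$ now values her bundle at least as much as $a_{j,i}$, so she no longer resents $i$ and does not envy $i$. Then $i$ is non-resented, since only $j$ could resent her. We also have $X'_i\subseteq E_{i,j}$. Finally, $j$ is non-resented: every $r\neq i$ sees at most $b_{r,j}$ or its original share, and by \cref{lem:minimum value} and unitarity, $r$ weakly prefers $X_r$ to these bundles. So $(s,t)=(i,j)$ is a support pair, and every agent other than $t=j$ still holds exactly one unit bundle.

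Next, $s=i$ must not resent anyone. Agent $i$ was resented, so in a \pone\ allocation she resents nobody. Her bundle is unchanged, and the only newly allocated material incident to $i$ is $b_{i,j}$, which sits in $E_{i,j}$. Hence no $a_{i,r}$ becomes newly taken, and $i$ still resents nobody.

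\textbf{Remaining properties.} The allocation must remain \pone, and unitarity must hold for $j$ even though $j$ now holds several unit bundles. The second unitary condition holds because $j$'s value only increased, and other agents' unallocated options only shrank. No new resent edges arise, since all bundles newly given to $j$ are disregarded by their other endpoints via \cref{lem:minimum value}. The forest and height-one structure persist because we only removed the edge $j\to i$.

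\textbf{Conclusion and main obstacle.} With these checks done, \cref{lem:strong_support} yields a complete \efx\ allocation in polynomial time. The main obstacle is choosing $X'_j$ so that it is simultaneously disjoint from the other allocated bundles, unitary, and worth at least $a_{j,i}$ to $j$. The case hypothesis is stated with $A_j(\X)$, which contains $E_{j,r}\setminus X_r$, and these sets may be larger than unit bundles. I would first try to show that $j$'s own bundle $X_j$ lies in some $E_{j,r}$ and combines with $b_{i,j}$ to dominate $a_{j,i}$, using \cref{cancp} together with the inequality $a_{j,i}\lowereqval{j} A_j\sumsq X_j$. If that fails, I would give $j$ exactly $X_j\sumsq A_j(\X)$, which is a union of unit bundles $a_{j,r}$ or $b_{r,j}$ and therefore unitary.
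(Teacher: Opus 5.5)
Your eventual choice $X'_j \takes X_j\sumsq A_j(\X)$ is exactly the paper's proof: $j$ stops resenting $i$, $(i,j)$ becomes a support pair in which only $t=j$ holds several unit bundles and $s=i$ resents nobody, and \cref{lem:strong_support} finishes. (It coincides with your first displayed update, because $b_{i,j}\subseteq X_j\sumsq A_j(\X)$ always holds; note also that $j$ may hold $b_{i,j}$ itself, so your claim $X_j\cap E_{i,j}=\emptyset$ need not hold, though this is harmless.) The hedging is unnecessary, and your alternative $X'_j\takes X_j\sumsq b_{i,j}$ would fail in general, since Case B only guarantees $X_j\sumsq A_j(\X)\greatereqval{j} a_{j,i}$; your ``main obstacle'' disappears once you note that $A_j(\X)$ is unallocated and that, by unitarity, each piece $E_{j,r}\setminus X_r$ is exactly $b_{r,j}$ or $a_{j,r}$.
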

\begin{proof}
    Denote such an allocation by $\X$.
	Since allocation $\X$ is in \stage\ B, it is \pone, and there exists $j\to i$ such that
        $$X_j \sumsq A_j(\X) \greatereqval{j} a_{j,i} .$$ 
	Then, update the allocation as follows: (let $\xp$ be the allocation after the update)
    \begin{align*}
        X_j' &\takes X_j \sumsq A_{j}(\X)
    \end{align*}
	Therefore, agent $j$ does not resent agent $i$ anymore and agent $i$ will become non-resented.  
	Also, we have $X'_i=a_{i,j}$ because $\X$ was \pzero.
    Hence, $(i,j)$ is a support pair. Also, note that allocation $\X$ was \pthree, so now, except for agent $j$, every agent possesses exactly one unit bundle.
	Moreover, since the length of any path in the resent graph was at most one in $\X$, agent $i$ did not resent anyone. Hence, agent $i$ does not resent anyone in $\xp$ as $X_i = X_i'$. Thus, the conditions of \cref{lem:strong_support} are satisfied, and we can construct a complete \efx\ allocation in polynomial time.
\end{proof}

\section{\Stage\ H, Final \Stage}\label{sec:final}

In this section, we show that if a \pthree\ allocation is not in any stage A-G, we can construct a complete $\efx$ allocation. Let $\X$ be the initial allocation that is not in any stage A-G.

	 Since $\X$ is not in \stage\ G, we can assume that for every four distinct agents $k,i,\ell,$ and $j$ 
	such that $k\to i$, $\ell \to j$, at least one of the following holds:
        \begin{enumerate}
            \item $[C_j(\X)\cap E_{j,R_k}] \sumsq [D_j(\X)\setminus E_{j,R_k}] \sumsq a_{j,k} \lowereqval{j} X_j,$ 
            \item $[C_i(\X)\cap E_{i,R_\ell}] \sumsq [D_i(\X)\setminus E_{i,R_\ell}] \sumsq a_{i,\ell} \lowereqval{i} X_i$.
        \end{enumerate}

Therefore, we are now able to define the following for every such four agents:

\begin{definition}\label{lem:uij}
	Given an allocation that is not in any \stage\ A-G, for every four distinct agents $k,i,\ell,$ and $j$ such that $k\to i$, and $\ell \to j$, we define $U(i,j)\in \{k,\ell\}$ 
        \begin{enumerate}
            \item $U(i,j)=k$ if $[C_j(\X)\cap E_{j,R_k}] \sumsq [D_j(\X)\setminus E_{j,R_k}] \sumsq a_{j,k} \lowereqval{j} X_j,$
            \item $U(i,j)=\ell$ otherwise, i.e., $$[C_j(\X)\cap E_{j,R_k}] \sumsq [D_j(\X)\setminus E_{j,R_k}] \sumsq a_{j,k} \greaterval{j} X_j,$$
					and $$[C_i(\X)\cap E_{i,R_\ell}] \sumsq [D_i(\X)\cap E_{i,R_\ell}] \sumsq a_{i,\ell} \lowereqval{i} X_i.$$
        \end{enumerate}
\end{definition}

\begin{lemma}\label{lem: stageH}
	Given an allocation $\X$ in \stage\ H, we can construct a complete \efx\ allocation in 
    polynomial time.
\end{lemma}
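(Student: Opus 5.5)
Since $\X$ is in \stage\ H, none of the cases A--G applies. In particular $\X$ is \pthree; by failure of \stage\ D there are at least two resent trees of non-zero height; by failure of B, for every $j\to i$ we have $A_j(\X)\sumsq X_j \lowerval{j} a_{j,i}$; and by failure of E, for every $j\to i$ we have $A_i(\X)\sumsq b_{j,i}\lowereqval{i} X_i$. Failure of F and G gives, respectively, that every non-resented $k$ not holding $a_{k,j}$ satisfies $a_{k,j}\sumsq D_k(\X)\lowereqval{k} X_k$ for resented $j$, and that $U(i,j)$ is well defined as in \cref{lem:uij}.

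I plan to perform no update rule and go straight to a dumping phase that follows \cref{general dumping rules}, so that \cref{lem:dumprule} supplies most of the $\efx$ guarantees. The global dumping properties (\cref{pbd}) hold. Property \ref{pbd1} holds because $\X$ is \pthree. For property \ref{pbd2}, take $p\to q$: failure of E gives $b_{p,q}\sumsq A_q(\X)\lowereqval{q} X_q$, and since $b_{q,p}\lowereqval{q} b_{p,q}$, \cref{cancp} gives $b_{q,p}\sumsq A_q(\X)\lowereqval{q}X_q$.

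Without a support pair, the dumping mirrors the Phase Three template from the overview (\cref{fig:phasethree}).
\begin{enumerate}
\item Root--root pairs are handled by rule \ref{pd7}.
\item A root and its own child $u$ are handled by rule \ref{pd5}.
\item A root $p$ and another root's child $u$: apply \dumpu{a_{p,u}}{p}, and give $b_{p,u}$ to the root $r$ of $u$.
\item Two children $u,v$ of the same root $p$: give $a_{u,v}$ to $p$ and $b_{u,v}$ to the root of a \emph{different} nontrivial tree. Such a root exists because D fails.
\item Two children $u\in R_k$, $v\in R_\ell$ of different roots: give one unit bundle to $k$ or $\ell$ according to $U(u,v)$, and the other to a third root or to $U(u,v)$'s counterpart. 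The choice of $C$ versus $D$ shares in the definition of $U$ dictates which of $a_{u,v}$, $b_{u,v}$, $a_{v,u}$, $b_{v,u}$ is used.
\end{enumerate}

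Correctness then reduces, via \cref{lem:dumprule}, to showing that no agent envies a formerly non-resented root $p$.
\begin{enumerate}
\item For another root $r$, $X'_p\cap E_r$ is a single unit bundle that $r$ values at most $a_{r,p}\lowereqval{r} X_r$, except for the bundles $b_{r',u}$ dumped from rule 3. For these, failure of F bounds $a_{r,u}\sumsq D_r(\X)\lowereqval{r}X_r$.
\item For a resented agent $v\to$-child of some $r\neq p$, $X'_p\cap E_v$ consists of $a_{p,v}$ plus the shares dumped on $p$ from pairs $(u,v)$ with $u\in R_p$. If $U(u,v)=p$, the corresponding failed-G inequality, $[C_v\cap E_{v,R_p}]\sumsq[D_v\setminus E_{v,R_p}]\sumsq a_{v,p}\lowereqval{v}X_v$, dominates exactly this set, again using \cref{cancp}.
\end{enumerate}
The rule choices in item 5 are designed so that this domination holds in every case.

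The main obstacle is item 5 together with item 4. I must pick, for each cross-tree pair $(u,v)$, which partition to use and which root receives which share. The goal is that the root receiving a bundle incident to $v$ is always either $U(u,v)$, where the G-inequality covers it, or a root already safe via the B/E/F inequalities, and that this is done simultaneously for both endpoints. If a direct assignment fails, I would first try to exploit the two-tree structure to give the problematic share to the root $k$ with $k\to u$ as in \cref{lem:added_strong_support}. Failing that, I would try to derive that some case A--G actually held, giving a contradiction.
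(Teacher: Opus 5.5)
There is a genuine gap, and it sits exactly where you locate the obstacle. Your reduction to a single dumping phase controlled by \cref{lem:dumprule}, your derivation of \cref{pbd} from the failure of Main Case E, and your rules 1--4 all match the paper. But rule 5 is never specified, and the inequality that makes it work is missing, because you never use the failure of Main Case C.

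That failure says: whenever a root $p$ resents someone and $q$ is resented by a root $r\neq p$, then $a_{q,p}\sumsq D_q(\X)\lowereqval{q}X_q$. Any $b$-share of $E_{q,w}$ with $w$ resented satisfies $\lowereqval{q} b_{w,q}\subseteq D_q(\X)$. So failure of C is what lets every non-trivial root absorb $b$-shares incident to resented agents of other trees.

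With it, rule 5 is simply: for $u\in R_p$ and $v\in R_r$, give $a_{u,v}$ to $U(u,v)$ and $b_{u,v}$ to the other root of $\{p,r\}$. Now take a resented $q$ with root $r\neq p$, and a root $p$ that resents someone. Split into two cases.
\begin{itemize}
\item If $p$ receives the $a$-share of some $E_{q,i}$ with $i\in R_p$, the definition of $U$ gives the failed-G inequality of $q$ with respect to $p$. It depends only on $p$ and $q$, so it bounds all of $X'_p\cap E_q$ at once.
\item Otherwise, everything $p$ receives from $E_q$ apart from $a_{p,q}$ is a $b$-share. Then $X'_p\cap E_q\lowereqval{q}a_{q,p}\sumsq D_q(\X)\lowereqval{q}X_q$ by failure of C.
\end{itemize}
Your correctness item 2 handles only the first case. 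It also omits the $b$-shares $p$ gets via rule 4 when $p$ is the ``other'' non-trivial root for two children of $r$. These lie in $D_q(\X)\setminus E_{q,R_p}$ and are covered by either inequality, which is also why rule 4 is safe. No support pair or contradiction argument is needed, and no support pair is available in Main Case H anyway.

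Separately, you misread Main Case F. There $j\to i$, so $j$ is a resenting root, not a resented agent. Let $r$ and $p$ be roots where $p$ receives the bundles $b_{r,i}$, $i\in R_p$, from your rule 3. Failure of F gives that either $r$ holds $a_{r,p}$, or $a_{r,p}\sumsq D_r(\X)\lowereqval{r}X_r$. Your bound $a_{r,u}\sumsq D_r(\X)\lowereqval{r}X_r$ with $u$ resented is not what F provides.
\begin{itemize}
\item In the second alternative, $X'_p\cap E_r\lowereqval{r}a_{r,p}\sumsq\bigsumsq_{i\in R_p}b_{i,r}\lowereqval{r}a_{r,p}\sumsq D_r(\X)\lowereqval{r}X_r$.
\item The first alternative needs its own argument. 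Here $p$ gets only $b_{r,p}$ from $E_{r,p}$, while $r$ keeps $a_{r,p}$ and receives every $a_{r,i}$. Hence $X'_p\cap E_r\lowereqval{r}b_{r,p}\sumsq\bigsumsq_{i\in R_p}b_{r,i}\lowereqval{r}a_{r,p}\sumsq\bigsumsq_{i\in R_p}a_{r,i}\lowereqval{r}X'_r$.
\end{itemize}
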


\begin{proof}
	Let $\xp$ denote the allocation obtained from $\X$ after the dumping rules below.
	
	\paragraph{Dumping Phase.}
    We allocate every remaining unallocated unit bundle by the following rules.
	Here, a \emph{root} means a non-resented agent, and for a root $p$, the set $R_p$ denotes the agents resented by $p$ in $\X$. Every unallocated good gets allocated based on one of the following rules:
	
	\begin{enumerate}[label=(\roman*), leftmargin=2.2em]
		
		\item \textbf{Root to its own children.}
		For every root $p$ and every $u\in R_p$, assign \dumpp{b_{p,u}}{p}.
		
		\item \textbf{Between two children of one tree.}
		Let $p$ be a root and let $u,v\in R_p$ be distinct.
		Let $r$ be the root of another resent tree with non-zero height, which exists since $\X$ is not in \stage\ D.
		Assign \dumpp{a_{u,v}}{p} and \dumpp{b_{u,v}}{r}.
		
		\item \textbf{Between two children of two trees.}
		Let $p$ and $r$ be two distinct roots and let $u\in R_p$ and $v\in R_r$.
		Assign \dumpp{a_{u,v}}{U(u,v)}. Let $i \in \{p, r\}$ be such that $i \neq U(u, v)$. Then, assign \dumpp{b_{u,v}}{i}.
		
		\item \textbf{Root to others' children.}
		Let $p$ and $r$ be two distinct roots and let $v\in R_r$.
		Assign \dumpu{a_{p,v}}{p} and \dumpp{b_{p,v}}{r}.
		
		\item \textbf{Root to root.} \label{H:roottoroot}
		Let $p$ and $r$ be two distinct roots. If at least one of the unit bundles is already allocated to one of the roots, allocate the other unit bundle to the other root. 		
		Otherwise, allocate one unit bundle in $E_{p,r}$ to $p$ and the other to $r$ arbitrarily.	
		
	\end{enumerate}
	
	This completes the allocation.
	
	\paragraph{Correctness of Dumping Phase.}
	We prove that $\X'$ is \efx. Since $\X$ is a \pthree\ allocation and \stage\ E does not apply,
	the global properties before dumping (\Cref{pbd}) hold for $\X$.
	Moreover, our dumping rules satisfy the general dumping-structure requirements \ref{pd1}--\ref{pd8}.
	Therefore, all conclusions of \Cref{lem:dumprule} apply:
	\begin{enumerate}
		\item For every agent $p$, $X'_p \greatereqval{p} X_p$.
		\item Nobody strongly envies any agent who was resented in $\X$.
		\item For every resent edge $p\to q$ in $\X$, agent $q$ does not envy agent $p$ in $\X'$.
	\end{enumerate}

    By the second statement of \cref{lem:dumprule}, in order to prove that the final allocation is $\efx$, it suffices to show that any agent $p$ who was non-resented in $\X$, cannot be envied in $\xp$.

    Let $r \neq p$ be a non-resented agent in $\X$. Note that if $X_p' \cap E_r \subseteq E_{p, r}$, then $X_p' \cap E_r$ must be a single unit bundle from $E_{p, r}$,
    so $X_p' \lowereqval{r} a_{r,p} \lowereqval{r} X_r \lowereqval{r} X'_r$, where the second inequality comes from the fact that $p$ was non-resented in $\X$. 
    Otherwise, if there exists a unit bundle from $E_{r, j}$ for $j \notin \{p,r\}$ that has been dumped to $p$, it must have been done by Rule (iv), so $j\in R_p$. 
    Since $\X$ was not at \stage\ F, we get that either $r$ possesses $a_{r, p}$ in $\X$ or 
    $a_{r, p} \cup D_r(\X) \lowereqval{r} X_r$. 
    If $r$ possesses $a_{r, p}$ in $\X$, then by rule \ref{H:roottoroot}, we get that 
    $a_{r,p}\subseteq X'_r$, so
    $$X'_p \cap E_r \lowereqval{r} b_{r, p} \sumsq \bigsumsq_{i \in R_p} b_{r, i} \lowereqval{r} a_{r, p} \sumsq \bigsumsq_{i \in R_p} a_{r, i} \lowereqval{r} X'_r,$$
    where the last inequality comes from the fact for every resented agent $i$, agent $r$ gets
    $a_{p,i}$ by rule (iv).
    Hence, $r$ does not envy $p$. 
    Otherwise, if $a_{r, p} \cup D_r(\X) \lowereqval{r} X_r$, we have 
    $$X'_p \cap E_r \lowereqval{r} a_{p, r} \cup \bigsumsq_{i \in R_p} b_{r, i}
    \lowereqval{r} a_{p, r} \cup \bigsumsq_{i \in R_p} b_{i,r}    \lowereqval{r} a_{r, p} \cup D_r(\X) \lowereqval{r} X_r \lowereqval{r} X'_r.$$
   Hence, $r$ does not envy $p$.

    Now, let $q$ be a resented agent in $\X$. We show that $q$ does not envy $p$. By the third statement of \cref{lem:dumprule}, we only need to consider the case where $r \to q$ and $r \neq p$.
    Hence, by rule (iv), we have $a_{p, q} \subseteq X'_p$.
    
    If $p$ did not resent anyone in $\X$, we get $X'_p \cap E_q = a_{p, q} \lowereqval{q} X_q=X'_q$, so $q$ does not resent $p$. Thus, we can assume that $p$ resents at least on other agent.
    If there exists an agent 
    $u \notin \{p,q\}$ such that a unit bundle from $E_{q, u}$ has been dumped to $p$, it must come from Rules (ii) or (iii), so $u$ is resented in $\X$. Hence,
       $$X_p' \cap E_q =  (X_p' \cap E_{q,p}) \cup  \bigcup_{u \in R(\X)} (X_p' \cap E_{q,u}).$$
    
    We distinguish between two cases on whether there existed an agent $i$ resented by $p$ in $\X$
    such that $U(q, i) = p$. 

    First, consider the case where for every $p\to i$, we have $U(q, i) \ne p$. Then,
    for every $U\in R(\X)$,  we have $(X_p' \cap E_{q,u})\in \{b_{q,u},b_{u,q}, \emptyset\}$, so 
    $(X_p' \cap E_{q,u})\lowereqval{q} b_{u,q}$.
    Hence,
    $$X_p' \cap E_q =  (X_p' \cap E_{q,p}) \cup  \bigcup_{u \in R(\X)} (X_p' \cap E_{q,u})
    \lowereqval{q} a_{p, q} \cup  \bigcup_{u \in R(\X)} b_{u,q}
    \lowereqval{q} a_{q, p} \sumsq D_q(\X)  \lowereqval{q} X_q = X'_q,$$ 
    where the first and the second inequality follows from \cref{obs: ub-ineq} and the third inequality comes from the fact that $\X$ was not in \stage\ C, and the fact that $p$ resents at least one agent in $\X$.

   Now, consider the case where there existed an agent $i$ resented by $p$ in $\X$ 
   such that $U(q, i) = p$. 
   Then, by \cref{lem:uij} and the fact that $\X$ is not in \stage\ G, we have 
   $[C_q(\X)\cap E_{q,R_p}] \sumsq [D_q(\X)\setminus E_{q,R_p}] \sumsq a_{q,p} \lowereqval{q} X_q$. 
   As a result,
   \begin{align*}
        X_p' \cap E_q &=  
        (X_p' \cap E_{q,p}) \cup \bigcup_{u \in R_p} (X_p' \cap E_{q,u}) 
        \cup \bigcup_{u \in R(\X)\setminus R_p} (X_p' \cap E_{q,u}) \\
        & \lowereqval{q} a_{p,q} \cup \bigcup_{u \in R_p} a_{q,u} \cup
        \bigcup_{u \in R(\X)\setminus R_p} b_{u,q} \\
        &\lowereqval{q} a_{q,p} \cup [C_q(\X)\cap E_{q,R_p}] \sumsq [D_q(\X)\setminus E_{q,R_p}]\\ &\lowereqval{q} X_q = X_q',   
   \end{align*}
    and hence, $q$ does not envy $p$.
    
    Hence, the final allocation is a complete \efx\ allocation.
\end{proof}

\section{Conclusion}
    We studied the existence of $\efx$ allocations in the multi-graph valuation model, where each good is valued by at most two agents, but pairs of agents may share multiple goods. This model generalizes graphical valuations introduced by \citet{CFKS23} and captures a broad class of fair division instances with local interactions. While exact $\efx$ existence was previously known only for simple graphs or under additional structural or approximation assumptions, extending these guarantees to multi-graphs has remained a central open problem.
    
    In this work, we resolved this question for cancelable valuations by proving that $\efx$ allocations always exist in multi-graph instances. Our result applies to an arbitrary number of agents and does not impose restrictions on the structure of the underlying graph. This places multi-graph valuations among the limited classes for which exact $\efx$ existence is known and advances the understanding of fairness in settings with overlapping agent interests. Moreover, our proof is algorithmic and computes such allocations in polynomial time when the valuation functions are cancelable.
    
    Several directions remain open. A natural question is whether $\efx$ allocations exist for multi-graph valuations under broader classes of valuations, such as monotone or subadditive valuations. Finally, it would be interesting to study how $\efx$ fairness in multi-graph valuations interacts with efficiency objectives, such as maximizing social welfare or Nash social welfare.

    We believe that our results provide a useful step toward a general understanding of $\efx$ in structured valuation models and may inform future work on both the existence and computation of fair allocations beyond the simple graph setting.

\newpage

\bibliographystyle{plainnat}
\bibliography{mybibliography}

\newpage
\appendix

\section{Computing Unit Bundles in Polynomial Time}
\label{sec:computing-unit-bundles}

	In this section, we provide an algorithm for computing unit bundles, and we show that it runs in polynomial time with respect to 
	the number of goods. In order to do that we first provide the PR algorithm \cite{PR20} and show some of its properties.
	In this section, given an agent $i$ and a bundle $S$, we use $\ell^i_r(S)$ to denote the $r$'th least valued good in $S$
	with respect to agent $i$'s valuation function. Moreover, when the valuation function is fixed, we 
    drop the upper index and write $\ell_r(S)$.

\paragraph{The Plaut-Roughgarden (PR) algorithm.}
    This local search algorithm, dubbed the \emph{PR algorithm} by \citet{ACGMM23}, takes as input a set of bundles $\Y=(Y_1, Y_2,..., Y_k)$ 	and a monotone valuation function $v$, and returns a set of bundles $\yp=(Y'_1, Y'_2,..., Y'_k)$ that is \efxf\ for valuation $v$ in $\yp$ (i.e., all of 
	its bundles are \efxf\ for $v$ in $\yp$). If $Y_j$ is the least valuable bundle in $\Y$ w.r.t.\ valuation $v$, the algorithm checks whether $Y_j$ is
	\efxf\ w.r.t.\ $v$ in $\Y$. If it is, the algorithm terminates and returns $\Y$. On the other hand, if $Y_j$ is not \efxf\ in $\Y$, then there 
	must exist another bundle $Y_i$ such that $v(Y_j)< v(Y_i \setminus g)$ for some good $g\in Y_i$. The PR algorithm then removes $g$ 
	from $Y_i$, it adds it to $Y_j$, and it repeats the same sequence of steps until it reaches an \efxf\ set of bundles.
    Each time the PR algorithm moves a good, the minimum value over all the bundles, i.e., $\min_{Y\in \Y}v(Y)$, weakly increases.
\begin{observation}\label{lem:PR increase}
\label{PR}
    Let $\yp=(Y'_1,...,Y'_k)$ be the set of bundles returned by the PR algorithm with input $\Y=(Y_1,...,Y_n)$ and valuation function $v$. Then:
    \begin{equation*}
    \min(v(Y_1),...,v(Y_k)) \leq \min(v(Y'_1),...,v(Y'_k)).
    \end{equation*}
\end{observation}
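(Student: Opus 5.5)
The statement is that a single step of the PR algorithm never decreases the minimum bundle value, so the claim follows by induction over the steps. I will show that if $\Y^{(t)}$ is the configuration before a move and $\Y^{(t+1)}$ the configuration after it, then
\[\min_{Y\in\Y^{(t)}} v(Y)\le \min_{Y\in\Y^{(t+1)}} v(Y).\]
Chaining these inequalities from the input $\Y$ to the output $\yp$ gives the claim. If the algorithm stops immediately, the two sides are equal.

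\textbf{One move.} Let $Y_j$ be the bundle of minimum value, $m=v(Y_j)$. Suppose a good $g$ moves from $Y_i$ to $Y_j$, where $v(Y_i\setminus g)>m$. Only two bundles change; I check each against $m$.
\begin{itemize}
\item The new $j$-bundle is $Y_j\cup g$. By monotonicity, $v(Y_j\cup g)\ge v(Y_j)=m$.
\item The new $i$-bundle is $Y_i\setminus g$. By the selection rule, $v(Y_i\setminus g)>m$.
\item Every other bundle $Y_r$ is unchanged, and $v(Y_r)\ge m$ because $Y_j$ was a minimum.
\end{itemize}
Hence every bundle after the move has value at least $m$, so the new minimum is at least the old one.

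\textbf{Termination.} The inequality only concerns the output, so it presupposes that the algorithm halts. I take termination as part of the PR algorithm's known behavior from \citet{PR20}. Given that, the argument is routine; the only care needed is that ``minimum'' ranges over all $k$ bundles, including the two modified ones.
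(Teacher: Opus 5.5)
Your proof is correct and follows the paper's own reasoning. The paper justifies this observation with the single remark that each move of the PR algorithm weakly increases the minimum bundle value, and you supply the details of that step: monotonicity for $Y_j\cup g$, the strict selection condition for $Y_i\setminus g$, and the unchanged bundles already being at least $m$, chained over the steps with termination taken from \citet{PR20}.
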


	\citet{AGS25} slightly modified this algorithm, and showed that, given a cancelable valuation function, the
	modified PR algorithm runs in polynomial time if the number of bundles is two.
    
	\paragraph{Modified PR  \citep{AGS25}:} The only difference is that  the good removed from $Y_i$ 
	and added to $Y_j$ is: $$g\in \arg\max_{x\in Y_i} v(Y_i \setminus x).$$

    When the valuation function $v$ is fixed, for two subsets of goods $S,T$
    we denote by $S \prec T$ whenever $v(S)<v(T)$ and we denote by $S\preceq T$ whenever $v(S)\leq v(T)$.

\begin{claim}\label{claim: no transfer}
	\citep{AGS25} Let the valuation function be cancelable. 
    Suppose that during the execution of the modified PR algorithm, at some step, we have a set of bundles
    $(\ya,\yb)$, such that $\ya\setminus x \succ \yb$ and $Y_1\setminus x \prec \yb \cup x$, where $x = \ell_1(\ya)$. 
    Then the algorithm removes $x$ from $\ya$ and adds it to $\yb$.
After the transfer of $x$, only goods $y$ with $y \prec x$ will be transferred. 
\end{claim}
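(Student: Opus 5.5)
The plan is to analyze the modified PR algorithm on two bundles and keep track of an invariant describing the current pair of bundles. I will use two consequences of cancelability. \emph{(a)} If $y\preceq z$ are both in a bundle $Y$, then $Y\setminus z\preceq Y\setminus y$: apply cancelability with the common set $Y\setminus\{y,z\}$. Hence the good selected by $\arg\max_{g\in Y}v(Y\setminus g)$ may be taken to be the least valued good $\ell_1(Y)$. Moreover, any good $y\sim x$ in $Y$ satisfies $Y\setminus y\sim Y\setminus x$. \emph{(b)} Monotonicity, used freely.

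\textbf{First transfer.} Since $Y_1\setminus x\succ Y_2$, the bundle $Y_2$ is the least valuable and is not \efxf. By (a), the removed good is $x=\ell_1(Y_1)$, modulo ties with equally valued goods, whose removal gives the same value. So $x$ moves to $Y_2$. After the move, $P:=Y_1\setminus x$ and $Q:=Y_2\cup x$ satisfy $Q\succ P$ by hypothesis.

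\textbf{Invariant.} I will prove by induction over subsequent steps that the current pair has the form
\[
P=(Y_1\setminus x)\cup S_1\setminus S_2,\qquad Q=(Y_2\cup x)\setminus S_1\cup S_2,
\]
where $x\in Q$ throughout and every good moved after $x$ is $\prec x$. Concretely, goods leave $Q$ only if they are $\prec x$, and goods leave $P$ only if they are $\prec x$.

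\textbf{Case: a move from $Q$ to $P$.} Here $P$ is the minimum bundle and the mover is $y=\ell_1(Q)$. Since $x\in Q$, we have $y\preceq x$. If $y=x$ or $y\sim x$, the move requires $Q\setminus y\sim Q\setminus x\succ P$. But $Q\setminus x\subseteq Y_2\cup S_2$, so the task is to compare $Q\setminus x$ with $P$.

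This comparison is the delicate step. In the simplest situation, where $S_2$ is empty or contained in the original $Y_2$ side, we get
\[
Q\setminus x\preceq Y_2\prec Y_1\setminus x\preceq P .
\]
This uses that goods are only moved out of the original $Y_1\setminus x$ part if they are $\prec x$; to justify it, I would strengthen the invariant so that $P\supseteq Y_1\setminus x$ minus goods $\prec x$ that were added back. The cleanest route is to note that $P$ only loses goods it previously gained from $Q$. That is the content of the next case.

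\textbf{Case: a move from $P$ to $Q$.} Here the mover is $z=\ell_1(P)$. If $P$ currently contains no good acquired after the first step, then $P\subseteq Y_1\setminus x$. The move would need $P\setminus z\succ Q\supseteq$ (something). Before any return moves, this says $Y_1\setminus x\setminus z\succ Y_2\cup x\succ Y_1\setminus x$, which contradicts monotonicity. In general, $P$ contains some acquired good $\prec x$, so $\ell_1(P)\prec x$.

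Thus every transferred good after $x$ is $\prec x$, and $x$ never returns. This also confirms the needed inequality $Q\setminus x\preceq Y_2\prec Y_1\setminus x\preceq P$ in the previous case, since the original goods of $Y_1\setminus x$ never leave $P$.

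\textbf{Main obstacle.} I expect the hardest part to be arranging the induction hypothesis so that both directions of movement are covered simultaneously. Specifically, it must guarantee that goods of $Y_1\setminus x$ never leave $P$ and that $x$ never leaves $Q$. I will state these as a single invariant, including the handling of value ties via (a), and verify it step by step.
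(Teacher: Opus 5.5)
Your proposal is correct and is essentially the paper's argument. The paper argues by a first counterexample (the first good $y\succeq x$ moved after $x$), whereas you run a forward induction on ``every good moved after $x$ is $\prec x$''. The two cases are the same: a move out of the $Y_1$-side is impossible unless its least good is an acquired good (hence $\prec x$), since otherwise that side is contained in $Y_1\setminus x$ and lies strictly below the other side; and a move of some $y\sim x$ out of the $Y_2$-side fails because $Q\setminus y\sim Q\setminus x\subseteq Y_2\prec Y_1\setminus x\subseteq P$. State explicitly, as the paper does, that every good of $Y_1\setminus x$ is $\succeq x$ because $x=\ell_1(Y_1)$. That one fact turns your invariant into ``the original goods of $Y_1\setminus x$ never leave $P$, and $Q\setminus x\subseteq Y_2$'', which is exactly what your hedged first case needs.
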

    
\begin{proof}
    Note that by Lemma 3.13 in \cite{AGS25}, for a cancelable valuation function and an arbitrary subset of goods,
    we have $\ell_1(S) = \arg\max_{g\in S} v(S \setminus g)$, so by modified PR algorithm, 
    at the starting step, the algorithm removes $x$ from $\ya$ and adds it to $\yb$.
    
    Next, suppose on the contrary that the second part of lemma does not hold, and let $y$ be the first good being transferred after $x$, such that $x \preceq y$. 
    Then, for any good $z$ being transferred after $x$ and before $y$, if any, we have $z \prec x$. 
    Suppose that the bundles exactly before the transfer of $y$ are $(\yap,\ybp)$.
    Since we had $x = \ell_1(\ya)$, for every $g \in \ya \setminus x$, 
    we have $g \succeq x$, so these goods have not been transferred after $x$ and before $y$,
    hence, $\ya \setminus x \subseteq \yap$.
    Also since $\yap \sqcup \ybp = (\ya\setminus x) \sqcup (\yb \cup x)$, we get that  
    $\ybp \subseteq \yb \cup x$. 
    In other words, all goods that are transferred between $x$ and $y$ are transferred from $\yb\cup x$ to $\ya \setminus x$.
    
    We consider two cases based on which set, $\yap$ or $\ybp$, $y$ belongs to. 
     
    \vspace{5pt}
    \noindent $\mathbf{\bullet}$ \textbf{\caseA.} $\mathbf{y\in \yap}$:
        Since $y$ is being transferred from $\yap$, by the definition of PR algorithm, we have 
        that $y= \ell(\yap)$, so for every $g\in\yap$ we have that $g \succeq y \succeq x$,
        so $g \succeq x$. Since for every $z$ being transferred after $x$ and before $y$
        we have that $z \prec x$, we get that there is no good in $\yap$ that has been transferred from
        the other bundle to $\yap$ after
        the transfer of $x$, so $\yap = \ya \setminus x$, and $\ybp = \yb \cup x$.
        So we get $\yap = \ya \setminus x \prec \yb \cup x = \ybp$,
        which is a contradiction to the PR algorithm, since the algorithm transfers a good from the most valuable set to the least valuable one. 

    \vspace{5pt}
    \noindent $\mathbf{\bullet}$ \textbf{\caseB.} $\mathbf{y\in \ybp}$:
    Since for every good $z$ that has been transferred after $x$, we have that $z \prec x$,
    we get that $x \in \ybp$. Recall that $x \preceq y$. 
    If $x \prec y$, then $\ell(\yap) \neq y$. Therefore, $x$ has equal value to $y$, 
    which means that $\ybp \setminus y$ and $\ybp \setminus x$ 
    have the same value \footnote{This holds because if it was w.l.o.g. $\ybp \setminus y \prec \ybp \setminus x$, then by subtracting $\ybp \setminus \{y,x\}$ from both sides, cancelability would give $x\prec y$.}. 
    Moreover, since
    $\ybp \subseteq \yb\cup x$, we get $\ybp\setminus x\subseteq \yb$, 
    which implies $\ybp \setminus x \preceq \yb$, 
    and therefore, $\ybp \setminus y \preceq \yb$. 
    Thus,
    \begin{equation}
        \ybp \setminus y \preceq \yb \prec \ya \setminus x \preceq \yap\,, \notag
     \end{equation}
    where the last inequality is due to $\ya \setminus x \subseteq \yap$.
    So we have $\ybp \setminus y \prec \yap$, which is a contradiction, 
    since in the PR algorithm, when the algorithm removes some good $y$ from
    $\ybp$ and adds it to $\yap$, it should be that $\ybp \setminus y \succ \yap$.
	Hence, the proof is complete.
    \end{proof}

\begin{lemma}\label{lem: pr polynomilal}
	\citep{AGS25}
    The modified PR algorithm runs in polynomial time 
    when the input is a set of two disjoint bundles 
    and a cancelable valuation function.
\end{lemma}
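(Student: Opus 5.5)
We aim to show that the modified PR algorithm performs at most a polynomial number of transfers, $O(m^2)$, on two bundles $(Y_1,Y_2)$. Each transfer takes polynomial time: we find the less valuable bundle, test \efxf ness, and pick $\ell_1$ of the other bundle, which is the maximizer of $v(Y_i\setminus x)$ by Lemma 3.13 of \citet{AGS25}. The whole lemma therefore follows from this bound on the number of transfers.

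First we fix the form of each step. By Lemma 3.13 of \citet{AGS25}, whenever a good is moved from $Y_i$ to $Y_j$, the moved good is $x=\ell_1(Y_i)$. Each transfer falls into one of two types:
\begin{itemize}[leftmargin=2em]
\item a \emph{non-reversing} transfer, where after moving $x$ the bundle $Y_j\cup x$ is still weakly less valuable than $Y_i\setminus x$;
\item a \emph{reversing} transfer, where $Y_i\setminus x \prec Y_j\cup x$, so the roles of the poorer and richer bundles swap.
\end{itemize}
A reversing transfer is exactly the situation covered by \cref{claim: no transfer}. After a reversing transfer of $x$, only goods strictly less valuable than $x$ (with respect to $v$ on singletons, via $\prec$) are ever transferred again.

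Next we partition the run into phases, each ending at a reversing transfer. Let $x_1,x_2,\dots$ be the goods moved in successive reversing transfers. By \cref{claim: no transfer}, the goods transferred after $x_t$ are strictly below $x_t$, and in particular $x_{t+1}\prec x_t$. The singleton values are totally preordered, so this strict chain has length at most $m$. Hence there are at most $m$ reversing transfers and at most $m+1$ phases.

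Within a phase, every transfer goes from the same richer bundle $Y_i$ to the same poorer bundle $Y_j$, since no reversal occurs. Goods therefore only flow in one direction, and each good moves at most once per phase. This gives at most $m$ transfers per phase and $O(m^2)$ transfers overall.

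The main obstacle is making the phase decomposition airtight at the boundaries. A reversing transfer can make the moved bundle richer, and \cref{claim: no transfer} is stated only when $Y_1\setminus x\succ Y_2$ and $Y_1\setminus x\prec Y_2\cup x$. We must check that every reversal fits this hypothesis, with the labels chosen appropriately. We also need to handle ties ($\preceq$ rather than $\prec$) so that a transfer cannot leave the two bundles equal-valued with the direction undetermined. Our first approach is to note that the algorithm only moves a good when $Y_i\setminus x\succ Y_j$, which gives the first hypothesis. We then classify by whether $Y_i\setminus x\prec Y_j\cup x$, and treat the tie case $Y_i\setminus x\sim Y_j\cup x$ as non-reversing. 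The tie-breaking convention is to keep the same minimal bundle, which fixes the direction within the phase.
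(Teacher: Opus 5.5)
Your proposal is correct and follows essentially the same route as the paper. The paper also splits the run into maximal one-directional runs of at most $m$ transfers each, and applies \cref{claim: no transfer} to the last (reversing) transfer of each run to show that good never moves again, bounding the number of runs by $m$; your strictly decreasing chain $x_1\succ x_2\succ\cdots$ is the same counting, and the tie case you worry about is harmless, since if $Y_i\setminus x\sim Y_j\cup x$ both bundles are \efxf\ by monotonicity and the algorithm simply halts, so no tie-breaking convention is needed.
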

\begin{proof}
	Suppose we execute the modified PR algorithm given $(\ya,\yb)$ as input. 
    During the execution of the modified PR algorithm, there can be
    at most $m$ consecutive transfers of a good from $\ya$ to $\yb$ in a row, since there are $m$ goods.
    Also, for the last good $x$ transferred from $\ya$ to $\yb$ in a row of transfers, we must have  that
    $\ya \setminus x \succ \yb$, and if algorithm does not terminate it also holds that
    $\ya \setminus x \prec \yb \cup x$. Hence for every at most $m$ removal of goods from $\ya$ in a row,
    there exists at least one good (the last one) that according to \cref{claim: no transfer}, will not come back to $\ya$
    ever again. Hence there could be at most $m^2$ removals of goods from $\ya$.
    Similarly, this holds for $\yb$, too, therefore the modified PR algorithm runs in polynomial time.
\end{proof}

\begin{lemma}\label{cancelablevalprops}
    If $v$ is a cancelable valuation function, and $(A_1,A_2)$ and $(B_1,B_2)$ are two partitions of a subset of goods $S$, then:
    \begin{enumerate}
        \item If $\min (v(B_1),v(B_2)) < \min (v(A_1),v(A_2))$, then $\max(v(A_1),v(A_2))\leq \max (v(B_1),v(B_2))$.
        \item            If $|v(B_1)-v(B_2)| \leq |v(A_1)-v(A_2)|$, then 
        $\min(v(A_1),v(A_2))\leq \min (v(B_1),v(B_2))$, 
        and
        
        $\max(v(A_1),v(A_2))\geq \max (v(B_1),v(B_2))$.
    \end{enumerate}
\end{lemma}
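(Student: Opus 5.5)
The plan is to prove the first item directly by a cancellation argument on the common refinement of the two partitions, and then derive the second item from the first by pure order reasoning.

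\textbf{First item.} Suppose, for contradiction, that $\min(v(B_1),v(B_2)) < \min(v(A_1),v(A_2))$ but $\max(v(A_1),v(A_2)) > \max(v(B_1),v(B_2))$.
\begin{itemize}
\item Without loss of generality $v(B_1) \le v(B_2)$, so $\min = v(B_1)$.
\item Let $A_k$ be the more valuable bundle of $(A_1,A_2)$ and $A_{k'}$ the other one.
\item The assumption $\max(v(A_1),v(A_2)) > \max(v(B_1),v(B_2))$ gives $A_k \succ B_2$.
\item The assumption $\min(v(A_1),v(A_2)) > v(B_1)$ gives $A_{k'} \succ B_1$.
\end{itemize}
Since both pairs partition the same set $S$, we refine them into four disjoint pieces:
\[
P = A_k\cap B_2,\quad Q = A_k\cap B_1,\quad R = A_{k'}\cap B_2,\quad T = A_{k'}\cap B_1 .
\]
Then $A_k = P\cup Q$, $A_{k'}=R\cup T$, $B_1 = Q\cup T$ and $B_2 = P\cup R$.
\begin{itemize}
\item From $P\cup Q \succ P\cup R$, and since $P$ is disjoint from $Q\cup R$, the set-version of cancelability from the preliminaries gives $Q \succ R$.
\item From $R\cup T \succ Q\cup T$, cancelling $T$ in the same way gives $R\succ Q$.
\end{itemize}
These two conclusions contradict each other, which proves the first item.

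\textbf{Second item.} This follows from the first item and the identity $|v(X_1)-v(X_2)| = \max - \min$ for any pair.

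For the $\min$ claim, suppose $\min(v(A_1),v(A_2)) > \min(v(B_1),v(B_2))$. The first item then gives $\max(v(A_1),v(A_2)) \le \max(v(B_1),v(B_2))$. Combining,
\[
|v(A_1)-v(A_2)| = \max_A - \min_A < \max_B - \min_B = |v(B_1)-v(B_2)|,
\]
which contradicts the hypothesis $|v(B_1)-v(B_2)| \le |v(A_1)-v(A_2)|$.

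For the $\max$ claim, suppose $\max(v(A_1),v(A_2)) < \max(v(B_1),v(B_2))$.
\begin{itemize}
\item Apply the first item with the roles of $A$ and $B$ swapped, in contrapositive form: since $\max_B \le \max_A$ fails, we cannot have $\min_A < \min_B$, so $\min_A \ge \min_B$.
\item Then $\max_A - \min_A < \max_B - \min_B$, which is again a contradiction.
\end{itemize}

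The only step requiring care is the first item, namely choosing the four-piece refinement so that cancelability applies to disjoint sets. All remaining steps are elementary comparisons.
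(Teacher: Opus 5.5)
Your proof is correct and follows essentially the paper's argument. For the first item you use the same four-piece refinement of the two partitions and the same two cancellation steps; the paper argues directly that each $A_i \preceq B_2$, whereas you argue by contradiction using only the larger $A$-bundle. The second item is derived from the first by the same order comparisons as in the paper.
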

\begin{proof}

\begin{enumerate}
    \item First, assume that $\min (v(B_1),v(B_2)) < \min (v(A_1),v(A_2))$.
    Without loss of generality assume $v(B_1)=\min (v(B_1),v(B_2))$, 
    so $B_1 \prec A_1$ and $B_1 \prec A_2$.

    Let $i,j \in\{1,2\}$ and $i\ne j$. We show  $B_1 \cap A_i \prec B_2 \cap A_j$. 
    Towards a contradiction, suppose $B_1 \cap A_i \succeq B_2 \cap A_j$.
    Then, since $(B_1 \cap A_j) \cap (B_2 \cap A_j) = \emptyset$, 
    $(B_1 \cap A_j)  \cap (B_1 \cap A_i) = \emptyset$, and since the valuation function is cancelable, 
    by \cref{cancp}, we get 
    $$A_j = (B_1 \cap A_j) \cup (B_2 \cap A_j) \preceq (B_1 \cap A_j)  \cup (B_1 \cap A_i) = B_1
    \to  A_j \preceq B_1,$$
    which is a contradiction. Hence,  $B_1 \cap A_i \prec B_2 \cap A_j$.
    
    Then, since $(B_1 \cap A_i) \cap (B_2 \cap A_i)  = \emptyset$, 
    $(B_2 \cap A_j)  \cap (B_2 \cap A_i)  = \emptyset$, and since the valuation function is cancelable, 
    we get 
    $$A_i = (B_1 \cap A_i) \cup (B_2 \cap A_i) \preceq (B_2 \cap A_j)  \cup (B_2 \cap A_i) = B_2
    \to  A_i \preceq B_2,$$
    so since $i$ was chosen arbitrarily, we get $v(B_2) \geq \max (v(A_1),v(A_2))$.

    \item  Suppose $|v(B_1)-v(B_2)| \leq |v(A_1)-v(A_2)|$.
    Assume $\min (v(B_1),v(B_2)) < \min (v(A_1),v(A_2))$ for the sake of getting a contradiction.
    Then, by the first part of the lemma, we get $\max(v(A_1),v(A_2))\leq \max (v(B_1),v(B_2))$, which results in $|v(B_1)-v(B_2)| > |v(A_1)-v(A_2)|$, which is a contradiction.
    Hence, $\min(v(A_1),v(A_2))\leq \min (v(B_1),v(B_2))$. 
    
    If $\min(v(A_1),v(A_2))= \min (v(B_1),v(B_2))$ holds, then by $|v(B_1)-v(B_2)| \leq |v(A_1)-v(A_2)|$, we conclude
    $\max(v(A_1),v(A_2))\geq \max (v(B_1),v(B_2))$.
    Also, if $\min(v(A_1),v(A_2)) < \min (v(B_1),v(B_2))$, we get 
     $\max(v(A_1),v(A_2))\geq \max (v(B_1),v(B_2))$ by the first part of the lemma.    \qedhere
\end{enumerate}    
\end{proof}

    we use \texttt{PR}$(S_1, S_2,v_i)$ to denote the output of running the Modified PR algorithm on the initial partition $(S_1, S_2)$ with valuation $v_i$. 

\begin{lemma} \label{canceldif}
    Given a cancelable valuation function $v$, 
    if we execute $\texttt{PR}(B_1,B_2,v)$ and get $(A_1,A_2)$, then:
    \begin{enumerate}
        \item $\max(v(A_1),v(A_2))\leq \max (v(B_1),v(B_2))$.
        \item $|v(A_1)-v(A_2)| \leq |v(B_1)-v(B_2)|$.
    \end{enumerate}
\end{lemma}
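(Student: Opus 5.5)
Since the modified PR algorithm proceeds by a sequence of single-good transfers, I would prove both statements by showing that each individual transfer preserves them. Induction over the (polynomially many, by \cref{lem: pr polynomilal}) steps then gives the result by transitivity. So fix one step. The current bundles are $(Y_1,Y_2)$ with, say, $Y_2 \preceq Y_1$, so $Y_2$ is the least valuable bundle. A good $g\in Y_1$ is moved only if $v(Y_1\setminus g) > v(Y_2)$. The new partition is $(Y_1\setminus g, Y_2\cup g)$. I want to show that its maximum value is at most $v(Y_1)$ and that its gap is at most $v(Y_1)-v(Y_2)$.

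\textbf{Maximum.} The new bundles satisfy $v(Y_1\setminus g)\le v(Y_1)$ by monotonicity. For the other bundle I need $v(Y_2\cup g)\le v(Y_1)$. By cancelability in the form $S\preceq T \Rightarrow S\cup R\preceq T\cup R$, applied with $S=Y_2$, $T=Y_1\setminus g$ and $R=g$, the strict inequality $Y_2 \prec Y_1\setminus g$ gives $Y_2\cup g \preceq (Y_1\setminus g)\cup g = Y_1$. Hence the maximum does not increase, which proves part~1 step by step.

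\textbf{Gap.} The new minimum is $\min(v(Y_1\setminus g), v(Y_2\cup g))$. Both terms are at least $v(Y_2)$: the first because the transfer condition gives $v(Y_1\setminus g)>v(Y_2)$, and the second by monotonicity. So the minimum weakly increases, which also re-derives \cref{lem:PR increase}, while the maximum weakly decreases. The new gap is therefore at most $v(Y_1)-v(Y_2)$, the old gap. Alternatively, once the max statement is in hand, one can invoke \cref{cancelablevalprops}. However, the direct argument (min up, max down) suffices and avoids case analysis.

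\textbf{Main obstacle.} The only delicate point is the maximum bound $Y_2\cup g\preceq Y_1$. It relies on cancelability, applied with the disjoint added set $g\notin Y_2\cup(Y_1\setminus g)$, rather than on monotonicity alone. One must also handle ties, namely which bundle is labelled as the least valuable. If $v(Y_1)=v(Y_2)$, either may be chosen, and the argument is symmetric. If the algorithm terminates immediately, the statement is trivial.
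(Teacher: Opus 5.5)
Your proof is correct, but it follows a different route from the paper. You prove a per-transfer invariant. At each single move of $g$ from $Y_1$ to $Y_2$, one application of cancelability (from $Y_2 \prec Y_1\setminus g$ to $Y_2\cup g \preceq Y_1$) shows the maximum does not increase. The transfer condition together with monotonicity shows the minimum does not decrease. Both parts then follow by induction over the finitely many steps.

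The paper instead reasons about the final output. It starts from the fact that the minimum never decreases. If the minimum strictly increases, it invokes part~1 of \cref{cancelablevalprops}: for cancelable $v$, a partition of the same set with strictly larger minimum cannot have a larger maximum. That step does not depend on how the final partition was reached. In the tie case, where $\min(v(A_1),v(A_2))=\min(v(B_1),v(B_2))$, the paper has to return to the dynamics. There it argues that all goods move in one direction, from $B_1$ to $B_2$, so that $A_1=B_1\setminus S$ and $v(A_2)=v(B_2)$.

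Your argument avoids this case split and the rather delicate direction-of-transfer reasoning. It does not use \cref{cancelablevalprops} at all. It also works for any transfer rule satisfying the PR condition $v(Y_1\setminus g)>v(Y_2)$, not only the modified choice of $g$. The paper's route mainly gains reuse of \cref{cancelablevalprops}, which it needs anyway for \cref{cor:unitbundleoutput}.

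Two minor remarks. Your aside about invoking \cref{cancelablevalprops} for part~2 is unnecessary, since part~2 is just ``max down, min up,'' exactly as you (and the paper) argue. The induction also needs only termination of the algorithm, not its polynomial running-time bound.
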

\begin{proof}
\begin{enumerate}
    \item By $(A_1,A_2) = \texttt{PR}(B_1,B_2,v)$,
    we get $\min (v(B_1),v(B_2)) \leq \min (v(A_1),v(A_2))$.
    By \cref{cancelablevalprops}, if $\min (v(B_1),v(B_2)) < \min (v(A_1),v(A_2))$, then we get $\max(v(A_1),v(A_2))\leq \max (v(B_1),v(B_2))$.
    So assume $\min (v(B_1),v(B_2)) = \min (v(A_1),v(A_2))$.
    Without loss of generality assume $B_2 \preceq B_1$ and $A_2 \preceq A_1$.
    Therefore, $v(B_2) = v(A_2)$.

    Note that during the PR algorithm a good $g$ is transferred from some bundle $S$ to $T$ 
    only if $S\setminus g \succ T$, so since the minimum valued bundle's value never decreases at any step, the bundle that is losing a good would have a value strictly higher than the value of $B_2$.
    Hence, since the minimum value does not increase, during the PR algorithm, goods only transfer from $B_1$ to $B_2$ during the transfer.
    Therefore $A_1 = B_1 \setminus S \succ B_2$, and $A_2 = B_2 \cup S$ for some $S\subset B_1$. Hence,
    \begin{align*}
        v(B_1) &\geq  v(B_1 \setminus S) = v(A_1)\\
        v(B_1) &\geq v(B_2) = v(A_2),
    \end{align*}
    So $\max(v(A_1),v(A_2))\leq \max (v(B_1),v(B_2))$.

    \item  By combining $\max(v(A_1),v(A_2))\leq \max (v(B_1),v(B_2))$ and
    $\min (v(B_1),v(B_2)) \leq \min (v(A_1),v(A_2))$, we get 
    $|v(A_1)-v(A_2)| \leq |v(B_1)-v(B_2)|$. \qedhere
\end{enumerate}    
\end{proof}

	Next, we provide the algorithm that we use to compute unit bundles  in polynomial time.

\begin{algorithm}[H]
	\caption{Computing Unit Bundles of $E_{i,j}$}
	\label{alg:computing-unit-bundles}
	
	$t\gets1$\;
	$(Z^{(t)}_1,Z^{(t)}_2) \gets \texttt{PR}(E_{i,j}, \emptyset, v_i)$ \;
	$(Y^{(t)}_1,Y^{(t)}_2) \gets \texttt{PR}(Z^{(t)}_1,Z^{(t)}_2, v_j)$ \;

	\While{$\bigl|v_i(Y^{(t)}_1)-v_i(Y^{(t)}_2)\bigr|<\bigl|v_i(Z^{(t)}_1)-v_i(Z^{(t)}_2)\bigr|$}{
			$t\gets t+1$\;
			$(Z^{(t)}_1,Z^{(t)}_2) \gets \texttt{PR}(Y^{(t-1)}_1,Y^{(t-1)}_2,v_i)$\;
	\If{$\bigl|v_j(Y^{(t-1)}_1)-v_j(Y^{(t-1)}_2)\bigr|\leq \bigl|v_j(Z^{(t)}_1)-v_j(Z^{(t)}_2)\bigr|$}{
		$(Y^{(t)}_1,Y^{(t)}_2) \gets (Y^{(t-1)}_1,Y^{(t-1)})$\;
		break
		}
			$(Y^{(t)}_1,Y^{(t)}_2) \gets \texttt{PR}(Z^{(t)}_1,Z^{(t)}_2,v_j)$\;
	}
	$\{a_{i,j}, b_{i,j}\} \gets \{Y^{(t)}_1,Y^{(t)}_2\}$  such that $a_{i,j} \greatereqval{i} b_{i,j}$.\\
	$\{a_{j,i}, b_{j,i}\} \gets \{Z^{(t)}_1,Z^{(t)}_2\}$ such that $a_{j,i} \greatereqval{j} b_{j,i}$.
\end{algorithm}

\begin{observation}\label{ob: inv unit}
	Suppose we execute Algorithm~\ref{alg:computing-unit-bundles} with an arbitrary pair of agents $i,j$ with cancelable valuation function as input.
	Then, at every step $t$ of the while loop:
	\[
	E_{i,j}=Y^{(t)}_1\sqcup Y^{(t)}_2
	\qquad\text{and}\qquad
	E_{i,j}=Z^{(t)}_1\sqcup Z^{(t)}_2
	\]
	satisfy the following three properties:
	\begin{enumerate}
		\item
		The pair $(Y^{(t)}_1,Y^{(t)}_2)$ is \emph{EFX-feasible} for agent $j$.
		\item
		The pair $(Z^{(t)}_1,Z^{(t)}_2)$ is \emph{EFX-feasible} for agent $i$.
		\item $\bigl|v_j(Y^{(t)}_1)-v_j(Y^{(t)}_2)\bigr|\le \bigl|v_j(Z^{(t)}_1)-v_j(Z^{(t)}_2)\bigr|$.
	\end{enumerate}
\end{observation}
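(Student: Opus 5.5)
The plan is an induction on $t$, checking that each of the three properties is preserved by one pass of the while loop. The only facts needed are two: the output of \texttt{PR} is EFX-feasible for the valuation it was run with (the termination condition of the PR algorithm), and \cref{canceldif}, which says that \texttt{PR} never increases the absolute difference of the two bundle values for the valuation it uses.

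\textbf{Base case} ($t=1$). By construction, $(Z^{(1)}_1,Z^{(1)}_2)=\texttt{PR}(E_{i,j},\emptyset,v_i)$. This is a partition of $E_{i,j}$, because PR only moves goods between the two bundles. It is EFX-feasible for $i$ because PR terminates only when the least valuable bundle is EFX-feasible. For two bundles, this already makes both bundles EFX-feasible: the richer bundle trivially does not strongly envy the poorer one.

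Next, $(Y^{(1)}_1,Y^{(1)}_2)=\texttt{PR}(Z^{(1)}_1,Z^{(1)}_2,v_j)$. By the same reasoning it is a partition of $E_{i,j}$ and is EFX-feasible for $j$. Applying \cref{canceldif}(2) with valuation $v_j$ to this call gives
\[|v_j(Y^{(1)}_1)-v_j(Y^{(1)}_2)|\le |v_j(Z^{(1)}_1)-v_j(Z^{(1)}_2)|,\]
which is property 3.

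\textbf{Inductive step.} Assume the properties hold at step $t-1$ and the loop increments to $t$. Then $Z^{(t)}=\texttt{PR}(Y^{(t-1)},v_i)$, so it is a partition of $E_{i,j}$ that is EFX-feasible for $i$. There are two branches.

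If the break branch is taken, we set $Y^{(t)}=Y^{(t-1)}$. This pair is EFX-feasible for $j$ by the induction hypothesis. Property 3 is exactly the break condition $|v_j(Y^{(t-1)}_1)-v_j(Y^{(t-1)}_2)|\le|v_j(Z^{(t)}_1)-v_j(Z^{(t)}_2)|$.

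Otherwise $Y^{(t)}=\texttt{PR}(Z^{(t)},v_j)$. It is EFX-feasible for $j$ by the PR termination condition, and property 3 follows from \cref{canceldif}(2) applied with $v_j$, exactly as in the base case.

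\textbf{Expected obstacle.} The only real subtlety is property 3 in the break branch. There it does not come from \cref{canceldif} at all; it is literally the guard of the \textbf{if}. One should also note that the line $(Y^{(t)}_1,Y^{(t)}_2)\gets(Y^{(t-1)}_1,Y^{(t-1)})$ is read as $(Y^{(t-1)}_1,Y^{(t-1)}_2)$.

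Finally, PR terminates on its own, so all objects are well defined. Polynomiality of that termination under cancelability is \cref{lem: pr polynomilal}, but it is not needed for the invariant.
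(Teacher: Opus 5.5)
Your proof is correct and matches the paper's argument. EFX-feasibility comes from the PR termination condition, and property 3 comes either from the break guard or from \cref{canceldif} applied with $v_j$. Your explicit induction only spells out what the paper leaves implicit: in the break branch, $Y^{(t)}=Y^{(t-1)}$ is still a PR output for $v_j$.
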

\begin{proof}
	The first two are because the bundles are the output of PR algorithm.
	For the last one, we either have  $\bigl|v_j(Y^{(t)}_1)-v_j(Y^{(t)}_2)\bigr|\le \bigl|v_j(Z^{(t)}_1)-v_j(Z^{(t)}_2)\bigr|$ or:
	$$(Y^{(t)}_1,Y^{(t)}_2) \gets \texttt{PR}(Z^{(t)}_1,Z^{(t)}_2,v_j),$$
	so by \cref{canceldif},
	we get $\bigl|v_j(Y^{(t)}_1)-v_j(Y^{(t)}_2)\bigr|\le \bigl|v_j(Z^{(t)}_1)-v_j(Z^{(t)}_2)\bigr|$.
\end{proof}

\begin{corollary}\label{cor:unitbundleoutput}
	Given any pair of agents $i$ and $j$ with cancelable valuation function as input to Algorithm~\ref{alg:computing-unit-bundles}, when Algorithm~\ref{alg:computing-unit-bundles} terminates:
	\begin{enumerate}
		\item
		The pair 	$\{a_{j,i}, b_{j,i}\}$ is an \emph{EFX-feasible} cut for agent $i$.
		\item
		The pair $\{a_{i,j}, b_{i,j}\}$ is an \emph{EFX-feasible} cut for agent $j$.
		\item      $v_i(a_{ij}) \ge \max(v_i(a_{ji}),v_i(b_{ji})) \ge \min(v_i(a_{ji}),v_i(b_{ji})) \ge v_i(b_{ij})$, i.e.,
        the $(a_{j,i},b_{j,i})$-partition is at least as balanced for $i$ as the $(a_{i,j},b_{i,j})$-partition.
		\item  $ v_j(a_{ji}) \ge \max(v_j(a_{ij}),v_j(b_{ij})) \ge \min(v_j(a_{ij}),v_j(b_{ij})) \ge v_j(b_{ji}),$, i.e.,
        the $(a_{i,j},b_{i,j})$-partition is at least as balanced for $j$ as the $(a_{j,i},b_{j,i})$-partition.
	\end{enumerate}
\end{corollary}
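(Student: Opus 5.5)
The plan is to read the four properties off the state of Algorithm~\ref{alg:computing-unit-bundles} at the moment it stops, with $t$ the final iteration index. There are two ways the algorithm can stop, and I will treat them separately:
\begin{enumerate}
\item \emph{Normal exit}: the while-condition fails, so $\bigl|v_i(Y^{(t)}_1)-v_i(Y^{(t)}_2)\bigr|\ge \bigl|v_i(Z^{(t)}_1)-v_i(Z^{(t)}_2)\bigr|$.
\item \emph{Break exit}: $(Y^{(t)}_1,Y^{(t)}_2)=(Y^{(t-1)}_1,Y^{(t-1)}_2)$ and $\bigl|v_j(Y^{(t-1)}_1)-v_j(Y^{(t-1)}_2)\bigr|\le \bigl|v_j(Z^{(t)}_1)-v_j(Z^{(t)}_2)\bigr|$.
\end{enumerate}
Termination itself (in fact in polynomially many rounds) I take as handled separately, via \cref{lem:polyunitbundle}. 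The corollary only concerns the output.

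\textbf{Properties 1 and 2 (EFX-feasibility).} These follow because each final pair is a PR output for the right agent.
\begin{itemize}
\item $(Z^{(t)}_1,Z^{(t)}_2)$ is always an output of $\texttt{PR}(\cdot,\cdot,v_i)$, so it is EFX-feasible for $i$.
\item In the normal exit, $(Y^{(t)}_1,Y^{(t)}_2)$ is an output of $\texttt{PR}(\cdot,\cdot,v_j)$, so it is EFX-feasible for $j$.
\item In the break exit, $(Y^{(t)}_1,Y^{(t)}_2)=(Y^{(t-1)}_1,Y^{(t-1)}_2)$, which was produced by $\texttt{PR}(\cdot,\cdot,v_j)$ in round $t-1$ (or at initialization). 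So it is again EFX-feasible for $j$.
\end{itemize}
This is just items 1--2 of \cref{ob: inv unit}, checked to remain valid after the break. Relabeling the bundles as $a,b$ does not affect EFX-feasibility.

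\textbf{Balance comparisons as differences.} First I establish two difference inequalities.
\begin{itemize}
\item For agent $i$, I need $|v_i(Z_1)-v_i(Z_2)|\le |v_i(Y_1)-v_i(Y_2)|$. In the normal exit this is exactly the negated while-condition. In the break exit, $Z^{(t)}=\texttt{PR}(Y^{(t-1)},v_i)$ and $Y^{(t)}=Y^{(t-1)}$, so it follows from part 2 of \cref{canceldif}.
\item For agent $j$, I need $|v_j(Y_1)-v_j(Y_2)|\le |v_j(Z_1)-v_j(Z_2)|$. In the normal exit this is item 3 of \cref{ob: inv unit}, since $Y^{(t)}=\texttt{PR}(Z^{(t)},v_j)$ and \cref{canceldif} applies. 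In the break exit it is literally the break condition, because $Y^{(t)}=Y^{(t-1)}$.
\end{itemize}

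\textbf{Properties 3 and 4 (from differences to min/max).} Here I apply part 2 of \cref{cancelablevalprops}, which uses cancelability of the valuations.
\begin{itemize}
\item For $v_i$, take $B=Z^{(t)}$ and $A=Y^{(t)}$. This gives $\min_i(Y)\le\min_i(Z)$ and $\max_i(Z)\le \max_i(Y)$.
\item By the labeling rule $a_{i,j}\greatereqval{i} b_{i,j}$, we have $\max_i(Y)=v_i(a_{i,j})$ and $\min_i(Y)=v_i(b_{i,j})$. Since $\{a_{j,i},b_{j,i}\}=\{Z_1,Z_2\}$, this yields the chain $v_i(a_{i,j})\ge\max(v_i(a_{j,i}),v_i(b_{j,i}))\ge\min(\cdot)\ge v_i(b_{i,j})$, which is property 3.
\item Symmetrically, for $v_j$ take $B=Y^{(t)}$ and $A=Z^{(t)}$, and use the labeling $a_{j,i}\greatereqval{j} b_{j,i}$. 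This gives property 4.
\end{itemize}

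I expect the main obstacle to be bookkeeping around the break branch. There, item 3 of \cref{ob: inv unit} is not obviously witnessed by a PR call from $Z^{(t)}$ to $Y^{(t)}$, so one must use the break condition for $j$ and \cref{canceldif} for $i$ instead. One should also confirm that the EFX-feasibility of $Y^{(t-1)}$ for $j$ indeed came from a $v_j$-PR call, including the case $t-1=1$ where it came from the initialization line.
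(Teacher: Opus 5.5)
Your proof is correct and follows the paper's route. You obtain the two difference inequalities from the final state of the loop together with \cref{ob: inv unit}, then convert them into the min/max chains via part 2 of \cref{cancelablevalprops}, reading off $\max$ and $\min$ from the $a/b$ labeling. Your separate treatment of the break exit is more careful than the paper, which cites ``the condition of the while loop'' for the agent-$i$ inequality even though the loop is not ended by that condition at a break; your appeal to \cref{canceldif} for $Z^{(t)}=\texttt{PR}(Y^{(t-1)},v_i)$ fills exactly that step.
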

\begin{proof}
    When the while loop ends, by the condition of the while loop and by \cref{ob: inv unit},
    we get
        $\bigl|v_j(a_{i,j})-v_j(b_{i,j})\bigr|\le \bigl|v_j(a_{j,i})-v_j(b_{j,i})\bigr|$
        and
        $\bigl|v_i (a_{j,i})-v_i(b_{j,i})\bigr|\hspace{0.8 mm}\le \hspace{0.8 mm} \bigl|v_i(a_{i,j})-v_i(b_{i,j})\bigr|$.
    Hence, using \cref{cancelablevalprops}, we get
    \begin{align*}
     v_i(a_{ij}) &= \max(v_i(a_{ij}),v_i(b_{ij}))\geq \max(v_i(a_{ji}),v_i(b_{ji})) \\
     &\ge \min(v_i(a_{ji}),v_i(b_{ji})) \geq  \min(v_i(a_{ij}),v_i(b_{ij})) = v_i(b_{ij}).        
    \end{align*}
    Hence, the third property, and similarly fourth property hold as well.
\end{proof}

    Next, we show that Algorithm~\ref{alg:computing-unit-bundles} terminates in polynomial time in $m$ (the number of goods) when the valuation functions are cancelable.
	Let $k$ be the number of iterations of the while loop. We show  $k\leq m+1$, where $m$ is the number of goods in $|E_{i,j}|$.
	Without loss of generality, we can assume $Z^{(t)}_1 \greatereqval{i} Z^{(t)}_2$ and $Y^{(t)}_1 \greatereqval{i} Y^{(t)}_2$.

\begin{lemma}\label{lem:transfer more}
	Suppose that the while loops has repeated $k$ times, and $t+1<k$, then
	during execution of
			$$(Z^{(t+1)}_1,Z^{(t+1)}_2) \gets \texttt{PR}(Y^{(t)}_1,Y^{(t)}_2,v_i),$$
	some goods have been transferred from $\ybt$ to $\yat$, and some goods have been transferred from $\yat$ to $\ybt$.
\end{lemma}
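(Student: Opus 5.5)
The plan is to read off inequalities from the fact that the loop ran through iteration $t+1$ and re-entered for iteration $t+2$, and then argue by contradiction against a one-directional run of \texttt{PR}. As in the paper, normalize so that $\yat \greatereqval{i} \ybt$ and $Z^{(t)}_1\greatereqval{i} Z^{(t)}_2$.

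Since $t+1<k$, three facts hold.
\begin{enumerate}
\item The while-test succeeded before iteration $t+1$, so $|v_i(\yat)-v_i(\ybt)|<|v_i(\zat)-v_i(\zbt)|$.
\item The break test in iteration $t+1$ failed, so $|v_j(\zatt)-v_j(\zbtt)|<|v_j(\yat)-v_j(\ybt)|$.
\item The while-test before iteration $t+2$ succeeded.
\end{enumerate}
Fact 2 shows that $(\zatt,\zbtt)\neq(\yat,\ybt)$ as a partition, so $\texttt{PR}(\yat,\ybt,v_i)$ moves at least one good. Modified PR always moves a good from a currently more valuable bundle to the least valuable one. Since $\yat\greatereqval{i}\ybt$, the first transfer goes from $\yat$ to $\ybt$ (ties are handled by relabeling). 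So transfers in one direction are guaranteed, and everything hinges on showing a transfer from $\ybt$ to $\yat$.

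Suppose, for contradiction, that every transfer goes from $\yat$ to $\ybt$. Then $\zatt=\yat\setminus S$ and $\zbtt=\ybt\cup S$ for some nonempty $S\subseteq \yat$. I would first settle the orientation of the output. The last moved good $x$ satisfies $\yat\setminus S' \greaterval{i}\ybt\cup(S'\setminus x)$, where $S'$ is the prefix of $S$ ending with $x$. Combined with \cref{claim: no transfer}, this should show either that the run stops while $\yat\setminus S$ is still the $i$-larger bundle, or that it crosses over exactly once.

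Next I would use that $(\yat,\ybt)$ is itself the output of $\texttt{PR}(\zat,\zbt,v_j)$, hence \efxf\ for $j$. Applying \cref{cancelablevalprops} to the $v_j$-values of the one-sided partition $(\yat\setminus S,\ybt\cup S)$, the goal is $|v_j(\zatt)-v_j(\zbtt)|\ge |v_j(\yat)-v_j(\ybt)|$, contradicting Fact 2. For this I would compare $\yat\setminus S$ and $\ybt\cup S$ with $\yat$ and $\ybt$ using cancelability (\cref{cancp}), with the common part $\yat\setminus S$ or $\ybt$ cancelled. The \efxf\ property of $(\yat,\ybt)$ for $j$ controls removing a single good, and I would try to extend this to removing the set $S$ by induction along the order of transfers.

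If this $j$-based argument does not close, the fallback is to use Fact 3 together with Fact 1 and \cref{canceldif}. One-directional transfers from the $i$-richer bundle make the result monotone in the sense of \cref{cancelablevalprops}: the $i$-minimum of $(\zatt,\zbtt)$ weakly exceeds that of $(\yat,\ybt)$ and the $i$-maximum weakly decreases. Then the next $v_j$-PR call must undo part of $S$, and I would try to contradict the strict decrease required by the while-test in Fact 3.

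The main obstacle is this contradiction step: turning ``only $\yat\to\ybt$ transfers'' into a violation of one of the strict inequalities. The difficulty is that the information about $j$ is only the \efxf\ property for single goods, while $S$ can be large. I would attack it first via the $v_j$ route (Fact 2) with an induction over the individual transfers of $S$.
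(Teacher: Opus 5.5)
Your setup matches the paper's proof. Fact 2 forces at least one move, and the first move goes from $\yat$ to $\ybt$. Assuming no move back gives the output $\{\yat\setminus S,\ \ybt\cup S\}$ with $S\neq\emptyset$, and you rightly aim to contradict Fact 2 using that $(\yat,\ybt)$ is \efxf\ for $j$. But you stop exactly at the step that constitutes the proof: you call it the main obstacle, propose an induction over the individual transfers, and offer a fallback. As written, the argument is therefore incomplete.

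The missing idea is plain monotonicity, and no induction is needed. Since $\ybt$ is \efxf\ for $j$ in $(\yat,\ybt)$, we have $v_j(\yat\setminus g)\le v_j(\ybt)$ for every $g\in\yat$. Pick any $g\in S$. Then $\yat\setminus S\subseteq \yat\setminus g$, so $\yat\setminus S\lowereqval{j}\ybt$ directly: removing a larger set can only lower the value, so the single-good guarantee already covers all of $S$. The set $S$ is disjoint from both $\yat\setminus S$ and $\ybt$, so adding it to both sides and using cancelability gives $\yat\lowereqval{j}\ybt\cup S$. Together with $\ybt\lowereqval{j}\ybt\cup S$, this shows the $j$-maximum of the new partition is at least $\max(v_j(\yat),v_j(\ybt))$. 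Its $j$-minimum is $v_j(\yat\setminus S)\le\min(v_j(\yat),v_j(\ybt))$. Hence $|v_j(\zatt)-v_j(\zbtt)|\ge|v_j(\yat)-v_j(\ybt)|$, so the break test fires in that iteration, contradicting $t+1<k$.

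This argument works for the unordered pair of output bundles. So your planned step of settling which output bundle is $i$-larger, via \cref{claim: no transfer}, is unnecessary. The same holds for the Fact 3 fallback, whose key assertion that the next $v_j$-call ``must undo part of $S$'' you do not justify in any case.
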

\begin{proof}
    Note that if there is no transfer, then 
    $(Z^{(t+1)}_1,Z^{(t+1)}_2) =(Y^{(t)}_1,Y^{(t)}_2), $ so 
    $\bigl|v_j(Y^{(t)}_1)-v_j(Y^{(t)}_2)\bigr|=\bigl|v_j(Z^{(t+1)}_1)-v_j(Z^{(t+1)}_2)\bigr|$.
    Hence, the if condition in the while loop would be true, and algorithm terminates in the $t$'th iteration of the while loop, which contracts with $t+1<k$, so there should be at least a transfer.

	Note that $\yat \greatereqval{i} \ybt$, so if there is a first transfer, it is from $\yat$ to $\ybt$.
	Hence, we need to show that there is a transfer from $\ybt$ to $\yat$.
	Suppose on the contrary that no good is transferred from $\ybt$ to $\yat$.
	Then, we get $\{Z^{(t+1)}_1,Z^{(t+1)}_2\} = \{\yat \setminus S, \ybt \cup S\}$ for some $S\subset \yat$ such that $S\ne \emptyset$. 

By \cref{ob: inv unit}, $(Y^{(t)}_1,Y^{(t)}_2)$ is an \efx\ cut for agent $j$, so using the fact $S\ne \emptyset$, 
	we get $\yat \setminus S \lowereqval{j} \ybt \lowereqval{j} \ybt \cup S$. 
	Hence, we have 
\begin{align*}
	&\ybt \cup S \greatereqval{j} \ybt,\\
	&\ybt \greatereqval{j} \yat \setminus S \to
	\ybt \cup S \greatereqval{j} \yat,
\end{align*}
	where in the last line we used the fact that valuation function is cancelable. Hence, we get:
	$$ \max(v_j(Z^{(t+1)}_1),v_j(Z^{(t+1)}_2)) \geq \max(v_j(\yat),v_j(\ybt)).$$

    Moreover, we have 
    \begin{align*}
    \min(v_j(Z^{(t+1)}_1),v_j(Z^{(t+1)}_2)) = v_j(\yat \setminus S) \leq v_j(\yat), \\
    \min(v_j(Z^{(t+1)}_1),v_j(Z^{(t+1)}_2)) = v_j(\yat \setminus S) \leq v_j(\ybt) , \\
    \min(v_j(Z^{(t+1)}_1),v_j(Z^{(t+1)}_2)) \leq \min(v_j(\yat),v_j(\ybt)).  
    \end{align*}

	Therefore, we get 
	$$\bigl|v_j(Y^{(t)}_1)-v_j(Y^{(t)}_2)\bigr| \leq \bigl|v_j(Z^{(t+1)}_1)-v_j(Z^{(t+1)}_2)\bigr|,$$
	which means that algorithm should have terminated in step $t$, which contradicts with $t+1<k$. 
	Hence, the proof is complete.  \qedhere
\end{proof}

    We will need another auxiliary lemma for cancelable valuation functions.

    \begin{lemma}\label{lem:cancelable-swap}
	Let $v$ be cancelable, and let $(S_1,S_2)$ and $(T_1,T_2)$ be two partitions of the same set of goods.
	Suppose
	\[
		T_1 \succeq S_1 \succeq S_2 \succeq T_2
	\]
	and
	\[
		|v(S_1)-v(S_2)| < |v(T_1)-v(T_2)|.
	\]
	If $x\in S_1$, $y\in T_1$, and
	\[
		S_1\setminus x \succ T_1\setminus y,
	\]
	then $x\prec y$.
\end{lemma}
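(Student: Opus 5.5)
We argue by contradiction: assume $x \greatereqval{} y$, i.e.\ $v(y)\le v(x)$, and derive $S_1\setminus x \preceq T_1\setminus y$, contradicting the hypothesis. Set $C=S_1\cap T_1$, $P=S_1\cap T_2$ and $Q=S_2\cap T_1$. Then $S_1=C\sqcup P$, $T_1=C\sqcup Q$ and $S_2=Q\sqcup (S_2\cap T_2)$. All comparisons between $S_1\setminus x$ and $T_1\setminus y$ thus reduce to comparing pieces of $P$ and $Q$ inside the common context $C$. The tools are \cref{cancp} and the two displayed consequences of cancelability from the preliminaries. The only strict implication these give is that strict preference survives removing a common part.

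First we extract strictness from the hypotheses. Since $T_1\succeq S_1\succeq S_2\succeq T_2$ and $|v(S_1)-v(S_2)|<|v(T_1)-v(T_2)|$, at least one of $T_1\succ S_1$ and $S_2\succ T_2$ holds. If $T_1\succ S_1$, write this as $Q\cup C\succ P\cup C$; cancelability gives $Q\succ P$. If $S_2\succ T_2$, write it as $Q\cup(S_2\cap T_2)\succ P\cup(S_2\cap T_2)$; cancelability again gives $Q\succ P$. So in both cases $Q\succ P$. Note that this also forces $Q\ne\emptyset$.

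Next we split into four cases according to where $x$ and $y$ lie: $x\in C$ or $x\in P$, and $y\in C$ or $y\in Q$.
\begin{itemize}
\item $x,y\in C$. We want $(C\setminus x)\cup P\preceq (C\setminus y)\cup Q$. If $x=y$, combine $Q\succ P$ with the common part $C\setminus x$ via \cref{cancp}. If $x\ne y$, write $C\setminus x = D\cup y$ and $C\setminus y=D\cup x$. Then apply \cref{cancp} to $y\preceq x$ and $P\preceq Q$, which are on disjoint supports, and add $D$.
\item $x\in P$, $y\in C$. We want $C\cup(P\setminus x)\preceq (C\setminus y)\cup Q$, i.e., after cancelling $C\setminus y$, that $y\cup (P\setminus x)\preceq Q$. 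We combine $y\preceq x$ with $P\setminus x\preceq P\setminus x$ to get $y\cup(P\setminus x)\preceq P$, and then use $P\preceq Q$.
\item $x\in C$, $y\in Q$. We want $(C\setminus x)\cup P\preceq C\cup (Q\setminus y)$. We need the finer relation $P\preceq (Q\setminus y)\cup x$.
\item $x\in P$, $y\in Q$. This is similar; we need $P\setminus x\preceq Q\setminus y$.
\end{itemize}

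The main obstacle is the last two cases. There, the weak relation $Q\succeq P$ is not enough by itself, since removing $y$ from $Q$ could destroy it. My first attempt is to bring in the EFX-type structure implicit in how the lemma is used. In \cref{lem:transfer more}, $(S_1,S_2)$ is the output of \texttt{PR}, so $S_2\succeq S_1\setminus g$ for every $g\in S_1$. I would check whether this or an analogous balance property is actually needed; the statement as given might otherwise be false.

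Failing that, I would argue directly from the ordering: $S_1\setminus x\succ T_1\setminus y$ together with $T_1\succeq S_1$. Write $S_1\setminus x\succ T_1\setminus y$ in the form $A\cup R\succ B\cup R$, cancel to an inequality between the single goods $x,y$ plus residual pieces of $P,Q$, and combine this with $T_1\succeq S_1$ via \cref{cancp}. The intended conclusion is $y\succ x$ directly, rather than through the contradiction. This is the step where the proof will require the most care.
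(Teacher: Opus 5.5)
Your setup, your derivation of the strict relation $Q\succ P$ from the two hypotheses, and your two cases with $y\in C$ are correct. The gap is that you stop at the two cases with $y\in Q$ and even suspect that the EFX-feasibility of $(S_1,S_2)$ from the application is needed. It is not: the statement is true as given, and both cases close with tools you already have.

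\textbf{Case $x\in C$, $y\in Q$.} Since $y\preceq x$ and $x\notin Q\setminus y$, cancelability gives $Q=(Q\setminus y)\sqcup y\preceq (Q\setminus y)\sqcup x$. Here $y$ is not removed but replaced by a weakly better good. Hence $P\preceq Q\preceq (Q\setminus y)\sqcup x$, which is exactly your ``finer relation.'' Adding $C\setminus x$ to both sides gives $S_1\setminus x\preceq T_1\setminus y$.

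\textbf{Case $x\in P$, $y\in Q$.} You are right that $P\preceq Q$ alone is not enough, because strict preference need not survive adding goods. But you already have the strict $Q\succ P$. Suppose $P\setminus x\succ Q\setminus y$. Apply \cref{cancp} to $Q\setminus y\preceq P\setminus x$ and $y\preceq x$, whose supports are disjoint. This yields $Q\preceq P$, a contradiction. Hence $P\setminus x\preceq Q\setminus y$, and adding $C$ finishes the case. So your plan completes without the speculative fallbacks.

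The paper runs the same contradiction in the other direction, which avoids this nested step. Assuming $y\preceq x$, it starts from the strict hypothesis $S_1\setminus x\succ T_1\setminus y$ and cancels the common part lying in $S_1\cap T_1$; strict preference does survive removal. In each of the four cases it then combines the result with $y\preceq x$ via \cref{cancp} to obtain, in your notation, $Q\preceq P$. This forces $v(T_1)=v(S_1)$ and $v(S_2)=v(T_2)$, contradicting the balance inequality. Because cancellation is applied to the strict inequality, which is the direction cancelability supports, no case is harder than the others there.
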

\begin{proof}
	Let
	\[
		C=T_1\setminus S_1,\qquad D=S_1\setminus T_1,\qquad I=S_1\cap T_1,\qquad K=S_2\cap T_2.
	\]
	Then
	\[
		S_1=D\sqcup I,\qquad T_1=C\sqcup I,\qquad S_2=C\sqcup K,\qquad T_2=D\sqcup K.
	\]
	Suppose, for contradiction, that $y\preceq x$. We show that $C\preceq D$. We divide to four cases based on in which subset $x$ and $y$ are.

	\textbf{Case 1:} $\mathbf{x\in D}$ and $\mathbf{y\in C}$.
    Hence, by $S_1\setminus x \succ T_1\setminus y$, 
    we get $I \sqcup (D\setminus x) \succ I \sqcup (C\setminus y)$.
    By canceling the common bundle $I$ from both sides, we get
	\[
		D\setminus x \succ C\setminus y.
	\]
	Since $y\preceq x$, by \cref{cancp},
	\[
		C=(C\setminus y)\sqcup y \preceq (D\setminus x)\sqcup x=D.
	\]

	\textbf{Case 2:} $\mathbf{x\in I}$ and $\mathbf{y\in C}$. 
    Hence, by $S_1\setminus x \succ T_1\setminus y$, 
    we get $(I \setminus x) \sqcup D \succ I \sqcup (C\setminus y)$.
    By canceling the common bundle $I\setminus x$, we get
	\[
		D \succ (C\setminus y)\sqcup x.
	\]
	Since $y\preceq x$, by \cref{cancp},
	\[
		C=(C\setminus y)\sqcup y \preceq (C\setminus y)\sqcup x \prec  D.
	\]

	\textbf{Case 3:} $\mathbf{x\in D}$ and $\mathbf{y\in I}$. 
    Hence, by $S_1\setminus x \succ T_1\setminus y$, 
    we get $I  \sqcup (D \setminus x) \succ (I\setminus y) \sqcup C$.
    By canceling the common bundle $I\setminus y$, we get
	\[
		(D\setminus x)\sqcup y \succ C.
	\]
	Since $y\preceq x$, by \cref{cancp},
	\[
		C \prec  (D\setminus x)\sqcup y \preceq  (D\setminus x)\sqcup x=D.
	\]

	\textbf{Case 4:} $\mathbf{x,y\in I}$. 
    Hence, by $S_1\setminus x \succ T_1\setminus y$, 
    we get $(I\setminus x)  \sqcup D \succ (I\setminus y) \sqcup C$.
    If $x=y$, then by canceling the common bundle $I\setminus x$, we get $D\succ C$.
	If $x\neq y$, then by canceling the common bundle $I\setminus\{x,y\}$, we get
	\[
		D\sqcup y \succ C\sqcup x.
	\]
	Since $y\preceq x$, by \cref{cancp}, we have $D\sqcup y\preceq D\sqcup x$.
	Thus,
	\[
		C\sqcup x \prec D\sqcup x.
	\]
	By cancelability, $C\prec D$.

	Therefore, in all cases, $C\preceq D$. By \cref{cancp}, this implies
	\[
		T_1=C\sqcup I \preceq  D\sqcup I=S_1
	\]
	and
	\[
		S_2=C\sqcup K \preceq  D\sqcup K=T_2.
	\]
	Combining this with the assumptions $S_1\preceq T_1$ and $T_2\preceq S_2$, we get
	\[
		v(S_1)=v(T_1)
		\qquad\text{and}\qquad
		v(S_2)=v(T_2),
	\]
	contradicting
	\[
		|v(S_1)-v(S_2)| < |v(T_1)-v(T_2)|.
	\]
	Hence, $y\not\preceq x$, and therefore $x\prec y$.
\end{proof}

	When agent $i$ strongly envies bundle $T$ w.r.t bundle $S$, we write $S \efxfenvyll{i} T$, and
	when agent $i$ does not strongly envy bundle $T$ w.r.t bundle $S$, we write $S \doesnotefxenvyll{i} T$.
\begin{lemma}\label{lem:decrease least} 
	If $k>t+1$, then
	$\ell_1^i(\zatt) \lowerval{i} \ell_1^i(\zat)$.
\end{lemma}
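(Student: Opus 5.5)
The plan is to compare the two consecutive $v_i$-balanced cuts $S=(S_1,S_2):=(\zatt,\zbtt)$ and $T=(T_1,T_2):=(\zat,\zbt)$, and then invoke \cref{lem:cancelable-swap} with $x=\ell_1^i(S_1)$ and $y=\ell_1^i(T_1)$. That lemma concludes exactly $x\lowerval{i} y$, which is the claim. So the work splits into two parts: verifying the order and balance hypotheses of \cref{lem:cancelable-swap}, and producing the inequality $S_1\setminus x \greaterval{i} T_1\setminus y$.

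\textbf{Hypotheses of \cref{lem:cancelable-swap}.} Since the loop did not stop at iteration $t$, its guard gives
\[
\bigl|v_i(\yat)-v_i(\ybt)\bigr|<\bigl|v_i(\zat)-v_i(\zbt)\bigr|.
\]
Since $(\zatt,\zbtt)=\texttt{PR}(\yat,\ybt,v_i)$, \cref{canceldif} gives $|v_i(\zatt)-v_i(\zbtt)|\le|v_i(\yat)-v_i(\ybt)|$. Chaining the two, $S$ is strictly more balanced for $i$ than $T$. Part 2 of \cref{cancelablevalprops} then yields $T_1\greatereqval{i} S_1\greatereqval{i} S_2\greatereqval{i} T_2$, using the normalization $Z^{(\cdot)}_1\greatereqval{i} Z^{(\cdot)}_2$. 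The hypothesis $k>t+1$ guarantees that $Z^{(t+1)}$ is actually produced and, via \cref{lem:transfer more}, that the PR run from $Y^{(t)}$ performs transfers in both directions.

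\textbf{The key inequality.} We need $S_1\setminus x\greaterval{i} T_1\setminus y$. Since $T$ is EFX-feasible for $i$, we have $T_1\setminus y\lowereqval{i} T_2\lowereqval{i} S_2$. It therefore suffices to show $S_1\setminus x\greaterval{i} S_2$, or at least to rule out the alternative.

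\begin{itemize}
\item If $S_1\setminus x\greaterval{i} T_1\setminus y$ holds, we are done by \cref{lem:cancelable-swap}.
\item Otherwise $S_1\setminus x\lowereqval{i} T_1\setminus y$. Assume for contradiction that $y\lowereqval{i} x$. By \cref{cancp} (or cancelability directly), combining this with $x\in S_1$ and $y\in T_1$ gives $S_1\lowereqval{i} T_1$. The aim is then to upgrade this to $S_1\sim_i T_1$ and $S_2\sim_i T_2$, which contradicts the strict balance inequality. To get there I would rerun the four-case decomposition $C,D,I,K$ from the proof of \cref{lem:cancelable-swap}, now with a weak rather than strict hypothesis, and show $D\greatereqval{i} C$.
\end{itemize}

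\textbf{Main obstacle.} The hardest step is exactly this weak, equality case. PR only guarantees $S_1\setminus x\lowereqval{i} S_2$ at termination, not the reverse, so strictness must come from the dynamics. My first attempt would look at the last transfer in the run $\texttt{PR}(\yat,\ybt,v_i)$, which exists by \cref{lem:transfer more}. Suppose it moves a good $g$ from $P$ to $Q$ with $P\setminus g\greaterval{i} Q$. Then the final larger bundle $S_1$ (either $P\setminus g$ or $Q\cup g$) inherits a strict inequality against the other bundle. Combined with $\ell_1^i(S_1)=\arg\max_{z} v_i(S_1\setminus z)$ (Lemma 3.13 of \citet{AGS25}, as used in \cref{claim: no transfer}), this should give $S_1\setminus x\greatereqval{i} S_2$. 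Even in the sub-case where equality persists, that bound feeds the modified (weak) swap argument above and again forces equal values, contradicting strict balance. Hence $x\lowerval{i} y$.
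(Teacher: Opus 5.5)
Your check of the order and balance hypotheses of \cref{lem:cancelable-swap} for $S=(Z^{(t+1)}_1,Z^{(t+1)}_2)$ and $T=(Z^{(t)}_1,Z^{(t)}_2)$ is correct. The gap is that the only nontrivial input, the strict inequality $S_1\setminus x\succ_i T_1\setminus y$, is never established, and each route you sketch for it fails.

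\textbf{The three routes.} First, your sufficient condition $S_1\setminus x\succ_i S_2$ is impossible. $S$ is an output of PR for $v_i$, hence EFX-feasible for $i$, so $S_1\setminus x\preceq_i S_2$ always holds.

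Second, the last-transfer heuristic claims $S_1\setminus x\succeq_i S_2$. Together with EFX-feasibility this would force $S_1\setminus x\sim_i S_2$, which is false in general. For additive $v_i$, PR started at $(\{5,4,1\},\{5\})$ moves the good of value $1$ and stops at $(\{5,4\},\{5,1\})$, where $v_i(S_1\setminus x)=5<6=v_i(S_2)$.

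Third, in the ``otherwise'' branch, $S_1\setminus x\preceq_i T_1\setminus y$ and $y\preceq_i x$ point in opposite directions. They do not combine to give $S_1\preceq_i T_1$, and Observation~\ref{cancp} needs disjoint sets, whereas $S_1\cap T_1\neq\emptyset$ in general. Rerunning the four cases of \cref{lem:cancelable-swap} with a weak hypothesis does not help either: for cancelable valuations only strict inequalities can be cancelled.

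\textbf{The fallback cannot work as designed.} This branch cannot be refuted from the static facts you invoke. For additive $v_i$, take $T=(\{3,2,2\},\{3,2\})$ and $S=(\{3,3\},\{2,2,2\})$. They satisfy $T_1\succeq_i S_1\succeq_i S_2\succeq_i T_2$, strict balance, EFX-feasibility of both, and $S_1\setminus x\prec_i T_1\setminus y$, yet $y\prec_i x$. Your target is also strictly stronger than the lemma: additively it reads $v_i(y)-v_i(x)>v_i(T_1)-v_i(S_1)\geq 0$.

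\textbf{The missing idea.} Strictness has to come from an intermediate state of the run $\texttt{PR}(Y^{(t)}_1,Y^{(t)}_2,v_i)$, not from its output. By \cref{lem:transfer more}, this run first moves the least valuable goods $\ell_1,\dots,\ell_r$ of $Y^{(t)}_1$ until the two bundles swap order, and later moves goods back.

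The paper applies \cref{lem:cancelable-swap} to the pair $(Y^{(t)},Z^{(t)})$ instead, which is strictly more balanced for $i$ by the loop guard. It takes $x=\ell_r:=\ell^i_r(Y^{(t)}_1)$ and $y=\ell^i_1(Z^{(t)}_1)$. The PR transfer rule at the moment $\ell_r$ is moved gives $Y^{(t)}_1\setminus \ell_r\succeq_i Y^{(t)}_1\setminus\{\ell_1,\dots,\ell_r\}\succ_i Y^{(t)}_2\succeq_i Z^{(t)}_2\succeq_i Z^{(t)}_1\setminus y$, and hence $\ell_r\prec_i y$.

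Then \cref{claim: no transfer} shows that every good moved after $\ell_r$ is $\prec_i \ell_r$. It also shows that the good entering the eventually larger bundle, when that bundle last becomes weakly larger, is never moved again. That good therefore lies in $Z^{(t+1)}_1$, which yields $\ell^i_1(Z^{(t+1)}_1)\preceq_i \ell_r\prec_i \ell^i_1(Z^{(t)}_1)$.
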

\begin{proof}
	By \cref{lem:transfer more}, we have that 
			$\texttt{PR}(\yat,Y^{(t)}_2,v_i)$ changes the bundles $(\yat,Y^{(t)}_2)$, so
	there is a strong envy at first, so $\ybt \efxfenvyll{i} \yat$. 
    For the rest of the proof, we write $\ell_z(S)$ instead of $\ell_z^i(S)$.
  
	By \cref{lem:transfer more}, during execution of $\texttt{PR}(\yat,Y^{(t)}_2,v_i)$, 
	at some point, there exist a transfer from $\ybt$ to  $\yat$.
	So by the modified PR algorithm, there is an integer $r$ such that 
	$$ \ybt \cup \{\ell_1(\yat),\ldots, \ell_r(\yat)\}  \greaterval{i} 
		\yat \setminus \{\ell_1(\yat),\ldots, \ell_r(\yat)\} \greaterval{i} \ybt \cup \{\ell_1(\yat),\ldots, \ell_{r-1}(\yat)\}. $$
	According to \cref{claim: no transfer}, after the transfer of $\ell_r(\yat)$,
	if $y$ is transferred, we have that: 
\begin{align}
	y \lowerval{i} \ell_r(\yat).   \label{< ryat}
\end{align}
	Moreover, by the condition of the while loop, 
    we have $|v_i(\yat)-v_i(\ybt)| < |v_i(\zat)-v_i(\zbt)|$, so since the valuation function is cancelable, by the second statement of 
    \cref{cancelablevalprops}, we get 
    $v_i(\zbt)= \min(v_i(\zat),v_i(\zbt)) \leq \min(v_i(\yat),v_i(\ybt))=v_i(\ybt)$. 
    Hence,
		$$\yat \setminus \{\ell_r(\yat)\}  \greatereqval{i} \yat \setminus \{\ell_1(\yat),\ldots, \ell_r(\yat)\}  \greaterval{i} \ybt  \greatereqval{i} 
		\zbt \greatereqval{i} \zat \setminus \ell_1(\zat),$$
    where the last inequality comes from the fact that $\zbt \doesnotefxenvyll{i} \zat$.
	
	Combining the facts $|v_i(\yat)-v_i(\ybt)| < |v_i(\zat)-v_i(\zbt)|$, $\ybt \greatereqval{i} \zbt$, $\yat \lowereqval{i} \zat$, $\yat \setminus \ell_r(\yat) \greaterval{i} \zat \setminus \ell_1(\zat)$, the fact that valuation functions are cancelable, and \cref{lem:cancelable-swap},
	we get that:  
\begin{align}
    \ell_r(\yat) \lowerval{i} \ell_1(\zat).  \label{ryat<azat}
\end{align}
	We have that $(\zatt,\zbtt) = \texttt{PR}(\yat,\ybt,v_i)$.  Let $(S^{h}, T^{h})$ denote the bundles before step $h$ of 
	execution $\texttt{PR}(\yat,\ybt,v_i)$, and let $\zatt = S^{h'}$ for some $h'$. 
	
	\paragraph{For some step $p$, we have
	$S^{p} \lowerval{i} T^{p}$ and $S^{p+1} \greatereqval{i} T^{p+1}$:}
	Suppose not, then since $\zatt \greatereqval{i} \zbtt$, 
	we should have $S^{h} \greatereqval{i} T^{h}$ for any step $h$, which means transfer is only from $S$ to $T$,
	which contradicts with \cref{lem:transfer more}. Hence, such a $p$ exists. 

	Thus, according to PR algorithm, $S^{p} \lowerval{i} T^{p}\setminus \ell_1(T^p) \lowereqval{i} S^{p} \cup \ell_1(T^p)$.
	By \cref{claim: no transfer}, after this point if $y$ is transferred, we have that $y \lowerval{i} \ell_1(T^p)$.
	Hence, we get $\ell_1(T^p) \in \zatt$. Combining this with \cref{< ryat} and \cref{ryat<azat}, we get:
	$$\ell_1(\zatt) \lowereqval{i}\ell_1(T^p)  \lowereqval{i} \ell_r(\yat)  \lowerval{i} \ell_1(\zat).$$
	Hence, the proof is complete.	
\end{proof}

\begin{corollary}\label{lem:polyunitbundle}
	Algorithm~\ref{alg:computing-unit-bundles} terminates in polynomial time in $m$ (the number of goods) in when the valuation functions are cancelable.
\end{corollary}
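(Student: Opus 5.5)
The approach is to bound the number of iterations of the while loop by $m+1$ and then observe that each iteration consists of at most two calls to the modified PR algorithm plus a constant number of value comparisons. Each PR call runs in polynomial time by \cref{lem: pr polynomilal}, and the total is therefore polynomial in $m=|E_{i,j}|$, counting valuation-oracle calls.

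The iteration bound will come from \cref{lem:decrease least}. Let $k$ be the number of iterations. For every $t$ with $t+1<k$, that lemma gives the strict decrease $\ell_1^i(\zatt) \lowerval{i} \ell_1^i(\zat)$. Hence the sequence of goods $\ell_1^i(Z_1^{(1)}), \ell_1^i(Z_1^{(2)}), \ldots, \ell_1^i(Z_1^{(k-1)})$ is strictly decreasing in agent $i$'s valuation of singletons. Strict decrease means these goods must be pairwise distinct: if two were the same good, transitivity of the strict chain would give $v_i(g)<v_i(g)$. All of them lie in $E_{i,j}$, so $k-1\le m$, that is, $k\le m+1$. Here I use the convention fixed just before \cref{lem:transfer more}, that $Z_1^{(t)}\greatereqval{i} Z_2^{(t)}$, so that $\ell_1^i(Z_1^{(t)})$ is well-defined whenever $Z_1^{(t)}$ is nonempty.

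I need to check that $Z_1^{(t)}$ is nonempty. If $Z_1^{(t)}=\emptyset$, then $Z_2^{(t)}$ is empty as well, because $Z_1^{(t)}\greatereqval{i} Z_2^{(t)}$ and the pair is EFX-feasible for $i$ by \cref{ob: inv unit}. In that case $E_{i,j}$ is empty, which is a trivial instance in which the loop never runs. The loop may also exit early through the break in the if-statement, but that only lowers $k$ and does not affect the bound.

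For the running time of a single iteration, I combine the iteration bound with \cref{lem: pr polynomilal}. That lemma shows the modified PR algorithm on two bundles performs at most $O(m^2)$ transfers. Each transfer requires finding $\arg\max_{x} v(Y\setminus x)$, which costs $O(m)$ oracle calls, and checking EFX-feasibility, which also costs $O(m)$ calls. So each PR call uses $O(m^3)$ oracle calls, each iteration of the loop is polynomial, and the whole algorithm uses $O(m^4)$ oracle calls.

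The main obstacle is making sure the hypothesis $k>t+1$ of \cref{lem:decrease least} is used correctly, so that the strict decrease really holds for all indices up to $k-2$ and gives a chain of length $k-1$. If the indexing turns out to be off by one, the conclusion only weakens to $k\le m+2$, which is still polynomial.
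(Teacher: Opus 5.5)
Your proposal is correct and follows the paper's proof: each iteration is polynomial by \cref{lem: pr polynomilal}, and the strict decrease of $\ell_1^i(Z_1^{(t)})$ given by \cref{lem:decrease least} forces these goods to be pairwise distinct, which bounds the number of iterations by $m+1$. One side remark is inaccurate but harmless: $Z_1^{(t)}=\emptyset$ does not force $Z_2^{(t)}=\emptyset$, since goods may have zero value for $i$. Nonemptiness of $Z_1^{(t)}$ nonetheless holds whenever the loop condition is satisfied at step $t$, because under your labeling convention that condition forces $v_i(Z_1^{(t)})>v_i(Z_2^{(t)})$.
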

\begin{proof}
	By \cref{lem: pr polynomilal}, we have that every iteration of the while loop runs in polynomial time.
	Moreover, by \cref{lem:decrease least},
	if the while loop executes $k$ times, then for every  $t<k-1$, we have
	$\ell_1(\zatt) \lowerval{i} \ell_1(\zat)$, which means $t<m$ since there are $m$ goods.
	Hence, $k<m+1$. Thus, algorithm terminates in polynomial time.
\end{proof}

\section{Missing Proof of \cref{sec:p3-maker}}\label{sec:tree_appendix}

\lemtreebreakersupport*
\begin{proof}
	Let $U=H, p=p$, and $j=h_0$ at first. 
	Then, during the execution of \textsc{Reduce Trees}$(\X,h_0)$, 
	for every iteration of the while loop, let $i_1 \to \ldots  \to i_k$ be the critical path of that iteration, and
	set $U \gets U \cup \{i_1,i_2,\ldots, i_{k-1}\}$, $p=i_{k-1}$, and $j = i_{k}$.
	First, we show that the following invariants hold after every iteration:
\begin{enumerate}
	\item $j \notin U$,
    \item $\forall r \in N\setminus U$, for $i= \arg\max_{\ell \in U \setminus p} v_r(a_{r,\ell})$,
        unit bundle $a_{i,r}$ is not allocated to agent $i$, i.e., $X_i \cap a_{i,r}=\emptyset$. 
	\item Agent $p$'s bundle is a unit bundle in $E_{p,j}$, and agent $p$ does not resent anyone.
	\item Every agent $t\in N\setminus U$ possesses exactly one unit bundle.
	\item Allocation remains \pzero.
\end{enumerate}

	By the assumptions of the lemma, all invariants hold before the first iteration of the while loop.
	Assume inductively that the invariants hold at the beginning of some iteration, 
	and we show that they also hold at the beginning of the next iteration.

	Let $U^0$, $p^0$, and $j^0$ denote the values of $U$, $p$, and $j$ at the start of the iteration, and let $U$, $p$, and $j$ denote their 
	updated values after executing of on iteration of the while loop.
	Let
\[
i_1 \to i_2 \to \cdots \to i_k
\]
	be the critical path selected in this iteration. Note that by definition of critical path $k\geq 2$.
    Moreover, since $j^0$ was the last agent in the previous critical path, we get $i_1=j^0$.
    First, we show that for every $\ell\in [k]$, we have $i_\ell \notin U^0$.
	By invariant~1 at the start of the iteration, we have $j^0 = i_1 \notin U^0$.
	We now show by induction on $\ell$ that $i_\ell \notin U^0$ for all $\ell \ge 1$.
	The base case $\ell = 1$ holds.
	Assume $i_\ell \notin U^0$.
	By invariant~2, 
    for $s= \arg\max_{t \in U \setminus p^0} v_{i_\ell}(a_{i_\ell,t})$,
    unit bundle $a_{s,i_\ell}$ is not allocated to agent $s$, 
    so $a_{i_\ell,s}$ is either allocated to $s$ or is unallocated.
    In either case, since allocation is unitary, 
    we get $X_{i_\ell} \greatereqval{i_\ell} a_{i_\ell,s}$,
    so for every $q\in U^0\setminus p^0$, 
    we get $X_{i_\ell}  \greatereqval{i_\ell} a_{i_\ell,s} \greatereqval{i_\ell} a_{i_\ell,q}$.
    As a result, agent $i_\ell$ does not resent any agent in $U^0 \setminus p^0$, so 
    $i_{\ell+1}\notin U^0\setminus p^0$.

	In addition, by invariant~3, since $i_1=j^0$, we get agent $p^0$'s bundle is a unit bundle in $E_{p^0,i_1}$, so only agent $i_1$ may resent agent $p^0$.
    By invariant~3, agent $p^0$ does not resent anyone, and
	since the critical path has length at least two, we get that $p^0$ is not in the critical path.
	It follows that $i_{\ell+1} \notin U^0$.

	\paragraph{Invariant 1.}  
    It follows from the fact that $j=i_k$, $i_k \notin U^0$, $i_k \notin \{i_1,\ldots,i_{k-1}\}$,
    and $U=U^0 \cup \{i_1,\ldots, i_{k-1}\}$.

	\paragraph{Invariant 2.}
    First, note that the bundles of agents in $U^0$ do not change, since $i_\ell\notin U^0$ for every 
    $i_\ell$ in the critical path.
    We have $U=U^0\cup \{i_1,\ldots,i_{k-1}\}.$
    Hence, for every agent $r \in N\setminus U$, we have $r \in N\setminus U^0$.
    Thus, for $i= \arg\max_{\ell \in U^0 \setminus p^0} v_r(a_{r,\ell})$, we have
    $X_i \cap a_{i,r}=\emptyset$.
    Agent $p^0$ holds a unit bundle from agent $j^0 = i_1$, and since $i_1$ is added to $U$,
    we have that $X_{p^0} \cap a_{p^0,r}=\emptyset$.
    Moreover, every agent $i_\ell \neq p = i_{k-1}$ that is added to $U$ receives a unit bundle from $i_{\ell+1}$, who is also added to $U$.
    Therefore, for $i'= \arg\max_{\ell \in U \setminus p} v_r(a_{r,\ell})$, we have
    $X_{i'} \cap a_{i',r}=\emptyset$.
	Hence, invariant~2 holds at the start of the next iteration.


	\paragraph{Invariant 3.}
	Agent $p = i_{k-1}$ receives $a_{i_{k-1},i_k}$, and $j = i_k$.
	Moreover, since the chosen path is the critical path, we have $k \ge 2$, and agent $p = i_{k-1}$ resented $i_k$ the most before the reassignment.
	Therefore, after receiving this unit bundle, agent $p$ does not resent any agent, and invariant~3 holds.

\paragraph{Invariant 4.}
	Let $t\in N\setminus U$, then $t\in N\setminus U^0$.
	If $t\ne j$, then since $i_1,\ldots, i_{k-1}\in U$, agent $t$'s bundle does not change, so by invariant $4$ in the last iteration, agent $t$
	still possesses exactly one unit bundle.
	Moreover, since agent $j$ chooses an unallocated unit bundle, she possesses exactly one unit bundle.

\paragraph{Invariant 5.}
    Note that by invariant 5 at the start of the iteration, allocation was \pzero,
    and in one iteration, we execute \textsc{BreakTree$(\X, T)$}.
    Hence, by \cref{lem:breaktree}, allocation remain an unitary orientation, and the resent tree remains a forest.

    If for two arbitrary agent $r$ and $t$, $r$ does not resent agent $t$, then $r$ does not envy $t$.
    So suppose $r \to t$. By \cref{lem:breaktree}, $t\notin \{i_1,\ldots,i_{k-2},i_k\}$.
    
    First, assume $r\ne j=i_k$. By \cref{lem:breaktree}, we get that $r$ resented agent $t$ before this iteration as well, so since allocation was \pzero, we get $X_t= a_{t,s}$.
    Moreover, because $X_{i_{k-1}}=a_{i_{k-1},i_k}$, we get $t\ne i_{k-1}$.
    Hence, agent $t$'s bundle is not changed during the iteration. Hence,
    agent $r$ does not strongly envy agent $t$.

	It remains to consider the case $r=j$. By invariant 1, $j \in N\setminus U$, so
	by invariant 2 and the fact that agent $j$ got the most valuable unallocated unit bundle, 
    $\forall q\in U\setminus p$, we have $X_j \greatereqval{j} a_{j,q}$.
	Therefore, if agent $j$ resents some agent $t$, then $t\in \{p\} \cup (N\setminus U)$, 
	so by invariants $3$ and $4$, we get $X_t = a_{t,j}$.
	By \cref{lem:minimum value}, $X_j\greatereqval{j} b_{t,j}$, so agent $j$ does not strongly envy any other agent.
	We conclude that no strong envy exists.
	Thus, the allocation remains \pzero.

\paragraph{Termination.}
	By invariant $3$, $j^0\notin U^0$, but since $j^0=i_1$, we get $j^0\in U$. Hence, $|U|$ strictly increases in every iteration, 
	and since there are $n$ agents in total, the while loop terminates after at most $n$ iterations.

	Moreover, since at every iteration $i_\ell \notin U^0$, we have $i_\ell \notin H$, and therefore the bundles of agents in $H$ do not change 
	throughout the execution.
	Moreover, every agent in $N \setminus H$ initially holds exactly one unit bundle, and whenever an agent's bundle changes, she receives a unit bundle.
	Hence, at termination, every agent in $N \setminus H$ holds exactly one unit bundle.

    Finally, suppose $h_0$ was resenting $i$ in the first critical path, so they are added to $U$ at
    the first step, so their bundle will not change after first step, so at the end
    $X_{h_0}=a_{h_0,i}$, and since $i$ did not resent $h_0$ after first step, and her bundle does not change, $i$ does not resent $h_0$ at the end. Hence, $h_0$ remains non-envied.  
\end{proof}

\localtreebreakerends*
\begin{proof}
Let $i_1 \to i_2 \to \cdots \to i_k$
be the critical path selected in the first iteration of \textsc{Reduce Trees}$(\X,r(T))$.
 Execute the first iteration of \textsc{Reduce Trees}$(\X,r(T))$.
After this iteration, agent $i_{k-1}$ receives a unit bundle from agent $i_k$ and, by the definition of the critical path, does not resent any agent.

We now define $H = \{i_1, \ldots, i_{k-1}\}$, $p = i_{k-1}$, $h_0 = i_k$, and $G = N \setminus (H \cup \{h_0\})$. At this point, all conditions of \cref{lem:treebreaker support} are satisfied. Therefore, by \cref{lem:treebreaker support}, the execution of \textsc{Reduce Trees}$(\X,T)$ terminates after at most $n$ iterations.
Moreover, throughout the execution the allocation remains \pzero.

Finally, observe that at every iteration, any agent whose bundle changes receives a unit bundle. Since the initial allocation is \ptwo, the allocation remains \ptwo\ throughout the execution.
\end{proof}

\lemtreei*
\begin{proof}
    In every iteration of the while loop in \cref{alg:treebreaker}, algorithm executes a
    \textsc{Reduce Trees}($\X$,$r(T)$), and
    by \cref{local tree breaker ends}, allocation remains \ptwo\ after every execution.
    Hence, if the algorithm terminates, the resulting allocation is \ptwo.
    Moreover, if the algorithm terminates, by the condition of the while loop, we get that there is no path with a length greater than one in the resent graph of the resulting allocation.
    Therefore, if and when algorithm terminates, the resulting allocation is \pthree.
    
	By \cref{local tree breaker ends}, every execution of the while loop in \cref{alg:treebreaker} terminates. 	
	We show that after the execution of \textsc{Reduce Trees}($\X$,$r(T)$) in an iteration of the while loop, 
	the potential function $\phi(\X)$ decreases by at least one. Since $\phi(\X) \le n^2$, the while loop terminates after at most $n^2$ executions.

	Consider a single iteration of \cref{alg:treebreaker}, and suppose that during this iteration the while loop in
	\textsc{Reduce Trees}($\X$,$r(T)$)  executes $t$ times. For each $\ell \in [t]$,
    let $\X^\ell$ denote the allocation right before the $\ell$-th execution of the
	while loop in \textsc{Reduce Trees}($\X$,$r(T)$).
    Thus, $\X^1$ is the initial allocation right before execution 
	of \textsc{Reduce Trees}($\X$,$r(T)$), and let $\X^{t+1}$ be the resulting allocation after the execution of 
	\textsc{Reduce Trees}($\X$,$r(T)$). Our goal is to show that
\[
\phi(\X^{t+1}) < \phi(\X^1).
\]

    For each $\ell \in [t]$,
    let $T_\ell$ denote
	the tree processed in the $\ell$-th execution of \textsc{BreakTree} (in $(\X^{\ell})$), and let $j_\ell$ be its root.
	According to the algorithm, for every $\ell \ge 2$, the root $j_\ell$ is the last agent
	on the critical path of $T_{\ell-1}$. Let $j_{t+1}$ be the last agent on the critical
	path of $T_t$, and denote by $T_{t+1}$ the tree rooted at $j_{t+1}$ at the end of the
	execution of \Cref{alg:treereduce}.

\paragraph{Claim.}
For every $\ell \in \{1,\ldots,t\}$, we have
\begin{align*}
\phi(\X^{\ell+1})
&\le \phi(\X^\ell) \\
&\quad + \bigl(\text{number of agents of depth at least $2$ in } T_{\ell+1}\bigr) \\
&\quad - \bigl(\text{number of agents of depth at least $2$ in } T_{\ell}\bigr).
\end{align*}

\paragraph{Proof of the Claim.}
	Fix $\ell \in \{1,\ldots,t\}$ and consider the execution of \textsc{BreakTree}$(T_\ell)$ in the while loop of 
	\textsc{Reduce Trees}($\X$,$r(T)$) that transforms $\X^\ell$ into $\X^{\ell+1}$.
    Suppose $j^\ell=i_1 \to \ldots \to i_{k_1} \to i_k=j^{\ell+1}$ is the critical path of $T_\ell$.
    By \cref{lem:breaktree}, any resent in $\X^{\ell+1}$ that does not exist in $\X^\ell$ is directed from agent $j^{\ell+1}$ towards some other agent. Moreover, agent $j^{\ell+1}$ is non-envied in $\X^{\ell+1}$. Therefore, for any agent $i$ the length of the longest path ending at agent $i$ in $G_r(\X^{\ell+1})$ is at most one greater than the length of the longest path ending at $i$ in $G_r(\X^{\ell})$, i.e., $D_i(\X^{\ell+1})\leq D_i(\X^\ell)+1$. Hence, 
    \begin{align}
    \forall i\in N:   \phi_i(\X^{\ell+1}) \leq \phi_i(\X^\ell)+1. \label{eq:foranyi}        
    \end{align}

    Moreover, if $i\notin T_{\ell+1}$, then agent $j^{\ell+1}$ is not included in the longest path
    ending at agent $i$ in $G_r(\X^{\ell+1})$, so this path existed in $G_r(\X^\ell)$ as well; thus,
    $D_i(\X^{\ell+1})\leq D_i(\X^\ell)$.
    As a result, we get:
    \begin{align}
        \forall i\notin T_{\ell+1}: \phi_i(\X^{\ell+1}) \leq \phi_i(\X^\ell). \label{eq:inotinTell}
    \end{align}

    Next, suppose $i\in T_\ell$ and $D_i(\X^\ell)\leq 1$. Then, either $i=j^\ell$ or agent $j^\ell$
    resents agent $i$ in $\X^\ell$. By \cref{lem:breaktree}, $j^\ell$ is non-resented in $\X^{\ell+1}$,
    so $\phi_{j^\ell}(\X^{\ell+1})=0 \leq \phi_{j^\ell}(\X^{\ell})$.
    So assume that agent $j^\ell$ resents agent $i$ in $\X^\ell$; therefore,
    since $\X^\ell$ is \ptwo, we get $X_i = a_{i,j^\ell}$.
    If $i$ is not in the critical path, then its bundle does not change, so the only agent who may envy $i$ in $\X^{\ell+1}$ is agent $j^\ell$. Hence, since $j^\ell$ is non-resented in $\X^{\ell+1}$, we get $D_i(\X^{\ell+1})\leq 1$, so 
    $\phi_i(\X^{\ell+1}) = 0 = \phi_i(\X^\ell)$.
    Also, if $i$ is in the critical path, by \cref{lem:breaktree}, in $\X^{\ell+1}$,
    either $i$ is non-resented or is resented by $j^{\ell+1}$, who is non-resented.
    Hence, 
    \begin{align}
    \forall i\in T_\ell \text{ with } D_i(\X^\ell)\leq 1:  \phi_i(\X^{\ell+1}) \leq \phi_i(\X^\ell).   \label{eq:notinTelfl}     
    \end{align}

    Next, suppose $i\in T_\ell$ and $D_i(\X^\ell)\geq 2$, which means $\phi_i(\X^\ell)\geq 1$.
    First, we show that there does not exist a path with length greater than one from $j^{\ell+1}$
    to $i$. Note that agent $j$ may only resent agents who hold some incident unit bundle to her.
    Hence, agent $j^{\ell+1}$ may only resent agent $i_{k-1}$ and agents who were non-resented in $\X^\ell$. Suppose there exists a path with length greater than one from $j^{\ell+1}$ to $i$, i,e.,
    $j^{\ell+1}\to r_1 \to \ldots, r_s \to  i$. Note that if $r_1\in T_\ell$, then $r_1=i_{k-1}$, who does not resent anyone in $\X^{\ell+1}$, which is contradiction; thus, $r_1 \notin T_\ell$.
    Moreover, since any resent in $G_r(\X^{\ell+1})$ that does not exist in $G_r(\X^\ell)$ is directed from agent $j^{\ell+1}$ towards some other agent, we get that $r_1 \to \ldots, r_s \to  i$ is a path in $G_r(\X^\ell)$ as well, so we get that $i$ and $r_1$ are in the same tree in $G_r(\X^\ell)$, which is a contradiction since $i\in T_\ell$ and $r_1 \notin T_\ell$.
    Therefore, there does not exist a path with length greater than one from $j^{\ell+1}$
    to $i$ in $G_r(\X^{\ell+1})$. 
    
    Moreover, since $i\in T_\ell$, the longest path to $i$ in $G_r(\X^\ell)$ starts from $j^\ell$.
    Since agent $j^\ell$ receives her most preferred unit bundle  among
    all agents whom she resents and that lie on a path of length greater than one,
    and since $D_i(\X)\geq 2$, it follows that agent $j^\ell$ is not in the longest path to $i$ in $G_r(\X^{\ell+1})$. Hence, we get that $D_i(\X^{\ell+1})\leq D_i(\X^\ell)-1$, so
    \begin{align}
        \forall i\in T_\ell \text{ with } D_i(\X^\ell) \geq 2: 
        \phi_i(\X^{\ell+1})\leq \phi_i(\X^\ell)-1.  \label{eq:inTell>1}
    \end{align}
    Moreover, for every agent $i$ with $D_i(\X^{\ell+1})\leq 1$, we have $\phi_i(\X^{\ell+1})=0$, so:
    \begin{align}
        \forall i \in T_{\ell+1} \setminus T_\ell \text{ with } D_i(\X^{\ell+1}) \leq 1: \phi_i(\X^{\ell+1}) \leq \phi_i(\X^\ell). \label{eq:tell+1<1}
    \end{align}

    As a result, by \cref{eq:inotinTell}, \cref{eq:notinTelfl}, \cref{eq:tell+1<1}, \cref{eq:inTell>1},
    and \cref{eq:foranyi}, we get:
    \begin{align*}
        \phi(\X^{\ell+1}) &= \sum_{i\in [n]} \phi_i(\X^{\ell+1}) \\
        &= \sum_{i\notin T_\ell \cup T_{\ell+1}} \phi_i(\X^{\ell+1})
        +  \sum_{i\in T_\ell  \text{ with } D_i(\X^{\ell})\leq 1} \phi_i(\X^{\ell+1})
        + \sum_{i\in T_{\ell+1} \setminus T_\ell \text{ with } D_i(\X^{\ell+1})\leq 1} \phi_i(\X^{\ell+1})\\
        &+ \sum_{i\in T_\ell \text{ with } D_i(\X^{\ell})\geq 2} \phi_i(\X^{\ell+1})
        + \sum_{i\in T_{\ell+1} \setminus T_\ell \text{ with } D_i(\X^{\ell+1})\geq 2} \phi_i(\X^{\ell+1})\\
        &\leq \sum_{i\in [n]} \phi_i(\X^{\ell}) -  
        \sum_{i\in T_\ell \text{ with } D_i(\X^{\ell})\geq 2} 1
        + \sum_{i\in T_{\ell+1} \setminus T_\ell \text{ with } D_i(\X^{\ell+1})\geq 2} 1 \\
        & \leq  \phi(\X^\ell) \\
        &\quad + \bigl(\text{number of agents of depth at least $2$ in } T_{\ell+1}\bigr) \\
        &\quad - \bigl(\text{number of agents of depth at least $2$ in } T_{\ell}\bigr).
    \end{align*}
    Thus, the proof of the claim is complete. \medskip

    Since the inner while loop terminates after $t$ executions, the final tree $T_{t+1}$
    contains no agent of depth at least $2$. Moreover, since $t \ge 1$, the initial tree
    $T_1$ contains at least one agent of depth at least $2$. Therefore, summing the
    inequalities in the claim over $\ell = 1,\ldots,t$ yields
\begin{align*}
    \phi(\X^{t+1})
    &\le \phi(\X^1)
    + \bigl(\text{agents of depth $\ge 2$ in } T_{t+1}\bigr)
    - \bigl(\text{agents of depth $\ge 2$ in } T_1\bigr) \\
    &\le \phi(\X^1) - 1 < \phi(\X^1).
\end{align*}
    Hence, the proof is complete.
\end{proof}

\lemtreeiii*
\begin{proof}
	Let $H'= H \cup \{t\in G: X_t \subseteq E_{t,h} \text{ for some } h\in H\}$.
	By the first condition of the lemma, and definition of $H'$,
    we get that for every $j \in N\setminus H'$, for $i= \arg\max_{\ell \in H' \setminus p} v_j(a_{j,\ell})$, unit bundle $a_{i,j}$ is not allocated to agent $i$.
    
    
	Hence, all conditions of \cref{lem:treebreaker support} are satisfied for $(H',G'=N\setminus (H'\cup h_0), h_0)$.
	It follows that throughout the execution of \textsc{Reduce Trees}$(\X,h_0)$, allocation remains \pzero. Moreover, $h_0$ remains non-resented.
	Thus, to show that the allocation is \pone, it suffices to show that, at termination, every resent path has length at most one.

	During the execution of \textsc{Reduce Trees}$(\X,h_0)$
	let $j_\ell$ denote the $j$ of the $\ell$'th iteration of the while loop of \textsc{Reduce Trees}$(\X,h_0)$, 
	and denote the resent tree rooted at $j_\ell$ after $\ell$'th iteration by $T_j$. Also, let $j_0=h_0$.
    Note that for every $\ell$, $j_{\ell+1}$ is the last agent on the critical path of $T_{j_\ell}$.
    Hence, by \cref{lem:breaktree}, right after the execution of $\ell$'th iteration of the while loop, 
	agent $j_\ell$ is non-resented. Also, $j_0=h_0$ is non-resented at first by lemma's assumption.
    
	We prove by induction on the iterations that the following invariant holds:
	right after the execution of $\ell$'th iteration of the while loop, 
	every resent path in the resent graph that does not include $j_\ell$ has length exactly one. 
	The base case holds since every resent path not including $h_0$ has length one.

	Now consider an iteration in which the tree $T_{j_\ell}$ is processed.	
	Note that $j_{\ell+1}$ is the last agent on the critical path of $T_{j_\ell}$.
	By the induction hypothesis, the height of $T_{j_\ell}$ is exactly two, so the critical path has
	the form $j_\ell \to i_1 \to j_{\ell+1}$ for some agent $i_1$.

	After this iteration, by \cref{lem:breaktree}, agents $j_{\ell}$ and $j_{\ell+1}$ are non-resented.
	Moreover, if any new resent relation $r\to q$ is created, it must satisfy $r=j_{\ell+1}$.

	Consider any path of the form $j_\ell \to i'_1 \to i'_2$ at the start of the iteration.
	By the definition of the critical path, we have $a_{j_\ell,i_1} \greatereqval{j_\ell} a_{j_\ell,i'_1}$.
	Therefore, after receiving $a_{j_\ell, i_1}$, agent $j_\ell$ cannot be part of any resent path of
	length greater than one.
	Thus, after the iteration, any resent path of length greater than one must start at $j_\ell$,
	and the induction invariant is preserved. Consequently, when \textsc{Reduce Trees}$(\X,h_0)$ terminates, every resent tree has height at most one,
	and hence the resulting allocation is \pone.

    By \cref{lem:treebreaker support}, bundles of agents in $H'$ do not change.
	By the first condition of the lemma,
    we get that for every $j \in N \setminus H$, for $i= \arg\max_{\ell \in H \setminus p} v_j(a_{j,\ell})$, unit bundle $a_{i,j}$ is not allocated to agent $i$,
    so since the final allocation is \pone, we get that for every $h\in H\setminus p$, 
    we have $X_j \greatereqval{j} a_{j,i}  \greatereqval{j} a_{j,h}$.
    So no agent $j \in N \setminus H$, resents an agent in $H\setminus p$.

	Suppose agent $h\in H\setminus p$ was non-resented before the execution of \textsc{Reduce Trees}$(\X,h_0)$.
	By the above argument, 
	no agent $r\notin H$ resents $h$ at the end.
	Moreover, since $h$ was non-resented, any $h' \in H$ did not resent it.
	Hence, since the bundles in $H$ do not change, agent $h'$ still does not resent $h$, so $h$ remains non-resented.

	Finally, before executing \textsc{Reduce Trees}$(\X,h_0)$, for any agent $h\in H$, and any agent $t\notin \{h,h_0\}$ possessing a unit bundle in $E_{h,t}$, we had either $t\in H$, or 
    $X_t \subseteq E_{t,h}$, so in any case $t\in H'$.
	Hence, using the fact that the bundles of the agents in $H'$ do not change, we get after the execution of \textsc{Reduce Trees}$(\X,h_0)$,
	unit bundle $a_{t,h}$ is still allocated to $t$. 
	Consequently, For every $h\in H$, and for every $t\notin \{h,h_0\}$, no allocated unit bundle in $E_{u,t}$ gets unallocated.
	 This completes the proof.
\end{proof}

\section{Missing Proof of \cref{sec:dump}}\label{sec:dump_appendix}

\lemdumprule*

\begin{proof}
    Let $\X$ and $\xp$ denote the allocation before and after the dumping phase. We prove each statement separately:

	(1) If $p$ was resented before, we have $X'_p = X_p$ by property \ref{pd1}. If $p$ was not resented before, properties \ref{pd3}, \ref{pd4} guarantee that 
    $X'_p \cap E_{p,q} \greatereqval{p} X_p \cap E_{p,q}$ for every $q$.
    This is because $a_{p,q}\greatereqval{p} b_{q,p}$ and \cref{cancp}. 
    Therefore, $X'_p \greatereqval{p} X_p$. 

	(2) Let $p$ be any agent, and let $q$ be any  agent that is resented in $\X$. 
    Since by \cref{pbd}, $\X$ is \pone, by \cref{obs:basicEFX}, $p$ does not strongly envy $q$ in $\X$. Also, $X'_p \greatereqval{p} X_p$ and $X'_q=X_q=a_{q,p}$. Thus $p$ does not strongly envy $q$ in $X'$. 
	
    (3) By \cref{pbd}, we have $X_q \greatereqval{q} b_{q,p} \cup A_q(\X)$. We show $b_{q,p} \cup A_q(\X) \greatereqval{q} X'_p \cap E_q$. 
    
    By rule \ref{pd5}, we have that \dumpp{b_{q,p}}{p} and hence $X'_p \cap E_{p,q} \lowereqval{q} b_{q,p}$. 
    
    Let $j \notin \{q,p\}$ be an arbitrary agent. By property \ref{pd2}, at most one unit bundle in $E_{q,j}$ is allocated to $p$ in $X'$. By \cref{pbd}, the allocation was an orientation before the dumping phase, and hence $X_p\cap E_{q,j} =\emptyset$.

    If $a_{q,j}$ was unallocated in $\X$, 
    then since $a_{q,j}$ is the most valuable unit bundle in $E_{q,j}$ with respect to agent $q$, we get $A_q(\X) \cap E_{q,j} = a_{q,j} \greatereqval{q} X'_p \cap E_{q,j}$.
    
    So, suppose $a_{q,j}$ is not unallocated in $\X$. 
    Since $\X$ is \pone, $X_q=a_{q,p}$, so $a_{q,j}$ is not allocated to $q$.
    In addition, since $\X$ is unitary, $X_j\cap E_{q,j} \in \{a_{j,q},b_{q,j}\}$. 
    Because the allocation is an orientation, if $X_j\cap E_{q,j} = b_{q,j}$, then $a_{q,j}$ would be unallocated. Hence, $X_j\cap E_{q,j} = a_{j,q}$, so 
    $A_q(\X) \cap E_{q,j} = b_{j,q}$.
    Since $\X$ is \pone, we get that $j$ is non-resented, 
    so by rule \ref{pd3}, we get that $a_{j,q}\subseteq X'_j$. Thus, if some unit bundle in $E_{j,q}$
    is dumped to $X'_p$, it is $b_{j,q}$.
    Therefore, $A_q(\X) \cap E_{q,j} = b_{j,q} \greatereqval{q} X'_p \cap E_{q,j}$.

    As a result, in every case, we have 
    $A_q(\X) \cap E_{q,j} \greatereqval{q} X'_p \cap E_{q,j}$. 
    So, by \cref{cancp}, we get
	$$X'_p \cap E_q = (X'_p \cap E_{p,q}) \sumsq \bigsumsq_{j\notin\{p,q\}} (X'_p \cap E_{q,j}) 
	 \lowereqval{q} b_{q,p} \sumsq \bigsumsq_{j\notin\{p,q\}} (A_q(\X) \cap E_{q,j}) \lowereqval{q} b_{q,p} \sumsq A_q(\X) \lowereqval{q} X_q.$$

	(4) First, we show that $s$ does not strongly envy anyone. By part (2), $s$ does not strongly envy any agent who was resented before. Let $p$ be a non-resented agent in $\X$.  
	Then by \cref{lem:minimum value}, we get $X_s\cap E_{s,t}= X_s \greatereqval{s} a_{s,p}$ as $s$ does not resent $p$ in $\X$.
	Moreover, if for some $q\ne p$, a unit bundle in $E_{s,q}$ is allocated to $p$, then by rule \ref{pd7}, $q$ is resented. Note that such an agent $q$ cannot be resented by $s$, since otherwise, we would have $X'_q = a_{q, s}$ and $b_{q, s} \subseteq X'_s$, meaning that $E_{s, q}$ was completely allocated to agents $s$ and $q$, and thus, $p$ cannot receive a unit bundle from $E_{s,q}$. Therefore, for any such $q$ that a unit bundle from $E_{s, q}$ has been allocated to $p$, by rule \ref{pd6}, we get that $a_{s,q}$ is allocated to $s$, and $b_{s,q}$ is allocated to $p$. 
    Let $R'(\X)$ be the set of resented agents who are not resented by $s$ in $\X$.
    Since $t$ is non-resented, we have $t \notin R'(\X)$, so by \cref{cancp}, we get:
    $$X'_s \greatereqval{s} (X_s\cap E_{s,t}) \sumsq \bigsumsq_{q \in R'(\X)} a_{s,q} 
	\greatereqval{s} a_{s,p} \sumsq \bigsumsq_{q \in R'(\X)} b_{s,q} \greatereqval{s} X'_p,$$
    i.e., $s$ does not envy $p$. 

	Second, we show that nobody envies $s$. 
    Let $q$ be any agent resented in $\X$. If $s$ resents $q$ in $\X$, $q$ does not envy $s$ in $\xp$ by part (3). Thus, assume $q$ is resented by $i\neq s$.
    We have $X_q \greatereqval{q} A_q(\X)$, so 
    \begin{align*}
    X'_s \cap E_q &= \bigsumsq_{j\in R(\X)\cup s} (X'_s \cap E_{q,j}) \sumsq 
    \bigsumsq_{j\notin R(\X)\cup \{s,q\}} (X'_s \cap E_{q,j})\\ 
	 &\lowereqval{q} \bigsumsq_{j\in R(\X)\cup s} a_{q,j} \sumsq 
    \bigsumsq_{j\notin R(\X)\cup \{s,q\}} b_{j,q}  \\
    &\lowereqval{q}  \bigsumsq_{j\in R(\X)\cup s} (A_q(\X) \cap E_{q,j}) \sumsq 
    \bigsumsq_{j\notin R(\X)\cup \{s,q\}} (A_q(\X) \cap E_{q,j}) \\
    &= A_q(\X) \lowereqval{q} X_q = X'_q,
    \end{align*}
	where the first inequality is due to the fact during the dumping phase, by rule \ref{pd2} agent $s$ may get at most one unallocated unit bundle from every $E_{q,j}$, and
    for every non-resented agent $j$ who does not resent $q$, agent $j$ 
    receives $a_{j,q}$ by rule \ref{pd6}.
    Moreover, second inequality comes from the fact for $j\in R(\X)\cup s$, 
    we have that $E_{q,j}$ is unallocated in $\X$. 
    Hence, agent $s$ is not envied by any resented agent $q$ in $\xp$.

	For a non-resented agent $p\notin \{s,t\}$, by rule \ref{pd8}, agent $p$ gets $a_{p,s}$.
	Also, if for some $q\notin \{p,s\}$, a unit bundle in $E_{p,q}$ is allocated to $s$, then by rule \ref{pd7}, $q$ is resented. With an identical argument to our previous case, $q$ cannot be resented by $p$, and thus by rule \ref{pd6},
	$a_{p,q}$ is allocated to $p$, and $b_{p,q}$ is allocated to $s$.	Hence,
    $$X'_p \greatereqval{p} a_{p,s} \sumsq \bigsumsq_{q \in R(\X)} a_{p,q} 
	\greatereqval{p} b_{p,s} \sumsq \bigsumsq_{q \in R(\X)} b_{p,q} \greatereqval{p} X'_s,$$
    which means that $p$ does not envy $s$. In addition, agent $t$ does not envy agent $s$ since agent $s$ is not receiving any new edge from $E_t$ by rule \ref{pd8} and agent $t$ did not envy agent $s$ before the dumping.
\end{proof}

\lemcyclesupport*

\begin{proof}
    We first observe that $i$ and $j$ receive more valuable bundles by the application of (U1). Indeed, $i$ gets a more valuable bundle since she resented $j$, i.e., $a_{i,j}\greaterval{i} X_i$.
Also, $j$ receives a more valuable bundle since 
	 $$X_j \lowerval{j} A_j(\X) \sumsq b_{ji} \lowereqval{j} A_j(\X) \sumsq b_{i,j}.$$

Furthermore, $j$ does not resent $i$ since $a_{i,j} \lowereqval{j} a_{j,i} \lowerval{j} A_j(\X) \sumsq b_{j,i}  \lowereqval{j} A_j(\X) \sumsq b_{i,j}$.
	Moreover, since $a_{i,j} \subseteq E_{i,j}$, no other agent resents $i$. Additionally, agent $j$ becomes non-resented by the following observation: $A_j(\X)$ was unallocated, and the allocation was unitary. Therefore, $(i,j)$ becomes a support pair.

	Also, since the allocation is height-one before the update, and $j$ was resented, she did not resent anyone. Since she is getting a more valuable bundle, she does not resent anyone. Moreover, agent $i$ is getting better-off and will not resent any agent that she did not previously resent. Therefore, no new resent relation is created, and hence the resent graph remains a forest and height-one. Since only the bundles of agents $i$ and $j$ change in the update and they both become non-resented, it is clear that the allocation remains $\efx$.
    
	Moreover, the allocation remains unitary, since if a unit bundle gets unallocated, it was previously allocated to $i$. Since $i$ was not resented (recall that $i$ resented $j$, and the allocation was height-one), no one envies any newly unallocated unit bundle.
	Thus,  the allocation remains height-one.
	Moreover, the process of executing (U1) as long as possible terminates because each update decreases the number of  resent edges. 
\end{proof}

\lemtwosupport*
\begin{proof} 
	Let $(s_1,t_1)$ and $(s_2,t_2)$ be two disjoint support pairs. 
	First, as long as there exists $i \to j$ such that $b_{j,i} \sumsq A_j(\X) \greaterval{j} X_j = a_{j,i}$, we update the allocation with (U1).
	By \cref{lem:cycle_support}, the allocation remains \pone, $(i,j)$ is a new support pair, and no new resent is created.
	Note that since our previous support pairs were disjoint, we may assume $i\notin \{s_1,t_1\}$. 
	Moreover, since $i$ resents $j$, we have $j \notin \{s_1,t_1\}$. Hence, $(s_1,t_1)$ remains a support pair.
	Therefore, there still exist two disjoint support pairs after the application of (U1), namely $(s_1,t_1)$ and $(i,j)$. 
    The process terminates by \cref{lem:cycle_support}. Let $(s_1, t_1)$ and $(s_2, t_2)$ be two disjoint support pairs when (U1) is no longer applicable. We now enter a dumping phase. We use $\X$ and $\xp$ to denote the allocations before and after dumping. Since (U1) is no longer applicable, the dumping properties (\cref{pbd}) hold.

	\paragraph{Dumping Phase.}
	We allocate the remaining unallocated unit bundles by the following rules. Here, a \emph{root} means a non-resented agent, and for a root $p$, $R_p$ denotes the set of agents resented by $p$. The six cases cover all possible relationships between two agents.

	\begin{enumerate}[label=(\roman*), leftmargin=2.2em]
		
		\item \textbf{Between $s_i$ and other roots for $i\in \{1,2\}$.}
		Let \dumpp{a_{s_1,s_2}}{s_1} and \dumpp{b_{s_1,s_2}}{s_2}. 
		Since $(s_i,t_i)$ is a support pair, we have $X_{s_i} \subseteq E_{s_i, t_i}$.
		Assign \dumpp{E_{s_i, t_i} \setminus X_{s_i}}{t_i}.	For every other root $p\notin\{t_i,s_1,s_2\}$, 
		let \dumpp{b_{p,s_i}}{s_i} and \dumpu{a_{p,s_i}}{p}.
       
		\item \textbf{Root to its own children.}
		For every root $p$ and every $u\in R_p$, assign \dumpp{b_{p,u}}{p}. Note that we already have $X_u = a_{u,p}$.
		
		\item \textbf{Root to others' children.}
		Let $p$ and $r$ be two distinct roots, let $u\in R_r$, and let $i \in \{1, 2\}$ be such that $p\notin\{s_i,t_i\}$. Note that such an agent $p$ exists since there are two disjoint support pairs.
		Assign \dumpu{a_{p,u}}{p} and \dumpp{b_{p,u}}{s_i}.
		
		\item \textbf{Root to root.}
		Let $p$ and $r$ be two distinct roots such that $p,r \notin \{s_1,s_2\}$. 
		If at least one of the unit bundles is already allocated to one of the roots, allocate the other unit bundle to the other root. 		
		Otherwise, allocate one unit bundle in $E_{p,r}$ to $p$ and the other to $r$ arbitrarily.
		
		\item \textbf{Between two children of the same root.}
		Let $p$ be a root and let $u,v\in R_p$ be two distinct agents resented by $p$.
		Let $i \in \{1,2\}$ be such that $p\neq s_i$. Then, 
		assign \dumpp{a_{u,v}}{p} and \dumpp{b_{u,v}}{s_i}. 
		
		\item \textbf{Between two children of different roots.}
		Let $p$ and $r$ be two distinct roots, and let $u\in R_p$ and $v\in R_r$.
		Assign \dumpp{a_{u,v}}{s_1} and \dumpp{b_{u,v}}{s_2}.		
	\end{enumerate}	
	This completes the allocation. For a root $p$, there may be several assignments of the form \dumpu{a_{pz}}{p}. These assignments are for distinct $z$, and hence their order of execution is irrelevant.

	\paragraph{Correctness of Dumping Phase.}
    We prove that the final allocation $\xp$ is \efx. Note first that the rules above follow our general dumping rules properties, so by \cref{lem:dumprule}:
\begin{enumerate}
	\item For every agent $p$, we have $X'_p \greatereqval{i} X_p$.
	\item Nobody strongly envies any agent who was resented at the start of the dumping phase.
	\item For every resent $p\to q$ before the dumping phase, agent $q$ does not envy agent $p$ after the dumping phase.
	\item No one envies agents $s_1$ and $s_2$, and agents $s_1$ and $s_2$ do not strongly envy anyone.  
\end{enumerate}

	For any other non-resented agent $p\notin \{s_1,s_2\}$, and for any other agent $q$ not resented by $p$, 
	we have that $X'_p\cap E_q$ is a unit bundle, which was either possessed by agent $p$ or was unallocated before the dumping phase.
	In the first case, $q$ does not envy this unit bundle because she did not envy $p$ before the dumping, and in the second case,
	by the fact that allocation was \pone, and therefore, unitary, we have that $q$ did not envy any unallocated unit bundle. Hence, we conclude that the complete allocation is $\efx$.
\end{proof}

\section{\Stage\ C}\label{sec:c}

In this section, we show that given an allocation in \stage\ C, we can construct a complete \efx\ allocation.
Recall that an allocation is in \stage\ C if allocation is \pthree, and there exist four agents $i, j, k, \ell$ such that $k\to i$, $\ell \to j$, and
\[
D_j(\X) \sumsq a_{j,k} \greaterval{j} X_j.
\]
We first prove the structural lemma \cref{lem:weak_support}, and then use it to prove the main lemma of \stage\ C, \cref{lem: stageCprop}.

	
		
		
		

\begin{lemma}\label{lem:weak_support}
	
	Suppose $\X$ is a \pone\ allocation, and $(s,t)$ is a support pair such that:
	\begin{itemize}[leftmargin=2.2em]
		\item All agents except $t$ hold exactly one unit bundle.
		
		\item Agent $t$ does not resent anyone.
		
		\item Among all unit bundles, $a_{t,s}$ is agent $t$'s most valuable unit bundle. 
		
		\item There exist two agents $p^\star, q^\star \notin\{s,t\}$ 
                such that $p^\star$ \weakresents\ $q^\star$.
	\end{itemize}
	Then, we can compute a complete \efx\ allocation in polynomial time.
\end{lemma}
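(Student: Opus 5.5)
The plan is to reduce to \cref{lem:strong_support} or \cref{lem:two_support} by swapping the roles of $s$ and $t$, and by using the weak resent $p^\star \to q^\star$ to manufacture a second support pair. \cref{lem:strong_support} needs a support pair whose first agent resents nobody. Here $t$ resents nobody, but the pair is $(s,t)$ with $X_s\subseteq E_{s,t}$, not the reverse. So the first step is to reassign $E_{s,t}$ so that $t$ holds $a_{t,s}$ and $s$ holds $b_{t,s}$. I would drop $t$'s extra bundles if necessary; they can be redistributed later in the dumping phase. Since $a_{t,s}$ is $t$'s most valuable unit bundle, $t$ resents nobody after this change. Agent $s$ receives $b_{t,s}$, which she weakly prefers by \cref{lem:minimum value}-type reasoning, because $(a_{t,s},b_{t,s})$ is \efx-feasible for $s$. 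We also have $s$'s bundle $\lowereqval{s}$-dominated only by $a_{s,t}$. The point to check is whether $s$ now resents anyone; if so, that resent must point to a non-resented agent holding $a_{\cdot,s}$. I would then run \textsc{Reduce Trees} from $s$ and use \cref{lem:treebreaker2} with $H=\{t\}$, $p=t$, $h_0=s$ to restore \pone. This keeps $t$'s bundle and keeps $s$ non-resented, so $(t,s)$ becomes a support pair with $t$ resenting nobody and all agents except (possibly) $s$ holding one unit bundle. \cref{lem:strong_support} then finishes.

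There is a catch. \cref{lem:strong_support} applied to $(t,s)$ runs (U2), which may raise $s$'s bundle, and every step of that process needs its hypotheses. If the swap above instead breaks the required form, I would use the second route through $p^\star,q^\star$. Since $p^\star$ \weakresents\ $q^\star$, I give $p^\star$ the bundle $a_{p^\star,q^\star}$ and give $q^\star$ the bundle $b_{p^\star,q^\star}$ (or \textsc{Choose}$(q^\star)$). Then $p^\star$ resents nobody and $q^\star$ does not resent $p^\star$, because $a_{q^\star,p^\star}$ was $q^\star$'s favourite. This makes $(p^\star,q^\star)$ a support pair disjoint from $\{s,t\}$. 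After a \textsc{Reduce Trees}$(\X,q^\star)$ call, justified by \cref{lem:treebreaker2} with $H=\{s,t,p^\star\}$, $p=p^\star$, $h_0=q^\star$, we have two disjoint support pairs, and \cref{lem:two_support} yields a complete \efx\ allocation. This mirrors the proof of \cref{lem:t update}, item 5.

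In detail, I would verify the four hypotheses of \cref{lem:treebreaker2}. The key hypothesis is the first one: for $r\notin H$, the $\arg\max$ agent in $\{s,t\}$ must not hold $a_{\cdot,r}$. This holds because $s$ holds only a bundle of $E_{s,t}$ and $t$ holds only $a_{t,s}$-type bundles incident to $s$. The non-resentedness of $s$ and $t$ survives because $t$'s bundle is her top unit bundle and $s$ is not touched. The main obstacle is confirming that $p^\star$ and $q^\star$ end up non-resented and that the pair $(p^\star,q^\star)$ is not destroyed by \textsc{Reduce Trees}. In particular, $q^\star$ may be moved by the critical-path procedure. I would try first to choose $h_0=q^\star$, since by the lemma $h_0$ stays non-resented and agents in $H$ keep their bundles. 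That gives exactly the two disjoint support pairs $(s,t)$ and $(p^\star,q^\star)$ needed by \cref{lem:two_support}. All steps are polynomial by the cited lemmas.
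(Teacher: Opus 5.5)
\textbf{Both of your routes rest on a wrong prediction of who resents whom after the swap, so neither produces the support pair you need.}

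\emph{Second route.} The hypothesis that $p^\star$ \weakresents\ $q^\star$ includes that $q^\star$ resents nobody while holding $a_{q^\star,p^\star}$. So $a_{q^\star,p^\star}$ is her most valuable unit bundle. Once you give $a_{p^\star,q^\star}$ to $p^\star$, agent $q^\star$ loses it. Whatever single unit bundle she receives instead ($b_{p^\star,q^\star}$ or \textsc{Choose}$(q^\star)$) is in general strictly worse. Hence $a_{q^\star,p^\star}\greaterval{q^\star}X'_{q^\star}$, i.e.\ $q^\star$ \emph{resents} $p^\star$, so $p^\star$ is resented and $(p^\star,q^\star)$ is not a support pair. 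This is exactly item~5 of \cref{lem:t update}: after such a swap, $q^\star$ \weakresents\ $p^\star$, so the configuration is merely reversed. If a weak resent could always be turned into a disjoint support pair this way, \cref{lem:two_support} would settle almost every case of the paper.

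\emph{First route.} The hypotheses allow $X_s=a_{s,t}$. Then giving $s$ only $b_{t,s}$ lowers her value: \cref{lem:minimum value} gives $X_s\greatereqval{s}b_{t,s}$, not the reverse. Since $v_s(a_{s,t})\ge v_s(b_{t,s})$, typically strictly, $s$ now resents $t$, so $t$ is resented and $(t,s)$ is not a support pair. Moreover, the drop in $v_s(X_s)$ can make some unallocated unit bundle better for $s$ than her bundle, breaking unitarity. Then the allocation need not be \pzero, which \cref{lem:treebreaker2} requires. Even when it is, that lemma protects nothing about $t$ here, since $H\setminus p=\emptyset$. And for a height-one tree rooted at $s$, \textsc{Reduce Trees} does nothing, so the edge $s\to t$ survives.

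The missing idea is how to cope with the fact that, unlike in \cref{lem:strong_support}, $s$ may resent agents. Then $s$ cannot absorb the bundles incident to her own children, such as $b_{t,v}$ and $b_{u,v}$ for $u,v\in R_s$. The paper does not build a second support pair from $p^\star,q^\star$. Instead it uses $p^\star$ as an alternative dumping target for exactly these bundles (dumping rules (iv) and (v)). To make this safe it first exhausts four update rules:
(C0) applies (U1) to a resent edge $p\to q$ with $p\neq s$ when possible; this does give a disjoint support pair, so \cref{lem:two_support} finishes.
(C1) handles a child $q$ of $s$ that wants (U1) by moving $s$ up to $a_{s,q}$ and $t$ to $a_{t,s}$, with $q$ becoming the new $t$. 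Here $s$ improves, which is why this reassignment, unlike yours, is harmless.
(C2) is (U2); once it no longer applies, $X_t\greatereqval{t}a_{t,p}\cup(B_t\setminus E_{t,p})$ for every weak resenter $p$, so $t$ does not envy $p$ after dumping.
(C3) ensures that children of $s$ do not envy such $p$.
Termination uses $|R_s|$, which strictly decreases under (C1) and (C3), does not increase under (C2), and at most $n$ (C2) steps occur in a row. Your proposal has no substitute for this mechanism, and the two reductions it relies on do not exist in general.
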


\begin{proof}
    Our algorithm has two main phases: the Update Phase and the Dumping Phase. 
    After the termination of the update phase, if we have not already found a complete $\efx$ allocation, we find one in the dumping phase.

	\noindent\textbf{Update Phase.} In this phase, we apply four update rules, denoted by (C0)–(C3). We proceed iteratively: as long as at least one rule is applicable, we update the allocation using any applicable rule.
    We show that after executing (C0), we get a complete \efx\ allocation.
    Moreover, we show that after executing any of (C1), (C2), and (C3), allocation still satisfies Lemma's conditions, and agent $s$ remains fixed.
    In addition, we show that after executing (C1) or (C3), $|R_s(\X)|$ (the number of agents whom are resented by $s$) strictly decreases.
    Moreover, we show that executing (C2), $|R_s(\X)|$ does not increase, and
    there may be at most $n$ iterations of update rule (C2) in a row, so the process of 
    executing update rules as long as any of them is applicable ends after at most $n^2$ iterations.
    We exit the update phase when no update rule is applicable.


	\noindent\textbf{(C0)} If there exists a resent edge $p\to q$ with $p\neq s$ such that
	$X_q \lowerval{q} A_q(\X) \sumsq b_{q,p}$:
	Then, we compute a complete $\efx$ allocation by constructing two disjoint support pairs.
	First, note that since agent $t$ does not resent anyone, we get that $p\ne t$.
	We apply update rule (U1) (\cref{up rule:u1}) to $(p,q)$. Then, by \cref{lem:cycle_support}, $(p,q)$ would be a support pair.
	Since $(s,t)$ and $(p,q)$ are disjoint, 
	we obtain two disjoint support pairs, and \cref{lem:two_support} yields a complete \efx\ allocation.
	Hence, for the remainder of the proof, we assume that for every resent edge $p\to q$ with $p\neq s$,
	\[
	X_q \greatereqval{q} A_q(\X) \sumsq b_{q,p}.
	\]
		
	\noindent\textbf{(C1)} If there exists a resent edge $s\to q$ such that $\X_q \lowerval{q} A_q(\X) \sumsq b_{q,s}$:
	Let $\X$ denote the allocation before the update and let $\X'$ denote the allocation after the following update:
	\begin{align*}
		X'_q &\takes A_q(\X) \cup b_{s,q}\\
		X'_s &\takes a_{s,q}\\
		X'_t &\takes a_{t,s}
	\end{align*}	
	Since $a_{t,s}$ is $t$'s most valuable unit bundle, $t$ does not resent anyone in $\xp$. 
    Moreover, agents $s$ and $q$ are getting a more valuable bundle, so they do not resent anyone in $\xp$ who they did not resent in $\X$. Other agents' bundles does not change, too,
    so no new resent is created.
	Note that the allocation remains a unitary orientation and since no new resent relation is created, the allocation remains \pone.
	By \cref{lem:cycle_support}, $(s,q)$ becomes a support pair.
	Also, agent $q$ does not resent anyone and is the only agent who has more than one unit bundle. Also, note that $a_{q, s}$ is agent $q$'s most valuable unit bundle since she was not resenting any agent in $\X$. 
    Moreover, $p^\star$ still \weakresents\ $q^\star$ since $p^\star,q^\star \notin \{t,s,q\}$.
    Hence, by setting agent $q$ to be the new $t$, we get our desired properties.
	We relabel agent $q$ as the new $t$ for the remainder of the update phase. 
    Since agent $s$ does not resent $q$ anymore, and no new resent is created,
    $s$ resents fewer agents, so $|R_s(\xp)|<|R_s(\X)|$.

	\noindent\textbf{(C2)} If  for some $p,q\notin \{s,t\}$, agent $p$ \weakresents\ some agent $q$, and
    \begin{align*}
	a_{t,p} \sumsq (B_t\setminus E_{t,p}) \greaterval{t} X_t,     
    \end{align*}
	update the allocation with update rule (U2), defined in \cref{up rule:u2}.
	Then, by \cref{lem:t update},
	allocation remains \pone, every agent except $t$ only has exactly one unit bundle,
	and no new resent from agents $s$ and $t$ to other agents is created, so agent $t$ still does not resent anyone, and $|R_s|$ is not increased.
	Also, agent $q$ \weakresents\ agent $p$, and we have $p,q\notin \{s,t\}$.
	Moreover, $a_{t,s}$ is agent $t$'s most valuable unit bundle, since the agents $s$ and $t$ are still the same agents.
	Hence, all the properties still hold. Also, $v_t(X_t)$ is strictly increases while $|R_s|$ does not increase.
    Since $v_t(a_{t,p} \sumsq B_t\setminus E_{p,t})$ is independent of the allocation and is a function of agents $t$ and $p$, and since agent $t$ does not change with update (C2), 
    update rule (C2) can only be repeated at most $n$ times in a row. 
	Hence, at the end (when (C2) is not applicable) for every $p,q \notin \{s,t\}$ such that $p$ \weakresents\ $q$, we have $X_t \greatereqval{t} a_{t,p} \sumsq (B_t\setminus E_{t,p})$.
	
	\noindent\textbf{(C3)} If there exist resent edges $s\to r$ and some agent $p\ne s$ who 
    \weakresents\ at least one agent such that
	\[
	a_{r,p} \sumsq (B_r \setminus b_{p,r}) \greatereqval{r} X_r.
	\]
	Then, let $q$ be an agent such that $v_p(a_{p,q})$ is maximized,
    and $p$ \weakresents\ $q$.
	Let $\X$ denote the allocation before the update and let 
    $\xp$ denote the allocation right before executing \textsc{Reduce Trees}.
	We update as follows(Use $\xp$ to denote the allocation right before executing \textsc{Reduce Trees}.):
	\begin{align*}
		&\text{Agents } t,s,r,p, \text{and } q \text{ drop their previous bundles}\\
		&X'_t \takes a_{t,s}\\
		&X'_s \takes a_{s,r}\\
		&X'_r \takes a_{r,p} \sumsq (B_r \setminus b_{p,r})\\
		&X'_p \takes a_{p,q}\\
		&X'_q \takes \textsc{choose}(q)\\
		&Run \textsc{ Reduce Trees}(\xp,q)
	\end{align*}
	
	Let $H = \{s,r,p\}$ and $h_0=q$. Note that $\xp$ is \pzero\ and satisfies the conditions of \cref{lem:treebreaker2}. Therefore, by \cref{lem:treebreaker2}, we get a \pone\ allocation,
	$(s,r)$ is a support pair, $r$ is the unique agent that may hold more than one unit bundle,  $r$ does not resent anyone, and $h_0=q$ is non-resented in $\xp$.
    Also, since agent $q$ was \weakresented\ by $p$,
    agent $q$'s most valuable unit bundle is $a_{q,p}$. Therefore, agent $q$ \weakresents\ agent $p$ after the update, so there are agents
    $p^\star, q^\star \notin\{s,t\}$ such that $p^\star$ \weakresents\ $q^\star$.
    Hence, by relabeling $r$ as the new $t$ the properties of lemma are maintained, and
    since $v(X_s)$ has increased, no new resent is created from agent $s$, and $s$ no longer 
    resents $r$, so $|R_s|$ has strictly decreased.

    Next, when we get an allocation $\X$ such that no update rule is applicable, we start the dumping phase.


	\paragraph{Dumping Phase.}
	We allocate every remaining unallocated unit bundle by the following rules.
	Here, a \emph{root} means a non-resented agent, and for a root $p$, the set $R_p$ denotes the agents resented by $p$ in $\X$. Every unallocated good gets allocated based on one of the following rules:

	\begin{enumerate}[label=(\roman*), leftmargin=2.2em]
		\item \textbf{Between $s$ and other roots.}
		Since $(s,t)$ is a support pair, we have $X_{s} \subseteq E_{s,t}$.
		Allocate the other unit bundle between $s$ and $t$ to $t$. For every other root $p\notin\{t,s\}$, 
		let \dumpu{a_{p,s}}{p} and \dumpp{b_{p,s}}{s}.
		
		\item \textbf{Root to root (except $s$).}
		Let $p$ and $r$ be two distinct roots such that $p,r \ne s$. 
		If at least one of the unit bundles is already allocated to one of the roots, allocate the other unit bundle to the other root. 		
		Otherwise, allocate one unit bundle in $E_{p,r}$ to $p$ and the other to $r$. 
		
		\item \textbf{Root to its own children.}
		For every root $p$ and every $u\in R_p$, assign \dumpp{b_{p,u}}{p}. Also, we have already have $X_u = a_{u,p}$.		
		
		\item \textbf{Root to other children.}
		Let $p$ and $r$ be two distinct roots and let $v\in R_r$.
		Assign \dumpu{a_{p,v}}{p}.
		If $p\notin \{s,t\}$, assign \dumpp{b_{p,v}}{s}.
		If $p=s$, assign \dumpp{b_{p,v}}{r}.
		If $p=t$ and $r=s$, assign \dumpp{b_{p,v}}{p^\star}.
		If $p=t$ and $r\ne s$, \dumpp{b_{p,v}}{r}.
		
		\item \textbf{Between two children of the same tree.}
		For every root $p$ and every two distinct children $u,v\in R_p$, assign \dumpp{a_{u,v}}{p}.
		If $p\neq s$, assign \dumpp{b_{u,v}}{s}.
		If $p=s$, 
		assign \dumpp{b_{u,v}}{p^\star}.

		\item \textbf{Between two children of two different trees.}
		Let $p$ and $r$ be two distinct roots with $r\neq s$, let $u\in R_p$, and let $v\in R_r$.
		Assign \dumpp{a_{u,v}}{s}.
		If $p\neq s$, assign \dumpp{b_{u,v}}{R(u,v)}.
		If $p=s$, assign \dumpp{b_{u,v}}{r}.
	\end{enumerate}
	
	This completes the allocation.

	\paragraph{Correctness of Dumping Phase.}
	Denote the allocation right before dumping phase  by $\X$, and denote the allocation at the end by $\xp$. 
	Next, we prove that the final allocation, $\xp$, is \efx.
	
	First, note that $\X$ is \pone, and by the fact that none of (C0) and (C1) were applicable, we get that $\X$ follows global properties before the dumping phase (\cref{pbd}), and our dumping rules follow our general dumping structure. Thus, by \cref{lem:dumprule},
	\begin{enumerate}
		\item For every agent $p$, we have $X'_p \greatereqval{p} X_p$.
		\item Nobody strongly envies any agent who was resented at the start of the dumping phase.
		\item For every resent $p\to q$ before the dumping phase, agent $q$ does not envy agent $p$ after the dumping phase.
		\item No one envies agent $s$, and agent $s$ does not strongly envy anyone.  
	\end{enumerate}

    To prove that the final allocation is $\efx$, we show that any agent $p \ne s$ who was non-resented in $\X$, cannot be strongly envied in $\xp$. To do so, we prove the following:

    \begin{itemize}
        \item \textbf{No one envies $t$, and $t$ envies no one.}
        For every agent $i$, $X_t\cap E_i$ is a unit bundle in $E_{t,u}$; hence,
        $X'_t \lowereqval{i} a_{i,t} \lowereqval{i} X_i \lowereqval{i} X'_i$, where the second
        inequality comes from the fact that $t$ was non-resented in $\X$.

	
	
	We next show that agent $t$ does not envy any agent $p$ who was non-resented in $\X$. Fix a root $p\neq s$. 
    If $p$ did not \weakresent\ any agent in $\X$, then we get that $p\ne p^\star$ and $p$ did not 
    resent any agent in $\X$, so $X'_p\cap E_t$ is a unit bundle in $E_{t,p}$, so       
    $X'_p \lowereqval{t} a_{t,p} \lowereqval{t} X_t \lowereqval{t} X'_t$, where the second
    inequality comes from the fact that $p$ was non-resented in $\X$.
    
    Next, assume that $p$ \weakresented\ at least one agent in $\X$.
    Note that for any non-resented agent $r \notin \{t, p\}$, $X_p' \cap E_{r, t} = \emptyset$. Therefore, we have 
    $$X'_p \cap E_t = (X'_p \cap E_{t, p}) \bigcup_{q \in R(\X)} (X_p' \cap E_{t, q}) \lowereqval{t} a_{t,p} \sumsq (B_t \setminus E_{t,p}) \lowereqval{t} X_t \lowereqval{t} X'_t,$$
    where the first inequality comes from the fact that for a resented agent $q$, we have
	$X'_p \cap E_{t,q} \in \{b_{t,q}, \emptyset\}$ and $b_{t, q} \lowereqval{t} b_{q, t}$, and the second inequality comes from the fact that $p$ \weakresented\ at least one agent in $\X$, and (C2) was not applicable in $\X$.

    \item \textbf{Two roots different from $s$ and $t$ do not envy each other.}
	Fix two distinct roots $i$ and $j$ with $i,j\notin \{s,t\}$.
	The only unit bundle incident to $i$ that can be allocated to $j$ during the dumping phase
	is the unit bundle from $E_{i,j}$ assigned by Rule~(ii),
    so $X'_j \lowereqval{i} X'_j \cap E_{i,j} \lowereqval{i} a_{i,j} \lowereqval{i} X_i \lowereqval{i} X'_i$, where the third inequality comes from the fact that $j$ was non-resented in $\X$.

    \item \textbf{No resented agent envies an agent $p$ who was non-resented in $\X$.}
    Note that because no one envies $s$ and $t$ in $\xp$, we can assume $p\notin \{s,t\}$.
	For a resented agent $v$, by \cref{lem:dumprule}, we only need to consider the case where $r\to v$ for $r\ne p$.
    If $p$ does not \weakresent\ anyone, then since $p\ne s$, we get 
    $X'_p\cap E_v =a_{p,v}\lowereqval{v} X_v = X'_v$,
    where the inequality comes from the fact $p$ was non-resented in $\X$.
    Next, assume $p$ \weakresents\ at least one agent in $\X$.

	First, assume $r\ne s$, then if there exists a resent edge $p\to u$ such that $R(u,v)=p$, then 
	$$X'_p \cap E_v \subseteq a_{p,v} \sumsq   \bigsumsq_{q \in R_p} (X'_p \cap E_{v,q})
	\lowereqval{v}  a_{v,p} \sumsq D_v(\X)  \lowereqval{v} X_v = X'_v,$$
	where the last inequality comes from the fact  $R(u,v)=p$.
	Also, if for every $p\to u$, we have $R(u,v)=r$, then 
	$$X'_p \cap E_v =a_{p,v} \lowereqval{v} X'_v.$$

    Next, assume that $r=s$.
    $$X'_p \cap E_v \subseteq a_{p,v} \sumsq   \bigsumsq_{q \ne p} (X'_p \cap E_{v,q})
	\lowereqval{v}  a_{v,p} \sumsq (B_v\setminus E_{v,p})  \lowereqval{v} X_v = X'_v,$$
    where the last inequality comes from the fact (C3) was not applicable for $\X$
    and the fact that $p$ \weakresents\ at least one agent in $\X$.	
    \end{itemize}
	Hence, the final allocation is a complete \efx\ allocation.
\end{proof}

\begin{lemma}\label{lem: stageCprop}
	Suppose $\X$ is a \pone\ allocation, and suppose there exist four agents $i, j, k, \ell$ such that  $\ell \to j$  and $k$ \weakresents\ $i$, 
	only agent $\ell$ may have more than one unit bundle, and
	$$D_j(\X) \sumsq a_{j,k} \greaterval{j} X_j.$$
	Then, we can construct a complete \efx\ allocation in polynomial time.
\end{lemma}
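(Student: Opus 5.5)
The plan is to mimic the update in \cref{lem:R}, but on one side only, so as to create a single support pair, and then to invoke \cref{lem:weak_support}. The hypothesis $D_j(\X)\sumsq a_{j,k}\greaterval{j} X_j$ lets $j$ abandon $a_{j,\ell}$ and move to a better bundle that is incident to $k$. Meanwhile $k$, who \weakresents\ $i$, can take $a_{k,i}$, which frees the rest of the structure.

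\textbf{The update.} Starting from $\X$, I would perform the following reassignment:
\begin{align*}
X'_j &\takes a_{j,k}\sumsq D_j(\X),\\
X'_\ell &\takes a_{\ell,j},\\
X'_k &\takes a_{k,i},\\
X'_i &\takes \textsc{Choose}(i).
\end{align*}
After this, I would run \textsc{Reduce Trees}$(\xp,i)$. Well-definedness follows as in \cref{lem:R}:
\begin{itemize}
\item $k$ drops its bundle, so $a_{j,k}$ is free.
\item $D_j(\X)$ is unallocated because $j$ is resented.
\item $j$ drops $a_{j,\ell}$, so $a_{\ell,j}$ is free.
\end{itemize}
Since $\ell$ may hold several unit bundles, I would need $b_{\ell,j}$ not to be duplicated. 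To secure this, I would drop everything $\ell$ holds except $a_{\ell,j}$, which is legitimate because $a_{\ell,j}$ is $\ell$'s most resented bundle.

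\textbf{Checking the support-pair structure.} The first check is that $j$ and $\ell$ are non-resented. For $\ell$, we have $X'_\ell=a_{\ell,j}\lowereqval{j} a_{j,\ell}=X_j\lowerval{j} X'_j$, and no other agent values $a_{\ell,j}$. For $j$, the same argument as in \cref{lem:R} applies: $\ell$ receives $a_{\ell,j}$, which $\ell$ weakly prefers to $a_{j,\ell}$, while every other agent $q$ sees only $b_{q,j}$ and does not resent it, by \cref{lem:minimum value}. Hence $(\ell,j)$ is a support pair, with $s=\ell$ holding only a unit bundle of $E_{\ell,j}$, and $t=j$ is the only agent with several unit bundles.

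The remaining hypotheses of \cref{lem:weak_support} would be checked as follows:
\begin{itemize}
\item $t=j$ does not resent anyone, since $a_{j,k}\sumsq D_j(\X)\greaterval{j} X_j = a_{j,\ell}$, which dominates all of $j$'s $a$-bundles.
\item For the condition on $t$'s most valuable unit bundle, $a_{t,s}=a_{j,\ell}$ was $j$'s favourite unit bundle, because $\ell$ resented $j$ in a \pone\ allocation and so $j$ resented no one.
\item The agent $i$, after \textsc{Choose}, \weakresents\ $k$, as in (U2) and (C3).
\end{itemize}
To restore \pone\ and keep $j$, $\ell$ and $k$ fixed, I would apply \cref{lem:treebreaker2} with $H=\{\ell,j,k\}$, $p=k$ and $h_0=i$. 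Then $X_k=a_{k,i}$ and $k$ resents no one, because $k$ \weakresents\ $i$ in $\X$.

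\textbf{Main obstacle.} The hardest part is verifying condition 1 of \cref{lem:treebreaker2}, together with ensuring that $j$'s large bundle $D_j(\X)$ does not trigger new resent. If \cref{lem:weak_support} does not fit exactly, I would argue that $j$ is not resented because $D_j$ consists of $b$-bundles toward resented agents, which those agents do not envy by \cref{lem:minimum value}. In that case I would fall back on \cref{lem:strong_support} with $s=\ell$ and $t=j$, which requires only that $s$ resents no one. That holds here, since $\ell$ gets $a_{\ell,j}$, her maximum $a$-bundle.
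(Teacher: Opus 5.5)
Your proposal is essentially the paper's proof. It uses the same reassignment ($X'_\ell=a_{\ell,j}$, $X'_j=a_{j,k}\sumsq D_j(\X)$, $X'_k=a_{k,i}$, $X'_i=\textsc{Choose}(i)$), the same call to \textsc{Reduce Trees} analyzed via \cref{lem:treebreaker2} with $H=\{\ell,j,k\}$, $p=k$, $h_0=i$, and the same conclusion via \cref{lem:weak_support} with $(s,t)=(\ell,j)$, using that $i$ \weakresents\ $k$ afterwards. One slip: nothing in the hypothesis makes $j$ the agent $\ell$ most-resents, so $\ell$ may still resent others after the update, and your fallback to \cref{lem:strong_support} (which needs $s$ to resent no one) is therefore not justified. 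This does not hurt the main route, because dropping $\ell$'s other bundles only needs $a_{\ell,j}\greaterval{\ell}X_\ell$, \cref{lem:weak_support} does not require $s$ to resent no one, and condition~1 of \cref{lem:treebreaker2} holds trivially since neither $\ell$ nor $j$ holds an $a$-bundle toward an agent outside $H$.
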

\begin{proof}
	We construct a support pair $(\ell,j)$ satisfying the assumptions of \Cref{lem:weak_support}.
	Choose $i$ to be the agent whom $k$ \weakresents\ and $a_{k,i}$ is a most preferred unit bundle for agent $k$, which is possible since the inequality in the lemma's condition is independent of the choice of agent $i$. We update the allocation as follows.
	Let $\X$ denote the allocation before the update and let $\X'$ denote the allocation after the update.
	We update the allocation as follows: (Use $\xp$ to denote the allocation right before executing \textsc{Reduce Trees}.)
	\begin{align*}
		&\text{Agents } \ell,j,k, \text{and } i \text{ drop their previous bundles}\\
		&X'_\ell \takes a_{\ell,j}\\
		&X'_j \takes a_{j,k} \sumsq D_j(\X) \\
		&X'_k \takes a_{k,i}\\
		&X'_i \takes \textsc{choose}(i)\\
		&Run \textsc{ Reduce Trees}(\xp,i).
	\end{align*}
	Let $H =\{\ell,j,k\}$ and let $h_0=i$. Note that the allocation before executing the \textsc{ Reduce Trees}(i) satisfies the conditions of \cref{lem:treebreaker2}.
	Then, by \cref{lem:treebreaker2}, we get a \pone\ allocation,
	$(\ell,j)$ is a support pair, $j$ is the unique agent that may hold more than one unit bundle,
    and $h_0=i$ is non-resented in $\xp$.
	Moreover, agent $j$ does not resent anyone because $v_j(X_j)$ increases and $j$ did not resent anyone before the update.
	Also, since agent $i$ had $a_{i,k}$, and she did not resent anyone in $\X$, agent $i$'s most valuable unit bundle is $a_{i,k}$. Therefore, agent $i$ \weakresents\ agent $k$ after the update, so there is some $p^* \to q^*$ for $p^*\ne \ell$.
	Therefore, the assumptions of \Cref{lem:weak_support} hold, and the claim follows.
\end{proof}

\begin{corollary}\label{lem: stageC}
	If an allocation is in \stage\ C, we can construct a complete \efx\ allocation in polynomial time.
\end{corollary}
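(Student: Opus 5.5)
The plan is to reduce \stage\ C directly to \cref{lem: stageCprop}. Let $\X$ be an allocation in \stage\ C. It is \pthree, so it is \pone\ and every agent holds exactly one unit bundle; in particular the hypothesis ``only agent $\ell$ may have more than one unit bundle'' holds trivially. By the definition of \stage\ C, there are agents $i,j,k,\ell$ with $k\to i$, $\ell\to j$ and $D_j(\X)\sumsq a_{j,k}\greaterval{j} X_j$. What remains is to produce an agent $i'$ such that $k$ \weakresents\ $i'$, since the inequality in \cref{lem: stageCprop} involves only $j$ and $k$.

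To find $i'$, I would use the remark after the definition of weak resent: in a \pone\ allocation, if $k$ resents $i$, then $k$ \weakresents\ the agent $i'$ she most-resents. I would verify each condition of weak resent in turn.
\begin{itemize}
\item $k$ is non-resented. Since $k\to i$ and every resent path has length at most one, $k$ is a root.
\item $i'$ does not resent anyone. Agent $i'$ is resented by $k$, so for the same height-one reason she resents no one.
\item $X_{i'}=a_{i',k}$. This follows from the fourth property of \pzero\ allocations.
\item $a_{k,i'}\greatereqval{k} a_{k,q}$ for all $q$. Agents $q$ not resented by $k$ satisfy $a_{k,q}\lowereqval{k} X_k\lowerval{k} a_{k,i'}$, and resented ones are dominated by the choice of $i'$ as most-resented.
\end{itemize}

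I also need $i'$ to be distinct from $j$ and $\ell$ so that the four agents are distinct, as the update in \cref{lem: stageCprop} requires. Since $k\ne\ell$, \cref{obs:resented-bundle} gives $i'\ne j$. Since $\ell$ is a root and $i'$ is resented, $i'\ne\ell$. Also $k\ne j$ because $k$ is a root.

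With these checks, \cref{lem: stageCprop} applies to $(i',j,k,\ell)$ and yields a complete \efx\ allocation in polynomial time. The only real obstacle is the distinctness and weak-resent bookkeeping above; everything substantive is already contained in \cref{lem: stageCprop}.
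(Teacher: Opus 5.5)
Your proposal is correct and matches the paper's argument: the paper proves this corollary by invoking \cref{lem: stageCprop} directly. That lemma's proof makes the same re-choice of $i$ as the agent $k$ most-resents, justified by the remark that in a \pone\ allocation an agent who resents someone \weakresents\ her most-resented agent, so your explicit weak-resent and distinctness checks only fill in bookkeeping the paper leaves implicit. One small correction: \ptwo\ guarantees at most one, not exactly one, unit bundle per agent, but that is all the hypothesis needs.
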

\begin{proof}
	Every allocation in \stage\ C, satisfies the conditions of \cref{lem: stageCprop}, so we get a complete \efx\ allocation.
\end{proof}

\section{\Stage\ D}\label{sec:d}
    In this section, we show that given an allocation in \stage\ D, we can construct a complete \efx\ allocation in polynomial time.	
    Throughout this section, we have an allocation $\X$ in \stage\ D.
    We suppose the root of the single non-trivial resent tree is $p$
    who resents agents $q_1,\dots,q_k$, sorted in increasing order by $v_p(a_{p,q_i})$.
    We let the rest of the agents be $r_1,\dots,r_\ell$, sorted in increasing order by $v_{q_k}(a_{q_k,r_i})$.
    Also, $r_i$'s are non-resented and do not resent anyone.
    We also define $q=q_k, r_{\ell+1} =p$, and $d=k-1$. Note that since allocation $\X$ is \pthree, and since $p$ resents $q_i$, we
    have $X_{q_i} = a_{q_i,p}$.

\begin{lemma}\label{lem:i>1}
	Given an allocation in \stage\ D, either we can construct a complete \efx\ allocation in polynomial time, or:
	\begin{itemize}
		\item For all $1<i\leq k$, we have
		$ a_{q_i,p} \greatereqval{q_i}  b_{p,q_i} \sumsq A_{q_i}(\X) .$
	\end{itemize}
\end{lemma}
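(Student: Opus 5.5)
The plan is to argue by contradiction. Suppose that for some $1<i\le k$ we have
\[
b_{p,q_i}\sumsq A_{q_i}(\X)\greaterval{q_i} a_{q_i,p}=X_{q_i}.
\]
We then want to build a \pone\ allocation with a support pair satisfying the hypotheses of \cref{lem:strong_support}, or of \cref{lem:added_strong_support}. First note what \stage\ D gives us. Since $p$ is the root of the single nontrivial tree, no other agent resents anyone. Also, \stage\ B does not apply, so for every $j$ we have $X_{q_j}\sumsq A_{q_j}(\X)\lowerval{q_j} a_{q_j,p}$, which is harmless here. The natural move is update rule (U1) applied to the resent $p\to q_i$. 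By \cref{lem:cycle_support}, this makes $(p,q_i)$ a support pair: $X'_p=a_{p,q_i}$, $X'_{q_i}=b_{p,q_i}\sumsq A_{q_i}(\X)$, $q_i$ resents nobody, and no new resent is created.

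The reason for insisting on $i>1$ is to control $p$ afterwards. After the update, $p$ holds $a_{p,q_i}$, so $p$ may still resent exactly those $q_j$ with $a_{p,q_j}\greaterval{p}a_{p,q_i}$, that is, some $j>i$. Hence $p$ may resent agents, and we cannot use $p$ as the agent $s$ of \cref{lem:strong_support}. I would therefore orient the pair the other way. The support-pair definition requires $X_s\subseteq E_{s,t}$, and this holds for $s=p$, $t=q_i$, but only $q_i$ holds several bundles. So the roles must be $s=p$ and $t=q_i$, which is the problematic choice.

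I would repair this in one of two ways. The first is to break the remaining star of $p$ using \textsc{Reduce Trees}, or to reassign $p$ to its most-resented $q_k$ while giving $q_i$ its bundle, so that $p$ resents nobody. The cleaner alternative is to use $q_1$, whose existence is exactly what $i>1$ guarantees. Since $q_1\ne q_i$ is still resented by $p$, we have $X_{q_1}=a_{q_1,p}$, and $q_1$ is a leaf. I would look for a second support pair among $q_1$ and the non-resented $r$'s, so as to apply \cref{lem:two_support}. Concretely, I would move $p$ to $a_{p,q_k}$, its most-resented bundle, which makes $p$ resent nobody. Then $q_k$ drops to \textsc{Choose}$(q_k)$, while $q_i$ keeps $b_{p,q_i}\sumsq A_{q_i}(\X)$, taking care of the disjointness issue with $E_{p,q_i}$. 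Finally I would run \textsc{Reduce Trees} from $q_k$, using \cref{lem:treebreaker2} with $H=\{p,q_i\}$ and $h_0=q_k$.

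The main obstacle is checking the hypotheses of \cref{lem:strong_support} or \cref{lem:added_strong_support} after this restructuring. The key requirement is that $s$ resents nobody, together with every agent except $t$ holding one unit bundle. One also needs that $A_{q_i}(\X)$ stays compatible with the bundles that \textsc{Reduce Trees} hands out. I expect to handle the latter through condition (1) of \cref{lem:treebreaker2} and \cref{lem:treebreaker support}. If that fails, I would fall back to the ordering $v_p(a_{p,q_1})\le\cdots$ and argue directly with $q_1$ as a new $t$.
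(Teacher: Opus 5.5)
Your first step, applying (U1) to $p\to q_i$, is correct. However, you then misread the resent graph it produces, and that misreading is what sends you on the detour. (The hypothesis here has $b_{p,q_i}$ rather than $b_{q_i,p}$. The proof of \cref{lem:cycle_support} only uses $a_{q_i,p}\lowerval{q_i} b_{p,q_i}\sumsq A_{q_i}(\X)$, so it still applies.)

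After (U1) we have $X'_p=a_{p,q_i}$. Since the $q$'s are sorted by $v_p(a_{p,\cdot})$, we get $a_{p,q_1}\lowereqval{p}a_{p,q_i}$. So $p$ no longer resents $q_1$, nor any $q_j$ with $j\le i$, and your assertion that ``$q_1\ne q_i$ is still resented by $p$'' is false. As $X_{q_1}=a_{q_1,p}\subseteq E_{q_1,p}$, only $p$ could resent $q_1$, so $q_1$ is now non-resented. Also $p$ is non-resented by \cref{lem:cycle_support}. Hence $(q_1,p)$ is a support pair with $s=q_1$ and $t=p$. This pair, not $(p,q_i)$, is the one to use, and its existence is exactly what $i>1$ buys. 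Since $q_1$ was a leaf and keeps her bundle, she resents nobody.

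Now invoke \cref{lem:added_strong_support} rather than \cref{lem:strong_support}. It has no single-bundle requirement, so $q_i$ holding several bundles is harmless. Its remaining hypothesis concerns only resent edges $p'\to q'$ with $p'\ne t$. In \stage\ D the only agent with outgoing resent edges is $p=t$, and (U1) creates no new resent, so that hypothesis is vacuous. This is the whole proof; no \textsc{Reduce Trees} and no second support pair are needed.

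The restructuring you propose instead (for $i<k$) is left unverified, since you yourself call it ``the main obstacle,'' and as sketched it is unsound. Moving $p$ to $a_{p,q_k}$ blocks $q_k$'s favorite bundle $a_{q_k,p}$. So after \textsc{Choose}$(q_k)$, agent $q_k$ will in general resent $p$, which destroys any support pair containing $p$.

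Worse, $q_k$ may resent $q_i$. Since $E_{q_i,q_k}$ is unallocated in $\X$, we have $a_{q_i,q_k}\subseteq A_{q_i}(\X)\subseteq X'_{q_i}$, which blocks $a_{q_k,q_i}$. If $q_k$ then resents $q_i$, the fourth condition of a \pzero\ allocation fails because $X'_{q_i}\ne a_{q_i,q_k}$. In that case \cref{lem:treebreaker2} cannot be invoked at all.

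Finally, your use of \stage\ B is inverted. Its failure constrains the resenter, giving $X_p\sumsq A_p(\X)\lowerval{p}a_{p,q_j}$. The inequality $X_{q_j}\sumsq A_{q_j}(\X)\lowerval{q_j}a_{q_j,p}$ that you wrote is impossible by monotonicity, since $X_{q_j}=a_{q_j,p}$.
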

\begin{proof}
	We suppose there exists some $i > 1$ such that
	$$ a_{q_i,p} \lowerval{q_i} b_{p,q_i} \sumsq A_{q_i}(\X),$$
	and we construct a complete \efx\ allocation.
	First, update the allocation with Update 1, as follows:
	
	\begin{align*}
		X'_{q_i} &\takes b_{p,q_i} \sumsq A_{q_i}(\X) \\
		X'_p &\takes a_{p,q_i}.
	\end{align*}
	Agent $p$ no longer resents agents $q_1,\ldots,q_i$. 
	By \cref{lem:cycle_support}, allocation remains \pone, and no new resent is created.
	Hence, agents $q_1$ and $p$ are non-resented, and $q_1$ does not resent anyone.
	Also, since $i\ne 1$, we have $X_{q_1}=a_{q_1,p}$, so $(q_1,p)$ is a support pair. 
	Thus, using the fact that there is no resent edge $v\to u$ with $v\ne p$, by \cref{lem:added_strong_support},
	we get a complete \efx\ allocation.
\end{proof}

\begin{definition}[Update D] \label{updateD}
	Given a \pthree\ allocation $\X$, with root $p$ resenting $q_1,\ldots,q_k$, and with non-resented agents $r_1\ldots,r_\ell$, where
	$q_i$'s are sorted by an increasing order of $v_p(a_{p,q_i})$, define the following update as Update D:
	\begin{align*}
		&\text{Agents } p \text{ and } q_k \text{ drop their previous bundles}\\
		&X_p \takes a_{p,q_k}\\
		&X_{q_k} \takes \textsc{Choose}(q_k).
	\end{align*}
\end{definition}
\begin{observation}\label{lem:updateD}
	After execution of Update D:
	\begin{enumerate}
		\item The resulting allocation $\hat{\X}$ is \pthree.
		\item For any agent $t\ne q_k$, agent $t$ does not resent anyone.
		\item $q_k$ \weakresents\ agent $p$.
		\item Agents $q_1,\ldots,q_k$ are non-resented.
	\end{enumerate}
\end{observation}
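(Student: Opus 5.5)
We check the four claims by comparing the resent graph before and after Update D, using the structure of \stage\ D. The \pthree\ allocation $\X$ has a single non-trivial resent tree, rooted at $p$ with children $q_1,\dots,q_k$. Every $q_i$ holds $X_{q_i}=a_{q_i,p}$, and every $r_j$ is non-resented and resents nobody. Update D changes only $X_p$ and $X_{q_k}$. We first verify that the result $\hat\X$ is a unitary orientation, and then examine resent edges one at a time.

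\textbf{Unitarity.} Agent $p$ now holds $a_{p,q_k}$, which is a legal piece of $E_{p,q_k}$. Agent $q_k$ receives \textsc{Choose}$(q_k)$, which is either $a_{q_k,j}$ or $b_{j,q_k}$, by the same argument as item 1 of \cref{lem:breaktree}. The orientation property is clear. The value condition: the only unit bundles that can become unallocated are those of $E_{p,q_k}$ and the bundle $p$ previously held.
\begin{itemize}
\item Agent $p$ now holds her most-valued $a$-bundle, because $q_k$ maximizes $v_p(a_{p,q_i})$ and non-resented $j$ satisfy $X_p\succeq_p a_{p,j}$. By \cref{lem:minimum value} she therefore values no unallocated unit bundle more.
\item Any other agent $i$ did not resent $p$ (because $p$ was a root), so $X_i\succeq_i a_{i,p}$. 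Hence $i$ does not value $p$'s released bundle above her own.
\item Agent $q_k$ values nothing unallocated above \textsc{Choose}$(q_k)$ by construction.
\end{itemize}

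\textbf{Resent edges, claims 2 and 4.} Agent $p$ resents nobody, by the maximality just noted. The agents $q_1,\dots,q_{k-1}$ keep $a_{q_i,p}$ and did not resent anyone before. The $r_j$ are unchanged and did not resent anyone. So every agent other than $q_k$ resents nobody, which is claim 2, and every resent edge leaves $q_k$. Agent $q_k$ took an unallocated bundle, which by unitarity nobody preferred to her own, so $q_k$ is non-resented. The agents $q_1,\dots,q_{k-1}$ are resented only by $p$, and $p$ now resents nobody. This gives claim 4.

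\textbf{Claim 3.} Agent $p$ is non-resented and resents nobody, and $X_p=a_{p,q_k}$. Before the update $q_k$ resented nobody (height one) and held $a_{q_k,p}$, so $a_{q_k,p}$ was her top unit bundle. It follows that $a_{q_k,p}\succeq_{q_k} a_{q_k,j}$ for all $j$, so $q_k$ \weakresents\ $p$ by definition. Note that weak resent does not require actual resent.

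\textbf{Claim 1.} Every agent other than $q_k$ holds exactly one unit bundle, and $q_k$ also receives exactly one, so the allocation is \ptwo. All resent edges start at $q_k$, who is non-resented, so every path has length at most one and the resent graph is a forest of stars. The \pzero\ condition needs $X_j=a_{j,q_k}$ whenever $q_k\to j$. If $q_k$ resents $j$, then $a_{q_k,j}$ is allocated to someone other than $q_k$, and since $j$ holds a single unit bundle, that bundle must be $a_{j,q_k}$. This is the same argument as in the remark after the definition of resent. Hence $\hat\X$ is \pthree.

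\textbf{Main obstacle.} The delicate step is the unitarity check for bundles freed by $p$ and by $q_k$. The key points are that $p$ was non-resented and that $q_k$'s old bundle $a_{q_k,p}$ is now partially reused, with $E_{p,q_k}\setminus a_{p,q_k}=b_{p,q_k}$ becoming available. This is handled exactly as in \cref{lem:breaktree}.
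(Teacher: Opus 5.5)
Your route is the paper's: check that $\hat{\X}$ is a unitary orientation, observe that only $q_k$ can have outgoing resent edges (so the graph is a star and has height one), get the basic property from the remark after the definition of resent, and deduce the weak most-resent from $a_{q_k,p}$ having been $q_k$'s favorite unit bundle. Your unitarity check is more explicit than the paper's, but two steps need repair. In Claim 3 you assert that $p$ is non-resented in $\hat{\X}$. This is false in general. Agent $q_k$ has given up $a_{q_k,p}$, her top unit bundle, and typically resents $p$ afterwards; \cref{lem:D1} shows that, unless one is already done, $q_k\to p$ in $\hat{\X}$, and the duality discussion relies on this. The clause is also unnecessary. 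In ``$q_k$ \weakresents\ $p$'' the agent that must be non-resented is $q_k$, and $p$ only has to resent nobody. Relatedly, your stated reason for $q_k$ being non-resented (nobody prefers her chosen bundle) is the wrong kind of reason. Whether $u$ resents $q_k$ depends on $a_{u,q_k}$ versus $X_u$, not on what $q_k$ holds. The correct reason is simply your Claim 2.

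In Claim 4, the sentence ``$q_1,\dots,q_{k-1}$ are resented only by $p$'' is a fact about $\X$. In $\hat{\X}$ you have just shown that every resent edge leaves $q_k$. So what must be ruled out is $q_k\to q_i$ for $i<k$, and your argument does not address it. The paper makes exactly this step explicit. If $q_k$ resented $q_i$, the basic property (your own Claim 1 argument) would force $\hat X_{q_i}=a_{q_i,q_k}$, contradicting $\hat X_{q_i}=a_{q_i,p}$. Add that line and remove the false clause in Claim 3, and the proof is complete.
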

\begin{proof}
	Let $\hat{\X}$ be the allocation after Update D.
	Since $p$ resented $q_k$ in $\X$, we have $X_{q_k}=a_{q_k,p}$.
	Moreover, $q_k$ did not resent anyone in $\X$, and therefore $a_{q_k,p}\greatereqval{q_k}a_{q_k,j}$ for every agent $j$.

	Agent $p$ does not resent anyone in $\hat{\X}$ since
	by the ordering of the $q_i$'s, we have $a_{p,q_k} \greatereqval{p} a_{p,j}$ for any other agent $j$.

	Every agent $t\notin\{p,q_k\}$ keeps the same bundle and did not resent anyone in $\X$.
	Thus, by the definition of resent, $t$ does not resent anyone in $\hat{\X}$.
	Hence, only $q_k$ may resent some agents.

	The allocation $\hat{\X}$ is still an orientation and unitary.
	If $q_k$ resents an agent $u$ in $\hat{\X}$, then $a_{q_k,u}\greaterval{q_k}\hat{X}_{q_k}$.
	Since $\hat{X}_{q_k}=\textsc{Choose}(q_k)$, the bundle $a_{q_k,u}$ was not available to $q_k$ after $p$ and $q_k$ dropped their previous bundles.
	Hence, $\hat{X}_u=a_{u,q_k}$.
	Therefore, $\hat{\X}$ is \pzero.

	Since only $q_k$ may have outgoing resent edges, every resent tree has height at most one.
	Also, every agent has at most one unit bundle.
	Therefore, $\hat{\X}$ is \pthree.

	Finally, $q_k$ is non-resented because no agent other than $q_k$ resents anyone.
	Also, $p$ does not resent anyone, $\hat{X}_p=a_{p,q_k}$, and $a_{q_k,p}\greatereqval{q_k}a_{q_k,j}$ for every agent $j$.
	Hence, $q_k$ \weakresents agent $p$.

	For every $i\ne k$, if $q_k$ resented $q_i$, the \pzero\ property would imply $\hat{X}_{q_i}=a_{q_i,q_k}$, contradicting $\hat{X}_{q_i}=a_{q_i,p}$.
	Since no other agent resents anyone, each $q_i$ with $i\ne k$ is non-resented.
\end{proof}

First, we prove a weaker version of \stageD,
which is when there exists no agent outside of the single resent tree.

\begin{lemma}\label{lem:lone_tree}
	Suppose $\X$ is a \pthree\ allocation, and suppose there exists only one resent tree, 
	and no vertex outside the resent tree exists, i.e., $\ell=0$.
	In this case, we can construct a complete \efx\ allocation.
\end{lemma}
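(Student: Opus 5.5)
The plan is to first dispose of the degenerate situations using \cref{lem:i>1}, and then run a single explicit dumping phase in which $p$ serves as the dump target for the leftover shares among the children. Recall the setup: $\ell=0$, so $N=\{p,q_1,\dots,q_k\}$. Agent $p$ resents every $q_i$, and $X_{q_i}=a_{q_i,p}$. No $q_i$ resents anyone. Every $E_{q_i,q_j}$ is fully unallocated, and so is $b_{p,q_i}$.

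By \cref{lem:i>1}, either we already obtain a complete \efx\ allocation, or
\[
a_{q_i,p} \greatereqval{q_i} b_{p,q_i} \sumsq A_{q_i}(\X) \qquad\text{for all } 1<i\le k .
\]
For $q_1$ the same inequality may fail. In that case I would first apply (U1) to $(p,q_1)$. By \cref{lem:cycle_support}, this makes $q_1$ non-resented, gives $X_{q_1}=b_{p,q_1}\sumsq A_{q_1}(\X)$ and $X_p=a_{p,q_1}$, and creates no new resent. Since the $q_i$ are sorted by $v_p(a_{p,q_i})$, $p$ still resents $q_2,\dots,q_k$. Otherwise I would keep $\X$ as is. In both branches, $q_1$ holds a bundle she weakly prefers to every unallocated incident unit bundle.

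\textbf{Dumping.} For every $i>1$, give $b_{p,q_i}$ to $p$, and do the same for $q_1$ if (U1) was not applied. For each pair $i<j$, use the partition $(a_{q_j,q_i},b_{q_j,q_i})$: give $a_{q_j,q_i}$ to $q_j$ (the larger index) and $b_{q_j,q_i}$ to $p$. Then check the envy relations one at a time.

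\begin{enumerate}
\item[(a)] \emph{$p$ toward $q_i$.} Agent $p$ values only $a_{q_i,p}$ in $q_i$'s bundle, and $X'_p\greatereqval{p} b_{p,q_i}$ by \cref{lem:minimum value} and monotonicity. Hence $p$ does not strongly envy $q_i$.
\item[(b)] \emph{$q_i$ toward $q_j$ with $i<j$.} The relevant part of $q_j$'s bundle is $a_{q_j,q_i}$. By \cref{def:ub}, $a_{q_j,q_i}\lowereqval{q_i} a_{q_i,q_j}$, and this bundle was unallocated, so $q_i$'s bundle (unitary) is at least as valuable.
\item[(c)] \emph{$q_j$ toward $q_i$ with $i<j$.} Agent $q_i$ holds nothing from $E_{q_i,q_j}$, so there is no envy.
\item[(d)] \emph{$q_i$ toward $p$ with $i>1$.} Each piece $p$ received from $E_{q_i,q_j}$ is a $b$-bundle, which is dominated for $q_i$ by the corresponding part of $A_{q_i}(\X)$. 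Cancelability (\cref{cancp}) then gives $X'_p\cap E_{q_i}\lowereqval{q_i} b_{p,q_i}\sumsq A_{q_i}(\X)\lowereqval{q_i} X_{q_i}$.
\item[(e)] \emph{$q_1$ toward $p$.} In the branch without (U1), the same argument as in (d) works. In the branch with (U1), $q_1$'s bundle already contains $b_{p,q_1}\sumsq A_{q_1}(\X)$, which dominates $p$'s bundle from $q_1$'s viewpoint, because $p$'s holding in $E_{p,q_1}$ is $a_{p,q_1}\lowereqval{q_1} a_{q_1,p}$.
\end{enumerate}

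The main obstacle is the bookkeeping for $q_1$ in the (U1) branch. There, $q_1$ holds the extra bundles $A_{q_1}(\X)$, which include shares of $E_{q_1,q_j}$ that my dumping rule meant to give to $q_j$ and $p$. I would handle this by letting $q_1$ keep $A_{q_1}(\X)\cap E_{q_1,q_j}$ and giving the complementary unit bundle to $q_j$. I would then re-verify that $q_j$ does not envy $q_1$, using $X_{q_j}\greatereqval{q_j} A_{q_j}(\X)\ni a_{q_j,q_1}$. If that check fails, I would fall back on \cref{lem:added_strong_support}, treating $(q_1,p)$-type structure as a support pair.
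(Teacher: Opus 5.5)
\textbf{Gap: check (a) fails.} Your checks (b)--(e) are fine. Under the paper's definition, $p$ strongly envies $q_j$ as soon as $v_p(X'_{q_j}\setminus g)>v_p(X'_p)$ for \emph{some} $g\in X'_{q_j}$. This includes goods outside $E_p$, whose removal leaves $p$'s value unchanged. So the \efx-feasibility of $(a_{q_j,p},b_{q_j,p})$ protects $p$ only if $q_j$ owns exactly $a_{q_j,p}$, or if $p$ does not envy $q_j$ at all. This is why the paper's dumping phases never change the bundle of a resented agent.

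In your branch without (U1), each $q_j$ with $j\ge 2$ also receives $a_{q_j,q_i}$ for every $i<j$. Meanwhile, the part of $X'_p$ that $p$ values is just $\bigcup_i b_{q_i,p}$, and nothing excludes $a_{q_j,p}\greaterval{p}\bigcup_i b_{q_i,p}$. Being outside \stage\ B only gives $\bigcup_i b_{q_i,p}\lowerval{p}a_{p,q_k}$.

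Here is a concrete counterexample. Let $N=\{p,q_1,q_2\}$ with single goods $h\in E_{p,q_1}$, $g\in E_{p,q_2}$, $x\in E_{q_1,q_2}$. Take additive values $v_p(h)=3$, $v_p(g)=5$, $v_{q_1}(h)=v_{q_2}(g)=10$, $v_{q_1}(x)=v_{q_2}(x)=1$. Let $X_{q_1}=\{h\}$, $X_{q_2}=\{g\}$, $X_p=b_{q_2,p}=\emptyset$. This is a \pthree\ allocation with a single tree and $\ell=0$. \Cref{lem:i>1} gives nothing, and your test for $q_1$ passes, so you skip (U1). Your dumping then yields $X'_p=\emptyset$ and $X'_{q_2}=\{g,x\}$, so $v_p(X'_{q_2}\setminus x)=5>0=v_p(X'_p)$. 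The (U1) branch has the same defect for every $q_j$, $j\ge 2$, that receives $b_{q_1,q_j}$ or some $a_{q_j,q_a}$.

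\textbf{Missing idea: a second non-resented agent to absorb dumped shares.} While $p$ is the only root and still envies $q_j$, $q_j$ must keep exactly $a_{q_j,p}$. So both shares of each $E_{q_i,q_j}$ must go elsewhere. Giving both to $p$ is not controlled by \cref{lem:i>1}, which bounds only one share per pair.

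The paper creates the second absorber by splitting on $a_{q_1,p}\sumsq b_{q_k,q_1}$ versus $b_{q_1,p}\sumsq a_{q_k,q_1}$ for $q_1$.
In its Case~1 it applies Update~D. Then $p$ holds $a_{p,q_k}$ and resents nobody, every $q_i$ becomes non-resented and may receive extra goods, and $q_k$ absorbs the shares $b_{q_i,p}$. The fact that $\X$ is not in \stage\ B keeps $p$ from envying $q_k$.
In its Case~2 it gives $p$ the unit bundle $a_{p,q_1}$. This makes $q_1$ non-resented, so she can take one share of each $E_{q_i,q_j}$ with $i,j>1$, while every $q_i$ with $i\ge 2$ keeps exactly $a_{q_i,p}$. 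The failure of the Case~1 inequality is exactly what prevents $q_1$ from envying $p$.

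Separately, the share of $E_{p,q_i}$ you hand to $p$ must be $b_{q_i,p}$, the complement of $a_{q_i,p}$, not $b_{p,q_i}$.
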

\begin{proof}
	We divide to two cases.
	
	\paragraph{Case 1:}
	If $a_{q_1,p} \sumsq b_{q_k,q_1} \greatereqval{q_1} b_{q_1,p} \sumsq a_{q_k,q_1}$:
	
	In this case, we update the allocation by update D, defined in \cref{updateD}.
	Denote the resulting allocation by $\hat{\X}$. 
	By \cref{lem:updateD}, $\hat{\X}$ is \pthree, agents $q_1,\ldots,q_k$ are non-resented in $\hat{\X}$, and $q_k$ \weakresents\ agent $p$.
	Since $\X$ is not in \stage\ B and $p\to q_k$, we have
	$$b_{p,q_k}\sumsq A_p(\hat{\X})=   \bigsumsq_{i\in [k]} b_{q_i,p} = X_p\sumsq A_p(\X)\lowerval{p}a_{p,q_k}=\hat{X}_p.$$
	
	\paragraph{Dumping Phase of Case 1.}
	We allocate every remaining unallocated unit bundle by the following rules.
	
	\begin{enumerate}[label=(\roman*), leftmargin=2.2em]
		\item \textbf{Between $q_k$ and $p$.}
		We have \dumpp{a_{p,q_k}}{p}, and we assign \dumpp{b_{p,q_k}}{q_k}.
		
		\item \textbf{Between non resented $q_i$ and $p$.}
		For $i<k$, we assign \dumpu{a_{q_i,p}}{q_i} and \dumpp{b_{q_i,p}}{q_k}.
		
		\item \textbf{Between $q_i$ and $q_j$.}
		For $i < j \le k$, we assign \dumpp{a_{q_j,q_i}}{q_j}, and \dumpp{b_{q_j,q_i}}{q_i}.
	\end{enumerate}
	\noindent After applying Rules~(i) to (iii), every unit bundle in the instance is allocated.
	
	\paragraph{Correctness of Dumping Phase of Case 1.}
	Denote the resulting allocation by $\xp$.
	We prove that $\xp$ is $\efx$.
	First, note that $\hat{\X}$ satisfies the global dumping properties: by \cref{lem:updateD}, it is \pone, and the displayed inequality verifies the second dumping property. Moreover, our dumping rules follow the general dumping structure,
	so by \cref{lem:dumprule},
	\begin{enumerate}
		\item For every agent $t$, we have $X'_t \greatereqval{t} \hat{X_t}$.
		\item Nobody strongly envies agent $p$.
		\item Agent $p$ does not envy agent $q$ after the dumping phase.
	\end{enumerate}
	
	For every $i<k$, $q_k$ does not envy  $q_i$ since $X'_i\cap E_{q_k} = b_{q_k,q_i} \lowereqval{q_k} a_{q_k,q_i} \subseteq X'_{q_k}$.
	Agent $q_k$ is not envied by $q_1$ because, by case 1 condition, we have:
	$$X'_{q_1} \greatereqval{q_1} a_{q_1,p} \sumsq b_{q_k,q_1} 
	\greatereqval{q_1} b_{q_1,p} \sumsq a_{q_k,q_1} = X'_{q_k}\cap E_{q_1}.$$

	By \cref{lem:i>1}, since $a_{q_i,q_k}\subseteq A_{q_i}(\X)$, for every $1 <i <k$, we have
	$$a_{q_i,p} \greatereqval{q_i} b_{p,q_i}\sumsq A_{q_i}(\X)\greatereqval{q_i} a_{q_i,q_k}\sumsq b_{p,q_i}.$$

	Hence, for $i >1$, using \cref{lem:i>1}, agent $q_k$ is not envied by $q_i$ because
	$$X'_{q_i} \greatereqval{q_i} a_{q_i,p}  \greatereqval{q_i} b_{p,q_i} \sumsq A_{q_i}(\X) \greatereqval{q_i}
	a_{q_k,q_i} \sumsq b_{q_i, p} = X'_{q_k} \cap E_{q_i}.$$
	
	For $i,j <k$, $q_i$ does not envy  $q_j$ since 
	$X'_j\cap E_{q_i} \lowereqval{q_i} a_{q_i,q_j} \lowereqval{q_i} a_{q_i,p} \subseteq X'_{q_i}$.
	Hence, $\xp$ is a complete \efx\ allocation.

	\paragraph{Case 2:} Suppose we are not in case 1, so $a_{q_1,p} \sumsq b_{q_k,q_1} \lowereqval{q_1} b_{q_1,p} \sumsq a_{q_k,q_1}$.
	We update the allocation as follows to get allocation $\xp$:
	\begin{align*}
		X'_p &\takes a_{p,q_1} \sumsq \bigsumsq_{i>1} b_{q_i,p} \sumsq 
		\bigsumsq_{j>i>1} a_{q_j,q_i} \sumsq \bigsumsq_{j>1} b_{q_1,q_j} \\
		X'_{q_1} &\takes b_{p,q_1} \sumsq \bigsumsq_{j>1} a_{q_1,q_j} \sumsq \bigsumsq_{j>i>1} b_{q_j,q_i}\\
		X'_{q_i} &\takes a_{q_i,p}  \hspace{1cm} :\forall 2\leq i \leq k.
	\end{align*}
	
	\noindent Every unit bundle in $\xp$ is allocated. Next, we show that it is \efx.
	
	\noindent First, note that for $i\geq 2$, no one strongly envies $q_i$, since $X'_{q_i} \cap E_{q_1} = \emptyset$, and 
	$b_{q_i,p} \subseteq X'_{p}$, so $p$ does not strongly envy $a_{p,q_i}$. 
	
	\noindent\textbf{No one envies agent $p$.}
	For  $i\geq 2$, by \cref{lem:i>1}, we get:
	$$X'_p \cap E_{q_i} =  b_{q_i,p} \sumsq  b_{q_1,q_i}  \sumsq \bigsumsq_{j: j>i} a_{q_j,q_i} \sumsq \bigsumsq_{j: i>j>1} a_{q_i,q_j}
	\lowereqval{q_i} b_{q_i,p} \sumsq A_{q_i}(\X) \lowereqval{q_i} a_{q_i,p} =X'_{q_i},$$
	Hence, $q_i$ does not envy agent $p$. 
	
	Moreover, since we are not in case 1, we get:
	$$a_{q_1,p} \sumsq b_{q_k,q_1} \lowereqval{q_1} b_{q_1,p} \sumsq a_{q_k,q_1} 
	\lowereqval{q_1} b_{p,q_1} \sumsq a_{q_1,q_k},$$
	so agent $q_1$ does not envy agent $p$ because:
	$$X'_p \cap E_{q_1} =  a_{p,q_1} \sumsq  \bigsumsq_{j>1} b_{q_1,q_j}
	\lowereqval{q_1} a_{q_1,p} \sumsq b_{q_k,q_1} \sumsq  \bigsumsq_{k>j>1} b_{q_1,q_j}
	\lowereqval{q_1} b_{p,q_1} \sumsq a_{q_1,q_k} \sumsq  \bigsumsq_{k>j>1} a_{q_1,q_j} \subseteq X'_{q_1}.$$
	
	\noindent\textbf{No one envies agent $q_1$.}
	Agent $p$ does not envy $q_1$ since $X'_{q_1} \cap E_{p} =  b_{p,q_1} \lowereqval{p} a_{p,q_1} \subseteq X'_p$.
	For  $i\geq 2$, by \cref{lem:i>1}, we get:
	$$X'_{q_1} \cap E_{q_i} =  a_{q_1,q_i} \sumsq  \bigsumsq_{j: j>i} b_{q_j,q_i} \sumsq \bigsumsq_{j: i>j>1} b_{q_i,q_j}
	\lowereqval{q_i} A_{q_i}(\X) \lowereqval{q_i} a_{q_i,p} =X'_{q_i}.$$
	Hence, $q_i$ does not envy agent $q_1$. Therefore, $\xp$ is a complete \efx\ allocation, and the proof is complete. 
\end{proof}

From now on, we may suppose that along with our single tree,
there exists at least one non-resented agent, i.e., $\ell \geq 1$.

\begin{lemma}\label{lem:D1}
	Given an allocation in \stage\ D, either we can construct a complete \efx\ allocation, or
	for every non-resented agent $r\neq p$, we have $X_{r} = a_{r,q_k}$.
	Moreover, if we execute update D, agent $q$ would resent all the agents $r_1,\ldots,r_\ell,p$ in the resulting allocation.
\end{lemma}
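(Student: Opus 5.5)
Fix an allocation $\X$ in \stage\ D, with root $p$ resenting $q_1,\dots,q_k$, $q=q_k$, and non-resented agents $r_1,\dots,r_\ell$ outside the tree ($\ell\ge 1$). The plan is a dichotomy. Either some non-resented $r\neq p$ has $X_r\neq a_{r,q}$, and then we build a support pair and finish with \cref{lem:added_strong_support} or \cref{lem:strong_support}. Or every such $r$ holds $a_{r,q}$, and the ``moreover'' part follows directly.

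\textbf{Step 1: the first alternative.} Suppose some non-resented $r\ne p$ (an $r_i$, since the $q_i$ are resented) has $X_r\ne a_{r,q}$. Since $\X$ is \pthree, $X_r$ is a single unit bundle in some $E_{r,u}$. Because only $p$ resents anyone in \stage\ D, $r$ resents no one, and $r$ is non-resented, $r$ together with the owner of the other side of her edge is the natural candidate for a support pair.

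If $u$ is non-resented, then $(r,u)$ is already a support pair and $r$ resents no one. Every agent except possibly $u$ holds one unit bundle, so \cref{lem:strong_support} applies with $s=r$, $t=u$. If $u\in\{q_1,\dots,q_k\}$ is resented, that is impossible: $X_u=a_{u,p}$ contains nothing from $E_{r,u}$, so $r$ must then hold $a_{r,u}$ with $a_{u,r}$ unallocated. Take $u=q_i$. I would handle this by performing Update D (or a U1-type swap) to make $q_i$ non-resented. The goal is a pair $(r,q_i)$ in which $X_r\subseteq E_{r,q_i}$, after which \cref{lem:strong_support} applies. The case $i<k$ is the delicate one, because Update D only frees $q_k$.

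So the workable first alternative is $X_r=a_{r,u}$ with $u\ne q$. Here I would run Update D (\cref{updateD}). By \cref{lem:updateD}, all agents other than $q$ resent no one, $q$ \weakresents\ $p$, and all $q_i$ are non-resented. The pair $(r,u)$ is then a support pair with $r$ resenting no one, and \cref{lem:strong_support} yields a complete \efx\ allocation. The case $u=p$ is excluded because $X_p=a_{p,q}$ after the update, so $r$ would have to hold $b_{p,r}$; that configuration is again covered by treating $(r,p)$ as a candidate pair.

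\textbf{Step 2: the ``moreover'' part.} Assume $X_{r_i}=a_{r_i,q}$ for all $i$. After Update D, $q$ drops $a_{q,p}$ and chooses the best unallocated unit bundle. The edges $E_{q,r_i}$ are unavailable on the $a_{q,r_i}$ side, since $r_i$ holds $a_{r_i,q}$, which overlaps $a_{q,r_i}$ through the unitary structure. Hence $q$ ends up with something no better than her former bundle. For each $i$, $a_{q,r_i}$ is not available to $q$, and we need $a_{q,r_i}\greaterval{q}\textsc{Choose}(q)$. This is the main obstacle. I would derive it from the failure of Update U1 and Stage B for $p\to q$: since $q$ did not prefer $b_{p,q}\cup A_q(\X)$ to $a_{q,p}$, every remaining available bundle is weak for $q$. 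Combining this with the choice of $q=q_k$ and \cref{lem:minimum value} should give that $q$ strictly resents each $r_i$ and $p$ (she holds at most $b$-type bundles). If strictness fails for some $r_i$, then $q$ holds a bundle $\succeq a_{q,r_i}$. In that case I would again extract a support pair and conclude via \cref{lem:strong_support} as in Step 1.
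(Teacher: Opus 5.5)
Your dichotomy is the right skeleton, and the version of Step~1 you finally settle on is the paper's argument: run Update D, use that every $q_i$ becomes non-resented, so $(r,q_i)$ is a support pair and \cref{lem:strong_support} applies. But the reasoning around it contains a false claim: Update D does not ``only free $q_k$''. After Update D, $p$ holds $a_{p,q_k}$, which she weakly prefers to every $a_{p,j}$, so she resents nobody. Moreover, $q_k$ can only resent agents $w$ with $\hat X_w=a_{w,q_k}$. Hence all $q_i$ are non-resented (item 4 of \cref{lem:updateD}). So is $r$, who holds $a_{r,q_i}$ with $i\neq k$; you should state this check, since the support pair needs it. There is also no ``delicate'' case. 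Since $X_{q_i}=a_{q_i,p}$ forces $X_r=a_{r,q_i}$, the hypothesis $X_r\neq a_{r,q}$ already excludes $i=k$. So $i<k$ is the only case, and it is immediate. The case $u=p$ needs no special remark either: $p$ is non-resented, so it falls under your first subcase.

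The real problem is Step~2. The inequality you call the main obstacle, $a_{q,r_i}\greaterval{q}\textsc{Choose}(q)$ for every $i$, cannot be derived from the failure of (U1) or of \stage\ B. Those conditions bound what is available to $q$ only against $a_{q,p}$, not against $a_{q,r_i}$. For instance, each $a_{q,q_i}$ with $i<k$ is unallocated, since $q_i$ holds $a_{q_i,p}$. So $\textsc{Choose}(q)$ may return one of them, and it may be worth more to $q$ than some $a_{q,r_j}$. This also shows that ``she holds at most $b$-type bundles'' is false, and it is precisely why the lemma is an either/or statement.

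The proof is your fallback, made explicit. In $\hat{\X}$, \cref{lem:updateD} gives that only $q$ resents anyone and that $q$ is non-resented. If $q$ does not resent some $r_i$, then $r_i$ is non-resented, resents nobody, and $\hat X_{r_i}=a_{r_i,q}\subseteq E_{r_i,q}$. So $(r_i,q)$ is a support pair with every agent holding a single unit bundle, and \cref{lem:strong_support} finishes. Otherwise $q$ resents every $r_i$, and you still need the step for $p$, which your proposal does not supply. Since $\ell\ge 1$ and $q$ resented nobody while holding $a_{q,p}$ in $\X$, we get $a_{q,p}\greatereqval{q}a_{q,r_1}\greaterval{q}\hat X_q$, so $q$ resents $p$.
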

\begin{proof}
	We suppose that the conclusion does not hold, and then we construct a complete \efx\ allocation.
	
	First, suppose there exists a non-resented agent $r\neq p$ such that $X_r\ne a_{r,q_k}$.
	If $X_r \subseteq E_{r,v}$ for a non-resented agent $v$, then $(r,v)$ is a support pair.
	Moreover, agent $r$ does not resent anyone.
	Therefore, by \cref{lem:strong_support}, we can construct a complete \efx\ allocation in this case.
	
	So assume that $X_r \subseteq E_{r,q_i}$ for some $q_i$.
	Since $X_{q_i}=a_{q_i,p}$, we get $X_r= a_{r,q_i}$.
	If $q_i\neq q_k$, then we execute Update D as defined in \cref{updateD}.
	By \cref{lem:updateD}, the resulting allocation is \pthree.
	Moreover, agent $q_i$ is non-resented, agent $r$ is non-resented, agent $r$ does not resent anyone, and $X_r=a_{r,q_i}$.
	Consequently, $(r,q_i)$ is a support pair, and by \cref{lem:strong_support}, we get a complete \efx\ allocation.
	
	Hence, we may assume that for every non-resented agent $r_i$, we have $X_{r_i}=a_{r_i,q_k}$.
	Now execute Update D, and denote the resulting allocation by $\hat{\X}$.
	By \cref{lem:updateD}, agent $q$ \weakresents\ agent $p$, and every agent other than $q$ does not resent anyone.
	If agent $q$ does not resent some agent $r_i$, then $r_i$ is non-resented in $\hat{\X}$.
	Moreover, $q$ is non-resented, agent $r_i$ does not resent anyone, and $\hat{X}_{r_i}=a_{r_i,q}$.
	Thus, $(r_i,q)$ is a support pair.
	Therefore, by \cref{lem:strong_support}, we can construct a complete \efx\ allocation.
	
	Thus, unless we are already done, after Update D agent $q$ resents every agent $r_1,\ldots,r_\ell$.
    Also, note that $a_{q,p} \greatereqval{q} a_{q,r_i}$ for every $i\in [\ell]$, since agent $q$ 
    did not resent anybody while possessing $a_{q,p}$.
    Hence, $q$ resents agent $p$ in $\hat{\X}$..
\end{proof}

    Recall that $q:= q_k$, $r_{\ell+1}:=p$, and $d = k-1$.
    Denote the original allocation by $\X$ and the updated allocation using Update D by $\hat{\X}$. 

\noindent From now on, by \cref{lem:D1}, we may suppose that for every $i\in [\ell]$, we have $X_{r_i} = a_{r_i,q_k}$, and
in $\hat{\X}$, agent $q$ resents agents $r_1,\ldots,r_\ell,r_{\ell+1}=p$.

Now, it is important to notice a certain duality in the structure of our graph.
Agent $p$ (in $\X$) is analogous to $q$ (in $\hat{\X}$). 
Agents $(q_1,\dots, q_d)$ (in $\X$)  are analogous  to $r_1,\dots r_\ell$ (in $\hat{\X}$).

\begin{itemize}
	
	\item In $\X$,
	$p=r_{\ell+1}$ is non resented and resents $q_1,\ldots,q_d,q_{d+1}=q$.
	Agent $p$ most-resents $q$. Also, for every $i\in [\ell]$, we have $X_{r_i} = a_{r_i,q}$.
		
	\item In $\hat{\X}$,
	$q=q_k$ is non resented and resents $r_1,\ldots,r_\ell,r_{\ell+1}=p$.
	Agent $q$ most-resents $p$. Also, for every $i\in [d]$, we have $X_{q_i} = a_{q_i,p}$.
\end{itemize}

    If $\ell = 0$, then using \cref{lem:lone_tree} for allocation $\X$, we get a complete \efx\ allocation.
    If $d = 0$, then using \cref{lem:lone_tree} for allocation $\hat{\X}$, we get a complete \efx\ allocation.
    Hence, we may assume $\ell\ne 0$ and $d\ne 0$.

\begin{corollary}\label{claim:D2}
	Given an allocation in \stage\ D, either we can construct a complete \efx\ allocation, or:
	\begin{itemize}
		\item For all $1<i\leq d$, we have
		$ a_{q_i,p} \greatereqval{q_i}  b_{p,q_i} \sumsq A_{q_i}(\X) =  b_{p,q_i} \sumsq \bigsumsq_{u\neq q_i,p} a_{q_i,u}.$
		\item For $q=q_{d+1}$, we have
		$a_{q,p} \greatereqval{q} b_{p,q} \sumsq A_{q}(\X) = 
		b_{p,q} \sumsq \bigsumsq_{i\in [d]} a_{q,q_i} \sumsq \bigsumsq_{j\in [\ell]} b_{r_j, q}.$
		\item For all $1<j\leq \ell$, we have
		$ a_{r_j,q} \greatereqval{r_j} b_{q,r_j} \sumsq A_{r_j}(\hat{\X}) = b_{q,r_j} \sumsq \bigsumsq_{u\neq r_j,q} a_{r_j,u}.$
		\item For $p=r_{\ell+1}$, we have
		$ a_{p,q} \greatereqval{p} b_{q,p} \sumsq A_{p}(\hat{\X}) =
		b_{q,p} \sumsq \bigsumsq_{j\in [\ell]} a_{p,r_j} \sumsq \bigsumsq_{i\in [d]} b_{q_i,p}.$
	\end{itemize}
\end{corollary}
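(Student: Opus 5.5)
The four items come in two dual pairs, so I will prove the first two for $\X$ and get the last two by running the same argument on $\hat{\X}$ under the duality $p\leftrightarrow q$, $q_i\leftrightarrow r_i$ noted just before the statement.

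\textbf{Items one and two.} The first item is exactly \cref{lem:i>1} for $1<i\le d$; otherwise that lemma already produces a complete \efx\ allocation. What remains is to identify $A_{q_i}(\X)$. In $\X$ every agent holds one unit bundle:
\begin{itemize}
\item each $q_u$ holds $a_{q_u,p}$;
\item $p$ holds some unit bundle;
\item each $r_j$ holds $a_{r_j,q}$ (by \cref{lem:D1}).
\end{itemize}
Hence for $u\notin\{q_i,p\}$ the set $E_{q_i,u}$ is completely unallocated, and $A_{q_i}(\X)\cap E_{q_i,u}=a_{q_i,u}$, which gives the displayed form. For $q=q_{d+1}=q_k$, the inequality is again \cref{lem:i>1} with $i=k>1$; if $k=1$ then $d=0$, which was already excluded. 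In the computation of $A_q(\X)$, each $E_{q,q_i}$ is unallocated and contributes $a_{q,q_i}$. Each $r_j$ holds $a_{r_j,q}$, so $E_{q,r_j}\setminus X_{r_j}=b_{r_j,q}$ is what enters $A_q(\X)$.

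\textbf{Items three and four.} I apply the same reasoning to $\hat{\X}$, which is \pthree\ by \cref{lem:updateD}. There $q$ is the unique non-trivial root, resenting $r_1,\dots,r_\ell,p$, and each $r_j$ holds $a_{r_j,q}$. I order the $r_j$ by $v_q(a_{q,r_j})$, with $p=r_{\ell+1}$ last, since $q$ most-resents $p$. The $q_i$ for $i\le d$ are non-resented and hold $a_{q_i,p}$. So $\hat{\X}$ is a \stageD\ allocation with the roles swapped, and $\hat{\X}$ keeps the property that every non-root holds an $a$-bundle toward the top child.

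I then rerun the proof of \cref{lem:i>1} on $\hat{\X}$. Suppose some $j>1$ (including $j=\ell+1$, i.e.\ $p$) has $a_{r_j,q}\lowerval{r_j}b_{q,r_j}\sumsq A_{r_j}(\hat{\X})$. Apply (U1) to $q\to r_j$. Then $(r_1,q)$ becomes a support pair: $r_1$ holds $a_{r_1,q}$, is non-resented, and resents no one. Every remaining resent edge starts at $q$, so \cref{lem:added_strong_support} applies vacuously and yields a complete \efx\ allocation. Otherwise the inequalities hold.

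The explicit forms of $A_{r_j}(\hat{\X})$ and $A_p(\hat{\X})$ follow from the same bookkeeping. In $\hat{\X}$, $p$ holds $a_{p,q}$ and $q$ holds $\textsc{Choose}(q)$. Since $q$ resents every $r_j$ and $p$, her bundle must come from some $E_{q,q_i}$. For $p$, the only partially used sets are the $E_{p,q_i}$, where $q_i$ holds $a_{q_i,p}$; these contribute $b_{q_i,p}$, while each $E_{p,r_j}$ is free and contributes $a_{p,r_j}$.

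\textbf{Expected obstacle.} The delicate step is checking that $\textsc{Choose}(q)$ in $\hat{\X}$ does not touch any $E_{r_j,u}$ or $E_{p,r_j}$, and that the extra unit bundle $q$ picks from some $E_{q,q_i}$ does not alter $A_{r_j}$. It does not, because that bundle is not incident to any $r_j$.
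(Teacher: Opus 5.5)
Your proposal is correct and follows the paper's own argument: items one and two come from \cref{lem:i>1} together with the fact that each $r_j$ holds $a_{r_j,q}$, and items three and four come from the $\X\leftrightarrow\hat{\X}$ duality, which you make explicit by rerunning (U1) and \cref{lem:added_strong_support} on $\hat{\X}$ with $(r_1,q)$ as the support pair. One harmless slip: $q$ resenting every $r_j$ and $p$ does not force $\textsc{Choose}(q)$ into some $E_{q,q_i}$ (she may take $b_{r_j,q}$ or $b_{p,q}$ and still resent), but this is irrelevant, since $X_{r_j}=a_{r_j,q}$ and $X_p=a_{p,q}$ already exclude $E_{r_j,q}$ and $E_{p,q}$ from $A_{r_j}(\hat{\X})$ and $A_p(\hat{\X})$, and $q$'s bundle, lying in $E_q$, cannot touch any other set these involve.
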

\begin{proof}
	The first two properties are obtained by \cref{lem:i>1}, and the fact that for every $r_i$, we have $X_{r_i}=a_{r_i,q}$.
	By duality of $\X$ and $\hat{\X}$, we should have the same properties for allocation $\hat{\X}$, so the third and the fourth property are 
	obtained.
\end{proof}

\begin{lemma}\label{lem:s gurantee}
	Suppose allocation $\X$ is \pone, and in the corresponding resent graph there exists only one non-trivial tree.
	Denote the root of this non-trivial tree by $t$, and let $h_1,\ldots,h_z$ be the agents whom agent $t$ resents.
	In addition, suppose there exists a non-resented agent $s\ne t$ such that:
	\begin{enumerate}
		\item For every resented agent $h$, 
		we have $X_{h} \greaterval{h} b_{h,t} \sumsq A_{h}(\X)$ 
		and $X_{h} \greaterval{h} a_{h,s} \sumsq (A_{h}(\X)\setminus E_{h,s})$.
		\item $X_s \greatereqval{s} a_{s,t} \sumsq D_s(\X)$.
		\item For every non-resented agent $i\notin \{s,t\}$, 
		either $X_i\cap E_{i,t}=a_{i,t}$ or $X_i\cap E_{i,s}=a_{i,s}$.    
	\end{enumerate}
	Then, there exists a complete \efx\ allocation.
\end{lemma}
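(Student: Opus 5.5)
We go straight to a dumping phase on $\X$, with no update rules. Agent $s$ plays the role of the dumping sink, as in \cref{lem:added_strong_support}, but without a partner $t$ of a support pair. The first half of condition 1, $X_h \greaterval{h} b_{h,t} \sumsq A_h(\X)$, is exactly the second global dumping property (\cref{pbd}) for the unique root $t$, since every resented agent is resented by $t$. Since $\X$ is \pone, \cref{pbd} holds, and I will design the rules so that \cref{lem:dumprule} applies to the parts where it fits. Since $s$ is not one partner of a support pair, I will not call $s$ ``used''. I will argue directly about envy towards $s$ instead.

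The rules are as follows.

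(i) Root to its own children: $b_{t,h}\to t$ for every $h\in R_t$.

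(ii) Root to others' children: for a root $p\ne t$ and $h\in R_t$, do \dumpu{a_{p,h}}{p}. Put $b_{p,h}\to t$ if $p=s$, and $b_{p,h}\to s$ otherwise.

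(iii) Two children $h,h'$ of $t$: put $a_{h,h'}\to t$ and $b_{h,h'}\to s$.

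(iv) Root to root: for roots $p,r$ not containing $s$, give one unit bundle to each, respecting already allocated bundles. For roots $p\ne s$, give $a_{p,s}\to p$ and $b_{p,s}\to s$, except when $X_s\subseteq E_{s,p}$ already; in that case complete the partition.

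These rules follow \cref{general dumping rules}, apart from the support-pair item. So items 1--3 of \cref{lem:dumprule} give three facts: values do not decrease, resented agents are not strongly envied, and no $h$ envies $t$.

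It remains to show that no one envies a non-resented agent $p$. If $p\notin\{s,t\}$, then $X'_p\cap E_i$ is a single unit bundle from $E_{p,i}$ for every $i$. This is because $p$ resents no one, so $p$ receives $a_{p,\cdot}$ bundles only from its own edges. Hence $i$ does not envy $p$, since $p$ was non-resented. For $p=t$, a non-resented $i\ne s$ sees $X'_t\cap E_i$ as a single unit bundle, so $i$ does not envy $t$. Agent $s$ sees $X'_t\cap E_s \subseteq a_{s,t}\sumsq D_s(\X)$ up to replacing $b$-bundles by larger ones via \cref{obs: ub-ineq}. Condition 2 then gives $X'_t\cap E_s \lowereqval{s} X_s \lowereqval{s} X'_s$. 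The design of rule (ii), which sends $b_{s,h}$ to $t$ rather than $s$, is precisely what makes this step go through.

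For $p=s$, a resented agent $h$ sees in $X'_s$ one bundle per other agent. Of these, $X'_s\cap E_{h,u}$ for $u$ resented or $u=s$ is bounded by $A_h(\X)\cap E_{h,u}$. For a root $u\ne t,s$ it is $b_{u,h}$, which is again dominated by $A_h(\X)\cap E_{h,u}$. From $E_{h,s}$, agent $s$ may hold $a_{s,h}\lowereqval{h} a_{h,s}$. So $X'_s\cap E_h\lowereqval{h} a_{h,s}\sumsq(A_h(\X)\setminus E_{h,s})\lowerval{h} X_h$ by the second half of condition 1. This is the place where that hypothesis is used.

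The main obstacle is a root $i\notin\{s,t\}$ looking at $s$. Agent $s$ receives $b_{i,s}$ and $b_{i,h}$ for all $h\in R_t$, while $i$ holds $a_{i,h}$ for all $h$. Condition 3 supplies the missing edge. If $X_i\cap E_{i,s}=a_{i,s}$, then $i$ keeps $a_{i,s}$ and the comparison is termwise via \cref{cancp}. If $X_i\cap E_{i,t}=a_{i,t}$, then $s$ gets nothing from $E_{i,t}$, and $i$ additionally holds $a_{i,s}$, so the termwise comparison still works. Agent $t$ looking at $s$ is handled like the case $p=t$: the bundles $s$ received touching $t$ are only $b_{t,h}$-type bundles from rule (iii), and these are dominated by what $t$ holds, namely $a_{t,\cdot}\sumsq b_{t,h}$ and $X_t$. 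I expect verifying this last comparison, together with the exceptional case in which $X_s$ already lies in $E_{s,i}$, to need the most care.
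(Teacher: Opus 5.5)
Your proof has a genuine gap at the pair $(i,s)$, where $i\notin\{s,t\}$ is a root and $s$ already holds $a_{s,i}$ in $\X$. Nothing forbids this: $\X$ is only \pone, so $s$ may hold several unit bundles. For instance, in the application in \cref{lem:aq>ap}, $s=r_1$ holds $a_{r_1,q_i}$ for the non-resented $q_i$.

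In this situation $i$ cannot receive $a_{i,s}$. If $X_s\not\subseteq E_{s,i}$, your rule (iv) asks for $a_{i,s}$, which in general intersects $a_{s,i}\subseteq X_s$. In the fallback you do treat, $X_s\subseteq E_{s,i}$, agent $i$ ends with $b_{s,i}$ while $s$ ends with $a_{s,i}\cup\bigcup_{h\in R_t}b_{i,h}$. Your termwise comparison then has nothing on $i$'s side to match $a_{s,i}$. The only available bound, $X_i\succeq_i a_{i,s}$, may come entirely from the $a_{i,h}$ that $X_i$ already contains, and you also use those to cover the $b_{i,h}$.

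Condition 3 never really enters your argument. In the branch $X_i\cap E_{i,t}=a_{i,t}$, the claim that $i$ ``additionally holds $a_{i,s}$'' comes from your rule, not from the hypothesis. That branch is exactly the one forced when $s$ holds $a_{s,i}$.

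Here is a concrete instance with agents $t,h_1,h_2,s,i$.
\begin{itemize}
\item Edges: $E_{t,h_j}=\{g_j\}$, $E_{i,h_j}=\{x_j,y_j\}$, $E_{s,i}=\{g\}$, $E_{i,t}=\{z\}$; all other edges are empty.
\item Additive values: $v_t(g_j)=1$, $v_t(z)=0$; $v_{h_j}(g_j)=10$, $v_{h_j}(x_j)=v_{h_j}(y_j)=1$; $v_i(x_j)=v_i(y_j)=10$, $v_i(g)=20$, $v_i(z)=1$; $v_s(g)=5$.
\item Unit bundles: $a_{i,h_j}=\{x_j\}$, $b_{i,h_j}=\{y_j\}$, $a_{s,i}=a_{i,s}=\{g\}$, $a_{i,t}=\{z\}$, $a_{h_j,t}=a_{t,h_j}=\{g_j\}$.
\item Allocation: $X_t=\emptyset$, $X_{h_j}=\{g_j\}$, $X_i=\{x_1,x_2,z\}$, $X_s=\{g\}$.
\end{itemize}
This $\X$ is \pone\ with resent edges $t\to h_1$ and $t\to h_2$ only. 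Conditions 1--3 hold, with condition 3 satisfied via $X_i\cap E_{i,t}=a_{i,t}$. Your rules give $X'_s=\{g,y_1,y_2\}$ and $X'_i=\{x_1,x_2,z\}$. Then $v_i(X'_s\setminus y_1)=30>21=v_i(X'_i)$, so $i$ strongly envies $s$.

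The missing idea, which is what the paper does, is to let condition 3 choose the recipient of $b_{i,h}$.
\begin{itemize}
\item If $X_i\cap E_{i,t}=a_{i,t}$, dump $b_{i,h}$ on $t$. Then $X'_s\cap E_i$ is a single unit bundle of $E_{s,i}$, so it is $\preceq_i a_{i,s}\preceq_i X_i$. Also $X'_t\cap E_i=b_{i,t}\cup\bigcup_h b_{i,h}\preceq_i a_{i,t}\cup\bigcup_h a_{i,h}\subseteq X'_i$.
\item Otherwise $X_i\cap E_{i,s}=a_{i,s}$, and you dump $b_{i,h}$ on $s$ as you do now.
\end{itemize}
In the example this sends $y_1,y_2$ to $t$, and the result is \efx.

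The change costs nothing elsewhere. The extra bundles on $t$ are not incident to $s$, so condition 2 still shows that $s$ does not envy $t$. Item 3 of \cref{lem:dumprule} still covers resented agents looking at $t$. With this change your special rule for pairs $(p,s)$ is unnecessary; the ordinary root-to-root rule, which completes any partition already started, suffices. A minor point: in rule (i), the complement of $X_h=a_{h,t}$ is $b_{h,t}$, not $b_{t,h}$.
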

\begin{proof}
	We allocate the unallocated unit bundles in the dumping phase.
	\begin{enumerate}[label=(\roman*), leftmargin=2.2em]
		\item \textbf{Root to root.}
		Let $p$ and $r$ be two distinct roots. 
		If at least one of the unit bundles is already allocated to one of the roots, allocate the other unit bundle to the other root. 		
		Otherwise, allocate one unit bundle in $E_{p,r}$ to $p$ and the other to $r$ arbitrarily.

		\item \textbf{Root to its own children.}
		For root $t$ and every resented $h\in R_t$, assign \dumpp{b_{h,t}}{t}.
		Also, we already have $X_{h} = a_{h,t}$.	
		
		\item \textbf{Non resented $s$ to others children.}
		Let $h\in R_t$. Assign \dumpu{a_{s,h}}{s} and assign \dumpp{b_{s,h}}{t}. 
		
		\item \textbf{Non resented $i\notin \{s,t\}$ to others children.}
		Let $i\notin \{s,t\}$ be a non-resented and let $h\in R_t$.
		Assign \dumpu{a_{i,h}}{i}.
		If $X_i\cap E_{i,t}=a_{i,t}$, assign \dumpp{b_{i,h}}{t}.
		Otherwise, assign \dumpp{b_{i,h}}{s}. (Note that in the second case, by lemma's conditions, we have $X_i\cap E_{i,s}=a_{i,s}$.)

		\item \textbf{Between two children of the same tree.}
		For every two distinct children $h,u\in R_t$, assign \dumpp{a_{u,h}}{t},
		and assign \dumpp{b_{u,h}}{s}.  (Note that $s\ne t$ is the non-resented agent in the lemma's conditions.)
	\end{enumerate}
	\noindent After applying Rules~(i) to (v), every unit bundle in the instance is allocated.
	
	\paragraph{Correctness of Dumping Phase.}
	Denote the allocation right before dumping phase  by $\X$, and denote the allocation at the end by $\xp$.
	Next, we prove that the final allocation, $\xp$, is \efx.
	
	First, note that by the assumptions of the lemma, $\X$ satisfies the global dumping properties of \cref{pbd}: it is \pone, and for every resent edge $t\to h$, we have $X_h=a_{h,t}\greaterval{h} b_{h,t}\sumsq A_h(\X)$. Moreover, our dumping rules follow the general dumping structure,
	so by \cref{lem:dumprule},
	\begin{enumerate}
		\item For every agent $p$, we have $X'_p \greatereqval{p} X_p$. 
		\item Nobody strongly envies any agent who was resented at the start of the dumping phase.
		\item For every agent $h$ who was resented in $\X$, agent $h$ does not envy agent $t$ after the dumping phase.
	\end{enumerate}
	
	\noindent\textbf{A resented agent $h$ does not envy anyone.}
	By the previous argument, we know $h$ does not envy agent $t$ or any other resented agent.
	So let $p\ne t$ be an agent who was non-resented in $\X$.
	
	If $p\ne s$, then we get $X'_p\cap E_h = a_{p,h} \lowereqval{h} a_{h,p} \lowereqval{h} a_{h,t}= X'_h$, 
	where the first inequality follows from the definition of unit bundles, and the second one follows from the fact that $h$ does not resent $p$ in $\X$.
	Moreover, if $p=s$, agent $h$ does not envy $s$ since:
	\begin{align*}
		X'_s \cap E_h &= (X'_s \cap E_{h,s}) \sumsq \bigsumsq_{u\ne s,t} (X'_s \cap E_{h,u}) \\
		&\lowereqval{h} a_{s,h} \sumsq   (A_{h}(\X) \setminus E_{h,s})  \lowereqval{h}  X_h=X'_h,
	\end{align*}
	where the first inequality comes from the fact $X'_s \cap E_{h,u}$ is either empty or an unallocated unit bundle in $\X$.
	Also, the second inequality comes from lemma's conditions.
	
	\noindent\textbf{No one envies an agent $r$ who was non-resented in $\X$.}
	By our previous argument, we need to show that if $p\ne r$ is an agent who was non-resented in $\X$, 
	then $p$ does not envy $r$ in $\xp$.
	
	\textbf{If} $\mathbf{r\notin \{s,t\}:}$ then $X'_r\cap E_p$ is a unit bundle, so
    $X'_r\cap E_p \lowereqval{p} a_{p,r} \lowereqval{p} X_p \lowereqval{p} X'_p$, where the 
    second inequality comes from the fact $r$ was non-resented in $\X$.
	
	\textbf{If} $\mathbf{p,r\in \{s,t\}:}$ 
	If $r=s$, and if $p=t$, then $t$ does not envy $s$ since $X'_s\cap E_t\lowereqval{t} a_{t,s}\lowereqval{t} X_t\lowereqval{t} X'_t$.
	Also, if $r=t$ and $p=s$, then 
	\begin{align*}
		X'_t \cap E_s &= (X'_t \cap E_{t,s}) \sumsq \bigsumsq_{h \in R(\X)} (X'_t\cap E_{s,h})\\
		&\lowereqval{s} a_{s,t} \sumsq 	\bigsumsq_{h \in R(\X)} b_{s,h} \\
		&= a_{s,t} \sumsq D_s(\X) \lowereqval{s} X_s \lowereqval{s} X'_s.
	\end{align*}
	\textbf{If} $\mathbf{r\in \{s,t\}}$ \textbf{and} $\mathbf{p\notin \{s,t\}:}$
	We divide to two cases.
    
	If $X_p\cap E_{p,t}=a_{p,t}$: then by rules (i) and (iv), 
    we get that $X'_s\cap E_{p}$ is a unit bundle that was in $E_{s,p}$, 
	so $X'_p \greatereqval{p} X'_p \greatereqval{p} a_{p,s} \greatereqval{p} X'_s$. 
	Moreover, by rule (i), the unit bundle of agent $p$ from $E_{p,t}$ does not change, 
    so $X'_p\cap E_{p,t}=a_{p,t}$.
	Hence,
	\begin{align*}
		X'_p &\greatereqval{p} (X'_p \cap E_{p,t}) \sumsq \bigsumsq_{h \in R(\X)} (X'_p\cap E_{p,h})\\
		&= a_{p,t} \sumsq \bigsumsq_{h \in R(\X)} a_{p,h}\\
		&\greatereqval{p} b_{p,t} \sumsq \bigsumsq_{h \in R(\X)} b_{p,h}\\
		&= X'_t \cap E_p. 
	\end{align*}
	If $X_p\cap E_{p,t}\ne a_{p,t}$, then by lemma's conditions, we have that $X_p\cap E_{p,s}=a_{p,s}$. So in a similar way, we get that
	$X'_t\cap E_{p}$ is a unit bundle that was in $E_{t,p}$, 
	so agent $p$ does not envy $t$ in $\xp$. Also,
	\begin{align*}
		X'_p &\greatereqval{p} a_{p,s} \sumsq \bigsumsq_{h \in R(\X)} a_{p,h}\\
		&\greatereqval{p} b_{p,s} \sumsq \bigsumsq_{h \in R(\X)} b_{p,h} = X'_s \cap E_p.
	\end{align*}
	Hence, $\xp$ is a complete \efx\ allocation.
\end{proof}

\begin{lemma}\label{lem:aq>ap}
	Given an allocation in \stage\ D, either we can construct a complete \efx\ allocation, or:
	$$ a_{r_1,q} \sumsq b_{p,r_1} \sumsq \bigsumsq_{i\in [d]} b_{q_i, r_1}  \greaterval{r_1}  
	b_{q,r_1} \sumsq a_{r_1,p} \sumsq  \bigsumsq_{i\in [d]} a_{r_1,q_i}.$$  
\end{lemma}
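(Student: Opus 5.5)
We argue by contrapositive: assume
$$ a_{r_1,q} \sumsq b_{p,r_1} \sumsq \bigsumsq_{i\in [d]} b_{q_i, r_1} \lowereqval{r_1} b_{q,r_1} \sumsq a_{r_1,p} \sumsq \bigsumsq_{i\in [d]} a_{r_1,q_i}$$
and build a complete \efx\ allocation. The inequality is the exact analogue, under the duality between $\X$ and $\hat{\X}$ (with $p\leftrightarrow q$, $q_1\leftrightarrow r_1$), of the Case~2 condition in \cref{lem:lone_tree}. So we mimic Case~2 there: $r_1$ is the least-favoured child of $q$ in $\hat{\X}$, and we give it a large bundle containing $b_{q,r_1}$, $a_{r_1,p}$ and all $a_{r_1,q_i}$. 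This makes $r_1$ a non-resented agent that nobody envies, and we then finish with \cref{lem:s gurantee} or a direct dumping.

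Concretely, we work from $\hat{\X}$, where $q$ is the unique nontrivial root resenting $r_1,\dots,r_\ell,p$, we have $X_{q_i}=a_{q_i,p}$, and $\hat{X}_{r_j}=a_{r_j,q}$. We reassign
$$X'_q \takes a_{q,r_1}, \qquad X'_{r_1}\takes b_{q,r_1}\sumsq a_{r_1,p}\sumsq \bigsumsq_{i\in[d]} a_{r_1,q_i}.$$
Since $p$ only holds $a_{p,q}$, the bundles in $E_{r_1,p}$ are free; the same holds for $E_{r_1,q_i}$, as $q_i$ holds $a_{q_i,p}$. The agents $r_2,\dots,r_\ell,p$ remain resented by $q$, so we get a single nontrivial tree with root $t=q$.

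\begin{itemize}
\item[(a)] For $r_1$, the assumed inequality shows that $r_1$ prefers $X'_{r_1}$ to $a_{r_1,q}$ plus everything that will later be dumped toward her. Hence $r_1$ is made non-resented and not envious of the dumped bundles.
\item[(b)] We show that nobody resents $r_1$. Agent $q$ gets $a_{q,r_1}\greatereqval{q} b_{q,r_1}$. Agent $p$ sees only $a_{r_1,p}\lowereqval{p}a_{p,r_1}$, and $a_{p,r_1}$ is dominated by $a_{p,q}$ because $p$ most-resented $q$. Each $q_i$ sees $a_{r_1,q_i}$, which is dominated by $a_{q_i,p}$ because $q_i$ did not resent $r_1$.
\item[(c)] We take $s=r_1$ and verify the three hypotheses of \cref{lem:s gurantee}. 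Hypothesis~1, for every resented $h\in\{r_2,\dots,r_\ell,p\}$, is exactly what \cref{claim:D2} (third and fourth bullets) provides: strictness aside, it gives $X_h\greatereqval{h} b_{q,h}\sumsq A_h$, and the $a_{h,s}$-variant follows because $a_{h,r_1}\lowereqval{h} a_{h,q}$. Hypothesis~2, $X_s\greatereqval{s} a_{s,q}\sumsq D_s$, is the assumed inequality, since $D_{r_1}$ consists of the $b$-bundles toward $p,r_j$. Hypothesis~3 holds for the $q_i$, which hold $a_{q_i,p}$; we need $a_{q_i,q}$ or $a_{q_i,r_1}$, and may have to swap $q_i$ to $a_{q_i,q}$ using that $q_i$ does not resent $q$.
\end{itemize}

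The main obstacle is the mismatch between the strict inequalities demanded in \cref{lem:s gurantee}(1) and the weak ones supplied by \cref{claim:D2}. In the equality case we would instead run update (U1) (\cref{up rule:u1}) to create a second support pair disjoint from $(r_1,q)$ and conclude by \cref{lem:two_support}. The same problem arises with hypothesis~3 for the $q_i$, which hold $a_{q_i,p}$ rather than a bundle toward $q$ or $r_1$. If the swap in (c) fails, we would dump directly and check the envy of each $q_i$ toward $r_1$ and $q$ by hand, using the first bullet of \cref{claim:D2}.
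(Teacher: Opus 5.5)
Your route is the paper's: reassign $q\gets a_{q,r_1}$ with $p$ on $a_{p,q}$ (as in $\hat{\X}$), give $r_1$ a large bundle built on $b_{q,r_1}\sumsq a_{r_1,p}\sumsq\bigsumsq_{i\in[d]}a_{r_1,q_i}$, and invoke \cref{lem:s gurantee} with $t=q$, $s=r_1$. As written, however, two of that lemma's hypotheses are not established.

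\textbf{Hypothesis~2.} It is not ``exactly the assumed inequality.'' In the new allocation, $D_{r_1}(\xp)$ contains $b_{r_j,r_1}$ for every $r_j$, $j\ge 2$, that $q$ still resents. These bundles appear nowhere in the assumption, whose left side carries the $b_{q_i,r_1}$ instead, and those you can only discard. So with your bundle, $X'_{r_1}\greatereqval{r_1}a_{r_1,q}\sumsq D_{r_1}(\xp)$ can fail as soon as $\ell\ge 2$.

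The missing step is to also give $r_1$ the bundles $a_{r_1,r_j}$ for $1<j\le\ell$. This is legitimate for three reasons. These bundles are free, because $r_j$ holds $a_{r_j,q}$. No $r_j$ starts resenting $r_1$, since resent only compares $a_{r_j,r_1}$ with $X_{r_j}$. Finally, \cref{cancp} together with $a_{r_1,r_j}\greatereqval{r_1}b_{r_j,r_1}$ turns the assumed inequality into hypothesis~2. This is exactly how the paper defines $X'_{r_1}$.

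\textbf{Hypothesis~3.} You must not \emph{swap} $q_i$ from $a_{q_i,p}$ to $a_{q_i,q}$, for two reasons. First, $a_{q_i,p}$ was $q_i$'s favourite unit bundle (she was resented by $p$), so releasing it can violate unitarity. Second, $A_p$ would then contain $a_{p,q_i}$ instead of $b_{q_i,p}$, so the fourth bullet of \cref{claim:D2} would no longer certify hypothesis~1 for $h=p$.

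The paper instead \emph{adds} the bundle, setting $X'_{q_i}=a_{q_i,p}\sumsq a_{q_i,q}$. This works because $E_{q_i,q}$ is unallocated once $q$ holds $a_{q,r_1}$. Also, $q$ does not resent $q_i$, since $a_{q,r_1}\greaterval{q}\hat{X}_q\greatereqval{q}a_{q,q_i}$ by \cref{lem:D1}. Your fallback of ``checking envy by hand'' is not an argument.

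\textbf{The strictness worry.} The obstacle you single out is not real. The proof of \cref{lem:s gurantee} only uses the weak form of hypothesis~1: through the weak global dumping property (\cref{pbd}), and through a weak comparison when ruling out envy toward $s$. The paper likewise supplies only weak inequalities from \cref{claim:D2}. For the $a_{h,s}$-variant, the correct justification is $a_{h,r_1}\sumsq(A_h(\xp)\setminus E_{h,r_1})\lowereqval{h}A_h(\hat{\X})$, not $a_{h,r_1}\lowereqval{h}a_{h,q}$.

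Your proposed repair via (U1) would in any case be unavailable. It requires $b_{h,q}\sumsq A_h\greaterval{h}X_h$ strictly. That is excluded by \cref{claim:D2}, because $b_{h,q}\lowereqval{h}b_{q,h}$.
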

\begin{proof}
	Suppose the inequality does not hold, then we construct a complete \efx\ allocation.
	In order to do that, we update allocation $\X$ as follows:
	\begin{align*}
		&\text{Agents } p,q \text{ and } r_1 \text{ drop their previous bundles},\\
		&X'_{r_1} \takes b_{q,r_1} \sumsq a_{r_1,p} \sumsq  \bigsumsq_{i\in [d]} a_{r_1,q_i}  \sumsq  \bigsumsq_{1<i\leq \ell} a_{r_1,r_i} \\
		&X'_{q} \takes a_{q,r_1} \\
		&X'_{p} \takes a_{p,q}\\
		&X'_{q_i} \takes a_{q_i,p} \sumsq a_{q_i,q} \hspace{1cm} :\forall i \in [d].
	\end{align*}
	Denote the resulting allocation by $\xp$.
	We show that conditions of \cref{lem:s gurantee} are satisfied for $\xp$ with $t=q$ and $s=r_1$.
	
	Agent $q$ does not resent $r_1$ since $X'_{q} = a_{q,r_1}$.
	Moreover, no other agent resents $r_1$ since for every
	$u\notin \{q,r_1\}$, $X'_u\greatereqval{u} X_u \greatereqval{u} a_{u,r_1}$, where the last inequality comes from the fact that $r_1$ was non-resented in $\X$.

	Note that because $ a_{r_1,q} \sumsq b_{p,r_1} \sumsq \bigsumsq_{i\in [d]} b_{q_i, r_1}  \lowereqval{r_1}  
	b_{q,r_1} \sumsq a_{r_1,p} \sumsq  \bigsumsq_{i\in [d]} a_{r_1,q_i}$, we get:
	$$ X'_{r_1} = b_{q,r_1} \sumsq a_{r_1,p} \sumsq  \bigsumsq_{i\in [d]} a_{r_1,q_i}  \sumsq  \bigsumsq_{1<i\leq \ell} a_{r_1,r_i}
	\greatereqval{r_1} a_{r_1,q} \sumsq b_{p,r_1} \sumsq \bigsumsq_{1<i\leq \ell} b_{r_1,r_i}  
	\greaterval{r_1} a_{r_1,q} \sumsq D_{r_1}(\xp).$$
	
	Hence, $q$ is non-resented, and $X'_{s} \greaterval{a} a_{s,t} \sumsq D_{s}(\xp)$.
	In addition, every agent's value has weakly improved, 
	so no new resent is created, so allocation remains \pone.
	
	Note that for every resented agent $h$ in $\xp$, we have $h\in \{r_2,\ldots,r_\ell\}$, so by the third item of \cref{claim:D2},
	$$X'_{h} = a_{h,q} \greatereqval{h} b_{q,h} \sumsq A_{h}(\hat{\X})  \greatereqval{h}  b_{q,h} \sumsq A_{h}(\xp),$$ 
	where the last inequality comes from the fact that no incident unit bundle of agent $h$ gets unallocated in the transition of $\hat{\X}$ to $\xp$.
	Moreover, 
	$$X'_{h} \greatereqval{h} A_{h}(\hat{\X}) = a_{h,s} \sumsq (A_{h}(\xp)\setminus E_{h,s}).$$ 
	
	Additionally, for every non-resented agent $u \notin \{s,t\}$, we have $u=q_i$ for some $i\in [d]$, so we have 
	$X'_u\cap E_{u,t}=a_{u,t}$.
	Therefore, all the conditions of \cref{lem:s gurantee} are satisfied, so we get a complete \efx\ allocation.
\end{proof}

\begin{lemma}\label{lem:q1 aq>ap}
	Given an allocation in \stage\ D, if $d\geq2$, then either we can construct a complete \efx\ allocation, or :
	$$b_{p, q_1} \sumsq a_{q_1, q_d} \sumsq a_{q_1,q} \lowereqval{q_1} a_{q_1,p} \sumsq b_{q_d, q_1} \sumsq b_{q,q_1}.$$  
\end{lemma}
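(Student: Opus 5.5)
We argue by contraposition, in the style of \cref{lem:aq>ap}. Suppose
$$b_{p, q_1} \sumsq a_{q_1, q_d} \sumsq a_{q_1,q} \greaterval{q_1} a_{q_1,p} \sumsq b_{q_d, q_1} \sumsq b_{q,q_1}.$$
We then build a \pone\ allocation satisfying \cref{lem:s gurantee} and conclude. The natural candidate comes from the duality described before \cref{claim:D2}: in \cref{lem:aq>ap} the agent $r_1$ (smallest in $q$'s ordering) took the role of $s$ and $q$ that of $t$. Here we let $q_1$ play the role of $s$ and take $t=p$. The resented agents are then the $q_i$ with $2\le i\le d$ together with $q$; we keep them resented by $p$ with their bundles $a_{q_i,p}$ and $a_{q,p}$. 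We free $q_1$ by giving her $b_{p,q_1}\sumsq a_{q_1,q_d}\sumsq a_{q_1,q}\sumsq\bigsumsq_{1<i<d} a_{q_1,q_i}$, and give $p$ the bundle $a_{p,q_1}$. The agents $r_j$ keep $a_{r_j,q}$. The hypothesis $d\ge 2$ ensures that $q_d\ne q_1$ and $q_d\neq q$, so the bundles named above are distinct.

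\textbf{Key steps, in order.}
\begin{enumerate}
\item \emph{Resent structure.} Check that $q_1$ becomes non-resented. Agent $p$ now holds $a_{p,q_1}$, so $p$ does not resent $q_1$. Every other agent $u$ sees in $X'_{q_1}$ only $b$-type or unallocated material relative to her own bundle, since $q_1$ was non-resented and \cref{lem:minimum value} applies. Also, $q_1$ resents nobody, because her value increased by the displayed assumption. Every agent's value weakly increases, so the allocation is \pone.
\item \emph{Condition 2 of \cref{lem:s gurantee}, i.e.\ $X'_{q_1}\greatereqval{q_1} a_{q_1,p}\sumsq D_{q_1}(\xp)$.} Here $D_{q_1}(\xp)=b_{q_d,q_1}\sumsq b_{q,q_1}\sumsq\bigsumsq_{1<i<d} b_{q_i,q_1}$. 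This follows by combining the assumed strict inequality with $a_{q_1,q_i}\greatereqval{q_1} b_{q_i,q_1}$ for $1<i<d$, via \cref{cancp}.
\item \emph{Condition 1 for each resented $h$.} For $h=q_i$ with $i>1$, the first item of \cref{claim:D2} gives $a_{q_i,p}\greatereqval{q_i} b_{p,q_i}\sumsq A_{q_i}(\X)$. Since no incident unit bundle of $h$ became unallocated, this yields both required inequalities with $s=q_1$, exactly as in \cref{lem:aq>ap}. For $h=q$, use the second item of \cref{claim:D2} in the same way.
\item \emph{Condition 3.} Each remaining non-resented agent is some $r_j$, holding $a_{r_j,q}$. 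We must show $X_{r_j}\cap E_{r_j,p}=a_{r_j,p}$ or $X_{r_j}\cap E_{r_j,q_1}=a_{r_j,q_1}$, and this is the main obstacle: as stated it fails, since $r_j$ holds only $a_{r_j,q}$. To fix it we plan either to add $a_{r_j,q_1}$ to each $r_j$, which keeps values monotone and creates no resent toward $r_j$ because $b_{q_1,r_j}\lowereqval{q_1}$ her bundle, or to switch roles and use $t=p$, $s=q_1$ only after first performing Update D so that the $r_j$ become children.
\end{enumerate}

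The first option looks cleaner, and we would try it first. We then need to re-verify that $q_1$ still does not envy $r_j$ and that condition 1 with $E_{h,s}$ is unaffected. The strictness in condition 1 is inherited from \cref{claim:D2} only as a weak inequality, so we would also check that \cref{lem:s gurantee}'s proof works with $\greatereqval{}$. Its use of \cref{lem:dumprule} only requires the weak version from \cref{pbd}.
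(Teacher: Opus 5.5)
Your plan with the first fix, adding $a_{r_j,q_1}$ to each $r_j$, is exactly the paper's proof: the same update, giving $X'_{q_1}=b_{p,q_1}\cup\bigcup_{2\le i\le k}a_{q_1,q_i}$, $X'_p=a_{p,q_1}$ and $X'_{r_j}=a_{r_j,q}\cup a_{r_j,q_1}$, then applying \cref{lem:s gurantee} with $t=p$, $s=q_1$ and verifying its conditions via \cref{claim:D2}. One small correction to your step 4: if $v_p(a_{p,q_i})=v_p(a_{p,q_1})$ for some $i\ge 2$, then $q_i$ becomes non-resented, so not every remaining non-resented agent is an $r_j$; condition 3 still holds for such a $q_i$, since $X'_{q_i}=a_{q_i,p}=a_{q_i,t}$.
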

\begin{proof}
	We assume $b_{p, q_1} \sumsq a_{q_1, q_d} \sumsq a_{q_1,q} \greaterval{q_1} a_{q_1,p} \sumsq b_{q_d, q_1} \sumsq b_{q,q_1}$,
	and we show that a complete \efx\ allocation exists.
	In order to do that, we update $\X$ as such:
	\begin{align*}
		&\text{Agents } p \text{ and } q_1 \text{ drop their previous bundles},\\
		&X'_{q_1} \takes b_{p, q_1} \sumsq \bigsumsq_{2\leq i\leq k} a_{q_1, q_i},\\
		&X'_p \takes a_{p,q_1}, \\
		&X'_{q_i} \takes a_{q_i,p} \hspace{1cm} :\forall 2\leq i \leq d,\\
		&X'_{r_i} \takes a_{r_i,q} \sumsq a_{r_i,q_1} \hspace{1cm} :\forall i \in [\ell].
	\end{align*}
	Denote the resulting allocation by $\xp$.
	We show that conditions of \cref{lem:s gurantee} are satisfied for $\xp$ with $t=p$ and $s=q_1$.
	
	Agent $p$ no longer resents $q_1$, but may still resent agents in $\{q_2,\ldots,q_k\}$. Moreover, no other agent resents $q_1$ since for every
	$2\leq i \leq k$, unit bundle $a_{q_1,q_i}$ was unallocated in $\X$, and $\X$ was unitary.

	Note that because $b_{p, q_1} \sumsq a_{q_1, q_d} \sumsq a_{q_1,q} \greatereqval{q_1} 
	a_{q_1,p} \sumsq b_{q_d, q_1} \sumsq b_{q,q_1}$, we get:
	$$ X'_{q_1} = b_{p, q_1} \sumsq \bigsumsq_{2\leq i\leq k} a_{q_1, q_i}  \greatereqval{q_1} 
	a_{q_1,p} \sumsq \bigsumsq_{2\leq i\leq k} b_{q_1, q_i}  \greaterval{q_1} a_{q_1,p} \sumsq D_{q_1}(\xp).$$

	Hence, $p$ is still non-resented, and $X'_{s} \greaterval{s} a_{s,t} \sumsq D_{s}(\xp)$.
	In addition, the allocation remains an orientation and unitary. By the previous paragraph, only agent $p$ may resent agents, and every agent resented by $p$ in $\xp$ has bundle $a_{h,p}$. Hence, the allocation is \pzero. Since its resent graph has height at most one, the allocation is \pone.
	
	Note that for every resented agent $h$ in $\xp$, we have $h\in \{q_2,\ldots,q_k\}$, so by the first two items of \cref{claim:D2},
	$$X'_{h} = a_{h,p} \greatereqval{h}  b_{p,h} \sumsq A_{h}(\X) \greatereqval{h}  b_{p,h} \sumsq A_{h}(\xp),$$ 
	where the last inequality comes from the fact that no incident unit bundle of agent $h$ gets unallocated in the transition of $\X$ to $\xp$.
	Moreover, for every such $h$, no incident unit bundle of $h$ outside $E_{h,s}$ changes in the transition from $\X$ to $\xp$. Therefore,
	$$A_h(\xp)\setminus E_{h,s}=A_h(\X)\setminus E_{h,s},$$
	and hence
	$$X'_h\greatereqval{h}  A_{h}(\X) = a_{h,s}\sumsq(A_h(\xp)\setminus E_{h,s}).$$
	
	Additionally, for every non-resented agent $u \notin \{s,t\}$, either $u=r_i$ for some $i\in [\ell]$, in which case $X'_u\cap E_{u,s}=a_{u,s}$, or $u=q_i$ for some $2\leq i\leq k$, in which case $X'_u\cap E_{u,t}=a_{u,t}$.
	Therefore, all the conditions of \cref{lem:s gurantee} are satisfied, so we get a complete \efx\ allocation.
\end{proof}

    Next, we show if the number of agents is more than $4$, then given an allocation in \stage\ D, we can construct a complete \efx\ allocation.

\begin{lemma}\label{lem: stageD>4}
	If an allocation $\X$ is in \stage\ D, and if the number of agents is greater than $4$, we can construct a complete \efx\ allocation.
\end{lemma}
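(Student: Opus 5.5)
Since $n = d + \ell + 2 > 4$, we have $d+\ell \geq 3$. By the reductions already established (\cref{lem:lone_tree} and the duality remark after \cref{lem:D1}), we may assume $d \geq 1$ and $\ell \geq 1$. Hence at least one of $d \geq 2$ or $\ell \geq 2$ holds. The symmetry between $\X$ and $\hat{\X}$ exchanges $p \leftrightarrow q$ and $(q_1,\dots,q_d) \leftrightarrow (r_1,\dots,r_\ell)$. I will therefore treat the case $d \geq 2$ in detail and obtain the case $\ell \geq 2$ by running the same argument on $\hat{\X}$.

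In the case $d\ge 2$, we are either already done, or all of the following hold simultaneously:
\begin{itemize}
\item the four inequalities of \cref{claim:D2};
\item the inequality of \cref{lem:aq>ap} for $r_1$;
\item the inequality of \cref{lem:q1 aq>ap} for $q_1$ and $q_d$.
\end{itemize}
The plan is to combine these through \cref{cancp} and reach a contradiction. Concretely, I would add the \cref{lem:q1 aq>ap} inequality for $q_1$ to the ``$a_{q_i,p}$ dominates'' inequalities of \cref{claim:D2}. Applied to $q_d$ (with $d\geq 2$, so $q_d\ne q_1$), these give
\[
a_{q_d,p}\greatereqval{q_d} b_{p,q_d}\sumsq a_{q_d,q_1}\sumsq a_{q_d,q}\sumsq\cdots,
\]
and similarly for $q$ via the second item. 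The goal is then a configuration in which $q_1$ and $q_d$ can swap roles. An alternative, which I would try first, is direct: exhibit a \pone\ allocation with two disjoint support pairs and invoke \cref{lem:two_support}. A natural candidate keeps $q_d$ with $a_{q_d,p}$, gives $p$ the bundle $a_{p,q_1}$, and gives $q_1$ the bundle $b_{p,q_1}\sumsq a_{q_1,q_d}\sumsq a_{q_1,q}$. Since \cref{lem:q1 aq>ap} fails in the other direction, $q_1$ weakly prefers her current $a_{q_1,p}\sumsq b_{q_d,q_1}\sumsq b_{q,q_1}$. Meanwhile $r_1$, which holds $a_{r_1,q}$, pairs with $q$. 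The pairs would be $(q_d,p)$ or $(q_1,p)$ together with $(r_1,q)$.

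If no such pair of support pairs materializes, the fallback is to build a tailored complete allocation, in the style of Case 2 of \cref{lem:lone_tree}. In that allocation:
\begin{itemize}
\item $p$ gets $a_{p,q_1}$ together with the $b$-shares of $q_2,\dots,q_d$ and $q$ toward $p$;
\item $q_1$ absorbs the $E_{q_1,\cdot}$ edges;
\item each $q_i$ ($i\ge2$) and $q$ keep $a_{\cdot,p}$;
\item each $r_j$ keeps $a_{r_j,q}$ and takes $a_{r_j,\cdot}$ toward everyone else, dumping the $b$-parts onto $q_1$ or $p$.
\end{itemize}
EFX toward $p$ and $q_1$ would then be checked using \cref{claim:D2} for the $q_i$, the \cref{lem:q1 aq>ap} inequality for $q_1$, and the \cref{lem:aq>ap} inequality for $r_1$. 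For $r_j$ with $j\ge 2$, the third item of \cref{claim:D2} applies.

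The main obstacle is the bookkeeping for $r_1$ and $q_1$, which are exactly the agents excluded from the $i>1$ and $j>1$ items of \cref{claim:D2}. The condition $n>4$ is what guarantees that an extra agent ($q_d$ with $d\ge2$, or $r_\ell$ with $\ell\ge2$) exists to receive the dumped $b$-bundles in $E_{q_1,r_1}$ and $E_{q_1,q}$ without creating envy. Choosing which agent absorbs $E_{q_1,r_1}$ is where I expect the case split. If it goes to $q_d$, the inequality of \cref{lem:q1 aq>ap} should handle $q_1$'s view. If it goes to $p$, then \cref{lem:aq>ap} handles $r_1$'s view.
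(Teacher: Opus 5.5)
You have collected the right inequalities for $q_1$, $r_1$ and the remaining $q_i,r_j$, but none of your three routes turns them into an EFX allocation. The contradiction plan cannot work. The items of \cref{claim:D2}, \cref{lem:aq>ap} and \cref{lem:q1 aq>ap} are jointly satisfiable, e.g.\ when each $q_i$ values $E_{q_i,p}$, each $r_j$ values $E_{r_j,q}$, and $p,q$ value $E_{p,q}$ far above everything else. So an allocation must actually be built. The two-support-pair route has already been exhausted by the preceding lemmas. By \cref{lem:D1}, $q$ is resented by $p$ in $\X$ and resents every $r_j$ in $\hat{\X}$, so $(r_1,q)$ is a support pair in neither. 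With $p$ holding $a_{p,q_1}$, she in general still resents $q_d$, so $(q_d,p)$ fails. Your bundle for $q_1$ is exactly the update of \cref{lem:q1 aq>ap}, which $q_1$ rejects in the case at hand. The fallback modeled on Case~2 of \cref{lem:lone_tree} lacks the inequality that made that case work. There, $q_1$ did not envy $p$ (after $p$ took $a_{p,q_1}$) only because of the ``not Case~1'' condition $a_{q_1,p}\sumsq b_{q_k,q_1}\lowereqval{q_1} b_{q_1,p}\sumsq a_{q_k,q_1}$. Nothing you have replaces it, and \cref{lem:q1 aq>ap} points the other way. Concretely, your inequalities do not exclude the following: $E_{p,q_1}=\{g_1,g_2\}$ with $v_{q_1}(g_1)=10$, $v_{q_1}(g_2)=1$, $a_{p,q_1}=\{g_1\}$, and all other goods incident to $q_1$ negligible for her. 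Then $q_1$, holding $b_{p,q_1}$ plus negligible goods, strongly envies $p$ as soon as $p$ holds any good besides $a_{p,q_1}$, for instance the shares $b_{q_i,p}$ you assign to $p$.

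The missing idea is a different reorganization of the tree. Assume $d\ge 2$; the only use of $n>4$ is to guarantee $q_d\neq q_1$. Let $p$ drop her bundle and take $a_{p,q_d}$, and let $q_d$ take $\textsc{Choose}(q_d)$. Then $q_1,\dots,q_{d-1}$ become non-resented, and the resent graph is the single path $v\to p\to q$ with $v:=q_d$. This allocation is basic but not height-one, so the EFX checks are done directly rather than through the general dumping lemma. Agent $v$ becomes the dump. She receives $b_{p,v}$, $b_{q,p}$, $a_{v,q}$, and, for every other non-resented $u$, the bundles $b_{u,p}$, $b_{u,q}$ and $a_{v,u}$, with $b_{v,u}$ going to $u$. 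The bundle $b_{v,q}$ goes to $q_1$, and any two other non-resented agents split their edges. The checks are then as follows.
\begin{itemize}
\item $p$ does not envy $v$ because the allocation is not in Main Case~B, i.e.\ $X_p\sumsq A_p(\X)\lowerval{p} a_{p,q_d}$. Your proposal never invokes this inequality.
\item $q$ does not envy $v$ or $q_1$, by the second item of \cref{claim:D2}.
\item $q_1$ does not envy $v$, by \cref{lem:q1 aq>ap}; this is where its direction is the useful one.
\item $r_1$ does not envy $v$, by \cref{lem:aq>ap}.
\item The remaining $q_i$ and $r_j$ are handled by the first and third items of \cref{claim:D2}.
\end{itemize}
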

\begin{proof}
	Note that $4< n = d+\ell +2$, so $d+\ell >2$. Since there is a duality between the allocation $\X$, and allocation $\hat{\X}$
	obtained after Update D, without loss of generality, we can assume $d\geq 2$, so $q_d \ne q_1$.
	We update the allocation $\X$ as follows:
	\begin{align*}
		&\text{Agents } p \text{ and } q_d \text{ drop their previous bundles},\\
		&X'_{p} \takes a_{p,q_d}\\
		&X'_{q_d} \takes \textsc{Choose}(q_d).
	\end{align*}
	Denote the resulting allocation by $\xp$. Note that if $q_d$ resents an agent $u$ after this update, then $a_{q_d,u}$ was not available when \textsc{Choose}$(q_d)$ was executed. Since only agent $p$ has a unit bundle from $q_d$, we must have $u=p$. Hence, $q_d$ only resents $p$.
	Moreover $q_1,\dots, q_{d-1}$ are now non resented since $q_i$'s were sorted in an increasing order of $a_{p,q_i}$.
	To distinguish $q_d$ from the other agents, we rename her as $v$.
	
	Hence, the only resent edges in the resent graph are $v\to p \to q$.
	We also denote non resented vertices $\{q_1,\dots,q_{d-1}\}$ by $R_p$, and 
	non resented vertices $\{r_1,\dots,r_{\ell}\}$ by $R_q$. 
	Note that here our allocation is \pzero, but not \pone. Hence, in this case, we are not following our global properties before the dumping 
	phase.
	
	\paragraph{Dumping Phase.}
	We allocate every remaining unallocated unit bundle by the following rules.
	
	\begin{enumerate}[label=(\roman*), leftmargin=2.2em]
		\item \textbf{Among $v$, $p$ and $q$.}
		We have \dumpp{a_{p,v}}{p} and we assign \dumpp{b_{p,v}}{v}.
		We have \dumpp{a_{q,p}}{q} and we assign \dumpp{b_{q,p}}{v}.
		We assign \dumpp{a_{v,q}}{v} and we assign \dumpp{b_{v,q}}{q_1}.
		(Note that since $d\geq 2$, we have $v=q_d \ne q_1$.)
		
		\item \textbf{Between non-resented $u\ne v$ and $p$,$q$.}
		Let $u\in R_p \cup R_q$.
		We let \dumpp{a_{u,p}}{u}, and we assign \dumpp{b_{u,p}}{v}.
		We let \dumpp{a_{u,q}}{u} and we assign \dumpp{b_{u,q}}{v}. 
		
		\item \textbf{Between $v$ and non-resented $u \in R_p \cup R_q$.}
		We assign \dumpp{a_{v,u}}{v} and \dumpp{b_{v,u}}{u}.
		(Note that this is possible since for every $u\in R_p$, we had $X_u=a_{u,p}$, and for every $u\in R_q$, we had $X_u=a_{u,q}$.)
		
		\item \textbf{Between two non-resented $u, u' \in R_p \cup R_q$.}
		If at least one of the unit bundles is already allocated to one of the roots, allocate the other unit bundle to the other root. 		
		Otherwise, allocate one unit bundle in $E_{u,u'}$ to $u$ and the other to $u'$, arbitrarily.				
	\end{enumerate}
	
	\paragraph{Correctness of Dumping Phase.}
	Denote the allocation right before dumping phase  by $\xp$, and denote the allocation at the end by $\xz$.
	Next, we prove that the final allocation, $\xz$, is \efx.
	
	First, note that no agent loses any good that they previously possessed, so for every agent $u$, we have $X''_u \greatereqval{u} X'_u$.
	Since $\xp$ is \pzero, \cref{obs:basicEFX} implies that $\xp$ is \efx. Agents $p$ and $q$ do not get any new good, so since they were not strongly envied, they are not still strongly envied.

	\textbf{Agent $p$ does not strongly envy anyone.}
	We have that $q$ is not strongly envied by anyone, so we show that $p$ does not envy any other agent.
	For any $u \notin \{p,q,v\}$, we have $X''_{u} \cap E_p = a_{u,p}\lowereqval{p} a_{p,q_d} = X''_p$, which comes from the fact that
	after $a_{p,q}$, unit bundle $a_{p,q_d}$ is agent $p$'s favorite unit bundle.
	Next, we show that $p$ does not envy agent $v$.
	\begin{align*}
		X''_v \cap E_{p} &= (X''_v \cap E_{p,v}) \sumsq (X''_v \cap E_{p,q}) \sumsq \bigsumsq_{u\in R_p \cup R_q} (X''_v \cap E_{p,u}) \\
		&= b_{p,v} \sumsq b_{q,p} \sumsq \bigsumsq_{u \in R_p \cup R_q} b_{u,p} \lowereqval{p} X_p \sumsq A_p(\X)
		\lowereqval{p} a_{p,q_d} =X''_p,
	\end{align*}
	where the last inequality follows since $\X$ is not in \stage\ B and $p\to q_d$, so $X_p\sumsq A_p(\X)\lowerval{p}a_{p,q_d}$.
	
	\textbf{Agent $q$ does not strongly envy anyone.}
	We have that $p$ is not strongly envied by anyone, so we show that $q$ does not envy any other agent.
	For any $u \notin \{p,q,v,q_1\}$, we have $X''_{u} \cap E_q = a_{u,q}\lowereqval{q} a_{q,p} = X''_q$, which comes from the fact that
	$a_{q,p}$ is agent $q$'s favorite unit bundle.
	Next, we show that $q$ does not envy agents $v$ and $q_1$.
	\begin{align*}
		X''_{v} \cap E_{q} &= (X''_{v} \cap E_{q,v}) \sumsq (X''_{v} \cap E_{q,p}) \sumsq 
		\bigsumsq_{u'\in R_p \cup R_q} (X''_{v} \cap E_{q,u'}) \\
		&= a_{v,q} \sumsq b_{q,p}  \sumsq \bigsumsq_{u'\in (R_p \cup R_q)\setminus \{q_1\}} b_{u',q} 
		\lowereqval{q} b_{q,p} \sumsq A_q(\X) 
		\lowereqval{q} a_{q,p}=X''_q, \\
		X''_{q_1} \cap E_{q} &= (X''_{q_1} \cap E_{q,v}) \sumsq (X''_{q_1} \cap E_{q,p}) \sumsq 
		\bigsumsq_{u'\in R_p \cup R_q} (X''_{q_1} \cap E_{q,u'}) \\
		&= b_{v,q} \sumsq \emptyset \sumsq a_{q_1,q} \sumsq \bigsumsq_{u'\in (R_p \cup R_q)\setminus \{q_1\}}\emptyset 
		\lowereqval{q} A_q(\X) 
		\lowereqval{q} a_{q,p}=X''_q,
	\end{align*}
	where  $b_{q,p} \sumsq A_q(\X) \lowereqval{q} a_{q,p}$ comes from \cref{claim:D2}.

	\textbf{A non-resented $u$ does not envy another non-resented agent $u'$.}
	First, we show that $v$ does not envy any other non-resented $u'$. 
	For every $u \in R_p \cup R_q$, we have \dumpp{a_{v,u}}{v} and \dumpp{b_{v,u}}{u}. 
	Also, \dumpp{a_{p,v}}{p}, \dumpp{b_{p,v}}{v}, \dumpp{a_{v,q}}{v}, and \dumpp{b_{v,q}}{q_1}. Hence,
	\begin{align*}
		X''_{u'} \cap E_v 
		&=  (X''_{u'} \cap E_{v,p}) \sumsq (X''_{u'} \cap E_{v,q}) \sumsq \bigsumsq_{u''\in R_p \cup R_q} (X''_{u'} \cap E_{v,u''}) \\
		&\subseteq \emptyset \sumsq b_{v,q} \sumsq \bigsumsq_{u''\in R_p \cup R_q} b_{v,u''}  \lowereqval{v} 
		a_{v,q} \sumsq \bigsumsq_{u''\in R_p \cup R_q} a_{v,u''}  \\ 
		&=  (X''_v \cap E_{v,q}) \sumsq \bigsumsq_{u''\in R_p \cup R_q} (X''_v \cap E_{v,u''}) \lowereqval{v} X''_v.
	\end{align*}
	Therefore, $v$ does not envy any other agent. 
	Next, assume $u \in R_p \cup R_q$. 
	If $u' \in R_p \cup R_q\setminus\{u\}$, then $X''_{u'}\cap E_u$ is a unit bundle that was unallocated in $\X$, so $u$ does not envy that. 
	
	All that remains to show is that no non-resented agent $u$ envies $v = q_d$. Note that:
	$$X''_v \cap E_u = a_{v,u} \sumsq b_{u,p} \sumsq b_{u,q}.$$
	
	Agent $q_1$ does not envy $q_d = v$ because by \cref{lem:q1 aq>ap}, we get:
	$$b_{p, q_1} \sumsq a_{q_1, q_d} \sumsq b_{q,q_1} \lowereqval{q_1}
	b_{p, q_1} \sumsq a_{q_1, q_d} \sumsq a_{q_1,q} \lowereqval{q_1} a_{q_1,p} \sumsq b_{q_d, q_1} \sumsq b_{q,q_1}
	\lowereqval{q_1} a_{q_1,p} \sumsq b_{q_d, q_1} \sumsq a_{q_1, q}.$$  
	Hence,
	$$X''_{v} \lowereqval{q_1} X''_{v}\cap E_{q_1} \lowereqval{q_1} b_{p, q_1} \sumsq a_{q_1, q_d} \sumsq b_{q,q_1} 
	\lowereqval{q_1} a_{q_1,p} \sumsq b_{q_d, q_1} \sumsq a_{q_1, q}\lowereqval{q_1} X''_{q_1}.$$
	
	For every $1<i\le d$, agent $q_i$ does not envy $v=q_d$ because using \cref{claim:D2}, we have
	$$X''_{v} \lowereqval{q_i} X''_{v}\cap E_{q_i} \lowereqval{q_i} a_{v,q_i} \sumsq b_{q_i,p} \sumsq b_{q_i,q} \lowereqval{q_i} b_{p,q_i} \sumsq \bigsumsq_{u\neq q_i,p} a_{q_i,u} \lowereqval{q_i} a_{q_i,p} = X''_{q_i}.$$
	
	Agent $r_1$ does not envy $q_d$ because by \cref{lem:aq>ap}, we can assume:
	$$b_{q,r_1} \sumsq a_{r_1,p} \sumsq  \bigsumsq_{i\in [d]} a_{r_1,q_i}   \lowerval{r_1} 
	a_{r_1,q} \sumsq b_{p,r_1} \sumsq \bigsumsq_{i\in [d]} b_{q_i, r_1}.$$  
	Therefore:
	$$b_{p, r_1} \sumsq a_{r_1, q_d} \sumsq b_{q,r_1} \lowereqval{r_1}
	b_{q,r_1} \sumsq a_{r_1,p} \sumsq a_{r_1,q_d} \lowerval{r_1}  a_{r_1,q} \sumsq b_{p,r_1} \sumsq  b_{q_d, r_1}
	\lowereqval{r_1} a_{r_1,p} \sumsq b_{q_d, r_1} \sumsq a_{r_1, q}.$$  
	Hence,
	$$X''_{v} \lowereqval{r_1}X''_{v}\cap E_{r_1} \lowereqval{r_1}  b_{p, r_1} \sumsq a_{r_1, q_d} \sumsq b_{q,r_1} 
	\lowereqval{r_1} a_{r_1,p} \sumsq b_{q_d, r_1} \sumsq a_{r_1, q}\lowereqval{r_1} X''_{r_1}.$$

	For every $1<i\le\ell$, agent $r_i$ does not envy $q_d$ because using \Cref{claim:D2}, we have
	$$X''_{v} \lowereqval{r_i} X''_{v}\cap E_{r_i} \lowereqval{r_i} b_{p, r_i} \sumsq a_{r_i, q_d} \sumsq b_{q,r_i} 
	\lowereqval{r_i} b_{q,r_i} \sumsq \bigsumsq_{u\neq r_i,q} a_{r_i,u} \lowereqval{r_i} a_{r_i,q} = X''_{r_i}.$$
	This means we have reached a complete \efx\ allocation.	
\end{proof}

All that remains to show is the case where $n \leq 4 $.
Throughout the algorithm, this case is the only place where we need to let go of unit bundles.
For this, we define a \emph{Minimal Greater} set.

\begin{definition}
	Given an agent $i$ and two subsets of goods $S$ and $T$ with $S\greaterval{i} T$, we define the bundle 
	$\textsc{MinimalGreater}_i (S,T)$ as follows: As long as there exists $g \in S$ such that $S\setminus \{g\} \greaterval{i} T$, remove $g$ from $S$. Define 
    $\textsc{MinimalGreater}_i (S,T)$ as the final $S$.
\end{definition}

\begin{lemma}\label{lem:mg}
	Given an agent $i$ and two subsets of goods $S$ and $T$ with $S\greaterval{i} T$, agent $i$ does not strongly envy
	$\textsc{MinimalGreater}_i (S,T)$ with respect to bundle $T$. Also, $\textsc{MinimalGreater}_i (S,T)\greaterval{i} T$.
\end{lemma}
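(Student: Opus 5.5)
Both claims follow from one invariant of the removal loop in the definition of $\textsc{MinimalGreater}_i(S,T)$, plus the loop's stopping condition. Let $S_0 = S$, and let $S_1, S_2, \dots$ be the successive sets produced by the loop. The invariant is that every current set $S_r$ satisfies $S_r \greaterval{i} T$.

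\textbf{Termination and the second claim.} The loop terminates, since each iteration removes one good from a finite set, so there are at most $|S|$ iterations. The invariant holds initially because $S \greaterval{i} T$ is assumed. It is preserved because a good $g$ is removed only when $S_r \setminus \{g\} \greaterval{i} T$ holds, and the new set is exactly $S_r \setminus \{g\}$. Hence the final set $S^\ast = \textsc{MinimalGreater}_i(S,T)$ satisfies $S^\ast \greaterval{i} T$. This is the second claim.

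\textbf{The first claim.} Since the loop stopped at $S^\ast$, there is no $g \in S^\ast$ with $S^\ast \setminus \{g\} \greaterval{i} T$. Therefore $v_i(S^\ast \setminus g) \le v_i(T)$ for every $g \in S^\ast$. This is exactly the definition of agent $i$, holding $T$, not strongly envying $S^\ast$.

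Neither step needs cancelability or monotonicity, and there is no real obstacle. The only point to check is that the set tested in the stopping condition is the current set, not the original $S$; this holds by the way the procedure is written.
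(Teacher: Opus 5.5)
Your proof is correct and follows the same argument as the paper: the stopping condition gives $v_i(S^\ast \setminus g) \le v_i(T)$ for every $g \in S^\ast$, and the removal rule preserves $S_r \greaterval{i} T$ at every step. The explicit termination remark is a small addition that the paper leaves implicit.
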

\begin{proof}
	By the definition of $Y=\textsc{MinimalGreater}_i (S,T)$, for every $g\in Y$, we have $Y\setminus \{g\} \lowereqval{i} T$.
	Moreover, in the process of computing $\textsc{MinimalGreater}_i (S,T)$,
	whenever we remove $g$ from $S$, we have that $S\setminus \{g\} \greaterval{i} T$, so at the end we should have $Y\greaterval{i} T$.
\end{proof}

\begin{lemma}\label{lem: stageD=4}
	If $n\leq 4$, and allocation $\X$ is in \stage\ D, then we can construct a complete \efx\ allocation.
\end{lemma}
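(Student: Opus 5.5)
We reduce to the small configurations left over by the preceding lemmas in \stageD. Write $n=d+\ell+2$ with $d=k-1$. If $\ell=0$ we apply \cref{lem:lone_tree} to $\X$, and if $d=0$ we apply it to $\hat{\X}$. So we may assume $d,\ell\ge 1$. With $n\le 4$ this leaves only $d=\ell=1$, i.e.\ exactly four agents $p,q_1,q,r_1$. In $\X$ we have $p\to q_1$ and $p\to q$ with $q$ most-resented, and $X_{r_1}=a_{r_1,q}$ by \cref{lem:D1}. Dually, in $\hat{\X}$ agent $q$ resents $r_1$ and $p$. We may also assume the conclusions of \cref{claim:D2} and \cref{lem:aq>ap}, since otherwise we are already done. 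The inequality of \cref{lem:q1 aq>ap} is not available, since $d=1$; this is exactly why the $n>4$ argument fails.

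Our plan is to mimic the proof of \cref{lem: stageD>4}, but with $q_1$ playing the role of $v$. That is, we perform $p$ drops, $X_p\takes a_{p,q_1}$ and $X_{q_1}\takes\textsc{Choose}(q_1)$, giving the path $q_1\to p\to q$. We then run the same dumping rules (i)--(iii) on the four agents. Every verification in that proof goes through verbatim except two:
\begin{itemize}
\item the good $b_{v,q}$, which there was dumped on a distinct agent $q_1$;
\item the check that $r_1$ does not envy $v$, which used \cref{lem:aq>ap}.
\end{itemize}
For $r_1$ the inequality of \cref{lem:aq>ap} still applies with $d=1$, so that check survives. The only genuinely new issue is where to put the bundle of $E_{q_1,q}$ not taken by $q_1$.

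For that leftover, first try $r_1$ as the recipient. Agent $q$ must not envy $r_1$, which requires $b_{q_1,q}\sumsq a_{r_1,q}\lowereqval{q} a_{q,p}$; this follows from the second item of \cref{claim:D2}, since $b_{p,q}\sumsq A_q(\X)$ contains $a_{q,q_1}$ and $b_{r_1,q}$. The check for $q_1$ envying $r_1$ is trivial. If some inequality fails instead, we switch to the other partition of $E_{q_1,q}$ or to the dual orientation through $\hat{\X}$, exploiting the symmetry between $(p,q_1)$ and $(q,r_1)$.

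The main obstacle is the residual case in which neither $q_1$ nor $r_1$ can absorb the leftover without being strongly envied. This is where unit bundles must be broken, via \textsc{MinimalGreater}. Suppose agent $q$ (or $p$) would envy the recipient $u$. We then give $u$ only $Y=\textsc{MinimalGreater}_u(S,X_u)$ for the appropriate $S$, and return the remaining goods of $E_{q_1,q}$ to $q$. By \cref{lem:mg}, $u$ does not strongly envy anything in the original bundle, and $u$'s value strictly increases. Agent $q$ only gains goods, so by cancelability (\cref{cancp}) her earlier non-envy inequalities persist. What remains is a finite case check, using the \cref{claim:D2} inequalities, that no third agent strongly envies $q$ after she absorbs these extra goods of $E_{q_1,q}$. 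We expect this to be the hardest part of the argument.
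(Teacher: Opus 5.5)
Your reduction to the four agents $p,q_1,q,r_1$ with $d=\ell=1$ is correct. The proposal still has a genuine gap. The configuration that actually needs an argument is left as ``a finite case check,'' and the step you do carry out is wrong.

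Your first attempt needs $b_{q_1,q}\cup a_{r_1,q}\preceq_q a_{q,p}$, and this does not follow from the second item of \cref{claim:D2}. That item bounds $b_{p,q}\cup a_{q,q_1}\cup b_{r_1,q}$. The piece of $E_{q,r_1}$ it controls is $b_{r_1,q}=A_q(\X)\cap E_{q,r_1}$, whereas $r_1$ holds $a_{r_1,q}$. Nothing relates $v_q(a_{r_1,q})$ to $v_q(b_{r_1,q})$ or to $v_q(b_{p,q})$. The only available bound is $a_{r_1,q}\preceq_q a_{q,r_1}\preceq_q a_{q,p}$, and nothing in the Main Case D hypotheses rules out equality. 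In that case $r_1$ cannot receive any good of $E_{q_1,q}$ that $q$ values positively, under either partition of $E_{q_1,q}$.

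Passing to $\hat{\X}$ does not help. For $d=\ell=1$ the situation is symmetric: the leftover of $E_{r_1,p}$ would have to go to $q_1$, who holds $a_{q_1,p}$. The needed inequality $a_{q_1,p}\cup b_{r_1,p}\preceq_p a_{p,q}$ is equally unsupported, since \cref{claim:D2} again only controls $b_{q_1,p}$. So the ``residual case'' is not residual; it is the whole content of the lemma.

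For that case your sketch does not work as stated. \cref{lem:mg} gives two guarantees about $\textsc{MinimalGreater}_i(S,T)$: the agent $i$ whose valuation defines it, while holding $T$, does not strongly envy the output, and she values it strictly above $T$. To protect a recipient from $q$'s envy you therefore need $q$'s valuation, i.e.\ $Y=\textsc{MinimalGreater}_q(E_{q_1,q},a_{q,p})$ as in the paper. Your $\textsc{MinimalGreater}_u(S,X_u)$ uses the recipient's valuation and says nothing about $q$. For $u=r_1$, who values $E_{q_1,q}$ at zero, it is not even defined on such $S$.

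Nor does the repair fit inside your path $q_1\to p\to q$. The paper first splits on whether $E_{q_1,q}\succ_q a_{q,p}$.
\begin{itemize}
\item If it does, what happens depends on how $q_1$ compares $Y$ with $a_{q_1,p}\cup(E_{q_1,q}\setminus Y)\cup a_{q_1,r_1}$.
\begin{itemize}
\item In one subcase $q$ receives $b_{p,q}\cup Y\cup b_{r_1,q}$, $p$ takes $a_{p,q}$, and $q_1$ takes $a_{q_1,p}\cup(E_{q_1,q}\setminus Y)\cup a_{q_1,r_1}$. This re-partitions $E_{p,q}$ and dissolves the tree.
\item In the other, $q_1$ receives only $Y$ and $q$ keeps $a_{q,p}$. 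Agent $p$ absorbs $E_{q_1,q}\setminus Y$ together with several $b$-bundles, and $r_1$'s non-envy toward $p$ comes from \cref{lem:aq>ap}.
\end{itemize}
\item If instead $a_{q,p}\succeq_q E_{q_1,q}$, all of $E_{q_1,q}$ goes to $q_1$, to $q$, or to $p$, according to two further comparisons.
\end{itemize}
Without these explicit allocations, or a worked-out substitute with its envy checks, the lemma is not proved.
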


\begin{proof}
	Note that by $d \geq 1$, $\ell \geq 1$, and $n\leq 4$, we get $n=4$. 
	Hence, we have four agents $p,q,v,r$ such that agent $p$ resents $q$ and $v$ in $\X$. Also, $X_r= a_{r,q}$.
	Here by $S\greaterval{i} \max(T,P)$, we mean both $S\greaterval{i} T$, and $S\greaterval{i} P$.
	
	\textbf{Case 1:} 
	$\mathbf{a_{q,p} \lowerval{q} E_{v,q}}$. 
	Then, define $Y= \textsc{MinimalGreater}_q(E_{v,q},a_{q,p})$. By \cref{lem:mg}, we get $Y \greaterval{q} a_{q,p}$.
	
	\textbf{Subcase 1.1:} $\mathbf{Y \lowereqval{v} a_{v,p} \sumsq (E_{v,q} \setminus Y) \sumsq a_{v,r}}$. In this case, set:
	\begin{align*}
		&X'_r \takes a_{r,q} \sumsq a_{r,p}\sumsq b_{v,r}\\
		&X'_p \takes a_{p,q} \sumsq b_{v,p} \sumsq b_{r,p} \\
		&X'_q \takes b_{p,q} \sumsq Y \sumsq b_{r,q} \\
		&X'_v \takes a_{v,p} \sumsq (E_{v,q} \setminus Y) \sumsq a_{v,r}.
	\end{align*}
    Note that since $Y\greatereqval{q} a_{q,p} \greatereqval{q} a_{q,v} \greatereqval{q} b_{q,v}$,
    and since $Y\sqcup (E_{q,v}\setminus Y)= a_{q,v} \sqcup b_{q,v}$,
    and the valuation function is cancelable, by \cref{cancelablevalprops},
    we get that $a_{q,v} \greatereqval{q} E_{q,v} \setminus Y $.
    Hence, $Y \greaterval{q} E_{q,v} \setminus Y$.
	
	In this case, all agents become non-resented.
	
	\noindent \textbf{Agent $r$ does not envy anyone.}
	This is because $v,p,q$ each hold a single unit bundle from $r$.
	Because $a_{r,q}$ was her most valuable unit bundle, she does not resent anyone.
	
	\noindent \textbf{Agent $q$ does not envy anyone.}
	Agent $q$ does not envy $p$ because by the definition of $Y$,
	$$ X'_q \greatereqval{q} Y \greatereqval{q} a_{q,p} = X'_p \cap E_q.$$
	Agent $q$ does not envy $v$ because
	$$ X'_q \greaterval{q} Y \greaterval{q} E_{q,v} \setminus Y = X'_v \cap E_q.$$
	Agent $q$ does not envy $r$ because
	$$ X'_q \greatereqval{q} a_{q,p} \greatereqval{q}
    a_{q,r} \greatereqval{q} a_{r,q}=X'_r \cap E_q. $$

	\noindent \textbf{Agent $p$ does not envy anyone.}
	This is once again because $v,q,r$ all hold a single unit bundle from $p$.
	Because $a_{p,q}$ was her most valuable unit bundle, she does not resent anyone.
	
	\noindent \textbf{Agent $v$ does not envy anyone.}
	We know that $v$ has $a_{v,p}$.
	Because this was her most valuable unit bundle,
	$a_{v,p} \greatereqval{v} b_{v,p}$ and $a_{v,p} \greatereqval{v} a_{r,v}$
	which means she does not envy $p$ and $r$.
	Finally, by the definition of our subcase, $X'_q\cap E_v=Y\lowereqval{v}X'_v$
	which means $v$ does not envy $q$.
	
	\textbf{Subcase 1.2:} $\mathbf{Y \greaterval{v} a_{v,p} \sumsq (E_{v,q} \setminus Y) \sumsq a_{v,r}}$. In this case, set:
	\begin{align*}
		&X'_r \takes a_{r,q} \sumsq b_{p,r}\sumsq a_{r,v} \sumsq b_{p,v}\\
		&X'_p \takes b_{q,p} \sumsq a_{p,v} \sumsq a_{p,r}\sumsq b_{r,q} \sumsq b_{r,v} \sumsq   (E_{q,v} \setminus Y) \\
		&X'_q \takes a_{q,p} \\
		&X'_v \takes Y.
	\end{align*}
	
	\noindent \textbf{No one strongly envies agent $v$.} We have that $X'_v = Y \subset E_{v,q}$, so agents $p$ and $r$ do not envy agent
	$v$. Moreover, by \cref{lem:mg}, we get agent $q$ does not strongly envy
	$Y=\textsc{MinimalGreater}_q (E_{v,q},a_{q,p})$ with respect to bundle $X'_q=a_{q,p}$.
	
	\noindent \textbf{No one strongly envies agent $q$.} We have that $X'_q = a_{q,p} \subseteq E_{q,p}$, 
	so agents $v$ and $r$ do not envy agent $q$. Moreover, agent $p$ does not strongly envy $X'_q = a_{q,p}$ while possessing $b_{q,p}$.
	
	\noindent \textbf{No one envies agent $r$.} We have:
	\begin{align*}
		X'_r \cap E_q &= a_{r,q} \lowereqval{q} a_{q,p} = X'_q, \\
		X'_r \cap E_p &= b_{p,r}\sumsq b_{p,v} \lowereqval{p}  a_{p,r}\sumsq a_{p,v}  \lowereqval{p} X'_p, \\
		X'_r \cap E_v &= a_{r,v} \sumsq b_{p,v} \lowereqval{v}  a_{v,p} \sumsq (E_{v,q} \setminus Y) \sumsq a_{v,r} \lowereqval{v} Y= X'_v,
	\end{align*}
	where the last inequality comes from subcase's condition.
	
	\noindent \textbf{No one envies agent $p$.} We have:
	\begin{align*}
		X'_p \cap E_q &=b_{q,p} \sumsq b_{r,q} \sumsq  (E_{q,v} \setminus Y) 
		\lowereqval{q} b_{q,p} \sumsq b_{r,q} \sumsq  a_{q,v} \lowereqval{q} a_{q,p} = X'_q, 
	\end{align*}
	where the first inequality comes from the fact that  	
	$Y \sqcup (E_{q,v} \setminus Y) = a_{q,v} \sqcup b_{q,v}$ and $a_{q,v}\lowereqval{q} a_{q,p} \lowereqval{q} Y$, so
	$(E_{q,v} \setminus Y) \lowereqval{q} b_{q,v} \lowereqval{q} a_{q,v}$. 
	Also, the last inequality comes from \cref{claim:D2}. Moreover,
	\begin{align*}
		X'_p \cap E_v &= a_{p,v} \sumsq b_{r,v} \sumsq   (E_{q,v} \setminus Y) 
		\lowereqval{v}  a_{v,p} \sumsq a_{v,r} \sumsq   (E_{q,v} \setminus Y) \lowereqval{v} Y= X'_v,
	\end{align*}
	where the last inequality comes from subcase's condition. In addition,
	\begin{align*}
		X'_p \cap E_r &=  a_{p,r} \sumsq b_{r,q} \sumsq b_{r,v}  \lowereqval{r} a_{r,p} \sumsq b_{q,r} \sumsq a_{r,v}  
		\lowereqval{r} b_{p,r} \sumsq a_{r,q} \sumsq b_{v,r}    \lowereqval{r} X'_r, 
	\end{align*}
	where the second inequality comes from \cref{lem:aq>ap}. Hence, $\xp$ is a complete \efx\ allocation in this case.
	
	\textbf{Case 2:} $\mathbf{a_{q,p} \greatereqval{q} E_{v,q}}$. We divide to three subcases.
	
	\textbf{Subcase 2.1:} $\mathbf{E_{v,q}\sumsq b_{p,v} \sumsq b_{r,v} \greaterval{v} a_{v,p}}$. In this case, set:
	\begin{align*}
		&X'_r \takes a_{r,q} \sumsq a_{r,v}\sumsq b_{p,r}\\
		&X'_p \takes b_{q,p} \sumsq a_{p,v} \sumsq a_{p,r} \sumsq b_{r,q} \\
		&X'_q \takes a_{q,p}  \\
		&X'_v \takes E_{v,q}\sumsq b_{p,v} \sumsq b_{r,v}.
	\end{align*}

	\noindent \textbf{No one envies agent $v$.} We have:
	\begin{align*}
		X'_v \cap E_r &=  b_{r,v} \lowereqval{r} a_{r,q} \lowereqval{r} X'_r, \\
		X'_v \cap E_p &= b_{p,v} \lowereqval{p}  a_{p,v} \lowereqval{p} X'_p, \\
		X'_v \cap E_q &= E_{v,q} \lowereqval{q}  a_{q,p} = X'_q,
	\end{align*}
	
	\noindent \textbf{No one strongly envies agent $q$.} We have that $X'_q = a_{q,p} \subseteq E_{q,p}$, 
	so agents $v$ and $r$ do not envy agent $q$. Moreover, agent $p$ does not strongly envy $X'_q = a_{q,p}$ while possessing $b_{q,p}$.

	\noindent \textbf{No one envies agent $r$.} For every agent $t\ne r$, we have $X'_r \cap E_t$ is a unit bundle that was unallocated
	in $\X$, and since every agent is getting a (weakly) more valuable bundle, nobody envies $r$ as they did not in $\X$.
	
	\noindent \textbf{No one envies agent $p$.} We have:
	\begin{align*}
		X'_p \cap E_q &= b_{q,p} \sumsq b_{r,q} \lowereqval{q} a_{q,p} = X'_q, 
	\end{align*}
	where the last inequality comes from \cref{claim:D2}. Moreover,
	\begin{align*}
		X'_p \cap E_v &=a_{p,v}
		\lowereqval{v}  a_{v,p} \lowereqval{v} E_{v,q}\sumsq b_{p,v} \sumsq b_{r,v}= X'_v,
	\end{align*}
	where the last inequality comes from subcase's condition. In addition,
	\begin{align*}
		X'_p \cap E_r &=  a_{p,r} \sumsq b_{r,q}
		\lowereqval{r} b_{p,r} \sumsq a_{r,q} \sumsq b_{v,r}    \lowereqval{r} X'_r, 
	\end{align*}
	where the first inequality comes from \cref{lem:aq>ap}. Hence, $\xp$ is a complete \efx\ allocation in this case.

	\textbf{Subcase 2.2:} $\mathbf{E_{v,q}\sumsq b_{p,v} \sumsq b_{r,v} \lowereqval{v} a_{v,p}}$
	and $\mathbf{E_{v,q}\sumsq b_{p,q} \sumsq b_{r,q} \greaterval{q} a_{q,p}}$. In this case, set:
	\begin{align*}
		&X'_q \takes E_{v,q}\sumsq b_{p,q} \sumsq b_{r,q}  \\
		&X'_p \takes a_{p,q}.
	\end{align*}
	Then, in the resulting (incomplete) allocation $\xp$, agent $q$ does not resent anyone by the second subcase condition, agent $v$ does not resent anyone by the first subcase condition, and agents $p$ and $r$ do not resent anyone by their allocated unit bundles.
	So by allocating unallocated unit bundles to one of their incident agents as follows, we get a complete \efx\ allocation. 
	\begin{align*}
		&X''_r \takes a_{r,q} \sumsq a_{r,v}\sumsq b_{p,r}\\
		&X''_p \takes a_{p,q} \sumsq b_{v,p} \sumsq a_{p,r} \\
		&X''_q \takes E_{v,q}\sumsq b_{p,q} \sumsq b_{r,q}  \\
		&X''_v \takes a_{v,p} \sumsq b_{r,v}.
	\end{align*}
	Agent $q$ does not envy $p$ by the second subcase condition, does not envy $v$ because $X''_v\cap E_q=E_{v,q}\lowereqval{q}X''_q$, and does not envy $r$ since $X''_r\cap E_q=a_{r,q}\lowereqval{q}a_{q,p}\lowereqval{q}X''_q$.
	Agent $v$ does not envy $q$ by the first subcase condition, and does not envy $p$ or $r$ since $a_{v,p}\subseteq X''_v$.
	The checks for agents $p$ and $r$ are the same as in Subcase 2.1.

	\textbf{Subcase 2.3:} $\mathbf{E_{v,q}\sumsq b_{p,v} \sumsq b_{r,v} \lowereqval{v} a_{v,p}}$
	and $\mathbf{E_{v,q}\sumsq b_{p,q} \sumsq b_{r,q} \lowereqval{q} a_{q,p}}$. In this case, set:
	\begin{align*}
		&X'_r \takes a_{r,q} \sumsq a_{r,v}\sumsq b_{p,r}\\
		&X'_p \takes b_{q,p} \sumsq b_{v,p} \sumsq a_{p,r} \sumsq E_{v,q} \sumsq b_{r,v} \sumsq b_{r,q} \\
		&X'_q \takes a_{q,p} \\
		&X'_v \takes a_{v,p}.
	\end{align*}
	
	\noindent \textbf{No one strongly envies agent $v$.} We have that $X'_v = a_{v,p}\subseteq E_{v,p}$, 
	so agents $q$ and $r$ do not envy agent $v$. Moreover, agent $p$ does not strongly envy
	$a_{v,p}$ with respect to bundle $b_{v,p}$.
	
	\noindent \textbf{No one strongly envies agent $q$.} We have that $X'_q = a_{q,p}\subseteq E_{p,q}$, 
	so agents $v$ and $r$ do not envy agent $q$. Moreover, agent $p$ does not strongly envy
	$a_{q,p}$ with respect to bundle $b_{q,p}$.
	
	\noindent \textbf{No one envies agent $r$.} We have:
	\begin{align*}
		X'_r \cap E_q &= a_{r,q} \lowereqval{q} a_{q,p} = X'_q, \\
		X'_r \cap E_p &= b_{p,r} \lowereqval{p} a_{p,r} \lowereqval{p} X'_p, \\
		X'_r \cap E_v &= a_{r,v}  \lowereqval{v} a_{v,p}= X'_v.
	\end{align*}
	
	\noindent \textbf{No one envies agent $p$.} We have:
	\begin{align*}
		X'_p \cap E_q &= b_{q,p} \sumsq E_{v,q} \sumsq b_{r,q} \lowereqval{q}  a_{q,p} = X'_q, 
	\end{align*}
	where the inequality comes from the subcase's condition. Moreover,
	\begin{align*}
		X'_p \cap E_v &= b_{v,p} \sumsq E_{v,q} \sumsq b_{r,v} \lowereqval{v}  a_{v,p} = X'_v,
	\end{align*}
	where the last inequality comes from subcase's condition. In addition,
	\begin{align*}
		X'_p \cap E_r &=  a_{p,r} \sumsq b_{r,v} \sumsq b_{r,q} \lowereqval{r} b_{p,r} \sumsq a_{r,q} \sumsq b_{v,r}  
		\lowereqval{r} X'_r, 
	\end{align*}
	where the first inequality comes from \cref{lem:aq>ap}. Hence, $\xp$ is a complete \efx\ allocation in this case.
	Thus, the proof is complete.
\end{proof}

By \cref{lem: stageD>4} and \cref{lem: stageD=4}, given an allocation in \stage\ D, we get an \efx\ allocation. 
\begin{corollary}\label{lem: stageD}
	If an allocation is in \stage\ D, we can construct a complete \efx\ allocation in polynomial time.
\end{corollary}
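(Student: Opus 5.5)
This corollary follows by splitting on the number of agents $n$ and appealing to the two lemmas just proved. Let $\X$ be an allocation in \stage\ D.

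Before the split, I would discharge the preliminary reductions of this section. By \cref{lem:i>1}, \cref{lem:D1}, \cref{lem:lone_tree} (applied to $\X$ when $\ell=0$, and to $\hat{\X}$ when $d=0$), \cref{lem:aq>ap} and \cref{lem:q1 aq>ap}, either we already obtain a complete \efx\ allocation, or all the structural inequalities collected in \cref{claim:D2}, \cref{lem:aq>ap} and \cref{lem:q1 aq>ap} hold, with $d,\ell\ge 1$. Each of these reductions consists of constant-size updates followed, where needed, by \cref{lem:strong_support}, \cref{lem:added_strong_support}, \cref{lem:s gurantee} or \textsc{Reduce Trees}. Each of those routines is polynomial by the earlier lemmas, so this stage is polynomial.

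In the remaining situation:
\begin{itemize}
\item If $n>4$, \cref{lem: stageD>4} gives the allocation. It performs a single Update-D-style swap (for $q_d$) and then an explicit dumping phase that assigns each remaining unit bundle once, which takes $O(n^2)$ steps.
\item If $n\le 4$, then $d,\ell\ge1$ forces $n=4$, and \cref{lem: stageD=4} gives the allocation. Its only non-unitary step is \textsc{MinimalGreater}, which removes goods one at a time and so makes at most $|E_{v,q}|\le m$ valuation comparisons. The rest is a constant number of explicit assignments.
\end{itemize}

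The main point to verify is that polynomial time holds uniformly across all branches. This is routine, since every branch invokes a bounded number of polynomial subroutines. The unit bundles are also polynomially computable by \cref{lem:polyunitbundle}.
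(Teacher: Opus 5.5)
Your proposal is correct and takes the same approach as the paper. The paper proves the corollary simply by combining \cref{lem: stageD>4} for $n>4$ and \cref{lem: stageD=4} for $n\le 4$ (where $d,\ell\ge 1$ force $n=4$); the section's preliminary reductions are assumed inside those two lemmas, as you describe. Your running-time bookkeeping is consistent with this, except that updates such as the one in \cref{lem:aq>ap} change the bundles of $O(n)$ agents, not a constant number, which is still polynomial.
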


\section{\Stage\ E}\label{sec:e}
	In this section, we prove that whenever allocation is in \stage\ E, we can construct a complete $\efx$ allocation.

\begin{lemma}\label{lem: stageE}
	If an allocation is in \stage\ E, we can construct a complete \efx\ allocation in polynomial time.
\end{lemma}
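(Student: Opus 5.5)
The plan is to reduce \stage\ E to \cref{lem:weak_support}, using a single application of the update rule (U1) in the reverse direction. Let $\X$ be the \pthree\ allocation in \stage\ E, and fix $j \to i$ with $A_i(\X) \sumsq b_{j,i} \greaterval{i} X_i$. Since $\X$ is \pzero, $X_i = a_{i,j}$, so the hypothesis reads $b_{j,i}\sumsq A_i(\X) \greaterval{i} X_i = a_{i,j}$. This is exactly the precondition of (U1) (\cref{up rule:u1}) with the roles of the two agents swapped. We therefore update
\[
X'_i \takes b_{j,i}\sumsq A_i(\X), \qquad X'_j \takes a_{j,i}.
\]
By \cref{lem:cycle_support}, the resulting allocation $\xp$ is \pone, no new resent is created, $i$ does not resent anyone, and $(j,i)$ is a support pair with $X'_j = a_{j,i} \subseteq E_{i,j}$. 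Since $\X$ was \ptwo, $i$ is now the only agent holding more than one unit bundle.

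Next I would verify the four hypotheses of \cref{lem:weak_support} with $s=j$ and $t=i$.
\begin{itemize}
\item Only $t=i$ holds more than one unit bundle; this was shown above.
\item $t$ does not resent anyone; this is given by \cref{lem:cycle_support}.
\item $a_{t,s}=a_{i,j}$ is agent $i$'s most valuable unit bundle. In the height-one allocation $\X$ the resented agent $i$ resents nobody while holding $a_{i,j}$. Hence $a_{i,j}\greatereqval{i} a_{i,u}$ for every $u$, and by the unit-bundle inequalities $a_{i,u}$ dominates both unit bundles of $E_{i,u}$ for $i$. This is a static property of the unit bundles, so it survives the update.
\item There are $p^\star,q^\star\notin\{j,i\}$ with $p^\star$ \weakresents\ $q^\star$. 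Since $\X$ is not in \stage\ D, there is a second non-trivial resent tree, whose root $p^\star\neq j$ resents some agent. Let $q^\star$ be the agent $p^\star$ most-resents. As remarked after the definition of weak resent, $p^\star$ \weakresents\ $q^\star$ in $\X$: $p^\star$ is non-resented, $q^\star$ resents nobody, and $X_{q^\star}=a_{q^\star,p^\star}$. Neither agent lies in the tree of $j$, so neither equals $i$ or $j$. The update only changes the bundles of $i$ and $j$ and creates no new resent, so $p^\star$ remains non-resented and the weak resent persists in $\xp$.
\end{itemize}
Note that $s=j$ may still resent other children of its tree. This is allowed, because \cref{lem:weak_support} handles it through its update rules (C1) and (C3).

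With all hypotheses in place, \cref{lem:weak_support} yields a complete \efx\ allocation in polynomial time. The step I expect to need the most care is the last bullet. I must make sure that the weakly-resenting pair is really disjoint from $\{i,j\}$ and is unaffected by the (U1) step, including that $q^\star$ still resents no one and that $p^\star$ is not newly resented by $i$. The latter follows since $i$ resents nobody in $\xp$. If some subtlety breaks this, the fallback is the following: if $j$ resents no one after the update, apply \cref{lem:strong_support} directly, since $(j,i)$ is then a support pair in which only $i$ holds several unit bundles.
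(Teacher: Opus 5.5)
Your proposal is correct and follows the paper's proof: apply (U1) to the resent edge $j\to i$ so that $(j,i)$ becomes a support pair in which $i$ resents nobody and holds her favorite unit bundle $a_{i,j}$, then take the root of the second non-trivial tree (which exists because \stage\ D does not apply) and the agent it most-resents as $p^\star$ \weakresents\ $q^\star$, and invoke \cref{lem:weak_support}. One correction to your claim that the \stage\ E hypothesis is ``exactly'' the (U1) precondition with roles swapped: that precondition is $b_{i,j}\sumsq A_i(\X)\greaterval{i}X_i$, which is strictly stronger since $b_{i,j}\lowereqval{i} b_{j,i}$, but the proof of \cref{lem:cycle_support} only uses that the new bundle $b_{j,i}\sumsq A_i(\X)$ is strictly better than $X_i$ for $i$, which is precisely what \stage\ E gives (the paper invokes (U1) and \cref{lem:cycle_support} here without addressing this either).
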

\begin{proof}
	Since the allocation is in \stage\ E,  there exist two agents $s, t$ such that $s\to t$ and
        $$A_s(\X) \sumsq b_{s,t} \greaterval{t} X_t .$$
	First, note that since allocation is \pone, and agent $t$ is resented,
	agent $t$ does not resent anyone, so $X_t=a_{t,s}$ is agent $t$'s favorite unit bundle.

	Then, update the allocation with update rule (U1), defined in \cref{up rule:u1}, for the pair $(s, t)$.
	By \cref{lem:cycle_support}, the allocation remains \pone, and $(s,t)$ becomes a support pair.
	Moreover, agent $t$ does not resent anyone.

    Since the original allocation was not in \stage\ D, there existed some other resent tree, and it still exists since their bundle has not changed.
    Hence, by \cref{lem:weak_support}, the proof is complete.
\end{proof}

\section{\Stage\ F}\label{sec:f}

	In this section, we prove that whenever an allocation is in \stage\ F, we can construct a complete $\efx$ allocation.
	We first prove a couple of auxiliary lemmas, and then, we introduce the main lemma of this section, which proves the existence of $\efx$ allocations for any allocation in \stage\ F.

\begin{lemma}\label{lem:support_triple}
    If we have a \pone\ allocation \X\ with three agent $s,t,k$ such that
    \begin{itemize}
        \item $(s,t)$ is a support pair.
        \item $s$ and $t$ do not resent any agent.
        \item $k$ is not resented.
        \item If $k$ possesses $a_{k,j}$, then
                $(A_j(\X) \setminus E_{j,k}) \sumsq a_{k,j} \lowereqval{j} X_j$
    \end{itemize}
    Then, we can construct a complete \efx\ allocation in polynomial time.
\end{lemma}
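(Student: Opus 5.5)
We would prove \cref{lem:support_triple} by running the now-standard two-stage pattern of the paper: a short update stage that establishes the global dumping properties (\cref{pbd}), followed by a dumping phase that follows the general structure of Definition~\ref{general dumping rules}. The support pair $(s,t)$ plays the role of the dumping ground used in \cref{lem:added_strong_support}. The third non-resented agent $k$ serves as a second receiver, used exactly where $s$ cannot be used.

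\textbf{Update stage.} First, as long as some $i\to j$ satisfies $b_{j,i}\sumsq A_j(\X)\greaterval{j}X_j$, we apply (U1). If $i\notin\{s,t\}$, then by \cref{lem:cycle_support} the pair $(i,j)$ becomes a second support pair disjoint from $(s,t)$, and we finish by \cref{lem:two_support}. Since $s$ and $t$ resent nobody, this is always the situation, so we may assume (U1) is inapplicable. Hence \cref{pbd} holds and \cref{lem:dumprule} is available. Because $s$ and $t$ resent nobody, every resent edge starts at some root $p\notin\{s,t\}$.

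\textbf{Dumping phase.} We would mirror the rules of \cref{lem:added_strong_support}:
\begin{itemize}
\item between $s$ and other roots, give $t$ the rest of $E_{s,t}$, and for other roots $p$ let $a_{p,s}\to^* p$ and $b_{p,s}\to s$;
\item root-to-root bundles are split;
\item $b_{p,u}\to p$ for $u\in R_p$;
\item between a root $p$ and another root's child $u$, let $a_{p,u}\to^* p$, and send $b_{p,u}$ to $s$ if $p\notin\{s,t\}$, and to $k$ if $p\in\{s,t\}$;
\item between two children of one tree, send the $a$-part to the root and the $b$-part to $s$;
\item between children $u,v$ of different trees, send one part to $s$ and the other to $k$.
\end{itemize}
The last two rules are the points where \cref{lem:added_strong_support} needed $R(u,v)$ and the inequality for $t$. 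Here $k$ replaces that bookkeeping. The condition that $a_{k,j}\subseteq X_k$ implies $(A_j(\X)\setminus E_{j,k})\sumsq a_{k,j}\lowereqval{j}X_j$ is exactly what makes $k$ a safe recipient for agent $j$. If $k$ does not hold $a_{k,j}$, then $X_k\cap E_{k,j}$ is empty or $b_{j,k}$, and the dumped goods are bounded by $A_j(\X)$ plus that bundle, which $j$ does not envy.

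\textbf{Correctness.} \cref{lem:dumprule} already gives four facts: nobody strongly envies formerly resented agents, children do not envy their own roots, nobody envies $s$, and $s$ does not strongly envy anyone. It remains to show that no agent envies a root $p\neq s$.
\begin{itemize}
\item For $t$: since $t$ is never a dumping target for goods outside $E_t$ pairs, $X'_t\cap E_r$ is a single unit bundle of $E_{t,r}$, which every $r$ values at most $a_{r,t}\lowereqval{r}X_r$.
\item For roots $p\notin\{s,t,k\}$: by the rules, $X'_p\cap E_v=a_{p,v}$ for every child $v$ of another root, so $v$ does not envy $p$. Roots see only a unit bundle of $E_{p,r}$.
\item For $k$, against a child $v$ of another root: $X'_k\cap E_v$ consists of $k$'s own share in $E_{k,v}$ plus at most one unallocated unit bundle per resented neighbour of $v$. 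We bound this by $(A_v(\X)\setminus E_{v,k})\sumsq a_{k,v}$ using \cref{cancp}, and that is at most $X_v$ by hypothesis, or directly by \cref{pbd} when $k$ holds nothing in $E_{k,v}$.
\item For $k$, against a root $r$: $r$ sees a unit bundle of $E_{k,r}$ plus $b$-bundles $b_{r,u}$ toward children $u$. These are dominated by $a_{r,k}$ together with the bundles $a_{r,u}$ that $r$ keeps.
\end{itemize}

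The main obstacle is the case split on whether $k$ holds $a_{k,j}$, together with keeping every $b$-part dumped on $k$ disjoint from $E_{j,k}$, so that the hypothesis applies verbatim. A second risk is that $k$ itself is a root with children. If so, the rule for children of different trees must avoid sending to $k$ any bundle that is incident to $k$'s own children, and we would route those bundles to $s$ instead.
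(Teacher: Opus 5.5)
There is a real gap at the pair $(k,t)$. Your skeleton otherwise matches the paper's: apply (U1) until it yields a second disjoint support pair, then dump with $s$ as the sink and $k$ absorbing $b_{s,u}$ and $b_{t,u}$.

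Under your rules $k$ receives $b_{t,u}$ for every child $u$ of another root. So $t$ sees $(X'_k\cap E_{k,t})\cup\bigcup_u b_{t,u}$ in $k$'s bundle, and your last bullet bounds this by ``$a_{t,k}$ together with the bundles $a_{t,u}$ that $t$ keeps.'' An arbitrary split of $E_{k,t}$ does not guarantee that $t$ keeps $a_{t,k}$. Nor does $X'_t\succeq_t X_t\succeq_t a_{t,k}$ rescue the bound, because $X_t$ may itself be a unit bundle of some $E_{t,u}$ that gets replaced by $a_{t,u}$. Here is a failing instance:
\begin{itemize}
\item $X_t=a_{t,u}$, $E_{k,t}$ is unallocated, and $t$ values only $E_{t,u}\cup E_{k,t}$;
\item $a_{k,t}$ is worth as much to $t$ as $a_{t,u}$, and $b_{k,t}$ is worthless to $t$;
\item the split hands $a_{k,t}$ to $k$.
\end{itemize}
Then $t$ is left with the value of $a_{t,u}$, while $k$ holds $a_{k,t}\cup b_{t,u}$ plus goods $t$ ignores, which is strong envy. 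Moreover, if $k$ already holds $a_{k,t}$ in $\X$, $t$ cannot keep $a_{t,k}$ at all.

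This is exactly what the paper's rule (ii) is for. For every root $p\neq s$: if $a_{k,p}\subseteq X_k$, then $p$ gets $b_{k,p}$; otherwise $p$ gets $a_{p,k}$ (replacing any part of $E_{p,k}$ she holds) and $k$ gets $b_{p,k}$. In the second case $t$ sees $b_{t,k}\cup\bigcup_u b_{t,u}\preceq_t a_{t,k}\cup\bigcup_u a_{t,u}\subseteq X'_t$. In the first case the paper appeals to the lemma's hypothesis with $j=t$. So the hypothesis on $k$ is needed for roots as well, not only for children as in your correctness argument.

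The rest holds up, and your one structural deviation is legitimate. Sending the second part of $E_{u,v}$, for children of different trees, to $k$ rather than to $R(u,v)$ works. A child $v$ of a root other than $k$ finds in $X'_k$ only $a_{k,v}$, $b_{s,v}$, $b_{t,v}$ and one unit bundle of each $E_{v,w}$. This is dominated by $A_v(\X)$, or by $(A_v(\X)\setminus E_{v,k})\cup a_{k,v}$ when $k$ holds $a_{k,v}$. A root $p\notin\{s,t,k\}$ then holds only $a_{p,v}$ from $E_v$ for every child $v$ of another root. So you need neither $R(u,v)$ nor the Not-Case-A assumption here, whereas the paper keeps them.

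Your ``second risk'' is not a risk. The children of $k$ are covered by part~(3) of \cref{lem:dumprule} whatever $k$ receives, as long as each $E_{i,j}$ contributes at most one unit bundle to her. The remedy you propose would actually break the argument: sending those bundles to $s$ gives $s$ both parts of some $E_{u,v}$. That violates the general dumping structure and the proof that nobody envies $s$.
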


\begin{proof}
    First, note that if there exists a $p \to q$ such that $A_q(\X) \cup b_{p, q} \greaterval{q} X_q$, we update the allocation using (U1), defined in \cref{up rule:u1}. Note that since agents $s$ and $t$ are non-resented and also do not resent any other agent, we have that $p, q \notin \{s, t\}$. Therefore, by \cref{lem:cycle_support}, $(p, q)$ will become a support pair that is disjoint from $(s, t)$, then, by \cref{lem:two_support}, we would get a full $\efx$ allocation. Thus, without loss of generality, we assume that for any $p \to q$, we have that $A_q(\X) \cup b_{p, q} \lowerval{q} X_q$. Moreover, without loss of generality, by our assumption at the end of \stage\ A, for every $k \to i$ and $\ell \to j$, the function $R(i, j)$ is well-defined using \cref{lem:strong_A}.

    Next, we proceed to the dumping phase.

    \paragraph{Dumping Phase.}
	We allocate every remaining unallocated unit bundle by the following rules.
	Here, a \emph{root} means a non-resented agent, and for a root $p$, the set $R_p$ denotes the agents resented by $p$ in $\X$. Every unallocated good gets allocated based on one of the following rules:

    \begin{enumerate}[label=(\roman*), leftmargin=2.2em]
        \item \textbf{Between $s$ and other roots.} Since $(s,t)$ is a support pair, we have $X_s \subseteq E_{s, t}$.
        We allocate the other unit bundle in $E_{s, t}$ to $t$. Moreover, let $p \notin \{s, t\}$ be a root. We assign \dumpu{a_{p,s}}{p} and \dumpp{b_{p,s}}{s}.

        \item \textbf{Between $k$ and other roots.} Let $p\ne s$ be a root. 
        Then, if $a_{k,p}\subseteq X_k$, then \dumpp{b_{k,p}}{p}.
        Otherwise, \dumpu{a_{p,k}}{p} and \dumpp{b_{p,k}}{k}.

		\item \textbf{Root to root (except $k$ and $s$).} 
        Now, let $p$ and $r$ be two distinct roots such that $p,r \notin \{s,k\}$.
        Then, if at least one of the unit bundles in $E_{p, r}$ is already allocated to one of the roots, allocate the other unit bundle to the other root. 		
		Otherwise, arbitrarily allocate one unit bundle in $E_{p,r}$ to $p$ and the other to $r$. 

		\item \textbf{Root to its own children.} For every root $p$ and every $u\in R_p$, assign \dumpp{b_{u,p}}{p}. Also, we already have $X_u = a_{u, p}$.
		
		\item \textbf{Root to other children.} Let $p$ and $r$ be two distinct roots and let $u\in R_r$.
        If $p \notin \{s,t\}$, we assign \dumpu{a_{p,u}}{p} and \dumpp{b_{p,u}}{s}.
        When $p$ is either $s$ or $t$, 
        we instead assign \dumpu{a_{p,u}}{p} and \dumpp{b_{p,u}}{k}.
		
		\item \textbf{Between two children of the same tree.} Let $p$ be a root and let $u,v\in R_p$ be two distinct children of $p$.
        Since $s,t$ do not resent anyone, we have $p \notin \{s,t\}$.
        We assign \dumpp{a_{u,v}}{p} and \dumpp{b_{u,v}}{s}.
		
		\item \textbf{Between two children of two different trees.}         Let $p$ and $r$ be two distinct roots, and let $u\in R_p$ and $v\in R_r$. Note that we have $p,r\notin \{s,t\}$.
        Assign \dumpp{a_{u,v}}{s} and \dumpp{b_{u,v}}{R(i,j)}.

	\end{enumerate}
	
	This completes the allocation.

    \paragraph{Correctness of Dumping Phase.}Denote the allocation right before dumping phase  by $\X$, and denote the allocation at the end by $\xp$. 
	Next, we prove that the final allocation, $\xp$, is \efx.

    Note that by construction, it is clear that $\X$ follows global properties before the dumping phase (\cref{pbd}), and our dumping rules follow our general dumping structure. Thus, by \cref{lem:dumprule},
	\begin{enumerate}
		\item For every agent $p$, we have $X'_p \greatereqval{p} X_p$.
		\item Nobody strongly envies any agent who was resented in $\X$.
		\item For every resent $p\to q$ in $\X$, agent $q$ does not envy agent $p$ in $\xp$.
		\item No one envies agent $s$, and agent $s$ does not strongly envy anyone.  
	\end{enumerate}

    To prove that the final allocation is $\efx$, we show that any agent $p \ne s$ who was non-resented in $\X$, cannot be envied in $\xp$. To do so, we prove the following. Note that by the fourth statement of \cref{lem:dumprule}, we do not need to consider envies from or toward agent $s$. 

    \begin{itemize}
    \item \textbf{Agent $k$ is not strongly envied by any agent in $\xp$.} 
    Note that we know that $s$ does not strongly anyone.
    Consider an arbitrary agent $q\ne s$. We show that $q$ does not envy $k$ in $\xp$.
    
    If agent $k$ possessed $a_{k,q}$ in $\X$, then by Lemma's condtions, we get 
    \begin{align*}
        X'_q &\greatereqval{q} X_q \\
        &\greatereqval{q} a_{k,q} \cup (A_q(\X) \setminus E_{q,k})
        = a_{k,q} \cup \bigcup_{r\notin \{q,k\}} (A_q(\X) \cap E_{q,r})  \\ 
        &\greatereqval{q} (X'_k\cap E_{k,q}) \cup \bigcup_{r\notin \{q,k\}} (X'_k\cap E_{r,q}) = X'_k \cap E_q \greatereqval{q} X'_k
    \end{align*}
    Hence, assume that $k$ did not possess $a_{k,q}$ in $\X$.
    First, let $q$ be a resented agent. By the third statement of \cref{lem:dumprule}, we only need to consider the case where $i \to q$ such that $i \neq k$. 
    Since $k$ did not possess $a_{k,q}$ in $\X$, $a_{q,k}$ was unallocated in $\X$.
    This means that $a_{q,k} = A_q (\X)\cap E_{q,k}$.
    In allocation \xp, $k$ drops $b_{q,k}$ if she previously had it and picks up $a_{k,q}$, so
    $ X'_k \cap E_{k,q} = a_{k,q} \lowereqval{q} a_{q,k} = A_q (\X)\cap E_{q,k}$.
    For every resented agent $r\ne q$, we have 
    $A_q(\X) \cap E_{q,r}= a_{q,r} \greatereqval{q} X'_k \cap E_{q,r}$.
    Moreover, for every non-resented $r\ne k$, we have 
    $X'_k\cap E_{q,r} \lowereqval{q} b_{r,q} \lowereqval{q} A_q(\X)\cap E_{r,q}$.
    Hence, for every agent $r\ne q$, we have 
    $X'_k\cap E_{q,r} \lowereqval{q} A_q(\X)\cap E_{r,q}$. Hence,
\begin{align*}
    X'_q &= X_q  \\
         &\greatereqval{q} A_q(\X)=\bigcup_{r\ne q} (A_q(\X)\cap E_{r,q}) \\ 
         &\greatereqval{q}  \bigcup_{r\ne q} (X'_k\cap E_{r,q}) = X'_k \cap E_q \greatereqval{q}
    X'_k,
\end{align*}
    which means $k$ is not envied by any resented $q$.
    

    Finally, suppose $q$ is non-resented, and since $a_{k,q}$ is not in $X_k$,
    according to our dumping rules, $a_{q,k}$ is allocated to $q$ in $\xp$.
	Also, if for some $r\notin \{k,q\}$, a unit bundle in $E_{q,r}$ is allocated to $k$, then $r$ is resented,
	and $a_{q,r}$ is allocated to $q$. Hence,
    $$X'_q \greatereqval{q} a_{q,k} \sumsq \bigcup_{r \in R(\X)} a_{q,r} 
			\greatereqval{q} b_{q,k} \sumsq \bigcup_{r \in R(\X)} b_{q,r} \greatereqval{q} X'_k$$
            
    \item \textbf{Agent $t$ is not envied by any agent in $\xp$.}   
    According to our dumping rules, $t$ is only allocated unit bundles incident to her and only receives one unit bundle from each agent, 
    so for any agent $i$, we get $X'_t \lowereqval{i} X'_t \cap E_{i,t} \lowereqval{i} a_{i,t} \lowereqval{i} X_i \lowereqval{i} X'_i$, where the last inequality comes from the fact that 
    $i$ did not resent $t$ in $\X$.
    

    \item \textbf{Any non-resented agent $p \neq \{s, t, k\}$ in $\X$ is not envied by any agent in $\xp$.} For a resented agent $v$, by \cref{lem:dumprule}, we only need to consider the case where $r\to v$ for $r\ne p$.
	If there exists a resent edge $p\to u$ such that $R(u,v)=p$, then 
	$$X'_p \cap E_v \subseteq a_{p,v} \sumsq   \bigsumsq_{q \in R_p} (X'_p \cap E_{v,q})
	\lowereqval{v}  a_{v,p} \sumsq D_v(\X)  \lowereqval{v} X_v = X'_v,$$
	where the last inequality comes from the fact  $R(u,v)=p$.
	Also, if for every $p\to u$, we have $R(u,v)=r$, then 
	$$X'_p \cap E_v =a_{p,v} \lowereqval{v} X'_v.$$

    Now consider a non-resented agent $v$.
	The only unit bundle incident to $v$ that can be allocated to $p$ during the dumping phase
	is the unit bundle from $E_{v,p}$ assigned by Rule~(iii).
	Therefore, $X'_p \lowereqval{v} X'_p \cap E_{v,p} \lowereqval{v} a_{v,p} \lowereqval{v} X_v \lowereqval{v} X'_v$, where the last inequality comes from the fact that 
    $v$ did not resent $p$ in $\X$.
	Hence, $v$ does not envy $p$ in $\xp$.
    \end{itemize}
	Hence, the final allocation is a complete \efx\ allocation.
\end{proof}

We define nine properties (denoted by \textsc{FP$i$} for $i \in [9]$) for an allocation, which are desirable for an allocation that is initially in \stage\ F. Consider an allocation $\X$ and agents $k, u, v$. We want the allocation to satisfy the following properties:

\begin{itemize}
    \item \textsc{FP$1$}: $\X$ is a \pone\ allocation.
    \item \textsc{FP$2$}: $k$ is non-resented and only $k$ may have multiple unit bundles.
    \item \textsc{FP$3$}: For any agent $p\ne u$, agent $k$ does not possess unit bundle $a_{k,p}$.
    \item \textsc{FP$4$}: $v$ \weakresents\ $u$, and $a_{v,u}$ is a most valuable unit bundle for agent $v$.
    \item \textsc{FP$5$}: $X_u \greatereqval{u} (A_u(\X)\setminus (E_{u,v}\sumsq E_{u,k})) \sumsq b_{v,u} \sumsq a_{u,k}$.
    \item \textsc{FP$6$}: For all $k \to r$, we have $X_r \greatereqval{r} A_r(\X) \sumsq b_{k,r}$
    \item \textsc{FP$7$}: For all $j\to i$ with $j\ne k$, we have either $a_{j,k} \sumsq D_k (\X)\lowereqval{k} X_k$ or agent $k$ possesses $a_{k,j}$. 
    \item \textsc{FP$8$}: For all $k\to r$ and $j$ \weakresents\ $i  $, we have $$a_{j,r} \cup D_r(\X) \lowereqval{r} X_r.$$
    \item \textsc{FP$9$}: For all $p\to q$, we have $A_q(\X) \sumsq b_{p,q} \lowereqval{q} X_q$.
\end{itemize}

We next introduce two update rules, which we may use in our proof:

\begin{definition}[Update Rule F1\label{up rule:f1}] 
    Given a \pone\ allocation $\X$ and agents $i, j, k$ such that $j \to i$, $k$ is non-resented, $a_{j, k} \sumsq D_k(\X) \greaterval{k} X_k$,  agent $k$ does not possess $a_{k, j}$, and only agent $k$ may hold more than one unit bundle, let $i$ be the most-resented agent by $j$, which is possible since the inequality above is independent of agent $i$. Then, let: (Use $\hat{\X}$ to denote the allocation right before executing \textsc{Reduce Trees}.)
    \begin{align*}
    &\text{Agents } k,j, \text{and } i \text{ drop their previous bundles}\\
    &\hat{X}_k \takes a_{k,j}  \sumsq (B_k\setminus E_{k,j}) \\
    &\forall r \notin\{j,i,k\}  \text{ with } X_r \subseteq E_{r,k}: \hat{X}_r = a_{r,k}\\
    &\hat{X}_j \takes a_{j,i}\\
    &\hat{X}_i \takes \textsc{choose}(i)\\
	&Run \text{ Reduce Trees}(\hat{\X}, i).
\end{align*}

\end{definition}

\begin{lemma}\label{lem:f_aux1}

    If we execute update (F1) on an allocation $\X$ that is \pone, agent $k$ is non-resented and only $k$ may hold more than one unit bundle,
	and the following property holds:
    \begin{itemize}
        \item For every resent edge $p\to q$ where $p\ne k$, and $q$ is a most-resented agent by $p$, we have $a_{p,q} \greatereqval{p} A_p(\X) \sumsq X_p,$
    \end{itemize}
    then, the resulting allocation satisfies all properties \textsc{FP$1$-FP$5$} with respect to $(k,u=j,v=i)$.
\end{lemma}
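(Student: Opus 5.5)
The plan is to follow the pattern of the proof of \cref{lem:t update}: first check that the allocation $\hat{\X}$ produced right before \textsc{Reduce Trees}$(\hat{\X},i)$ is well defined, \pzero, and satisfies the hypotheses of \cref{lem:treebreaker2}. Then read off FP1--FP5 from the conclusions of that lemma.

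\textbf{Well-definedness.} After $k,j,i$ drop their bundles, $a_{k,j}$ and $a_{j,i}$ are free, since $\X$ is an orientation and $k$ did not hold $a_{k,j}$. The bundle $B_k\setminus E_{k,j}$ is also free. Indeed, every $r$ holding a unit bundle in $E_{r,k}$ is switched to $a_{r,k}$, which is disjoint from $b_{r,k}$. The bundles of $j$ and $i$ are covered because they dropped theirs.

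\textbf{Setting up \cref{lem:treebreaker2}.} Take $H=\{k,j\}\cup\{r: X_r\subseteq E_{r,k}\}$, with $p=j$ and $h_0=i$.
\begin{itemize}
\item Condition 1 holds because the agents of $H\setminus j$ hold only $b$-bundles toward outsiders, or a unit bundle in $E_{r,k}$ with $k\in H$.
\item Condition 2 holds because $X_j=a_{j,i}$ and $i$ is most-resented by $j$. Hence $a_{j,i}\greatereqval{j}a_{j,w}$ for all $w$, so $j$ resents no one.
\item Condition 3 holds as follows. Agent $i$ gets a \textsc{Choose} bundle, which nobody resented since $\X$ was unitary. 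Agent $k$ is not resented: for $r\ne j$, $\hat X_k\cap E_r=b_{r,k}\lowereqval{r}$ their bundle. For $j$, $a_{k,j}\lowereqval{j}a_{j,k}\lowereqval{j}a_{j,i}$. Every agent other than $i$ weakly improves, so the only new resent edges leave $i$, and every path avoiding $i$ has length at most one.
\item Condition 4 holds because only $k$ held several unit bundles.
\end{itemize}
The \pzero\ property is checked exactly as in \cref{lem:t update}.

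\textbf{Reading off FP1--FP4.} \cref{lem:treebreaker2} gives FP1. It also gives that the bundles in $H$ are unchanged, that $k$ stays non-resented, and that $i$ stays non-resented. Every changed bundle during \textsc{Reduce Trees} is a single unit bundle, which yields FP2. Agent $k$'s bundle is $a_{k,j}$ together with $b$-bundles, which gives FP3 with $u=j$. For FP4 with $v=i$, $u=j$:
\begin{itemize}
\item $j$ resents no one and $\hat X_j=a_{j,i}$;
\item $i$ is non-resented;
\item in $\X$, $i$ was resented, so by height one $i$ resented no one while holding $X_i=a_{i,j}$. Hence $a_{i,j}$ is a most valuable unit bundle for $i$.
\end{itemize}

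\textbf{FP5, the main obstacle.} We need
\[
a_{j,i}\greatereqval{j}(A_j(\X')\setminus(E_{j,i}\sumsq E_{j,k}))\sumsq b_{i,j}\sumsq a_{j,k}.
\]
The plan is to use the hypothesis $a_{j,i}\greatereqval{j}A_j(\X)\sumsq X_j$ and compare edge-set by edge-set with \cref{cancp}. There are three pieces.
\begin{itemize}
\item $E_{j,k}$: since $k$ did not hold $a_{k,j}$, either $a_{j,k}\subseteq A_j(\X)$ or $X_j\cap E_{j,k}=a_{j,k}$.
\item $E_{j,i}$: $i$ held $a_{i,j}$, so $X_j\cap E_{i,j}=\emptyset$. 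The term $b_{i,j}$ is dominated by $X_j$ via \cref{lem:minimum value}, provided $X_j$ is a single unit bundle from some $E_{j,w}$ with $w\notin\{i,k\}$. Otherwise it is absorbed componentwise.
\item All other $E_{j,w}$: we must show that $A_j(\X')$ restricted there is dominated by $A_j(\X)\sumsq X_j$. Unit bundles in $E_{j,w}$ may become unallocated only through $j$ dropping $X_j$; by the third bullet of \cref{lem:treebreaker2}, no unit bundle incident to $j\in H$ is freed during \textsc{Reduce Trees}. The freed piece of $E_{j,w}$ is bounded by $X_j$ itself, and all other pieces are unchanged or replaced by pieces of lower value to $j$.
\end{itemize}
I expect this bookkeeping, in particular handling the case where $X_j$ lay in $E_{j,k}$, to be the delicate step.
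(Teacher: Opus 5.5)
Your architecture matches the paper's. You check the hypotheses of \cref{lem:treebreaker2} for $\hat{\X}$ with $p=j$ and $h_0=i$, read off \textsc{FP$1$}--\textsc{FP$4$} from its conclusions, and get \textsc{FP$5$} from the hypothesis $a_{j,i}\greatereqval{j}A_j(\X)\sumsq X_j$ together with the fact that \textsc{Reduce Trees} frees nothing in $E_{j,t}$ for $t\neq i$. (The paper takes $H=R_k\cup\{k,j\}$. Your variant, which also includes the agents switched to $a_{r,k}$, works too, provided that set is defined with respect to $\X$: in $\hat{\X}$ it could contain $h_0=i$, whose \textsc{Choose} bundle may be $a_{i,k}$.)

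The gap is in the $E_{j,i}$ piece of your \textsc{FP$5$} bookkeeping. First, $X_j\cap E_{i,j}=\emptyset$ does not follow from $X_i=a_{i,j}$. Unitarity permits $X_j=b_{i,j}$ alongside it, as happens in \textsc{Greedy Orientation} when $i$ precedes $j$ and $b_{i,j}$ is $j$'s best remaining bundle. Second, and more seriously, when $X_j\subseteq E_{j,w}$ with $w\notin\{i,k\}$, you charge $b_{i,j}$ to $X_j$ via \cref{lem:minimum value}. Your third bullet charges the freed part of $E_{j,w}$ to $X_j$ as well. \cref{cancp} only combines comparisons on disjoint pieces, so $X_j$ cannot be spent twice, and the inequality does not close as written.

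The fix is to drop the domination argument for $b_{i,j}$ entirely. Since $X_i=a_{i,j}$, either $X_j\cap E_{i,j}=b_{i,j}$, or $j$ holds nothing from $E_{i,j}$ and then $E_{i,j}\setminus X_i=b_{i,j}$ is precisely the $E_{i,j}$-component of $A_j(\X)$. Either way, $b_{i,j}\subseteq A_j(\X)\sumsq X_j$.

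This is how the paper argues. It splits $A_j(\X)\sumsq X_j$ into three parts:
\begin{itemize}
\item the restriction to $E_{j,k}$, which covers $a_{j,k}$, as in your first bullet;
\item the restriction to $E_{j,i}$, which equals $b_{i,j}$;
\item the restriction to all remaining edges.
\end{itemize}
Only the last part is compared with $A_j(\hat{\X})$, and then, via the no-freeing conclusion of \cref{lem:treebreaker2}, with $A_j(\X')$. With this split $X_j$ is charged exactly once, to whichever part contains it, and your third bullet is then exactly the comparison needed. In particular, the case $X_j\subseteq E_{j,k}$ that you flagged as delicate needs no separate treatment. There the third part is just $A_j(\X)$ restricted to the remaining edges, and $j$'s drop of $X_j$ does not affect it.
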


\begin{proof}
    Let $\X$ and $\xp$ denote the allocation before and after the update.
    Set $H= R_k \cup \{k,j\}$, $p=j$, and $h_0=i$. Note that before the execution of 
    \textsc{Reduce Trees}$(\hat{\X},i)$, the allocation is \pzero. We first show that the conditions of \cref{lem:treebreaker2} hold.
	Since $i$ was the most-resented agent by $j$, and now agent $j$ possesses $a_{j,i}$, agent $p=j$ does not resent anyone. 
	Also, $h_0=i$ is not resented, since she is getting an unallocated unit bundle that nobody resents since the allocation was initially \pzero. Hence, all of the conditions of \cref{lem:treebreaker2} hold. Therefore, after execution of \textsc{Reduce Trees}$(\hat{\X},i)$, we get a \pone\ allocation, i.e., \textsc{FP$1$} holds.
	Also, the bundles of agents in $H$ do not change. Since agent $k$ was the only agent that may have had multiple unit bundles, and during the execution of \textsc{Reduce Trees}$(\hat{X},i)$, agents only receive a single unit bundle, $k$ is the only agent who may have more than one unit bundle in $\xp$. Moreover, $k\in H$, and $k$ was non-resented in $\hat{X}$, so \textsc{FP$2$} holds. 
	We have $X'_k = \hat{X}_k = a_{k,j}  \sumsq (B_k\setminus E_{k,j}) $, which satisfies \textsc{FP$3$}. Because $v=i$ was resented in $\X$,
    $a_{v,u}$ is most valuable unit bundle for agent $v$. Moreover, $v=i=h_0$ is non-resented in $\hat{\X}$, $X'_u =\hat{X_u}=a_{u,v}$, and $u=j$ does not resent anyone in $\xp$ since she
    has her most valuable unit bundle, so $v$ \weakresents\ $u$, satisfying \textsc{FP$4$}.
	
	Finally, by the assumption of the lemma, we should have $X'_u = a_{u,v} \greatereqval{u} A_u(\X) \sumsq X_u$.
	Moreover, since (F1) was applicable, we have that agent $k$ did not possess $a_{k,u}$ in $\X$, so 
	$a_{u,k} \subseteq  A_u(\X) \sumsq X_u$. 
	Also, since $X_v  =a_{v,u}$, we get $b_{v,u} \subseteq  A_u(\X) \sumsq X_u$. 
	Hence, we get 
	$A_u(\X)
    \cup X_u = a_{u, k} \cup b_{v, u} \cup (A_u(\X) \cup X_u) \setminus (E_{u, v} \cup E_{u, k})$. 
    Then, we have 
	$(A_u(\X) \cup X_u) \setminus (E_{u, v} \cup E_{u, k}) = A_u(\hat{\X}) \setminus (E_{u, v} \cup E_{u, k})$. Moreover, by  \cref{lem:treebreaker2}, during the execution of \textsc{Reduce Trees}$(\hat{\X},i)$, 
	for every $t\ne h_0=v$, no allocated unit bundle in $E_{u,t}$ gets unallocated, so any unit bundle currently in 	
	$A_u(\X')$ must have been unallocated right before executing \textsc{Reduce Trees}$(\hat{\X},i)$, 
	so $A_u(\hat{\X}) \setminus (E_{u, v} \cup E_{u, k}) \greatereqval{u} A_u(\X') \setminus (E_{u, v} \cup E_{u, k})$.
	 Combining these facts, we get
\begin{align*}
		X'_u = a_{u, v} &\greatereqval{u} A_u(\X) \cup X_u \\
		&= (A_u(\X) \cup X_u) \setminus (E_{u, v} \cup E_{u, k}) \cup a_{u, k} \cup b_{v, u} \\
		&= A_u(\hat{\X}) \setminus (E_{u, v} \cup E_{u, k})  \cup a_{u, k} \cup b_{v, u} \\
		&\greatereqval{u} A_u(\X') \setminus (E_{u, v} \cup E_{u, k}) \cup a_{u, k} \cup b_{v, u},
\end{align*}
meaning that \textsc{FP$5$} holds. Hence, the proof of claim is complete.    
\end{proof}

\begin{lemma}\label{lem:f_aux2}
    Consider an allocation $\X$ that satisfies properties \textsc{FP$1$-FP$5$}. 
	If there exists $p\to q$ where $p\ne k$, and $q$ is the most-resented agent by $p$ and $a_{p,q} \lowerval{p} A_p(\X) \sumsq X_p$, 
	we can construct a complete $\efx$ allocation in polynomial time.
\end{lemma}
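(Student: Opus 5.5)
The plan is to mimic the update used for \stage\ B. This produces a support pair $(q,p)$, and then \cref{lem:support_triple} is invoked with $s=q$, $t=p$ and the distinguished agent $k$ from \textsc{FP$1$--FP$5$}.

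\textbf{Update.} Let $\X$ satisfy \textsc{FP$1$--FP$5$}, and let $p\to q$ with $p\ne k$, $q$ most-resented by $p$, and $a_{p,q}\lowerval{p} A_p(\X)\sumsq X_p$. Set $X'_p \takes X_p\sumsq A_p(\X)$ and leave every other bundle unchanged. This is feasible because $A_p(\X)$ is unallocated.

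\textbf{Resent structure after the update.}
\begin{itemize}
\item Since $q$ was most-resented by $p$, for every $j$ we have $X'_p\greaterval{p} a_{p,q}\greatereqval{p}a_{p,j}$. Hence $p$ resents nobody.
\item Since $\X$ is \pone, $q$ resented nobody and $X_q=a_{q,p}$. By \cref{obs:resented-bundle}, $p$ was the only agent resenting $q$, so $q$ becomes non-resented.
\item $p$ was a root, so nobody resented $p$. Every bundle $p$ adds is an unallocated unit bundle, which no one resents because the allocation is unitary. So $p$ stays non-resented.
\item Agents' values only weakly increase and unallocated sets only shrink. Hence no new resent is created, and the allocation remains \pone, an orientation and unitary.
\end{itemize}
Therefore $(s,t)=(q,p)$ is a support pair, since $X_q=a_{q,p}\subseteq E_{q,p}$, and neither $q$ nor $p$ resents anyone.

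\textbf{Conditions on $k$.} Agent $k$ was non-resented, its bundle is untouched, and no new resent appears, so $k$ is still non-resented. Also $k\notin\{p,q\}$: we have $p\ne k$ by assumption, and $q\ne k$ because $q$ is resented while $k$ is not.

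It remains to check the last hypothesis of \cref{lem:support_triple}: whenever $k$ holds $a_{k,j}$, we need $(A_j(\xp)\setminus E_{j,k})\sumsq a_{k,j}\lowereqval{j}X'_j$. By \textsc{FP$3$}, the only such $j$ is $u$.
\begin{itemize}
\item By \textsc{FP$4$}, $X_u=a_{u,v}$, so $E_{u,v}\cap A_u=\emptyset$. Thus $A_u(\X)\setminus E_{u,k}=A_u(\X)\setminus(E_{u,v}\sumsq E_{u,k})$.
\item After the update, $A_u(\xp)\subseteq A_u(\X)$, because $p$ may only have absorbed parts of it.
\item We have $a_{k,u}\lowereqval{u}a_{u,k}$ by \cref{def:ub}.
\end{itemize}
Combining these with \textsc{FP$5$}, \cref{cancp} and monotonicity gives
$(A_u(\xp)\setminus E_{u,k})\sumsq a_{k,u}\lowereqval{u}(A_u(\X)\setminus(E_{u,v}\sumsq E_{u,k}))\sumsq a_{u,k}\lowereqval{u}X_u$.
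The term $b_{v,u}$ in \textsc{FP$5$} is simply dropped, which is allowed by monotonicity. Finally $X'_u=X_u$, unless $u=p$. But $u$ resents nobody while $p$ resents $q$, so $u\neq p$.

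\textbf{Conclusion.} Once these checks are done, \cref{lem:support_triple} yields a complete \efx\ allocation in polynomial time.

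The step I expect to need the most care is the bookkeeping of coincidences among $p,q$ and $u,v,k$. The case $p=v$ (and possibly $q=u$) seems allowed, and I would verify that the argument above does not depend on these agents being distinct. I would also confirm that absorbing $A_p(\X)$, which may contain unit bundles from $E_{p,u}$ or $E_{p,k}$, does not break unitarity or the support-pair property. If a coincidence did break something, the fallback is to apply (U1) to $p\to q$ instead and look for a second disjoint support pair for \cref{lem:two_support}.
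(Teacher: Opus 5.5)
Your proposal is correct and follows the paper's proof. It uses the same update $X'_p \gets X_p \sumsq A_p(\X)$, the same reduction to \cref{lem:support_triple} with $s=q$ and $t=p$, and the same use of \textsc{FP$3$} and \textsc{FP$5$} to verify the condition for $j=u$. One small fix: $A_u(\xp)\subseteq A_u(\X)$ can fail as sets, since if $E_{u,p}$ was entirely unallocated its component in $A_u$ changes from $a_{u,p}$ to $b_{p,u}$ once $p$ takes $a_{p,u}$; but $b_{p,u}\lowereqval{u} a_{u,p}$ by \cref{def:ub}, so a componentwise comparison via \cref{cancp} still gives the required inequality.
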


\begin{proof}
	We update the allocation as follows: (let $\X$ and $\xp$ denote the allocation before and after the update.)
\begin{align*}
    X_p' &\takes A_p(\X) \sumsq X_p
\end{align*}
    Next, we show that the conditions of \cref{lem:support_triple} are satisfied with $(k,s=q,t=p)$.
	Since $ a_{p,q} \lowerval{p} A_p \sumsq X_p$, agent $q$ is no longer resented.
	Hence, $(q,p)$ is a support pair.
	Note that since $q$ was the most-resented agent by $p$ and $a_{p,q} \lowerval{p} A_p(\X) \sumsq X_p$, 
	we get that agent $p$ does not resent any agent in $\xp$. 
	Moreover, agent $q$ did not resent any agent in $\X$, and $X'_q=X_q$, 
    so agent $q$ does not resent any agent in $\xp$.  Also, since no new resent relation has been created, the allocation remains \pone, and since $k$ was non-resented in $\X$, she remains 
    non-resented.
    
	Agent $k$ can only have $a_{k,j}$ for $j = u$,
    and according to \textsc{FP$5$}, we had 
    $X_u \greatereqval{u}  (A_u(\X)\setminus (E_{u,v}\sumsq E_{u,k})) \sumsq b_{v,u} \sumsq a_{u,k}$.
    Moreover, no allocated unit bundle gets unallocated after the update. Henec,
    $$X'_u \greatereqval{u} X_u \greatereqval{u} (A_u(\X)\setminus (E_{u,v}\sumsq E_{u,k})) \sumsq b_{v,u} \sumsq a_{u,k} \greatereqval{u} (A_u(\xp)\setminus E_{u,k}) \sumsq a_{u,k}.$$
    This means that we have satisfied the conditions of \cref{lem:support_triple},
	and we can reach a complete $\efx$ allocation.
\end{proof}

\begin{lemma}\label{lem:f_aux3}
    Consider an allocation $\X$ that satisfies properties \textsc{FP$1$-FP$5$}. If there exists $k \to r$ such that $X_r \lowerval{r} A_r(\X) \sumsq b_{k,r}$, we can construct a complete $\efx$ allocation in polynomial time.
\end{lemma}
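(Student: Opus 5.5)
Apply update rule (U1) to the pair $(k,r)$: set $X'_r \takes b_{k,r}\sumsq A_r(\X)$ and $X'_k \takes a_{k,r}$. By \cref{lem:cycle_support}, the allocation stays \pone, $(k,r)$ becomes a support pair with $X'_k=a_{k,r}\subseteq E_{k,r}$, $r$ resents no one, and no new resent is created. Note that $k$ drops all her previous unit bundles. The freed bundles are those $a_{k,u}$ and $b$-bundles $k$ held, and nobody resented them, since $k$ was non-resented. So the allocation remains unitary, and every agent now holds exactly one unit bundle except $r$.

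Next I would match the hypotheses of \cref{lem:weak_support} with $s=k$ and $t=r$. The checks are as follows.
\begin{enumerate}
\item Only $r$ may hold several unit bundles, since $k$ now holds one.
\item Agent $r$ resents no one.
\item $a_{r,k}$ is $r$'s most valuable unit bundle. This holds because $r$ was resented in a \pone\ allocation, so she resented no one while holding $X_r=a_{r,k}$.
\item A pair $p^\star,q^\star\notin\{k,r\}$ exists with $p^\star$ \weakresents\ $q^\star$. Take $p^\star=v$, $q^\star=u$ from \textsc{FP$4$}.
\end{enumerate}

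Item (iv) is the delicate one. I must check that $v,u\neq r$ and that $v$ is still non-resented, that $u$ still resents no one, and that $X_u=a_{u,v}$ is unchanged.
\begin{itemize}
\item Since $v$ \weakresents\ $u$, $v$ is non-resented, so $v\neq r$.
\item $u$ holds $a_{u,v}$ rather than $a_{u,k}$, so $u$ is not resented by $k$ and $u\neq r$.
\item Neither $u$ nor $v$ equals $k$, because $k$ is a root that may hold several bundles.
\item The bundles of $u$ and $v$ are untouched by the update. Only $k$'s value could drop, and $k$ holds no bundle incident to $v$ besides possibly $b$-type bundles. No new resent arises, so $v$ stays non-resented.
\end{itemize}
One subtlety: if $k$ was itself $v$'s only potential resenter, nothing changes. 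Another: $k$'s value may decrease, so $k$ might now resent someone. That is harmless, because \cref{lem:weak_support} only needs $t=r$ to be resent-free. However, if $k$ gains new resent edges, the \pone\ property (each resented agent holding $a_{\cdot,k}$ and height one) must still hold. Since $k$ is non-resented, any new edges from $k$ go to agents holding $a_{\cdot,k}$ by unitarity, which keeps height one.

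Then \cref{lem:weak_support} yields a complete \efx\ allocation in polynomial time.

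The main obstacle I expect is the resent-structure check after $k$ loses value. Agents other than $r$ that held $X_x=a_{x,k}$ may now be resented by $k$. I must verify these agents did not themselves resent anyone, which holds because everyone not holding multiple bundles was in a height-one forest. I must also verify that \textsc{FP$3$} does not interfere. If that fails, a fallback is to first run \textsc{Reduce Trees} from $k$ as in (U2), invoking \cref{lem:treebreaker2} with $H=\{k,r,u,v\}$, to restore the \pone\ property before applying \cref{lem:weak_support}.
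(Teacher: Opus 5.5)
Your argument is the paper's proof: the swap $X'_r=A_r(\X)\sumsq b_{k,r}$, $X'_k=a_{k,r}$ makes $(k,r)$ a support pair with $r$ resent-free and $a_{r,k}$ her most valuable unit bundle, and then \cref{lem:weak_support} is applied with $(s,t)=(k,r)$ and $(p^\star,q^\star)=(v,u)$ from \textsc{FP$4$}. One small point: the paper checks the swap's properties directly rather than citing \cref{lem:cycle_support}, whose stated precondition $b_{r,k}\sumsq A_r(\X)\greaterval{r}X_r$ is stronger than the hypothesis you are given, although its conclusions follow just as easily from the weaker hypothesis.

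Your concern that $k$'s value might drop is unfounded. Since $k\to r$ means $a_{k,r}\greaterval{k}X_k$, agent $k$ strictly improves. This also gives unitarity for $k$ against the bundles she frees, and it means no new resent edge appears. The ``main obstacle'' paragraph and the \textsc{Reduce Trees} fallback should therefore be dropped. Your sketched handling of that hypothetical would not have worked anyway, since an agent holding $a_{x,k}$ could itself be a root with children.
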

\begin{proof}
	We update the allocation as follows: (let $\X$ and $\xp$ denote the allocation before and after the update.)
\begin{align*}
    X_r' &\takes A_r(\X) \sumsq b_{k,r}\\
	X_k' &\takes a_{k,r}.
\end{align*}

	Now, $k$ and $r$ are non-resented, and therefore, $(k,r)$ is a support pair.
	Agent $r$ does not resent anyone as she did not resent anyone before, and now she is receiving a more valuable bundle.
	Also, $a_{r,k}$ was agent $r$'s most valuable unit bundle since while possessing it, she did not resented anyone.
    Moreover, $v$ still \weakresents\ $u$, and $v,u \notin \{k,r\}$.
	Thus, we have satisfied the conditions of \cref{lem:weak_support}, so we construct a complete $\efx$ allocation.
\end{proof}

\begin{lemma}\label{lem:f_aux4}
    If an allocation is in \stage\ F, we can construct an allocation that is either a complete $\efx$ allocation or satisfies all properties \textsc{FP$1$-FP$8$}.
\end{lemma}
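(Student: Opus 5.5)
We start from an allocation $\X$ in \stage\ F. It is \pthree\ and, since \stage s A--E do not apply, it is not in \stage\ B, C, D or E. Stage F supplies agents $i,j,k$ with $j\to i$, $k$ non-resented, $a_{k,j}\not\subseteq X_k$, and $a_{k,j}\sumsq D_k(\X)\greaterval{k} X_k$. Because $\X$ is \pthree, every agent holds at most one unit bundle, so the hypotheses of update rule (F1) hold.

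Before applying (F1), we secure the auxiliary hypothesis of \cref{lem:f_aux1}: for every resent edge $p\to q$ with $p\ne k$ and $q$ most-resented by $p$, we need $a_{p,q}\greatereqval{p} A_p(\X)\sumsq X_p$. This is exactly the negation of \stage\ B, since $A_p\sumsq X_p\greatereqval{p} a_{p,q}$ for the most-resented $q$ would put us in B. The one gap is equality versus strict inequality. I would handle it by noting that if $A_p\sumsq X_p \greatereqval{p} a_{p,q}$ held, \cref{lem: stageB} would already yield a complete allocation, so only the strict direction survives.

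We then apply (F1) to $(i,j,k)$ with $i$ most-resented by $j$. By \cref{lem:f_aux1}, the result satisfies \textsc{FP1}--\textsc{FP5} with $(k,u=j,v=i)$. Next we iterate over the remaining properties, treating each failure as an exit to a complete allocation:
\begin{itemize}
\item If some $p\to q$ with $p\ne k$ and most-resented $q$ has $a_{p,q}\lowerval{p}A_p\sumsq X_p$, then \cref{lem:f_aux2} finishes.
\item If \textsc{FP6} fails, then \cref{lem:f_aux3} finishes.
\item If \textsc{FP8} fails for some $k\to r$ and some $j'$ that \weakresents\ $i'$, we aim to build a support pair $(k,r)$ with a weak-resent edge outside it. The move is to give $r$ the bundle $a_{r,j'}\sumsq D_r$ and $k$ the bundle $a_{k,r}$, re-establish \pone\ via \textsc{Reduce Trees} and \cref{lem:treebreaker2}, and then invoke \cref{lem:weak_support} or \cref{lem: stageCprop}. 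The latter fits naturally, because the failure of \textsc{FP8} is a \stage-C-type inequality $D_r\sumsq a_{r,j'}\greaterval{r}X_r$ with $k\to r$ and $j'$ \weakresents\ $i'$, where only $k$ may hold several bundles. So I would apply \cref{lem: stageCprop} with $\ell=k$, $j=r$, and the weak-resent pair $(j',i')$.
\end{itemize}

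\textsc{FP7} is the genuinely iterative part. If some $j'\to i'$ with $j'\ne k$ has $a_{j',k}\sumsq D_k\greaterval{k}X_k$ and $k$ does not hold $a_{k,j'}$, that is again an (F1) configuration, so we reapply (F1) with the new $(i',j')$. After each application we re-check \textsc{FP6} and \textsc{FP8} and the \stage-B-type condition through the exits above.

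The main obstacle is termination of this loop. I would use $v_k(X_k)$ as the potential. After (F1), $X_k=a_{k,j}\sumsq(B_k\setminus E_{k,j})$, and this value is determined by $j$ alone. Each new application is triggered by $a_{j',k}\sumsq D_k\greaterval{k}X_k$, and since $D_k\subseteq B_k\setminus E_{k,j'}$ and $a_{k,j'}\greatereqval{k}a_{j',k}$, the new value of $X_k$ strictly exceeds the old one. Therefore each $j$ is used at most once, giving at most $n$ rounds. I expect the delicate points to be checking that the auxiliary condition of \cref{lem:f_aux1} is preserved between rounds, which I would obtain from \cref{lem:f_aux2} being inapplicable, and confirming that \textsc{FP3} is consistent with $k$ switching to a new $u=j'$. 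When no exit or rule applies, \textsc{FP1}--\textsc{FP8} all hold.
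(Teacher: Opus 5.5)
Your proposal is correct and follows the paper's proof essentially step for step. The paper also applies (F1) once, justified by \cref{lem:f_aux1} through the negation of \stage\ B. It exits through \cref{lem:f_aux2}, \cref{lem:f_aux3} or \cref{lem: stageCprop} (with the role assignment you describe) when the invariant, \textsc{FP6} or \textsc{FP8} fails, reapplies (F1) for \textsc{FP7} violations, and terminates because $v_k(X_k)$ strictly increases while ranging over the at most $n-1$ values $v_k(a_{k,j}\sumsq(B_k\setminus E_{k,j}))$.

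Your ``equality versus strict inequality'' worry is unnecessary. The negation of \stage\ B already gives $a_{p,q}\greaterval{p}A_p(\X)\sumsq X_p$ for every resent edge $p\to q$, which is stronger than the weak inequality \cref{lem:f_aux1} requires.
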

\begin{proof}
	Let $\X$ be our initial allocation in \stage\ F. 
 	By definition of \stage\ F, $\X$ is \pthree, and hence, 
	\pone\ and all agents have a single unit bundle.
	Moreover, there exist three agents $i, j, k$ such that $j \to i$, $k$ is non-resented, 
	$k$ does not have $a_{k,j}$, and
        $$a_{k,j} \sumsq D_k(\X) \greaterval{k} X_k.$$
	Then, we execute update rule F1 (defined in \cref{up rule:f1}) to get allocation $\xp$.
	Since $\X$ is in \stage\ F, it not in \stage\ B, 
	so for every resent edge $p \to q$, we must have $A_p(\X) \cup X_p \lowereqval{p} a_{p, q}$. 
	Therefore, all the conditions of \cref{lem:f_aux1} are satisfied, so allocation $\xp$ satisfies properties \textsc{FP$1$-FP$5$}. 
	
	Rename $\X'$ to be $\X$. 
	By \cref{lem:f_aux3}, if there exists a resent relation $k \to r$, such that $X_r \lowerval{r} A_r(\X) \sumsq b_{k,r}$, 
	we can construct a complete $\efx$. 
	Therefore, without loss of generality, we assume that for all $k \to r$, 
	we have $X_r \greatereqval{r} A_r(\X) \sumsq b_{k,r}$, which means that \textsc{FP$6$} is also satisfied. 
	Moreover, if \textsc{FP$8$} is violated, we can construct a complete $\efx$ allocation by \cref{lem: stageCprop}. 
	Thus, without loss of generality, we assume that \textsc{FP$8$} is also satisfied.

	Therefore, we only need to satisfy the property \textsc{FP$7$}. To do so, we proceed as follows. We execute the update rule (F1) as long 
	as there exists a pair of agents $i, j$, violating the property \textsc{FP$7$} and the allocation is not complete. We maintain the following 
	invariants:
\begin{itemize}
    \item For every resent edge $p\to q$ where $p\ne k$ and $q$ and is the most-resented agent by $p$, we have $a_{p,q} \greatereqval{p} A_p(\X) \sumsq X_p$.
    \item Properties \textsc{FP$1$-FP$5$}, \textsc{FP$6$} and \textsc{FP$8$} remain satisfied.
\end{itemize}
	Note that initially, both invariants are satisfied. After every execution of the update rule (F1), properties \textsc{FP$1$-FP$5$} remain
	satisfied, by \cref{lem:f_aux1}. We rename agents $u, v$ in every execution of (F1) according to the \cref{lem:f_aux1} so that the 
	mentioned properties are still satisfied for agents $k, u, v$. Also, if one of the properties \textsc{FP$6$} or \textsc{FP$8$} becomes violated, 	we can construct an $\efx$ allocation by \cref{lem:f_aux3} or \cref{lem: stageCprop}, respectively. Also, note that if the first invariant gets 	violated, then we can construct an $\efx$ allocation by \cref{lem:f_aux2}. Therefore, by each execution of (F1), we either find a complete 
	$\efx$ allocation or we maintain the above invariants. 
    
    Finally, note that agent $k$ is fixed in these iterations, and after each execution of (F1),  $v_k(X_k)$ strictly increases, and after every iteration agent $k$ gets bundle 
    $a_{k,j}  \sumsq (B_k\setminus E_{k,j})$ for some agent $j$.
    Note that this bundle is only dependent to agents $k$ and $j$.
    As a result, this procedure must terminate after at most $n$ iterations since there are $n$ agents. Upon the termination of the procedure, we have an allocation that is either a 
	complete $\efx$ allocation or satisfies all properties \textsc{FP$1$-FP$8$}.  
\end{proof}

\begin{definition}[Update Rule F2\label{up rule:f2}] 
    Given a \pone\ allocation $\X$ and agents $p, q$ such that $p \to q$, $A_q(\X) \sumsq b_{p,q} \greaterval{q} X_q$, let $q$ be the most-resented vertex of $p$ that has this property. Then, let: (denote the allocation after the update by $\xp$)
    \begin{align*}
    X'_q &\takes A_q(\X) \sumsq b_{p,q} \\
    X'_p &\takes (X_p\setminus b_{p,q}) \sumsq a_{p,q}.
	\end{align*}
\end{definition}

\begin{lemma}\label{lem:f_aux5}
    If we execute update (F2) on an allocation $\X$ satisfying properties \textsc{FP$1$} and \textsc{FP$3$-FP$8$}, we get an allocation that still satisfies properties \textsc{FP$1$} and \textsc{FP$3$-FP$8$} with strictly lower number of resented agents.
\end{lemma}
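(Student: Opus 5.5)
We check the listed properties one at a time for $\xp$. The facts we use repeatedly are: (F2) changes only the bundles of $p$ and $q$; $q$ is a child of the root $p$ in a height-one forest; and $q$ is the most-resented agent of $p$ among those with the violating property.

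\textbf{Basic effects of the update.} First we record what (F2) does.
\begin{itemize}
\item Since $A_q(\X)\sumsq b_{p,q}\greaterval{q} X_q=a_{q,p}$, agent $q$ strictly improves.
\item Agent $p$ replaces $b_{p,q}$ (if she held it) by $a_{p,q}$. Because $p$ resented $q$, this strictly improves $p$ by \cref{cancp}.
\item The goods $A_q(\X)$ are unallocated, so the new bundles are disjoint from all others. Agent $q$ drops $a_{q,p}$, and $p$ now holds $a_{p,q}$.
\end{itemize}
Exactly as in \cref{lem:cycle_support}, $p$ no longer resents $q$. Moreover, $q$ becomes non-resented: the part of $X'_q$ incident to any $r$ is an unallocated unit bundle, or $b_{p,q}$ for $r=p$. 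Agent $p$ remains non-resented, since she gains only a bundle from $E_{p,q}$ and $q$ does not resent her. No agent's value decreases, and the only newly unallocated unit bundle is $b_{p,q}$, which nobody but $p$ could resent since $p$ was non-resented. Hence no new resent edges appear, the allocation stays unitary, and the resent forest stays of height one. This gives \textsc{FP1}, and the number of resented agents drops by exactly one (namely $q$).

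\textbf{Properties that follow from monotonicity.} Next we verify \textsc{FP3}--\textsc{FP8}, each by the same monotonicity principle. $X_r$ weakly increases, $A_r$ can only shrink or stay the same (except that $b_{p,q}$ and $a_{q,p}$ region changes, which concern only $p$ and $q$), the sets $D_r$ and $R(\X)$ shrink, and the resent relation only loses edges.
\begin{itemize}
\item \textsc{FP3}: agent $k$'s bundle changes only if $k=p$. In that case she newly holds $a_{k,q}$, and we must argue $q=u$ or rule this case out. I would handle it by noting that \textsc{FP6} says no $k\to r$ satisfies the (F2) condition, so $p\ne k$.
\item \textsc{FP4}: this is preserved unless $u$ or $v$ is involved. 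Since $v$ is non-resented and $u$ resents nobody, $u,v$ are not the child $q$. The case $p=v$ needs care: if $v$ resents $u$, then $a_{v,u}$ is $v$'s top bundle, so $q=u$; but then FP5 contradicts the (F2) condition for $u$. This is the key check.
\item \textsc{FP5}, \textsc{FP6}, \textsc{FP8}: these are inequalities of the form $X_r\greatereqval{r}(\text{sets of unallocated or } b\text{-bundles})$, restricted to agents still resented or fixed. Their right-hand sides only shrink, by \cref{cancp}.
\item \textsc{FP7}: this concerns $j\to i$ with $j\ne k$. After the update there are fewer resent edges, $D_k$ shrinks, and $X_k$ weakly grows, so the inequality is preserved.
\end{itemize}

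\textbf{Main obstacle.} The delicate part is \textsc{FP4}, and more generally the role of $q$ being chosen most-resented among violators. We must ensure that $p$'s new bundle $(X_p\setminus b_{p,q})\sumsq a_{p,q}$ does not create a resent from $p$ to other agents, which holds since her value increased. We must also ensure that the weak-resent structure involving $u,v$ survives. I would first try to show that $q\notin\{u,v,k\}$ and $p\ne k$ directly from \textsc{FP5}, \textsc{FP6} and \textsc{FP4}. If that succeeds, every listed property reduces to the monotonicity argument above.
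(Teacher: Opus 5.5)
Your proposal follows essentially the same route as the paper. Because resent depends only on the resenter's own bundle and (F2) only raises the values of $p$ and $q$, no new resent edges arise, which gives \textsc{FP1} and one fewer resented agent. Then \textsc{FP6} forces $p\neq k$, \textsc{FP5} forces $u\neq q$ (and $u\neq p$ since $u$ resents nobody), and the remaining properties hold because $A_i$ and $D_i$ cannot increase in value.

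A few slips to fix when writing it up:
\begin{itemize}
\item Drop the inference ``$v$ resents $u$, so $q=u$''. It is wrong because $q$ is only the most-resented among the agents violating the \textsc{FP9} inequality. Argue instead that $q=u$ would force $p=v$, and then \textsc{FP5} shows $u$ is not such a violator.
\item Rule out $k=q$ in \textsc{FP3}, using that $k$ is non-resented.
\item No unit bundle becomes unallocated, since $b_{p,q}$ goes to $q$.
\item For a neighbour $r\neq p$ of $q$ with $E_{r,q}$ previously unallocated, the $E_{r,q}$-part of $A_r$ changes from $a_{r,q}$ to $b_{q,r}$. This is a value decrease by the unit-bundle inequalities, not a set shrinkage.
\end{itemize}
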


\begin{proof}
    Consider agents $p$ and $q$ such that $p\to q$,  $A_q(\X) \sumsq b_{p,q} \greaterval{q} X_q$, and $q$ is the most-resented vertex of $p$ that has this property. 
    Let $\X$ and $\xp$ denote the allocations before and after the update, respectively. 

    First, note that every agent in $\xp$ is getting a bundle at least as valuable as in $\X$,
    so no new resent is created. Moreover, $p$ does not resent $q$ anymore, so the number of resented agents has strictly decreased.

    Next, we show that $\xp$ satisfies properties \textsc{FP$1$} and \textsc{FP$3$-FP$8$} with resepect to agents $(k,u,v)$.
    Since, no new resent is created, allocation remains \pone, satisfying (FP1).

    By property \textsc{FP$6$} for $\X$, we get $p \ne k$, and since $q$ was resented, we get
    $q\ne k$.
    Hence, agent $k$'s bundle does not change, so (FP3) holds.

    by property \textsc{FP$5$} for $\X$, $u\neq q$, and since $u$ does not resent anyone by \textsc{FP$4$} (for $\X$),
    we get $u\ne p$. Thus,
    since no new resent is created, and $u\notin \{p,q\}$, we get that (FP4) still holds, too.


    Note that no previously allocated unit bundle becomes unallocated by this update. 
    Also, no new resent edge is created in the resent graph. 
    This means that for any agent $i$, the sets $A_i(\X)$ and $D_i(\X)$ cannot increase in value with respect to agent $i$'s 
    valuation function. Therefore, properties \textsc{FP$5$-FP$8$} are still held,
    which completes the proof.
\end{proof}

\begin{lemma}\label{lem:f_aux6}
    If an allocation is in \stage\ F, we can construct an allocation that is either a complete $\efx$ allocation or satisfies all properties \textsc{FP$1$} and \textsc{FP$3$-FP$9$} in polynomial time.
\end{lemma}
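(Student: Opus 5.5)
We first apply \cref{lem:f_aux4} to the \stage\ F allocation. Either it already yields a complete \efx\ allocation and we stop, or we obtain an allocation $\X$ satisfying \textsc{FP$1$}--\textsc{FP$8$} with respect to some triple $(k,u,v)$. Starting from this $\X$, we repeatedly apply update rule (F2) of \cref{up rule:f2}, as long as some resent edge $p\to q$ satisfies $A_q(\X)\sumsq b_{p,q}\greaterval{q}X_q$. Each application takes $q$ to be the most-resented such agent of $p$.

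By \cref{lem:f_aux5}, each application of (F2) preserves \textsc{FP$1$} and \textsc{FP$3$}--\textsc{FP$8$} and strictly decreases the number of resented agents. Hence the loop stops after at most $n$ iterations, and each iteration is clearly polynomial. When it stops, no resent edge $p\to q$ has $A_q(\X)\sumsq b_{p,q}\greaterval{q}X_q$. That is exactly \textsc{FP$9$}, so the final allocation satisfies \textsc{FP$1$} and \textsc{FP$3$}--\textsc{FP$9$}.

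\textsc{FP$2$} is deliberately dropped from the statement. After (F2), agent $q$ receives $A_q(\X)\sumsq b_{p,q}$, which consists of several unit bundles, so agents other than $k$ may hold more than one unit bundle. This is why we use \cref{lem:f_aux5}, whose hypothesis only needs \textsc{FP$1$} and \textsc{FP$3$}--\textsc{FP$8$}.

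The step that needs care is checking that \cref{lem:f_aux5} applies at every iteration. Its hypotheses are exactly the invariants it preserves, so an induction on the number of iterations suffices. We also need the triple $(k,u,v)$ to stay fixed throughout. In \cref{lem:f_aux5} the proof already shows $p,q\notin\{k,u\}$, and since $v$ is non-resented, $q\neq v$; so no relabeling is needed. Finally, the maximality choice of $q$ in (F2) is not needed for termination, because the number of resented agents is a strictly decreasing potential bounded by $n$.
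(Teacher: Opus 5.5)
Your proposal is correct and follows essentially the same route as the paper: invoke \cref{lem:f_aux4}, then apply (F2) repeatedly while \textsc{FP$9$} fails, using \cref{lem:f_aux5} for preservation of \textsc{FP$1$}, \textsc{FP$3$}--\textsc{FP$8$} and for the strictly decreasing number of resented agents, which bounds the loop by $n$ iterations. Your additional remarks, on why \textsc{FP$2$} is dropped and why the triple $(k,u,v)$ needs no relabeling, are consistent with that lemma's proof.
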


\begin{proof}
    By \cref{lem:f_aux4}, we obtain an allocation that is either a complete $\efx$ allocation or satisfies all properties \textsc{FP$1$-FP$8$}. 
	In case the allocation is not a complete \efx\ yet, we execute the update rule (F2) as long as the property \textsc{FP$9$} 
	is violated and the allocation is not complete. By \cref{lem:f_aux5}, this procedure will either output a complete $\efx$ allocation 
	or maintain properties \textsc{FP$1$} and \textsc{FP$3$-FP$8$} while strictly decreasing the number of resented agents. 
	Therefore, this procedure will terminate after at most $n$ iterations, and we will finally have an allocation that is either a complete $\efx$ allocation or satisfies 
	all properties \textsc{FP$1$} and \textsc{FP$3$-FP$9$}.
\end{proof}

Next, we provide the main lemma of this section.

\begin{lemma}
\label{lem: stageF}
	If an allocation is in \stage\ F,
    we can construct a complete $\efx$ allocation in polynomial time.
\end{lemma}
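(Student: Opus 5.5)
We first apply \cref{lem:f_aux6}. Either it already yields a complete $\efx$ allocation, or it yields an allocation $\X$ satisfying \textsc{FP$1$} and \textsc{FP$3$}--\textsc{FP$9$} for agents $(k,u,v)$. We then run a dumping phase on $\X$ modeled on those of \cref{lem:added_strong_support} and \cref{lem:support_triple}. Agent $k$ plays the role of a dump target (like $s$), and agent $v$, who \weakresents\ $u$ by \textsc{FP$4$}, supplies a second tree-like structure. By \textsc{FP$1$} and \textsc{FP$9$}, the global dumping properties (\cref{pbd}) hold. So if the rules follow the general structure of \cref{general dumping rules}, \cref{lem:dumprule} gives the following. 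No agent resented in $\X$ is strongly envied. Each child does not envy its own root. Every agent weakly improves. It then remains to check envy toward the roots.

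The rules we intend are the following. Root to root uses rule~(ii) of the earlier lemmas: if one bundle of a partition is already held, the other goes to the other root. Each root $p$ gets $b_{p,w}$ for every $w\in R_p$. For a root $p$ and a child $w$ of another root $r$, $p$ takes $a_{p,w}$. The matching $b_{p,w}$ goes to $k$ when $p\ne k$ and $k$ does not resent $w$. If $p=k$, or if $w\in R_k$, it goes to the root $r$ of $w$. Bundles between two children of one root $p$ go as $a$ to $p$ and $b$ to $k$; if $p=k$, the $b$ goes to $v$ instead. Bundles between children $w\in R_p$, $w'\in R_r$ of different roots go as $a$ to $k$ and $b$ to $R(w,w')$. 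When $k$ is one of the two roots, we instead use the $r\neq k$ root, exactly as in \cref{lem:weak_support}. By \textsc{FP$3$}, the only $a$-bundle $k$ may hold is $a_{k,u}$. Rule~(ii) then keeps $a_{k,u}$ with $k$ and gives the complement to $u$, which is consistent with \textsc{FP$5$}.

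The verification has four parts.
\begin{itemize}
\item \emph{No one envies $k$.} A resented agent $q$ not resented by $k$ sees in $X'_k\cap E_q$ only bundles dominated by $A_q(\X)\sumsq b_{\cdot,q}$. By \textsc{FP$9$}, this is at most $X_q$. For a resented $r\in R_k$, the bundle $k$ receives from $r$'s other edges is bounded by $b_{k,r}\sumsq A_r(\X)$, and \textsc{FP$6$} applies. For the non-resented agent $u$, we use \textsc{FP$5$}, which was designed to bound exactly $a_{u,k}$ plus what $k$ dumps from $E_u$. Any other non-resented agent $p$ holds $a_{p,k}$ and $a_{p,w}$ for the resented $w$, hence dominates.
\item \emph{No one envies a root $p\ne k$.} A resented $q$ receives $a_{p,q}$ plus $b$'s from the children of $p$. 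When $R(\cdot,q)=p$, this is controlled by the definition of $R$ (not Case A). When $k$ resents $q$, it is controlled by \textsc{FP$8$} via the \weakresent\ structure.
\item \emph{$k$ does not envy any root.} By \textsc{FP$7$}, either $k$ holds $a_{k,j}$, or $a_{j,k}\sumsq D_k(\X)\lowereqval{k}X_k$, which bounds what $j$ receives from $E_k$.
\item \emph{Roots do not envy each other.} Two roots other than $k$ see only single unit bundles in each other's bundles.
\end{itemize}

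The main obstacle is the bookkeeping around $u$, $v$, and children of $k$. The dumped bundles must stay inside the sets appearing in \textsc{FP$5$}, \textsc{FP$6$} and \textsc{FP$8$}, and nothing incident to $u$ may be routed to $k$ except what \textsc{FP$5$} permits. If a direct routing fails somewhere, we would first try redirecting that bundle to $v$. Since $v$ \weakresents\ $u$ and holds her favourite unit bundle $a_{v,u}$, she can absorb small dumps in the same way $p^\star$ did in \cref{lem:weak_support}.
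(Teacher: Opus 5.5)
Your proposal is correct and is essentially the paper's own proof. After \cref{lem:f_aux6}, the paper runs exactly this dumping phase ($k$ as the dump target, $b$-bundles between children of $k$ sent to $v$, and $R(\cdot,\cdot)$ oriented to the non-$k$ root via \textsc{FP$8$}) and verifies it with the same checks based on \textsc{FP$3$}, \textsc{FP$5$}, \textsc{FP$7$}, \textsc{FP$8$} and \textsc{FP$9$}.

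Two small corrections are needed.
\begin{itemize}
\item The paper treats $u$ as resented by $v$, not as a non-resented agent. Your \textsc{FP$9$} bound for resented agents must therefore exclude $q=u$; for $q\neq u$, \textsc{FP$3$} is what guarantees $k$ does not hold $a_{k,q}$, so the bound goes through. The agent $u$ itself is covered by \textsc{FP$5$}, as you already do.
\item The root-to-root rule involving $k$ should be fixed as $a_{p,k}$ to $p$ and $b_{p,k}$ to $k$. Your claim that every other root holds $a_{p,k}$ relies on this orientation.
\end{itemize}
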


\begin{proof}
    By \cref{lem:f_aux6}, we can construct an allocation that is either a complete $\efx$ allocation or satisfies all properties \textsc{FP$1$} and \textsc{FP$3$-FP$9$}. Denote this allocation by $\X$. Assume the allocation is not complete yet. Moreover, without loss of generality, by our assumption at the end of \stage\ A, for every $k \to i$ and $\ell \to j$, the function $R(i, j)$ is well-defined using \cref{lem:strong_A}. We now provide a dumping phase, which completes our $\efx$ allocation.

\paragraph{Dumping Phase.}
    We allocate every remaining unallocated unit bundle by the following rules.
	Here, a \emph{root} means a non-resented agent, and for a root $p$, the set $R_p$ denotes the agents resented by $p$ in $\X$. Every unallocated good gets allocated based on one of the following rules:
    
    \begin{enumerate}[label=(\roman*), leftmargin=2.2em]

		\item \textbf{Root to root.} 
		 Let $p$ and $r$ be two distinct roots.
    		We orient the pair so that $p\neq k$.
    		For $r=k$, assign \dumpu{a_{p,k}}{p} and \dumpp{b_{p,k}}{k}. Note that by \textsc{FP$4$}, agent $u$ is a resented agent in $\X$, and by \textsc{FP$4$}, for any agent $p\ne u$, 
		agent $k$ does not possess unit bundle $a_{k,p}$ in $\X$. For $r\ne k$, if at least one of the unit bundles in $E_{p, r}$ is already allocated to one of the roots, 
		allocate the other unit bundle to the other root. 		
		Otherwise, allocate one unit bundle in $E_{p,r}$ to $p$ and the other one to $r$. 

		\item \textbf{Root to its own children.} For every root $p$ and every $i\in R_p$, assign \dumpp{b_{i,p}}{p}. Also, we already have $X_i = a_{i, p}$.
		
		\item \textbf{Root to other children.} Let $p$ and $r$ be two distinct roots and let $i\in R_r$. Let \dumpu{a_{p,i}}{p}.
    If $p \neq k$,
    assign \dumpp{b_{p,i}}{k}.
    If $p = k$,
    assign \dumpp{b_{k,i}}{r}.

		\item \textbf{Between two children of the same tree.} Let $p$ be a root and let $i,j\in R_p$ be distinct. Let \dumpp{a_{i,j}}{p}
    If $p\neq k$, assign \dumpp{b_{i,j}}{k}.
    If $p=k$, assign \dumpp{b_{i,j}}{v}.

		\item \textbf{Between two children of two different trees.}         
            Let $p$ and $r$ be two distinct roots, and let $i\in R_p$ and $j\in R_r$.
    Assign \dumpp{a_{i,j}}{k} and \dumpp{b_{i,j}}{R(i,j)}.
    By property \textsc{FP$8$}, we may suppose that if $p=k$, then $R(i,j) = r$.
    \end{enumerate}

    This completes the allocation.

    \paragraph{Correctness of Dumping Phase.}Denote the allocation right before dumping phase  by $\X$, and denote the allocation at the end by $\xp$. Next, we prove that the final allocation, $\xp$, is \efx.

    Note that by FP1 and FP9, $\X$ follows global properties before the dumping phase (\cref{pbd}), and by construction, 
	our dumping rules follow our general dumping structure. Thus, by \cref{lem:dumprule},
	\begin{enumerate}
		\item For every agent $p$, we have $X'_p \greatereqval{p} X_p$.
		\item Nobody strongly envies any agent who was resented in $\X$.
		\item For every resent $p\to q$ in $\X$, agent $q$ does not envy agent $p$ in $\xp$.

	\end{enumerate}

    To prove that the final allocation is $\efx$, we show that any agent $p$ who was non-resented in $\X$, cannot be envied in $\xp$. To do so, we prove the following. 

    \begin{itemize}
        \item \textbf{Agent $k$ is not envied by any agent.} 
        First, we show that $u$ does not envy $k$ in $\xp$. By property \textsc{FP$5$}, we have that
        \begin{align*}
            X_u' &\greatereqval{u} X_u \greatereqval{u}
            a_{k,u} \sumsq (A_u(\X)) \setminus (E_{u,k}\sumsq E_{u,i})) \sumsq  b_{i,u} \\
            &\greatereqval{u} (X'_k\cap E_{k,u}) \cup \bigcup_{r\notin \{u,k\}} (X'_k\cap E_{r,u}) = X'_k \cap E_u \greatereqval{u} X'_k,            
        \end{align*}
    	where the third inequality comes from the fact that for every $r\notin \{k,u\}$, at most one unit bundle in $E_{u,r}$ is allocated to $k$.
        
        Next, consider a resented agent $q \ne u$. By the third statement of \cref{lem:dumprule}, we only need to consider the case where $i \to q$ such that $i \neq k$. 
    Since $q\ne u$, by FP3, we get that $a_{q,k}$ was unallocated in $\X$.
    This means that $a_{q,k} = A_q (\X)\cap E_{q,k}$.
    In allocation \xp, $k$ drops $b_{q,k}$ if she previously had it and picks up $a_{k,q}$, so
    $ X'_k \cap E_{k,q} = a_{k,q} \lowereqval{q} a_{q,k} = A_q (\X)\cap E_{q,k}$.
    For every resented agent $r\ne q$, we have 
    $A_q(\X) \cap E_{q,r}= a_{q,r} \greatereqval{q} X'_k \cap E_{q,r}$.
    Moreover, for every non-resented $r\ne k$, we have 
    $X'_k\cap E_{q,r} \lowereqval{q} b_{r,q} \lowereqval{q} A_q(\X)\cap E_{r,q}$.
    Since for all such $q$, by \textsc{FP$9$}, we have $A_q(\X) \sumsq b_{i,q}  \lowereqval{q} X_q$,
    we get:
\begin{align*}
    X'_q &= X_q \greatereqval{q} A_q(\X) \sumsq b_{i,q} \\
         &\greatereqval{q} A_q(\X)=(A_q(\X)\cap E_{k,q}) \cup \bigcup_{r\notin \{q,k\}} (A_q(\X)\cap E_{r,q}) \\ 
         &\greatereqval{q} (X'_k\cap E_{k,q}) \cup \bigcup_{r\notin \{q,k\}} (X'_k\cap E_{r,q}) = X'_k \cap E_q \greatereqval{q}
    X'_k,
\end{align*}
    which means $k$ is not envied by any resented $q$.

	For a non-resented agent $p \notin \{k,u\}$, by rules (i) and (ii), we get that if $X_k'\cap E_{p,q}\ne \emptyset$ for some agent $q$, then $q$ should be a resented agent who is not resented by $p$; therefore, by rule (iii), 
    we get $X_k'\cap E_{p,q} = b_{p,q} \lowereqval{p} a_{p,q} = X_p'\cap E_{p,q}$.
    Moreover, we have $X_k'\cap E_{p,k} = b_{p,k} \lowereqval{p} a_{p,k} = X_p'\cap E_{p,k}$, so for every agent $q\ne p$, we have $X_k'\cap E_{p,q} \lowereqval{p} X_p'\cap E_{p,q}$.
	This results in
	$$X'_p \greatereqval{p} \bigcup_{q\ne p} (X'_p \cap E_{p,q}) \greatereqval{p} 
		\bigcup_{q\ne p} (X'_k \cap E_{p,q}) \greatereqval{p} X'_k.$$


        \item \textbf{Any non-resented agent $p \neq k$ is not envied by any agent.}
        First, we show that no resented agent $q$ envies $p$ in $\xp$.
        By \cref{lem:dumprule}, we only need to consider the case where $r\to q$ for $r\ne p$. 
        
        If $p = v$ and $r = k$, then 
        $$X'_v \cap E_q \lowereqval{q} a_{v,q} \sumsq D_q(\X) \lowereqval{q} X_q \lowereqval{q} X'_q,$$
        where the second inequality follows from \textsc{FP$8$}.

        Otherwise, we have either $p \neq v$ or $r \neq k$, so 
        	$$X'_p \cap E_q \subseteq a_{p,q} \sumsq   \bigsumsq_{i \in R_p} (X'_p \cap E_{i,q})
            \lowereqval{q}  a_{q,p} \sumsq D_q(\X).$$
        If there exists a resent edge $p\to i$ such that $R(i,q)=p$, then 
    	$$X'_p \lowereqval{q}  a_{q,p} \sumsq D_q(\X)  \lowereqval{q} X_q = X'_q,$$
    	where the last inequality comes from the fact  $R(i,q)=p$.
    	Also, if for every $p\to i$, we have $R(i,q)=r$, then 
    	$$X'_p \cap E_q =a_{p,q} \lowereqval{q}  X_q= X'_q,$$
        where the inequality comes from the fact that $q$ did not resent $p$ in $\X$.
        
        Next, we show that no non-resented agent $q$ envies $p$ in $\xp$.
        If $q \ne k$, then the only unit bundle incident to $q$ that can be allocated to $p$ during the dumping phase is the unit bundle from $E_{q,p}$ assigned by Rule~(i).
        As a result, we get 
        $X'_q \greatereqval{q} X_q \greatereqval{q} a_{q,p} \greatereqval{q} X'_p \cap E_{p,q} = X'_p \cap E_q$,
        where the second inequality is due to $p$ being non-resented in $\X$.
        Hence, $q$ does not envy $p$ in $\xp$.

        
        Next, we show that $k$ does not envy $p$ in $\xp$.
        Note that whenever a unit bundle in $E_{k, i}$ has been assigned to $p$, it can only be allocated via Rules (i) and (iii). Hence,
        $$X_p' \cap E_k= (X_p' \cap E_{k,p}) \cup \bigcup_{r \in R_p}(X_p' \cap E_{r,k}).$$ 
        By FP7, we have for all $j\to i$, we have either $a_{j,k} \sumsq D_k (\X)\lowereqval{k} X_k$ or agent $k$ possesses $a_{k,j}$.
        If $k$ possesses $a_{k,j}$, then $X_p' \cap E_{k,p}\lowereqval{k} b_{p,k}$, so 
            $$X_p' \lowereqval{k} b_{p,k} \cup \bigcup_{r \in R_p} b_{k,r} \lowereqval{k}
            a_{p,k} \cup \bigcup_{r \in R_p} a_{k,r} \lowereqval{k} X'_k.$$ 
        Otherwise, if $k$ does not possess $a_{k,j}$, we get $a_{p,k} \sumsq D_k (\X)\lowereqval{k} X_k$. Hence, 
           $$X_p' \lowereqval{k} a_{p,k} \cup \bigcup_{r \in R_p} b_{k,r} 
           \lowereqval{k} a_{p,k} \sumsq D_k (\X)   \lowereqval{k}  X_k \lowereqval{k} X'_k.$$

        Therefore, $k$ does not envy $p$.            
    \end{itemize}
    	Hence, the final allocation is a complete \efx\ allocation.
\end{proof}

\section{\Stage\ G}\label{sec:g}

In this section, we prove that whenever an allocation is in \stage\ G, we can construct a complete $\efx$ allocation.
	We first prove a couple of auxiliary lemmas, and then, we introduce the main lemma of this section, which proves the existence of $\efx$ allocations for any allocation in \stage\ G.

We proceed in three steps.
First, \Cref{lem:final_support} shows that, once a single support pair exists,
we can complete the allocation via a clean dumping procedure.
Second, \Cref{lem:bundle_order} rules out a problematic configuration of unit bundles and, when it occurs,
reduces us to the setting of \Cref{lem:final_support}.
Finally, in the proof of the main lemma of \stage\ G, we perform a case analysis on the witness quadruple
and in each case reduce to either \cref{lem:final_support} or \cref{lem:strong_support}.

\begin{lemma}\label{lem:final_support}
    Given a \pthree\ allocation, if there exists a support pair $(s,t)$,
	we can construct a complete \efx\ allocation.
\end{lemma}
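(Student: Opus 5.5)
The plan is to reduce the statement to \cref{lem:strong_support}, which already produces a complete \efx\ allocation from a \pone\ allocation with a support pair $(s,t)$ in which every agent other than $t$ holds exactly one unit bundle and $s$ resents nobody. Since our allocation $\X$ is \pthree, it is \ptwo, so every agent, including $t$, holds exactly one unit bundle, and the first hypothesis is automatic. The whole proof therefore reduces to arranging that the ``$s$'' of the support pair resents nobody, possibly after a local update that preserves the structure.

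\textbf{Step 1 (easy cases).} If $s$ resents nobody, we apply \cref{lem:strong_support} directly. If $s$ resents someone, we first look at $t$. If $X_t\subseteq E_{s,t}$, then $(t,s)$ is also a support pair, since both agents are non-resented. So if additionally $t$ resents nobody, we apply \cref{lem:strong_support} to $(t,s)$.

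\textbf{Step 2 (the main case: $s$ resents some agent).} Let $q$ be the agent $s$ most-resents; since $\X$ is \pone, we have $X_q=a_{q,s}$. The idea is to release $s$ from $E_{s,t}$ and rebuild a support pair around $t$. Concretely, I would perform the update
\begin{align*}
X'_s &\takes a_{s,q},\\
X'_t &\takes a_{t,s},\\
X'_q &\takes \textsc{Choose}(q),
\end{align*}
then run \textsc{Reduce Trees}$(\X',q)$ and control it via \cref{lem:treebreaker2} with $H=\{s,t\}$, $p=s$ and $h_0=q$.

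The conditions of \cref{lem:treebreaker2} should be checked as follows.
\begin{itemize}
\item Agent $s$ resents nobody, because $q$ was her most-resented agent.
\item Agent $t$ holds $a_{t,s}$. She does not lose value, because $X_t\preceq_t a_{t,s}$ when $s$ held a bundle of $E_{s,t}$ and $t$ did not resent $s$. Hence $t$ creates no new resent edges.
\item Agent $q$ receives an unallocated unit bundle that nobody resented, so $h_0$ is non-resented.
\item Only $q$ may start new long resent paths.
\end{itemize}

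After \textsc{Reduce Trees}, the allocation is \pone\ and the bundles of $s,t$ are unchanged. The goal is that $t$ is non-resented and holds only $a_{t,s}\subseteq E_{t,s}$, so that $(t,s)$ is a support pair with $t$ resenting nobody. The latter holds if $a_{t,s}$ is $t$'s favourite unit bundle; otherwise I would first let $t$ take her most valuable unit bundle, provided the resulting holder pair is non-resented.

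\textbf{Expected obstacle.} The delicate point is that after the update $q$ will typically \weakresent\ $s$, because $a_{q,s}$ was $q$'s favourite bundle. That would make $s$ resented and destroy the pair $(t,s)$. In that situation I would not use \cref{lem:strong_support}. Instead I would note that the new configuration is exactly the hypothesis of \cref{lem:weak_support}: a support pair, all agents except one holding a single unit bundle, the holder resenting nobody and owning her favourite bundle, and some pair $p^\star,q^\star$ outside the support pair with $p^\star$ \weakresent ing $q^\star$ (here $q$ \weakresent s $s$). If the pair $(q,s)$ is not disjoint from the support pair, I would instead try to exploit the fact that \stage\ D does not apply, which supplies a second non-trivial resent tree, together with the ``not \stage\ A/C'' inequalities, to pick $p^\star,q^\star$ from that other tree, as in the proof of \cref{lem: stageE}. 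Verifying that the chosen support pair still satisfies ``$a_{t,s}$ is $t$'s most valuable unit bundle'' is the step I expect to require the most care.
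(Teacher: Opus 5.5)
Your Step 1 is fine. If $s$ resents nobody, \cref{lem:strong_support} applies directly, because a simple height-one allocation is height-one and every agent holds a single unit bundle. The same holds for $(t,s)$ when $X_t\subseteq E_{s,t}$ and $t$ resents nobody.

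\textbf{The gap is in Step 2.} First, the inequality you use for $t$ is reversed. That $t$ does not resent $s$ means $a_{t,s}\preceq_t X_t$, so setting $X'_t=a_{t,s}$ can strictly \emph{lower} $t$'s value. Suppose, for example, $X_t=a_{t,w}$ and $w$ holds nothing from $E_{t,w}$. After the update, $a_{t,w}$ is an unallocated unit bundle with $a_{t,w}\succ_t X'_t$, so the allocation is no longer unitary. Moreover, $t$ can acquire new resent edges towards roots $w$ with $X_w=a_{w,t}$, creating paths of length two that avoid $h_0=q$. If \textsc{Choose}$(q)$ picks up what $t$ released, $q$ itself can even become resented. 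So you cannot invoke \cref{lem:treebreaker2}.

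Second, even if the update were legal, it does not produce the configuration you want. Agent $q$ resented nobody while holding $a_{q,s}$, so $a_{q,s}$ is her favourite unit bundle, and it is now blocked by $a_{s,q}$. Barring ties, $q$ therefore resents $s$, and $(t,s)$ is not a support pair.

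Your fallback to \cref{lem:weak_support} is never instantiated. You do not exhibit a support pair $(s',t')$ in which $t'$ resents nobody and holds her favourite bundle $a_{t',s'}$, together with a weak most-resent pair disjoint from $\{s',t'\}$. Both candidates $(t,s)$ and $(s,q)$ contain the now-resented $s$, and nothing makes $a_{t,s}$ the favourite of $t$. The appeal to ``not Main Case D'' and the A/C inequalities is never turned into an argument.

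\textbf{The missing idea is that $s$ never needs to be made resent-free.} The paper uses that at this point none of Main Cases A--F applies. This is legitimate, because each of those case lemmas handles every simple height-one allocation satisfying its condition. The paper then dumps directly, with $s$ as the sink:
\begin{itemize}
\item Because Main Case E does not apply, the global dumping properties of \cref{pbd} hold.
\item $s$ absorbs $b_{u,v}$ for two children of one root $p\neq s$.
\item $s$ absorbs $a_{u,v}$ for children of two different trees, with the $b$-half going to the root $p\neq s$ of one child.
\item When the common root is $s$ itself, $b_{u,v}$ goes to the root of a second non-trivial tree, which exists because Main Case D does not apply.
\item $b_{p,u}$ for a child $u$ of another root $r$ goes to $r$.
\end{itemize}
The remaining envy checks towards a root $p\neq s$ are closed by two facts, together with \cref{lem:dumprule}:
\begin{itemize}
\item The not-Main-Case-C inequality $a_{q,p}\cup D_q(\X)\preceq_q X_q$, for $p$ resenting someone and $q$ resented by another root.
\item The not-Main-Case-F dichotomy: either $r$ holds $a_{r,p}$, or $a_{r,p}\cup D_r(\X)\preceq_r X_r$.
\end{itemize}
This contextual slack makes the direct dumping work whether or not $s$ resents anyone, and your reduction to \cref{lem:strong_support} cannot supply it.
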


\begin{proof}
	We may assume that we are past \stage\ F, as otherwise the algorithm would already terminate. Now, we provide a dumping phase.
	
	\paragraph{Dumping Phase.}
    We allocate every remaining unallocated unit bundle by the following rules.
	Here, a \emph{root} means a non-resented agent, and for a root $p$, the set $R_p$ denotes the agents resented by $p$. Every unallocated good gets allocated based on one of the following rules:
	
	\begin{enumerate}[label=(\roman*), leftmargin=2.2em]
		
		\item \textbf{The edge between $s$ and $t$.}
		Since $(s,t)$ is a support pair and $\X$ is \pthree, we have $X_s \subseteq E_{s,t}$ and $X_s$ contains exactly one unit bundle. We assign the other unit bundle in $E_{s, t}$ to agent $t$. 
		
		\item \textbf{Root to its own children.}
		For every root $p$ and every $u\in R_p$, assign \dumpp{b_{p,u}}{p}.
		
		\item \textbf{Root to others' children.}
		Let $p$ and $r$ be two distinct roots and let $u\in R_r$.
		Assign \dumpu{a_{p,u}}{p} and \dumpp{b_{p,u}}{r}.
		
		\item \textbf{Root to root.}
		Let $p$ and $r$ be two distinct roots and assume $\{p,r\}\neq \{s,t\}$.
		If at least one of the unit bundles is already allocated to one of the roots,
		allocate the other unit bundle to the other root.
		Otherwise, assume w.l.o.g.\ that $p\neq s$ and assign \dumpp{a_{p,r}}{p} and \dumpp{b_{p,r}}{r}.

		\item \textbf{Between two children of one tree.}
		Let $p$ be a root and let $u,v\in R_p$ be distinct.
		If $p\neq s$, assign \dumpp{a_{u,v}}{p} and \dumpp{b_{u,v}}{s}.
		If $p=s$, assign \dumpp{a_{u,v}}{s}.
		Let $r$ be the root of another resent tree with non-zero height and assign \dumpp{b_{u,v}}{r}. Note that such a tree exists as we are not in \stage\ D.
		
		\item \textbf{Between two children of two trees.}
		Let $p$ and $r$ be two distinct roots such that $p\neq s$, and let $u\in R_p$ and $v\in R_r$.
		Assign \dumpp{a_{u,v}}{s} and \dumpp{b_{u,v}}{p}.
		
	\end{enumerate}
	
	This completes the allocation.
	
	\paragraph{Correctness of Dumping Phase.}
	Let $\X$ be the allocation right before the dumping phase and let $\X'$ be the allocation after dumping.
	We prove that $\X'$ is \efx.
	
	Since we are in a \pthree\ allocation and we are past \stage\ F,
	the global properties before dumping (\Cref{pbd}) hold for $\X$.
	Moreover, our dumping rules satisfy the general dumping-structure requirements \ref{pd1}--\ref{pd8}.
	Therefore, all conclusions of \Cref{lem:dumprule} apply:
	\begin{enumerate}
		\item For every agent $p$, $X'_p \greatereqval{p} X_p$.
		\item Nobody strongly envies any agent who was resented at the start of dumping.
		\item For every resent edge $p\to q$ in $\X$, agent $q$ does not envy agent $p$ in $\X'$.
		\item No one envies agent $s$, and agent $s$ does not strongly envy anyone.
	\end{enumerate}

        By the second statement of \cref{lem:dumprule}, in order to prove that the final allocation is $\efx$, it suffices to show that any agent $p \neq s$ who was non-resented in $\X$, cannot be strongly envied in $\xp$.

   Let $q$ be a resented agent and $r$ be the unique root such that $r\to q$ in $\X$. By the third statement of \cref{lem:dumprule}, we can only consider the case where $r \neq p$. If $p$ is a root of a tree with height zero, then agent $p$ can only possess $a_{p, q}$ from $E_q$ in $\X'$, which was either allocated to $p$ in $\X$ or unallocated. In either case, $q$ does not resent $p$ in $\xp$. Otherwise, if we had $p \to \ell$ for some agent $\ell$, since the allocation $\X$ was not at \stage\ C, for the quadruple $p \to \ell$ and $r \to q$, we must have $D_q(\X) \cup a_{q, p} \lowereqval{q} X_q$. Therefore, we have $$X_p' \cap E_q \lowereqval{q} a_{p, q} \cup D_q(\X) \lowereqval{q} a_{q, p} \cup D_q(\X) \lowereqval{q} X_q = X_q'.$$
   Hence, $p$ cannot be envied by a resented agent $q$.

   Now, let $r \neq s$ be a non-resented agent in $\X$. Note that if $X_p' \cap E_r \subseteq E_{p, r}$, then $X_p' \cap E_r$ must be a single unit bundle from $E_{p, r}$, which was either allocated to $p$ in $\X$ or unallocated. In either case, $r$ does not resent $p$ in $\xp$. Otherwise, if there exists a unit bundle from $E_{r, j}$ for $j \neq p$ that has been dumped to $p$, it must have been done by Rule (iii). Since $\X$ was not at \stage\ F, we get that either $r$ possesses $a_{r, p}$ in $\X$ or $a_{r, p} \cup D_r(\X) \lowereqval{r} X_r$. If $r$ possesses $a_{r, p}$ in $\X$, then $$X'_p \cap E_r \lowereqval{r} b_{r, p} \sumsq \bigsumsq_{i \in R_p} b_{r, i} \lowereqval{r} a_{r, p} \sumsq \bigsumsq_{i \in R_p} a_{r, i} \lowereqval{r} X'_r,$$
   and hence, $r$ does not envy $p$. Otherwise, if $a_{r, p} \cup D_r(\X) \lowereqval{r} X_r$, we have $$X'_p \cap E_r \lowereqval{r} a_{p, r} \cup D_r(\X) \lowereqval{r} a_{r, p} \cup D_r(\X) \lowereqval{r} X_r \lowereqval{r} X'_r.$$
   Hence, $r$ does not envy $p$.

        	Therefore, the final allocation is a complete \efx\ allocation.
\end{proof}

\begin{lemma}\label{lem:bundle_order}
	In a \pthree\ allocation, suppose there exist agents $i \to j$ and a non-resented agent $k$
	such that $k$ holds $a_{k,j}$ and $j$ is \emph{not} the most resented agent of $i$.
	Then, we can construct a complete \efx\ allocation.
\end{lemma}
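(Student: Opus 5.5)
The plan is to manufacture a single support pair while keeping the allocation \pthree, and then finish with \cref{lem:final_support}. Let $j'$ be the agent $i$ most-resents, so $j'\neq j$ and $a_{i,j'}\greatereqval{i} a_{i,j}$. Since the allocation is \pthree, $X_j=a_{j,i}$, $X_{j'}=a_{j',i}$, $X_k=a_{k,j}$, and $j$, $j'$ resent nobody. The update I would try is a local reassignment along $i$'s tree. Agents $i$ and $j'$ drop their bundles, $X'_i\takes a_{i,j'}$, and $X'_{j'}\takes \textsc{Choose}(j')$. Everyone except $j'$ is weakly better off. After the update, $i$ no longer resents anyone, because $a_{i,j'}$ is her favourite unit bundle.

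The pair I aim to use is $(k,j)$. Agent $j$ still holds $a_{j,i}$, and the only agent who could resent her was $i$, which no longer happens. So $j$ is non-resented. Agent $k$ holds $a_{k,j}$, which lies in $E_{k,j}$. We have $a_{k,j}\lowereqval{j}a_{j,k}\lowereqval{j}X_j$ because $j$ did not resent $k$, and nobody else values $E_{k,j}$. So $k$ is non-resented as well. Hence $(k,j)$ is a support pair.

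The remaining work is to check that the new allocation is again \pthree. Then I can invoke \cref{lem:final_support}, or \cref{lem:strong_support} with $s=k$ and $t=j$, since every agent holds a single unit bundle and $k$ resents nobody. For \pzero: the allocation stays a unitary orientation. The only newly unallocated bundle is $a_{j',i}$, and nobody resented $j'$ except $i$, who now prefers her own bundle. New resent edges can only leave $j'$, and by the \textsc{Choose} argument of \cref{lem:updateD} any target $u$ holds $a_{u,j'}$. For height at most one: agents other than $j'$ gain no new out-edges. A resent edge into $j'$ is impossible because she picked an unallocated bundle. A path $j'\to u\to w$ would need $u$ to resent someone; $u$ already held $a_{u,j'}$ before the update, and I expect the earlier \pthree\ structure to rule this out. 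If a path of length two does appear, I would run \textsc{Reduce Trees} from $j'$. \cref{lem:treebreaker2} would apply with $H=\{k,j,i\}$, $p=i$ and $h_0=j'$, and it keeps the bundles of $k$, $j$, $i$ and preserves non-resentedness, so the support pair survives.

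The main obstacle is the case where $k$ resents someone. Then $k$ is a nontrivial root, which breaks the hypothesis of \cref{lem:strong_support}, and possibly the conditions of \cref{lem:treebreaker2}. Here I would rely on \cref{lem:final_support}, which only needs a support pair in a \pthree\ allocation together with the standing assumptions that Cases A–F do not apply. If those assumptions fail after the update, the allocation satisfies the hypotheses of the corresponding earlier case and can be handled by that case's lemma directly.
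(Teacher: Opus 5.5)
Your proposal is correct and is essentially the paper's proof: give $i$ the bundle $a_{i,j'}$ of her most-resented agent $j'$, let $j'$ \textsc{Choose}, run \textsc{Reduce Trees} from $j'$ via \cref{lem:treebreaker2} with $H=\{i,j,k\}$, $p=i$, $h_0=j'$, and finish with \cref{lem:final_support} on the support pair $(k,j)$. Two small corrections. First, the paper runs \textsc{Reduce Trees} unconditionally, and that is the right call, because a path $j'\to u\to w$ really can arise: a root $u$ already holding $a_{u,j'}$ may resent someone. Second, your worry about $k$ being a nontrivial root is unfounded, since neither \cref{lem:treebreaker2} nor \cref{lem:final_support} requires $k$ to resent nobody.
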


\begin{proof}
	We may assume that we are past \stage\ F, as otherwise the algorithm would already terminate.
	Let $r$ be the most resented agent of $i$.
	We update the allocation as follows: (let $\X$ and $\xp$ denote the allocation before and after the update)
	\begin{align*}
		&X'_i \takes a_{i,r},\\
		&X'_r \takes \textsc{choose}(r),\\
		&\text{Run }\textsc{Reduce Trees}(r).
	\end{align*}
	Let $\X'$ denote the allocation right before running \textsc{Reduce Trees}$(r)$.
	
	Let $h_0 = r$, $H = \{i,j,k\}$, and $p = i$. Note that all the conditions of \Cref{lem:treebreaker2} were satisfied before the execution of \textsc{Reduce Trees}$(r)$. Therefore, running \textsc{Reduce Trees}$(r)$
	results in a \pone\ allocation while the bundles of agents in $H$ do not change, and hence every agent in $H$ has only a single unit bundle. Moreover, since agents in $N \setminus H$ can only have a single unit bundle after the execution of \textsc{Reduce Trees}$(r)$, the allocation is in fact \pthree. 
    
	In particular, in the resulting allocation, agent $k$ is still non-resented and still holds $a_{k,j}$.
	Also, agent $j$ becomes non-resented since after the update, agent $i$ holds $a_{i,r}$,
	and by assumption $i$ resented $r$ the most in $\X$. Hence, $(p,j)$ is a support pair, and the claim follows from \Cref{lem:final_support}.
\end{proof}


\begin{lemma}\label{lem: stageG}
	If we are in \stage\ G, we can construct a complete \efx\ allocation.
\end{lemma}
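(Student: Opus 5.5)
We are in \stage\ G, so there are four distinct agents $k\to i$, $\ell\to j$ with
\begin{align*}
[C_j(\X)\cap E_{j,R_k}] \sumsq [D_j(\X)\setminus E_{j,R_k}] \sumsq a_{j,k} &\greaterval{j} X_j,\\
[C_i(\X)\cap E_{i,R_\ell}] \sumsq [D_i(\X)\setminus E_{i,R_\ell}] \sumsq a_{i,\ell} &\greaterval{i} X_i,
\end{align*}
and none of \stages\ A--F applies. The plan mirrors \cref{lem:R}: both $i$ and $j$ give up their current bundles and take the left-hand sides above, which makes $i$ and $j$ non-resented. The roots $k,\ell$ then receive $a_{k,i}$ and $a_{\ell,j}$. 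This should create support pairs, after which we invoke \cref{lem:final_support} or \cref{lem:strong_support}.

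The difference from \cref{lem:R} is that $j$'s new bundle now contains the bundles $a_{j,u}$ for $u\in R_k$. These are unallocated only if $u$ is not the most-resented child of $k$ and no one else holds $a_{u,j}$. The same issue arises for $i$ with $u\in R_\ell$.

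\textbf{First step: remove the obstructions.} Suppose some child $u\in R_k$ is held by a non-resented agent $m$ through $a_{m,u}$, and $u$ is not $k$'s most-resented child. Then \cref{lem:bundle_order} finishes immediately. So I may assume that whenever a root holds $a_{m,u}$ for a resented $u$, $u$ is the most-resented child of its parent.

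Next I choose the witnesses carefully. Without loss of generality $i$ is $k$'s most-resented child and $j$ is $\ell$'s most-resented child. The inequalities are monotone in the chosen child, since $a_{i,\ell}$ and the $C/D$ sets do not depend on which child of $k$ is chosen, apart from $E_{i,j}$ itself.

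\textbf{Case split.} The cases depend on whether $k$ and $\ell$ resent other children and on whether the unit bundles in $E_{j,R_k}$ are free.

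In the main case all the required unit bundles are unallocated. I perform the update
\begin{align*}
X'_j &\takes a_{j,k}\sumsq [C_j\cap E_{j,R_k}]\sumsq[D_j\setminus E_{j,R_k}],\\
X'_i &\takes a_{i,\ell}\sumsq [C_i\cap E_{i,R_\ell}]\sumsq[D_i\setminus E_{i,R_\ell}],\\
X'_k &\takes a_{k,i},\qquad X'_\ell \takes a_{\ell,j}.
\end{align*}
On the overlap $E_{i,j}$ I use the same resolution as in \cref{lem:R}: give $a_{j,i}$ to $j$ and $b_{j,i}$ to $i$. I then check, as in \cref{lem:R}, that none of $i,j,k,\ell$ is resented, so $(k,i)$ and $(\ell,j)$ are disjoint support pairs, and \cref{lem:two_support} applies.

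The key check is that no remaining child $u\in R_k\setminus\{i\}$ envies $j$. Its share $a_{j,u}$ has $u$-value at most $a_{u,j}\lowereqval{u} X_u$, and $u$ did not resent $j$. The same argument works for $R_\ell$.

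If $k$ has other children, they remain resented by $k$, and $k$ no longer resents them only if $a_{k,i}$ dominates. That holds because $i$ is the most-resented child. Hence $k$ may still resent other children but is non-resented, so the support-pair definitions still hold.

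In the degenerate cases, the allocation is no longer \pthree\ because $i$ or $j$ holds several bundles. There I aim for a single support pair with only one multi-bundle agent and apply \cref{lem:strong_support}.

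\textbf{Main obstacle.} I expect the hardest part to be verifying that the new allocation is \pone\ and unitary. The danger is that a third root $m$ holding $a_{m,u}$ for $u\in R_k$ may envy $j$'s new bundle, or lose a bundle. The reduction via \cref{lem:bundle_order} is the first thing I would try to rule this out. If it does not suffice, I would fall back to \cref{lem:final_support} by forming the support pair $(k,i)$ alone, with only $i$ updated.
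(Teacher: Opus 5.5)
\textbf{There is a genuine gap: the \cref{lem:R} template breaks on $E_{i,j}$, so your main case does not go through.}

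Write $Y_j$ and $Y_i$ for the two left-hand sides.
\begin{itemize}
\item Since $i\in R_k$, the term $C_j(\X)\cap E_{j,R_k}$ puts $a_{j,i}$ into $Y_j$.
\item Since $j\in R_\ell$, the term $C_i(\X)\cap E_{i,R_\ell}$ puts $a_{i,j}$ into $Y_i$.
\end{itemize}
These are $a$-bundles of two \emph{different} partitions of $E_{i,j}$, so they cannot both be realized. If $j$ gets $a_{j,i}$, then $i$ is left with $b_{j,i}\lowereqval{i}a_{i,j}$. If $i$ gets $a_{i,j}$, then $j$ is left with $b_{i,j}\lowereqval{j}a_{j,i}$. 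Nothing in \stage\ G bounds this loss by the slack in the strict inequalities. So $i$ may end up resenting $k$ (or $j$ resenting $\ell$), and the pairs $(k,i)$ and $(\ell,j)$ are no longer support pairs. In \cref{lem:R} your resolution works only because $D_i(\X)\cap E_{i,j}=b_{j,i}$, which is compatible with $a_{j,i}$.

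Two smaller points. The obstacle you anticipated is not the real one: for $u\in R_k$ the bundle $a_{j,u}$ is always free, since $u$ holds $a_{u,k}$ and $j$ holds $a_{j,\ell}$. Also, your ``without loss of generality'' choice of most-resented children is unjustified. The $i$-inequality is written in terms of $v_i$ and $i$'s own incident bundles, so replacing $i$ by another child of $k$ gives an unrelated inequality.

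\textbf{What is missing.} First, the exclusion of \stage\ E already pins down the roots' bundles.
\begin{itemize}
\item If $k$ did not hold $a_{k,j}$, then $a_{j,k}\subseteq A_j(\X)$.
\item Every other piece of $Y_j$ is dominated by the matching piece of $A_j(\X)$.
\item So $Y_j\lowereqval{j}A_j(\X)$, and $\ell\to j$ would witness \stage\ E.
\end{itemize}
Hence $X_k=a_{k,j}$, and symmetrically $X_\ell=a_{\ell,i}$. \cref{lem:bundle_order} then makes $i$ and $j$ the most-resented children.

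Second, you should realize only one side, splitting on $a_{j,i}$ versus $a_{j,k}$.
\begin{itemize}
\item If $a_{j,i}\greatereqval{j}a_{j,k}$: set $X'_\ell\takes a_{\ell,j}$, let $j$ run \textsc{Choose}, then run \textsc{Reduce Trees}$(j)$. Since $a_{j,i}$ was still free, $j$'s new bundle is worth at least $a_{j,i}\greatereqval{j}a_{j,k}$ to her, so she does not resent $k$. Agent $k$ already holds $a_{k,j}$, so $(k,j)$ is a support pair in a \pthree\ allocation, and \cref{lem:final_support} applies.
\item If $a_{j,i}\lowerval{j}a_{j,k}$: set $X'_\ell\takes a_{\ell,j}$, $X'_k\takes a_{k,i}$, and $X'_i\takes Y_i$ (keeping $a_{i,j}$). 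Let $j$ run \textsc{Choose}, then run \textsc{Reduce Trees}$(j)$. Since $a_{j,k}$ is now free, $j$ does not resent $i$. So $(k,i)$ is a support pair with $i$ the only multi-bundle agent, and \cref{lem:strong_support} applies.
\end{itemize}

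Your fallback of updating only $i$ and invoking \cref{lem:final_support} does not work as stated. Moving $i$ forces $k$, $\ell$ and $j$ to move as well. The result is not \ptwo, so \cref{lem:final_support} does not apply. The step also needs exactly the second case's hypothesis to stop $j$ from resenting $i$.
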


\begin{proof}
    Let $\X$ be the initial allocation in \stage\ G.
	Fix a witness quadruple $k\to i$ and $\ell\to j$ for \stage\ G, and define $Y_i,Y_j$ as follows:
	\begin{align*}
		Y_j &:= [C_j(\X)\cap E_{j,R_k}] \sumsq [D_j(\X)\setminus E_{j,R_k}] \sumsq a_{j,k},\\
		Y_i &:= [C_i(\X)\cap E_{i,R_\ell}] \sumsq [D_i(\X)\setminus E_{i,R_\ell}] \sumsq a_{i,\ell}.
	\end{align*}
	Then, the condition of \stage\ G becomes $Y_j \greaterval{j} X_j$ and $Y_i \greaterval{i} X_i$.
	
	Since we are not in \stage\ E, we must have $X_k=a_{k,j}$,
	Otherwise, we would have $Y_j \lowereqval{j} A_j(\X)$ and the allocation will be in \stage\ E.
	Similarly, $X_\ell=a_{\ell,i}$.
	Moreover, we can assume that agent $k$ resents agent $i$ the most and agent $\ell$ resents agent $j$ the most. Otherwise, by \Cref{lem:bundle_order}, we can construct a complete $\efx$ allocation. 
	
	Consider the set
	\[
	U_j(\X) := \{u \in N \setminus k: X_u = a_{u,j} \text{ and } u \text{ resents at least one agent in } \X\}.
	\]

    Next, we distinguish between two cases:

    \begin{itemize}
        \item \textbf{Case 1. $\mathbf{a_{j,i}\greatereqval{j}  a_{j,k}}$.}
        We update the allocation as follows (let $\X$ and $\xp$ denote the allocation before and after the update, respectively):
        \begin{align*}
    		&X'_\ell \takes a_{\ell,j},\\
    		&X'_j \takes \textsc{choose}(j),\\
    		&\text{Run }\textsc{Reduce Trees}(j).
	    \end{align*}
        First, we show that $\xp$is \pone.
        Let $H=\{k,i,\ell\}$, $h_0=j$, and $p=\ell$.
        Denote the allocation right before executing $\textsc{Reduce Trees}(j)$ by $\xz$.
        We show that the conditions of \cref{lem:treebreaker2} hold.

        The only agent $h\in H\setminus p$ holding some $a_{h,s}$ for some agent $s\notin H$ is agent $k$, who possesses $a_{k,j}$. However, by definition of case 1, we have $a_{j,k}\lowereqval{j}  a_{j,i}$.
        Hence $\forall r \in N\setminus H$, for $q= \arg\max_{\ell \in H \setminus p} v_r(a_{r,\ell})$,
        unit bundle $a_{q,r}$ is not allocated to agent $q$.

        Note that in $\xz$ agent $\ell$ does not resent anyone, since she is receiving the unit bundle that she was previously resenting the most. Agent $h_0=j$ is non-resented, since the allocation was unitary, and $j$ got an unallocated unit bundle. Moreover, every new created resent is from agent $j$, so any path not including $h_0=j$ has length one. 
        Moreover, agent $j$ got an unallocated unit bundle with highest value with respect to her valuation function, and $a_{j,i}$ was unallocated when she chose.
        Thus, $X''_j \greatereqval{j} a_{j,i} \greatereqval{j} a_{j,k}=X''_k$.
        Therefore, agent $k$ is non-resented in $\xz$.

        As a result, by \cref{lem:treebreaker2}, we get that $\xp$ is \pone, agents in $H$ still hold their initial bundle, agents $k$ remains non-resented, and every agent in $N\setminus H$
        possesses exactly one unit bundle. Since every agent in $H$ also possesses exactly one unit bundle, we get that $\xp$ is \pthree.

        Since in $\X$, agent $j$ possessed $a_{j,\ell}$, and since $j$ was envied, 
        $a_{j,\ell}$ is the most valuable unit bundle for agent $j$. Hence, $j$ resents $\ell$ in $\xp$,
        so $j$ is non-resented in $\xp$.
        Hence, $(k, j)$ is a support pair in $\xp$, and $\xp$ is \pthree. Thus, we can compute a complete $\efx$ allocation by \Cref{lem:final_support}.

        \item \textbf{Case 2. $\mathbf{a_{j,i}\lowerval{j}  a_{j,k}}$.}
        We update the allocation as follows (let $\X$ and $\xp$ denote the allocation before and after the update, respectively):
\begin{align*}
        &X'_\ell \takes a_{\ell,j},\\
		&X'_k \takes a_{k,i},\\
		&X'_i \takes Y_i(\X),\\
        &X'_j =\textsc{choose}(j),\\
        &\text{Run }\textsc{Reduce Trees}(j).
	    \end{align*}
        First, we show that $\xp$is \pone.
        Let $H=\{k,i,\ell\}\cup R_\ell(\X)\setminus j$, $h_0=j$, and $p=\ell$.
        Denote the allocation right before executing $\textsc{Reduce Trees}(j)$ by $\xz$.
        We show that the conditions of \cref{lem:treebreaker2} hold.
        
        The only agent $h\in H\setminus p$ holding some $a_{h,s}$ for some agent $s\notin H$ is agent $i$, who possesses $a_{i,j}$. However, by definition of case 2, 
        we have $a_{j,i}\lowerval{j}  a_{j,k}$.
        Hence, $\forall r \in N\setminus H$, for $q= \arg\max_{\ell \in H \setminus p} v_r(a_{r,\ell})$,
        unit bundle $a_{q,r}$ is not allocated to agent $q$.

        Agent $i$ did not resent anyone in $\X$ because the allocation $\X$ was \pthree\ and is now getting better-off. Hence, agent $i$ does not resent any agent in $\xz$. Also, since $X'_k=a_{k,i}$, agent $k$ is non-resented in $\xz$.
        
        Note that agents $\ell$, $k$, and $R_\ell(\X)\setminus j$ do not resent anyone in $\xz$ since they are receiving the unit bundle that they value the most. 

        Also, note that by case 2, $a_{j,i}\lowereqval{j} a_{j,k}$, 
        and $a_{j,k}$ was unallocated before agent $j$ chooses her most valuable unallocated unit bundle, so we get that agent $j$ does not resent agent $i$ in $\xz$.
        Hence,  agent $i$ is non-resented.

        
        As a result, by \cref{lem:treebreaker2}, we get that $\xp$ is \pone, agents in $H$ still hold their initial bundle, and agents $i$ and $k$ remain non-resented.
        Hence, $(k,i)$ is a support pair in $\xp$, and
        agent $k$ does not resent anyone because she possesses her most valuable unit bundle. Moreover, the only agent who may have more than one unit bundle is agent $i$.
        Therefore, we can construct a complete $\efx$ allocation by \Cref{lem:strong_support}.\qedhere
    \end{itemize}
\end{proof}

\end{document}